\newtheorem{thm}{Theorem}[section]
\newtheorem{prop}[thm]{Proposition}
\newtheorem{coro}[thm]{Corollary}
\newtheorem{lemma}[thm]{Lemma}
\newtheorem{rem}[thm]{Remark}
\newtheorem{defi}[thm]{Definition}
\newcommand{\R}{\mathbb{R}}             
\newcommand{\N}{\mathbb{N}}             
\newcommand{\Z}{\mathbb{Z}}             
\newcommand{\C}{\mathbb{C}}             
\newcommand{\half}{\frac{1}{2}}
\newcommand{\pd}{\frac{\pi}{2}}
\newcommand{\pq}{\frac{\pi}{4}}
\newcommand{\fp}{f^+(r,\nu)}
\newcommand{\fpo}{f_0^+(r,\nu)}
\newcommand{\fm}{f^-(r,\nu)}
\newcommand{\fmo}{f_0^-(r,\nu)}
\newcommand{\fpmo}{f_0^{\pm}(r,\nu)}
\newcommand{\fpm}{f^{\pm}(r,\nu)}
\newcommand{\tfp}{\tilde{f}^+(r,\nu)}
\newcommand{\tfm}{\tilde{f}^-(r,\nu)}
\newcommand{\reg}{\varphi(r,\nu)}
\newcommand{\rego}{\varphi_0(r,\nu)}
\newcommand{\treg}{\tilde{\varphi}(r,\nu)}
\newcommand{\al}{\alpha(\nu)}
\newcommand{\be}{\beta(\nu)}
\newcommand{\alo}{\alpha_0(\nu)}
\newcommand{\beo}{\beta_0(\nu)}
\newcommand{\tal}{\tilde{\alpha}(\nu)}
\newcommand{\tbe}{\tilde{\beta}(\nu)}
\newcommand{\ds}{\displaystyle}
\newcommand{\Section}[1]{\section{#1} \setcounter{equation}{0}}
\begin{document}

\title{Local inverse scattering at a fixed energy for radial Schr\"{o}dinger operators and localization of the Regge poles. }
\author{Thierry Daud\'e \footnote{Research supported by the French National Research Projects AARG, No. ANR-12-BS01-012-01, and Iproblems, No. ANR-13-JS01-0006} $^{\,1}$, Francois Nicoleau \footnote{Research supported by the French National Research Project NOSEVOL, No. ANR- 2011 BS0101901} $^{\,2}$\\[12pt]
 $^1$  \small D\'epartement de Math\'ematiques. UMR CNRS 8088. \\
\small Universit\'e de Cergy-Pontoise \\
\small 95302 Cergy-Pontoise, France  \\
\small thierry.daude@u-cergy.fr\\
$^2$  \small  Laboratoire de Math\'ematiques Jean Leray, UMR CNRS 6629 \\
      \small  2 Rue de la Houssini\`ere BP 92208 \\
      \small  F-44322 Nantes Cedex 03 \\
\small francois.nicoleau@math.univ-nantes.fr \\}





\maketitle


\begin{abstract}

We study inverse scattering problems at a fixed energy for radial Schr\"{o}dinger operators on $\R^n$, $n \geq 2$. First, we consider the class $\mathcal{A}$ of potentials $q(r)$ which can be extended analytically in $\Re z \geq 0$ such that $\mid q(z)\mid \leq C \ (1+ \mid z \mid )^{-\rho}$, $\rho > \frac{3}{2}$. If $q$ and $\tilde{q}$ are two such potentials  and if the corresponding phase shifts $\delta_l$ and $\tilde{\delta}_l$  are super-exponentially close, then $q=\tilde{q}$. Secondly,
we study the class of potentials $q(r)$ which can be split into $q(r)=q_1(r) + q_2(r)$ such that $q_1(r)$ has compact support and $q_2 (r) \in \mathcal{A}$.
If $q$ and $\tilde{q}$ are two such potentials, we show that for any fixed $a>0$,
${\ds{\delta_l - \tilde{\delta}_l  \ = \ o \left( \frac{1}{l^{n-3}} \ \left( {\frac{ae}{2l}}\right)^{2l}\right)}}$ when $l \rightarrow +\infty$ if and only if $q(r)=\tilde{q}(r)$ for almost all $r \geq a$. The proofs are close in spirit with the celebrated Borg-Marchenko uniqueness theorem, and rely heavily on the localization of the Regge poles that could be defined as the resonances in the complexified angular momentum plane. We show that for a non-zero super-exponentially decreasing potential, the number of Regge poles is always infinite and moreover, the Regge poles are not contained in any vertical strip in the right-half plane. For potentials with compact support, we are able to give explicitly their asymptotics. At last, for potentials which can be extended analytically in $\Re  z \geq 0$ with $\mid q(z)\mid \leq C \ (1+ \mid z \mid )^{-\rho}$, $\rho >1$ , we show that the Regge poles are confined in a vertical strip in the complex plane.


\vspace{2cm}

\noindent \textit{Keywords}. Inverse scattering, radial Schrödinger operators, phase shifts, Regge poles.


\noindent \textit{2010 Mathematics Subject Classification}. Primaries 81U40, 35P25; Secondary 58J50.

\end{abstract}

\vspace{1cm}

\tableofcontents
\newpage


\Section{Introduction.} \label{Intro}

In quantum scattering theory, given a pair of Hamiltonians $(-\Delta +V, -\Delta)$ on $L^2 ({\R^n})$, $n \geq 2$, one of the main object of interest is the scattering operator $S$. This scattering operator $S$ commutes with $-\Delta$ and consequently, it reduces to a multiplication by an operator-function $S(\lambda)$, called {\it{the scattering matrix}}, in the spectral representation of the Hamiltonian $-\Delta$.

\vspace{0.2cm}\noindent
The goal of this paper is to adress the following question : can we determine the potential $V$ from the knowledge of the scattering matrix $S(\lambda)$ at
a fixed energy $\lambda >0$ ?

\vspace{0.2cm}\noindent
For exponentially decreasing potential, (i.e  when the potential $V$ is a "very small" perturbation), we can answer positively to this question (see Novikov's papers \cite{No1}, \cite{No2}), but we emphasize that, in general, the answer is negative. For instance, in dimension $n=2$, Grinevich and Novikov \cite{GrNo} construct a family of real {\it{spherically symmetric}} potentials in the Schwartz space such that the associated scattering matrices are equal to the identity. Such potentials are called {\it{transparent potentials}}. Similarly, in the three dimensional case, Sabatier \cite{Sa} found a class of radial transparent potentials $q_a$, $a \in \R$, real for $r>0$, which are meromorphic in the complex plane cut along the negative real axis with the following asymptotics:
\begin{equation}\label{exemplesab}
q_a (r) = ar^{-\frac{3}{2}}\ \cos(2r -\pq) + O \left( r^{-2-\epsilon} \right) \ \ \ {\rm{when}} \ r \rightarrow + \infty.
\end{equation}

\vspace{0.2cm}\noindent
However, in dimension $n \geq 3$, if we assume that the potential $V$ has a regular behaviour at infinity, (i.e, $V$ is the sum of homogeneous terms at infinity), and if we know the scattering matrix at a fixed energy up to a smooth operator, we can reconstruct the asymptotics of the potential, (see \cite{JoSa}, \cite{WeYa}).

\vspace{0.2cm}\noindent
We emphasize, that in classical scattering theory, the situation is drastically different : for a spherically symmetric perturbation and for a fixed energy $\lambda$ large enough, the classical scattering matrix $S_{cl}(\lambda)$ determines the potential, (see for instance \cite{Fi}, \cite{Jo}).

\vspace{0.5cm}\noindent
In this paper, we study a quantum inverse scattering problem for the
Schrödinger equation on $\R^n$, $n \geq 2$,
\begin{equation}\label{sch}
-\Delta u + V(x) u = \lambda u,
\end{equation}
with a fixed energy $\lambda$. Without loss of generality, we fix $\lambda = 1$ throughout this paper. We assume that the potential $V(x)$ is spherically symmetric, i.e
\begin{equation} \label{symsph}
V(x) = q(r) \ ,\ r = \mid x \mid.
\end{equation}
It is well known that the Schrödinger equation (\ref{sch}) can be reduced to a countable family of radial equations, (see for instance \cite{RS3}); indeed, we write:
\begin{equation}
L^2(\R^n) = L^2(\R^+, r^{n-1}dr) \otimes L^2(\mathbb{S}^{n-1}, d\sigma),
\end{equation}
and for functions $u(x)= f(r)g(\omega)$, where $r = \mid x\mid>0$, $\omega = \frac{x}{r} \in \mathbb{S}^{n-1} $, one has:
\begin{equation}\label{decompradial}
(-\Delta + V) \ f(r)g(\omega) = \left(  - \frac{d^2}{dr^2} - \frac{n-1}{r} \frac{d}{dr} +q(r)
- \frac{1}{r^2} \ \Delta_{{\mathbb{S}}^{n-1}}   \right)\ f(r)g(\omega).
\end{equation}
The operator $\Delta_{{\mathbb{S}}^{n-1}}$ appearing in (\ref{decompradial}) is the Laplace Beltrami operator on the sphere $\mathbb{S}^{n-1}$ and has pure point spectrum. Its eigenvalues are given by $k_l= -l^2 -l(n-2)$, for $l \geq 0$. It follows that:
\begin{equation}\label{reduc1}
L^2(\R^+, r^{n-1}dr) \otimes L^2(\mathbb{S}^{n-1}, d\sigma) = \bigoplus_{l \geq 0}  L^2(\R^+, r^{n-1}dr) \otimes K_l,
\end{equation}
where $K_l$ is the eigenspace of $\Delta_{\mathbb{S}^{n-1}}$ associated with the eigenvalue $k_l$. The restriction of the Schrödinger operator on each subspace $L_l = L^2(\R^+, r^{n-1}dr) \otimes K_l$ is given by \begin{equation}
-\Delta +V \  _{\mid L_l} =  - \frac{d^2}{dr^2} - \frac{n-1}{r} \frac{d}{dr}   - \frac{k_l}{r^2}+q(r).
\end{equation}
Finally, if we define the unitary operator $U$,
\begin{eqnarray*}
U :  L^2(\R^+, r^{n-1} dr) & \rightarrow       & L^2(\R^+,  dr) \\
       f     \ \ \ \ \               & \mapsto       & r^{\frac{n-1}{2}} f(r).
\end{eqnarray*}
and setting $\nu(l)= l+ \frac{n-2}{2}$, we obtain immediately  a new family of radial Schrödinger equations which will be the main object
of this paper:
\begin{equation}\label{eqradiale}
U \ \left( - \frac{d^2}{dr^2} - \frac{n-1}{r} \frac{d}{dr} +q(r)  - \frac{k_l}{r^2} \right) \ U^{-1} =
- \frac{d^2}{dr^2} + \frac{\nu(l)^2-\frac{1}{4}}{r^2} +q(r).
\end{equation}

\vspace{0.2cm}
One assumes that the potential $q(r)$ is piecewise continuous on $\R^{+*}$ and satisfies the following conditions:
\begin{eqnarray*}
\hspace{1cm} &(H_1)& \hspace{1cm} \int_0^1 r^{1-2\epsilon} \ \mid q(r) \mid \ dr \ < \infty \ \ {\rm{for \ some}} \ \epsilon>0. \\
\hspace{1cm} &(H_2)& \hspace{1cm} \int_1^{+\infty} \mid q(r) \mid \ dr \ < \infty.
\end{eqnarray*}


\noindent
The hypothesis $(H_1)$ insures that the centrifugal singularity dominates near the origin, whereas the hypothesis $(H_2)$ means that, at large distances, the potential is short range. Under these assumptions, it is well-known (\cite{RS3}, Theorem XI.53) that for all $l \geq0$, there exists a unique solution $\varphi(r,l)$ which is $C^1$ and piecewise $C^2$ on $(0,+\infty)$ satisfying:
\begin{equation}\label{lequationradiale}
-u'' + \left( \frac{\nu(l)^2-\frac{1}{4}}{r^2} +q(r) \right) \ u \ = \ u,
\end{equation}
with the boundary condition at $r=0$,
\begin{equation}\label{conditionen0}
\varphi (r,l) \ \sim \ r^{\nu(l) +\half} \ \ ,\ \ r \rightarrow 0.
\end{equation}
Moreover, this solution, called {\it{the regular solution}}, has the asymptotic expansion at infinity:
\begin{equation}\label{conditioneninfini}
\varphi(r,l) \ \ \sim \ 2 \ c(l) \ \sin \left( r- (\nu(l) - \half) \ \pd  + \delta_l \right) \ \ ,\ \ r \rightarrow +\infty,
\end{equation}
where the constant $c(l)$ is the modulus of the Jost function $\beta(\nu(l))$, (see Section 2 for details).

\vspace{0.2cm}
The quantities $\delta_l$ are called the {\it{phase shifts}} and are physically measurable. The scattering amplitude $T(\lambda, \omega, \omega')$, that is the integral kernel of the operator $S(\lambda)-1$, can be expressed by the phase shifts. For example, for $n=3$, one has the following relation:
\begin{equation}\label{amplitude}
T(\lambda=1, \omega, \omega') = - \frac{1}{\pi^2} \ \sum_{l=0}^{+\infty} (2l+1) \frac{e^{2i\delta_l} -1}{2i} \ P_l(\cos \theta),
\end{equation}
where $\omega, \omega' \in \mathbb{S}^2$, $\cos \theta = \omega \cdot \omega'$ and $P_l(t)$ are the Legendre polynomials.

\vspace{0.5cm}\noindent
So, we can reformulate our inverse problem as :
\vspace{0.2cm}

\centerline{{\it{Is the knowledge of the phase shifts $\delta_l$ enough to determine the potential $q(r)$ ?}}}

\vspace{0.2cm}\noindent
Of course, as we have said before, the answer is negative in general since the potentials appearing in \cite{GrNo}, \cite{Sa} are spherically symmetric. Nevertheless, for potentials with compact support, Ramm has obtained in \cite{Ra} a stronger result:

\vspace{0.2cm}

\begin{thm}
\hfill\break
Let $q$ and $\tilde{q}$ be two potentials locally integrable with compact support. We denote  $\delta_l$, (resp. $\tilde{\delta}_l$) the corresponding phase shifts. Consider a subset $\mathcal{L}$ of $\N^*$ that satisfies the M\"untz condition ${\ds{\sum_{l \in \mathcal{L}} \frac{1}{l} = \infty}}$,
and assume that $\delta_l =\tilde{\delta}_l$ for all $l \in \mathcal{L}$. Then $q = \tilde{q}$ a.e.
\end{thm}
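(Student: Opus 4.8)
The plan is to exploit the analytic continuation of the scattering data to the complex angular-momentum plane $\nu \in \C$, which for compactly supported potentials is especially favorable. First I would recall that, since $q$ and $\tilde q$ vanish outside some interval $[0,R]$, the Jost functions $\be$, $\tbe$ (together with their companions $\al$, $\tal$) extend to entire functions of $\nu$ of exponential type governed by $R$. Because the potentials are real and the energy is fixed at $1$, one has $\al = \overline{\be}$ and $\tal = \overline{\tbe}$ for real $\nu$, so that by (\ref{conditioneninfini}) the phase shift is $\delta_l = -\arg\beta(\nu(l))$ up to an explicit additive constant. The natural object is therefore the entire function
\begin{equation*}
F(\nu) \ = \ \al\, \tbe - \be\, \tal,
\end{equation*}
whose restriction to a real point $\nu(l)$ equals $2i\,|\beta(\nu(l))|\,|\tilde\beta(\nu(l))|\,\sin(\delta_l-\tilde\delta_l)$; in particular $\delta_l\equiv\tilde\delta_l$ if and only if $F(\nu(l))=0$. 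A cross-Wronskian computation between the two regular solutions then yields a representation of the shape
\begin{equation*}
F(\nu) \ = \ c\int_0^{R}\big(q(r)-\tilde q(r)\big)\,\reg\,\treg\ dr,
\end{equation*}
which simultaneously exhibits $F$ as entire and ties its vanishing back to $q-\tilde q$.

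Next I would convert the M\"untz hypothesis into a rigidity statement by complex analysis. After normalizing $F$ by the free Jost functions (or by the known large-$\nu$ asymptotics of $\be,\tbe$) so as to remove both the $\Gamma$-type factors and the exponential-type growth, one obtains a function holomorphic and bounded in the right half-plane $\Re\nu>0$. Since $\nu(l)=l+\frac{n-2}{2}\sim l$, the hypothesis $\sum_{l\in\mathcal{L}}\frac1l=\infty$ is precisely the divergence of the Blaschke sum $\sum_{l\in\mathcal{L}}\frac{\Re\nu(l)}{1+|\nu(l)|^2}$. A bounded holomorphic function on a half-plane whose zero set violates the Blaschke condition must vanish identically, and the passage from the entire function of exponential type to a genuinely bounded one is carried out by a Phragm\'en--Lindel\"of / Carleman-type argument. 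Hence $F\equiv 0$.

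Finally, $F\equiv 0$ forces $\delta_l=\tilde\delta_l$ for \emph{every} $l\geq 0$, so that the two potentials share the full family of phase shifts; one then concludes by the fixed-energy uniqueness for compactly supported potentials, whereby the complete set $\{\delta_l\}_{l\geq 0}$ determines $q$ uniquely, giving $q=\tilde q$ a.e. Equivalently, and more directly, $F\equiv 0$ yields $\int_0^R(q-\tilde q)\,\reg\,\treg\ dr=0$ for all $\nu$, and letting $\nu\to+\infty$ the products $\reg\,\treg$ reduce to products of Bessel functions forming a complete, invertible system on $[0,R]$, which forces $q-\tilde q=0$ a.e.

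The main obstacle is the middle step: producing from $F$ --- a priori only entire of exponential type $\sim 2R$ and possibly unbounded on the real axis because of the $\Gamma$-type growth of the free Jost functions --- a correctly normalized function that is bounded (or of Nevanlinna class) in $\Re\nu>0$, so that the Blaschke condition genuinely applies. This demands sharp, uniform control of $\be(\nu)$ and $\treg$ throughout the complex $\nu$-plane and a careful Phragm\'en--Lindel\"of analysis, which is exactly where the localization of the Regge poles and the growth estimates developed earlier in the paper enter.
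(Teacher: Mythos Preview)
Your proposal is essentially correct and follows the same CAM strategy the paper attributes to Regge--Loeffel--Ramm. Note first that the paper does not give its own detailed proof of this theorem: it is quoted as Ramm's result, and the paper only sketches the mechanism (complex angular momentum, Nevanlinna class, M\"untz) in the paragraphs following the statement. So there is no line-by-line comparison to make; rather one compares your outline with that sketch.

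The one substantive difference is the choice of ``normalized'' function. The paper's discussion singles out Loeffel's logarithmic derivative $\xi(r,\nu)=\varphi'(r,\nu)/\varphi(r,\nu)$ and observes that $\nu^{-1}\xi(r,\nu)$ is bounded for $\Re\nu$ large, hence lies in the Nevanlinna class, so the M\"untz hypothesis forces it to be uniquely determined by its values on $\mathcal{L}$. You instead work directly with $F(\nu)=\al\,\tbe-\be\,\tal$ and its integral representation (this is exactly the paper's Proposition \ref{differenceJost}). Your route is arguably cleaner here, because the ``main obstacle'' you flag --- producing a genuinely bounded function in the half-plane --- is already handled by the paper's own Corollary \ref{compact}: since $q-\tilde q$ has support in $[0,R]$, one gets $|F(\nu)|\le C\,(\Re\nu+1)^{-1}R^{2\Re\nu}$ for all $\Re\nu\ge 0$, so $R^{-2\nu}F(\nu)$ is bounded on the half-plane and the Blaschke/M\"untz argument applies directly. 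No Phragm\'en--Lindel\"of step or Regge-pole localization is needed for this particular theorem.

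Two small corrections. First, $\al$ and $\be$ are only guaranteed holomorphic in $\Re\nu\ge -\epsilon$ (see the paper after (\ref{Jostfunctionbeta})), not entire; this is harmless since the half-plane is all you use. Second, once $F\equiv 0$ you have $\sigma(\nu)=\tilde\sigma(\nu)$ for $\Re\nu>0$, and the clean way to finish is to invoke the Regge--Loeffel uniqueness theorem (\cite{Lo}, Theorem 2), or equivalently the paper's Section 8.2; your alternative ``completeness of Bessel products'' ending would require more justification.
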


\vspace{0.2cm}
The proof of the previous result is based on an old idea due to Regge \cite{Re}. This approach is called the method of the {\it{Complex Angular Momentum}} (CAM): we allow the angular momentum $l \in \N$ to be a complex number $\nu \in \C$. In some cases, it is possible to extend the equality $\delta_l =\tilde{\delta}_l$ for all $l \in \N$ into the equality $\delta(\nu) =\tilde{\delta}(\nu)$ for all $\nu \in \C \setminus \{poles\}$. Indeed, for some particular classes of holomorphic functions $f$, $f(\nu)$ is uniquely determined by its values at all the integers $f(l)$ (Carlson's theorem \cite{Boa}), or only for $f(l)$ with $l \in \mathcal{L}$, (Nevanlinna's class \cite{Ra}). Then we can often use this new amount of information to get the equality between the potentials $q$ and $\tilde{q}$.

\vspace{0.2cm}
The CAM method was used previously in a long paper by Loeffel \cite{Lo}. In this paper, Loeffel studied in great details the properties of a  meromorphic function $\sigma(\nu)$ in the domain $\Re \nu >0$. This function $\sigma(\nu)$ is called the {\it{Regge interpolation}}, and for $\nu = \nu(l)=l+ \frac{n-2}{2}$, we have $\sigma(\nu(l)) = e^{2i \delta_l}$,
where $\delta_l$ are the phase shifts. In particular, he showed that, if the Regge interpolation
$\sigma(\nu)$ and $\tilde{\sigma}(\nu)$ corresponding to two suitable potentials $q$ and $\tilde{q}$,
satisfy $\sigma(\nu)= \tilde{\sigma}(\nu) $ for $\Re \nu >0$ where both are holomorphic, then $q = \tilde{q}$.
Therefore, all the problem consists in finding the classes of potentials $q$ such that the data $\delta_l$ determine uniquely $\sigma(\nu)$ for $\Re \nu >0$. For instance, this is the case for potentials with compact support (\cite{Lo}, Theorem 3), or for potentials which can be extended holomorphically in the domain $\Re z >0$ and with exponential decay (see also the paper of Martin and Targonski \cite{MaTa}). We emphasize that Ramm's result  for potential with compact support is actually a by-product of (\cite{Lo}, Theorem 3) by Loeffel. Indeed, the function $\xi(r,\nu)= \frac{\varphi'(r,\nu)}{\reg}$ used in the proof of \cite{Lo}, where $\reg$ is the regular solution,  satisfies $\nu^{-1} \xi(r,\nu)= O(1)$ for $\Re \nu$ large and $r>0$ fixed.  Hence, this last function lies in the Nevanlinna class for $\Re \nu$ large enough. Nevertheless, Ramm's proof has the advantage to be shorter.

\vspace{0.2cm}
In 2011, M. Horvath \cite{Ho1} also used the CAM approach and announced the following result in the three dimensional case: \par
\noindent
{\it{ Assume that the potential $q(r)$ satisfies
\begin{equation}\label{hypHorvath}
\int_0^1 r^{1-\epsilon} \ \mid q(r) \mid \ dr < \infty, \ \ {\rm{and}} \ \ \int_1^{+\infty} r \mid q(r) \mid \ dr < \infty,
\end{equation}
for some $\epsilon >0$. Then, the phase shifts $\delta_l$ for all $l \in \N$ determine $q(r)$ uniquely}}.

\vspace{0.2cm}
This result seems to us incorrect. M. Horvath claims that we can naturally extend the phase shifts $\delta_l$ as a holomorphic function $\delta(\nu), \ \Re \nu >0$ such that $\delta (l +\half)= \delta_l$. But, as we will see later in this paper, for any potentials with compact support (or even for potential super-exponentially decreasing) this function $\delta(\nu)$ has always an infinite number of poles $\nu_n$ with $\Re \nu_n \ \rightarrow + \infty$ as $n \rightarrow +\infty$. Hence, one of the goals of this paper is to give a correct answer to some inverse scattering problems for radial potentials. Nevertheless, we emphasize that all the asymptotics of the phase shifts $\delta_l$ as $l \rightarrow +\infty$,  obtained in (\cite{Ho1}, Corollary 1), are rigorously exact.

\vspace{0.2cm}
Another goal of this paper is to obtain a local uniqueness result from the data consisting of the phase shifts, close in spirit with the celebrated local Borg Marchenko's uniqueness theorem (\cite{Be}, \cite{GS}, \cite{Si}), and to precise an open question formulated by  Vasy-Wang \cite{VaWa}, (for a class of potentials not necessary with spherical symmetry):

\vspace{0.5cm}
\par
{\it{
Let us consider the pair of Hamiltonians $(-\Delta, -\Delta +V(x))$ on $L^2(\R^n)$. Assume that the potential $V$ is smooth, real, and can be split into  $V(x)=U(x)+W(x)$ such that $W(x)$ is exponentially decreasing and $U(x)$ is dilatable analytically \textit{i.e.}
\begin{equation}\label{dilatable}
  \exists C>0, \ \forall \theta \ {\rm{in \ a \ small \ complex \  neighbourhood\ of\ }} 0, \ \mid U(e^{\theta}x) \mid \leq C \ (1+\mid x\mid)^{-\rho}, \ \rho>1.
\end{equation}
Does the scattering matrix $S(\lambda)$ at a fixed energy $\lambda>0$ determine uniquely  $V(x)$ ?}}

\vspace{0.5cm}
As it was pointed to us by Roman Novikov, the family of central transparent potentials constructed in \cite{GrNo} contains a subset of analytic
potentials in $\R^2$, but it is not clear for us  that these potentials can be extended in a complex angular sector containing the positive axis. Thus, for generic dilatable analytically potentials, this question remains open. Nevertheless, we shall see in this paper, that for central potentials $V(x)= W(x)+U(x)$ where $W$ has a compact support and $U(x)$ can be extended to an holomorphic function in $\Re z \geq 0$, the answer is positive. Moreover, our result is {\it{local}} in nature. We think that our result is still true if $W(x)$ decays exponentially, but we did not succeed in proving it.

\vspace{0.5cm}\noindent
First of all, let us begin by a global uniqueness theorem for dilatable analytically  potentials.

\vspace{0.2cm}\noindent
{\it{Definition : We say that the potential $q(r)$ belongs to the class $\mathcal{A}$ if $q$ can be extended analytically
in $\Re z \geq 0$ and satisfies in this domain the estimate $\mid q(z) \mid  \leq C \  (1+ \mid z \mid)^{-\rho}, \ \rho > \frac{3}{2}$.}}

\vspace{0.2cm}\noindent
Our main first result is the following:

\begin{thm}\label{analyticalcase}
\hfill\break
Let $q(r)$ and $\tilde{q}(r)$ be two potentials  belonging to the class $\mathcal{A}$. We denote $\delta_l$, (resp. $\tilde{\delta}_l$) the corresponding
phase shifts. Assume that $\delta_l$ and  $\tilde{\delta}_l$ are super-exponentially close, i.e for all $A>0$,
\begin{equation}\label{hypodelta}
\delta_l - \tilde{\delta}_l \ = \ O (e^{-Al}), \ l \rightarrow + \infty.
\end{equation}
Then, $q(r) =\tilde{q}(r)$  on $(0, +\infty)$.
\end{thm}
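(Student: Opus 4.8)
\emph{Proof proposal.} The plan is to transport the whole problem into the complex angular momentum plane and reduce it to a Carlson-type uniqueness statement for the Regge interpolation, followed by the Loeffel uniqueness theorem. First I would attach to each potential its Regge interpolation $\sigma(\nu)$, the meromorphic function on $\{\Re\nu>0\}$ built from the Jost function $\be$ of Section~2 and satisfying $\sigma(\nu(l))=e^{2i\delta_l}$, and likewise $\tilde\sigma(\nu)$ with $\tilde\sigma(\nu(l))=e^{2i\tilde\delta_l}$. Since $q,\tilde q\in\mathcal A$ have $\rho>\frac32>1$, the localization result for the Regge poles established below (for the class $\rho>1$) confines all poles of $\sigma$ and $\tilde\sigma$ to a vertical strip $\{0<\Re\nu<R\}$; hence both functions are holomorphic on the half-plane $\{\Re\nu>R\}$, and so is their difference $h(\nu):=\sigma(\nu)-\tilde\sigma(\nu)$.

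Next I would read off what the hypothesis says at the sampling points $\nu(l)=l+\frac{n-2}{2}$, which are spaced by one. Because the $q,\tilde q$ are real on $(0,+\infty)$ the phase shifts are real, so
\[
|h(\nu(l))|=\bigl|e^{2i\delta_l}-e^{2i\tilde\delta_l}\bigr|=\bigl|e^{2i(\delta_l-\tilde\delta_l)}-1\bigr|\le 2\,|\delta_l-\tilde\delta_l|,
\]
and \eqref{hypodelta} yields $|h(\nu(l))|=O(e^{-Al})$ for every $A>0$: the difference decays at the integer points faster than any exponential.

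The heart of the argument is then an analytic-uniqueness step. I would first establish, from the Jost-function estimates of Section~2, that on $\{\Re\nu>R\}$ the functions $\sigma,\tilde\sigma$ are bounded, with $\sigma(\nu)$ tending to its free value as $\Re\nu\to+\infty$, so that $h$ is a bounded holomorphic function on a half-plane. Conceptually, the unit-spaced real points $\nu(l)$ violate the Blaschke condition, since $\sum_l \nu(l)^{-1}=\infty$, so they form a uniqueness set for $H^\infty$ of the half-plane — the same M\"untz/Blaschke mechanism behind Theorem~1.1. To upgrade exact vanishing to super-exponential decay I would use a quantitative Carlson estimate: writing $h$ via the Hardy-space theory of the half-plane as a Laplace transform and reading $h(\nu(l))$, after $x=e^{-t}$, as generalized moments $\int_0^1 x^{\,l}\,d\rho(x)$, the bound $|h(\nu(l))|=O(e^{-Al})$ for all $A$ forces the representing measure to sit at the origin, whence $h\equiv 0$. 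Thus $\sigma\equiv\tilde\sigma$ on $\{\Re\nu>R\}$, and since both are meromorphic on $\{\Re\nu>0\}$, $\sigma\equiv\tilde\sigma$ throughout that half-plane. Equality of the Regge interpolations is exactly the hypothesis of the Loeffel uniqueness theorem \cite{Lo}, which gives $q=\tilde q$ on $(0,+\infty)$.

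The main obstacle I expect is not the soft complex-analysis endgame but the \emph{a priori} bounds on $\sigma(\nu)$ uniformly across the half-plane $\{\Re\nu>R\}$, i.e. controlling $\be$ and the regular solution $\reg$ as $|\nu|\to\infty$ in all directions rather than merely along the real axis, so that a Phragm\'en--Lindel\"of argument can legitimately be run. This is precisely where the analyticity of $q$ in $\Re z\ge0$ and the decay rate $\rho>\frac32$ are indispensable: deforming the contour in the Volterra integral equations for the Jost solutions into the complex $z$-plane is what converts the polynomial decay of $q$ into the growth control needed to close the Carlson step.
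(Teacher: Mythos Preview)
Your overall architecture---pass to the Regge interpolation, use the vertical-strip localization of the Regge poles for $q\in\mathcal A$ to get holomorphy of $\sigma,\tilde\sigma$ on a right half-plane, run a Carlson-type argument on the integer samples, then invoke Loeffel---matches the paper's Section~8.1 exactly. The paper also offers, in Section~8.2, its own Borg--Marchenko style replacement for the Loeffel black box (via $F(r,\nu)=\fp\tfm-\fm\tfp$ and Phragm\'en--Lindel\"of), but it explicitly says Loeffel would suffice, so your final step is fine.

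The one genuine gap is the a priori control you single out, and there your diagnosis of the mechanism is off. You assert that $\sigma,\tilde\sigma$ are \emph{bounded} on $\{\Re\nu>R\}$; the paper never claims this and it is not what is proved. What is actually established is a \emph{polynomial} bound on the generalized phase shift, $|\delta(\nu)|\le C|\nu|^4$ on $\Gamma_A$ (Proposition~\ref{deltanupoly}), and the Hardy-class object one feeds into the Carlson/Paley--Wiener step (Proposition~\ref{Hardy}) is $(\delta(\nu+c)-\tilde\delta(\nu+c))/(\nu+c')^5$, not $\sigma-\tilde\sigma$. Moreover, the two hypotheses play separate roles, not the joint ``contour deformation'' role you suggest: the analytic extension of $q$ to $\Re z\ge0$ is used \emph{only} in the complex-scaling argument (Theorem~\ref{absencepoles}/Corollary~\ref{pasdepoles}) to confine the Regge poles and hence make $\delta(\nu)$ holomorphic on a half-plane; the decay $\rho>\tfrac32$ is used \emph{only} in Proposition~\ref{deltanupoly}, whose proof combines the integral representation of Proposition~\ref{newrep}, the Green-kernel bound of Corollary~\ref{estGreenK}, and---crucially---the Herglotz property of $\delta(\nu)-\tfrac{\pi}{2}\nu$ (Proposition~\ref{herglotz}) to pass from a decay estimate in a horizontal strip $|\Im\nu|<1$ to the global polynomial bound. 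No deformation of the Volterra equations into complex $z$ is involved in that step. If you rewrite your middle paragraph to work with $\delta(\nu)$ and that polynomial bound, the rest of your argument goes through verbatim.
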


\vspace{0.3cm}\noindent
\begin{rem}
\hfill\break
\begin{enumerate}
\item  Let us consider the potential ${\ds{ q(r) =\frac{e^{-ar}} {(r+1)^2}}}$ where $a>0$. Clearly, $q(r)$ satisfies the hypothesis of Theorem \ref{analyticalcase}. Thanks to the proof of Proposition \ref{super} in Section 6 (see also \cite{Ho1}, Corollary 1), one gets:
\begin{equation*}
\delta_l = O \left( \frac{1}{\sqrt{l}} \ e^{- \eta l } \right) \ ,\ l \rightarrow +\infty,
\end{equation*}
where $\cosh \eta = 1+\frac{a^2}{2}$. So, if $\delta_l$ and  $\tilde{\delta}_l$ are only exponentially close, we can not obtain a uniqueness result.
\item The family of the meromorphic transparent potentials given in \cite{Sa} satisfy  $q(r)=O(r^{-\frac{3}{2}})$ at infinity.
Unfortunately, we are not able to verify if these potentials satisfy all the hypotheses of Theorem \ref{analyticalcase}, but  if it is the case,
it would imply that the assumption $\rho > \frac{3}{2}$ is sharp.
\item  In the Born approximation (that is in the linear approximation near zero potential), it is well-known that the scattering amplitude can be approximated by:
\begin{equation}\label{Born}
T(\lambda, \omega, \omega') \thickapprox \hat{V}(\sqrt{\lambda}\omega- \sqrt{\lambda}\omega'),
\end{equation}
where $\hat{V}$ is the Fourier transform of the potential (non necessarily spherically symmetric). We emphasize that
Theorem \ref{analyticalcase} is coherent with the Born approximation. Indeed, it is not difficult to prove that, for a large set of potentials  in the class $\mathcal{A}$, the restriction on any ball of $\hat{V}(\xi)$ determines uniquely the potential, (see Theorem \ref{UniqueFourier} for details).

\end{enumerate}
\end{rem}

\vspace{0.5cm}\noindent
Now, let us define our class of potentials $\mathcal{C}$ which will be useful for our local inverse problem.

\vspace{0.5cm}\noindent
{\it{Definition :  We say that the potential $q(r)$ belongs to the class $\mathcal{C}$ if $q(r)$ can be split into $q(r)=q_1(r)+q_2(r)$ where $q_1$ is piecewise continuous on $(0,+\infty)$. Moreover,
 \begin{enumerate}
 \item $q_1$ has compact support and satisfies the hypothesis $(H_1)$.
 \item $q_2(r) \in \mathcal{A}$.
\end{enumerate}
 }}

\vspace{0.5cm}\noindent
Before giving our second main result, we recall the following fact. In the three dimensional case, let $q(r)$ be a piecewise continuous potential with support in $[0,a]$ and assume that $q(a-0) \not=0$. In (\cite{Ho1}, Corollary 1, Eq. (17) with $\nu = l+\half$), M. Horvath proved that:
\begin{equation}\label{equivalent}
\delta_l \ \sim \ - \frac{q(a-0)}{2} \ \left(\frac{a}{2l}\right)^3 \ \left( \frac{ae}{2l}\right)^{2l} \ ,\ l \rightarrow + \infty.
\end{equation}
We also refer the reader to \cite{Ra1} where the formula for the radius of the support of the potential is calculated from the scattering data.

\vspace{0.1cm}\noindent
Obviously, it follows that if $q$ belongs to the class $\mathcal{C}$ with $q_2=0$, the phase shifts $\delta_l$ must satisfy the same asymptotics as in (\ref{equivalent}). Having this result in mind, we can state our second main result by :

\vspace{0.3cm}\noindent

\begin{thm}\label{Mainresult}
\hfill\break
Let $q(r)$ and $\tilde{q}(r)$ be two potentials belonging to the class ${\mathcal{C}}$ and let $\delta_l$, (resp. $\tilde{\delta}_l$) the corresponding
phase shifts. Let us fix $a>0$. Then, the two following assertions are equivalent:
\begin{eqnarray*}
&(A_1)& \hspace{1cm} \delta_l - \tilde{\delta}_l \ = \ o \left( \frac{1}{l^{n-3}}\ \left( \frac{ae}{2l}\right)^{2l}\right) \ \ , \ \ l \rightarrow + \infty. \\
&(A_2)& \hspace{1cm}  q(r) =\tilde{q}(r) \ {\it{for \ almost \  all}}  \ r \in (a,+\infty).
\end{eqnarray*}
\end{thm}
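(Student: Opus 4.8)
I would prove Theorem \ref{Mainresult} by the Complex Angular Momentum method, reading $(A_1)\Leftrightarrow(A_2)$ as a Paley--Wiener type support correspondence in the $\nu$-variable, where the decay rate of the scattering data as $\Re\nu\to+\infty$ encodes the largest radius at which $q$ and $\tilde q$ can disagree. The first move is to transfer everything onto an analytic object. Writing $\sigma(\nu)=\alpha(\nu)/\beta(\nu)$ for the meromorphic continuation of the scattering data ($\sigma(\nu(l))=e^{2i\delta_l}$, the Regge poles being the zeros of $\beta$), and likewise $\tilde\sigma=\tilde\alpha/\tilde\beta$, one has $e^{2i\delta_l}-e^{2i\tilde\delta_l}=W(\nu)/(\beta\tilde\beta)|_{\nu=\nu(l)}$ with $W(\nu):=\alpha(\nu)\tilde\beta(\nu)-\tilde\alpha(\nu)\beta(\nu)$, which is holomorphic in $\Re\nu\ge0$ for potentials in $\mathcal{C}$. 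Since $\delta_l,\tilde\delta_l\to0$, the hypothesis $(A_1)$ is equivalent to the same $o$-estimate for $W/(\beta\tilde\beta)$ at the integer points $\nu=\nu(l)$, and the threshold $l^{\,3-n}(ae/2l)^{2l}$ is, up to power-of-$l$ corrections, the large-$\nu$ size of the free Green kernel at radius $a$ (a Bessel-type product $\varphi_0(a,\nu)$ times the irregular free solution at $a$). This is the quantitative link between the rate in $(A_1)$ and the radius $a$.

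For the easy implication $(A_2)\Rightarrow(A_1)$ I would use the exact Wronskian/Green identity expressing $W(\nu)$ as a single integral of $(q-\tilde q)(r)$ against a product of a regular and an irregular solution of the two radial equations. If $q=\tilde q$ on $(a,+\infty)$ this integral runs only over $(0,a]$; a Watson-type analysis of its large-$\nu$ behaviour shows it is controlled by the endpoint $r=a$ and, after dividing by $\beta\tilde\beta$, reproduces exactly the rate of (\ref{equivalent}). That rate is smaller than the $(A_1)$-threshold $l^{\,3-n}(ae/2l)^{2l}$ by the factor $l^{-3}$ carried by the dimensional corrections, hence is $o$ of it, which is precisely $(A_1)$.

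The substance is the converse $(A_1)\Rightarrow(A_2)$, which I would split into an interpolation step and a support step. Since $(A_1)$ only gives information at the integers $\nu(l)$, I must first fill in the intermediate values: using the class-$\mathcal{C}$ estimates on $\alpha,\beta$ for complex $\nu$ and, crucially, the localization of the Regge poles established in the earlier sections (to control the zeros of $\beta\tilde\beta$ and to pin the exponential type of $W/(\beta\tilde\beta)$ in the imaginary directions), I would set $G_a(\nu):=W(\nu)/\bigl(\beta(\nu)\tilde\beta(\nu)N_a(\nu)\bigr)$ with $N_a$ a holomorphic normalizer of the prescribed radius-$a$ modulus, and check that $G_a$ is of exponential type strictly less than $\pi$ along the imaginary axis while remaining of controlled growth on vertical lines. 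A Carlson/Phragm\'en--Lindel\"of uniqueness theorem (\cite{Boa}) then upgrades the integer bound $G_a(\nu(l))=o(1)$ into continuous decay, i.e.\ $\sigma(\nu)-\tilde\sigma(\nu)=o(N_a(\nu))$ as $\Re\nu\to+\infty$ in a full sector, not merely along the integers.

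Finally I would convert this uniform radius-$a$ decay into the support statement $(A_2)$. In the integral representation of $W$, a component of $q-\tilde q$ living beyond some $a'>a$ would contribute a term of size governed by $(a'e/2\nu)^{2\nu}$, which exceeds the radius-$a$ rate by the exploding factor $(a'/a)^{2\nu}$ and thus contradicts the decay just obtained; hence $q-\tilde q$ is supported in $(0,a]$, which is $(A_2)$. Equivalently, the same conclusion may be phrased by factoring $\beta$ through the Wronskian at $r=a$, showing that $(A_1)$ forces equality of the Weyl functions of the two equations on the half-line $(a,+\infty)$, and then invoking the half-line uniqueness theorem in the spirit of Borg--Marchenko and Loeffel (\cite{Si}, \cite{Lo}). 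The main obstacle is exactly this analytic core: producing uniform-in-$\nu$ bounds on $\alpha,\beta$ and on their zeros in the right half-plane sharp enough to control the exponential type and to apply Carlson's theorem legitimately through the Regge poles. The generous $l^{-3}$ margin built into the threshold is what makes this step feasible, since it removes the need to track sharp constants and leaves only the exponential rate $(ae/2l)^{2l}$, which is precisely the quantity that detects the radius $a$.
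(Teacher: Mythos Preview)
Your overall strategy is in the right spirit---the equivalence does come down to a Paley--Wiener/Borg--Marchenko correspondence in the complex angular momentum variable---but there is a genuine obstruction in your interpolation step for $(A_1)\Rightarrow(A_2)$. You propose to work with $G_a(\nu)=W(\nu)/\bigl(\beta(\nu)\tilde\beta(\nu)N_a(\nu)\bigr)$ and apply a Carlson/Phragm\'en--Lindel\"of argument. This fails because $G_a$ is not holomorphic in any half-plane $\Re\nu>A$: for potentials in class $\mathcal{C}$ with nonzero compactly supported part $q_1$, the Regge poles (zeros of $\beta$) are \emph{not} confined to a vertical strip. Section~7 establishes precisely the opposite of what you need: for super-exponentially decreasing potentials the Regge poles have real parts tending to $+\infty$ (Theorem~\ref{nombreinfini}), and class $\mathcal{C}$ potentials with $q_1\ne0$ are not analytic in $\Re z\ge0$, so Corollary~\ref{pasdepoles} does not apply. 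The paper flags this explicitly as ``a crucial remark'' in Section~9.2.2 and states that the strategy of Theorem~\ref{analyticalcase} must be modified accordingly; your invocation of ``the localization of the Regge poles'' therefore points in the wrong direction.

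The paper circumvents the Regge poles by two devices you are missing. First, a reduction step (Lemma~\ref{reduc}): $(A_1)$ together with Proposition~\ref{necessarycondition} forces the phase shifts of the analytic parts $q_2,\tilde q_2$ to satisfy the same $o$-estimate, whence $q_2=\tilde q_2$ by Theorem~\ref{analyticalcase}. Only then is $q-\tilde q$ compactly supported, which is what produces the exponential bound $|W(\nu)|\le C\,b^{2\Re\nu}/(\Re\nu+1)$ via Corollary~\ref{compact}. Second, the paper never divides by $\beta\tilde\beta$: it works with the \emph{entire} function $F(r,\nu)=f^+\tilde f^- - f^-\tilde f^+$ and with the \emph{holomorphic} $G(\nu)=W(\nu)/\bigl((\nu+1)^2 r^{2\nu}\bigr)$, to which Cartwright's theorem applies with $B=0<\pi$. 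The Jost functions $\beta,\tilde\beta$ enter only through evaluation at integer points on the real axis, where Proposition~\ref{equivJost} gives $\beta\sim\beta_0$. One then obtains $F(r,\nu)\to0$ for $r\ge a$ along the positive real axis, and Phragm\'en--Lindel\"of on each quadrant of the entire function $F$ (bounded on $i\R$ by Proposition~\ref{fpmimaginary}) yields $F\equiv0$, hence $q=\tilde q$ on $(a,\infty)$.
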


\vspace{0.3cm}\noindent
\begin{rem}
\hfill\break
\begin{enumerate}
\item When $q$ and $\tilde{q}$ have compact support, Theorem \ref{Mainresult} was essentially proved by M. Horvath \cite{Ho} in the three dimensional case.
\item Actually, we have a better result for the implication $(A.2) \Longrightarrow (A.1)$. If $q(r)= \tilde{q}(r)$ almost everywhere on $(a,+\infty)$, we can prove (see Proposition \ref{necessarycondition}):
\begin{equation}
\delta_l - \tilde{\delta}_l \ = \ O \left( \frac{1}{l^{n}}\ \left( \frac{ae}{2l}\right)^{2l}\right) \ \ , \ \ l \rightarrow + \infty.
\end{equation}
\item We also think that Theorem \ref{Mainresult} remains true if, in the definition of the class ${\mathcal{C}}$, we can allow the potential $q_1$ to be exponentially decreasing. We were not able to prove it since Corollary \ref{compact} fails when the potential $q-\tilde{q}$ decays exponentially.
\item At last, if $q_1$, (resp. $\tilde{q}_1$) is super-exponentially decreasing (see Section 6 for the definition), following the proof of Theorem \ref{Mainresult}, one can get $q_2 (r)= \tilde{q}_2(r)$ for all $r>0$.
\end{enumerate}
\end{rem}

\vspace{0.8cm}\noindent
{\it{Outlines of the proof:}}

\vspace{0.2cm}
1 - The implication $(A_2) \Longrightarrow (A_1)$ in Theorem \ref{Mainresult} is easy to prove and can be found in \cite{Ho}, but for the convenience's reader, we shall give here a shorter proof. \\

2 - Theorem \ref{analyticalcase} and the implication $(A_1) \Longrightarrow (A_2)$ in Theorem \ref{Mainresult} follow in spirit the {\it{local Borg Marchenko's uniqueness theorem}}, (see  \cite{Be, GS, Si, Te}).

\vspace{0.1cm}\noindent
Let us explain briefly our approach:  we fix $r>0$ and  we define $F(r,\nu)$ as an application of the complex variable $\nu$ by:
\begin{equation}\label{definitionF0}
F(r,\nu) = \fp \tfm - \fm\tfp,
\end{equation}
where $f^{\pm}(r, \nu)$ and $\tilde{f}^{\pm}(r, \nu)$ are the Jost solutions associated with the potentials $q$ and $\tilde{q}$.

\vspace{0.2cm}\noindent
We are able to prove that this application is even with respect to $\nu$, holomorphic on the whole complex plane $\C$, and of order 1 with infinite type. Moreover, $F(r,\nu)$ is bounded on the imaginary axis $i\R$, and for all $r>0$ in Theorem \ref{analyticalcase}, (resp. for all $r\geq a$ in Theorem \ref{Mainresult}), we can show that $F(r, \nu) \rightarrow 0$ when $ \nu \rightarrow +\infty$, ($\nu$ real).

\vspace{0.2cm}\noindent
So, using the Phragmen-Lindelöf Theorem on each quadrant of the complex plane, we deduce that $F(r,\nu)$ is identically equal to zero, which implies easily the uniqueness of the potentials for $r>0$ in Theorem \ref{analyticalcase}, (resp. for $r  \geq a$ in Theorem \ref{Mainresult}).

\vspace{1cm} \noindent
This same approach has been used recently to study scattering inverse problems for asymptotically hyperbolic manifolds (see \cite{DGN,DKN,DN3}). In the hyperbolic setting, a Liouville transformation changes the angular momentum variable in a spectral variable, and we can see the Jost solutions as suitable perturbations of the modified Bessel functions $I_{\nu}(z)$. In the hyperbolic context, the variable $\nu$ is {\it{fixed}} and depends only on the geometry of the manifold, whereas the variable $z$ ranges over $\C$.

\vspace{0.2cm}\noindent
However, in the Euclidean setting of  this paper, the situation is drastically different; as in the hyperbolic case, the regular solution and the Jost solutions are close (in some sense) to the Bessel functions $J_{\nu}(r)$ or to the Hankel functions $H_{\nu}^{(j)}(r)$, but  the complex angular momentum $\nu$ can be as large as possible and the radial variable $r$ ranges over the {\it{non-compact}} set $(0,+\infty)$. In general, this kind of situation leads to very cumbersome and complicated calculations and one has to use the Langer uniform asymptotic formula for Bessel functions of large order  \cite{Lan}. We emphasize that this is not the case in this paper : we only use elementary properties for the Bessel functions.

\vspace{0.5cm} \noindent
The proofs of this paper rely heavily on the localization of the {\it{Regge poles}}. By definition, the Regge poles are the zeros of the Jost function $\be$ (see Section 2 for details) or equivalently, are the poles of the meromorphic continuation of the phase shifts $\delta(\nu)$ for $\nu \in \C$. Following Regge's theory, the positions of these poles determine power-law growth rates for the scattering amplitude. Moreover, the low-energy scattering is well studied using the Regge poles. In particular, they provide a rigorous definition of resonances.

\vspace{0.2cm}\noindent
In this paper, we prove the following theoretical result concerning the localization of the Regge poles. For non-zero super-exponentially decreasing potentials, we show that the number of the Regge poles are always infinite and there aren't bounded to the right in the first quadrant of the complex plane. Our result contradicts Theorem 5.2 in \cite{HBM}, which says that for an integrable potential $q(r)$ on $(0,+\infty)$, there are finitely many Regge poles in the right-half plane. As it was pointed to us by M. Marletta \cite{Mar}, the error in \cite{HBM} comes from Eq. (5.3) : the Green kernel $\Theta(r,s)$ appearing in the integral is not bounded with respect to the complex angular momentum.
We emphasize that our theoretical result is confirmed later, where for potentials with compact support, the precise asymptotics of the Regge poles are calculated.

\vspace{0.2cm}\noindent
For potentials which can be extended analytically in $\Re  z \geq 0$ with $\mid q(z)\mid \leq C \ (1+ \mid z \mid )^{-\rho}$, $\rho > 1$, the  situation is different. We generalize a result of Barut and Diley \cite{BaDi}, (see also \cite{Bes}), and we show that the Regge poles are confined in a vertical strip in the complex plane.


\section{Review of scattering theory for central potentials.}

In this Section, we recall (without proofs) some results obtained by Loeffel in \cite{Lo}.

\vspace{0.2cm}\noindent
Following Regge's idea, we  consider the radial Schrödinger equation on $(0,+\infty)$ at the fixed energy $\lambda=1$, where the angular
 momentum $\nu$ is now supposed to be a complex number and with $\Re \nu \geq 0$:
\begin{equation}\label{Schradiale}
-u'' +\left( \frac{\nu^2-\frac{1}{4}}{r^2} +q(r) \right) u =u.
\end{equation}
Of course, when $\nu = \nu(l)= l+\frac{n-2}{2}$, we recover the family of radial Schrödinger equations (\ref{eqradiale}) coming from the separation of variables.

\vspace{0.2cm}\noindent
First, we define {\it{the regular solution}} $\reg$ which is a solution of (\ref{Schradiale})  satisfying the boundary condition at $r=0$:
\begin{equation}\label{regsolution}
\reg \ \sim \ r^{\nu +\half} \ \ ,\ \ r \rightarrow 0.
\end{equation}
If the  potential $q(r)$ is piecewise continuous and satisfies the hypothesis $(H_1)$, Loeffel shows that, for $r>0$ fixed, the map $\nu \rightarrow \reg$, (resp. $\nu \rightarrow \varphi'(r,\nu)$)  is holomorphic in $\Re \nu \geq -\epsilon$, and we have:
\begin{equation}\label{conjregular}
\forall \ \Re \nu \geq -\epsilon,\ \overline{\reg}=\varphi(r, \bar{\nu}).
\end{equation}

\vspace{0.4cm}\noindent
Similarly, if the potential satisfies the hypothesis $(H_2)$, we can define {\it{the Jost solutions}} $f^{\pm}(r,\nu)$ as the unique solutions of (\ref{Schradiale}) satisfying the boundary condition at $r=+\infty$:
\begin{equation}\label{solJost}
f^{\pm}(r,\nu) \ \sim  \ e^{\pm i r}\ ,\ \ r \rightarrow +\infty.
\end{equation}
For $r>0$ fixed, the maps $\nu \rightarrow f^{\pm}(r,\nu)$, (resp. $\nu \rightarrow {f^{\pm}}'(r,\nu)$), are holomorphic on $\C$ and are even functions. Moreover,
\begin{equation}\label{conjJost}
\forall \nu \in \C\ \ ,\ \  \overline{\fp} = f^- (r, \bar{\nu}).
\end{equation}

\vspace{0.4cm}\noindent
The pair of the Jost solutions is a fundamental system of solutions (FSS) of (\ref{Schradiale}). Hence we can write
\begin{equation}\label{SFS}
\reg = \alpha(\nu) \fp + \beta(\nu) \fm,
\end{equation}
where $\alpha(\nu), \ \beta(\nu) \in \C$ are called {\it{the Jost functions}}. We recall that the Wronskian of two functions $u,v$ is given by $W(u,v)=uv'-u'v$. So, it follows immediately from (\ref{solJost}) that
\begin{equation}\label{wronskien}
W(\fp, \fm)=-2i.
\end{equation}
Hence, one has:
\begin{equation}\label{Jostfunctionalpha}
\alpha(\nu) = \frac{i}{2} \ W(\reg, \fm),
\end{equation}
\begin{equation}\label{Jostfunctionbeta}
\beta(\nu) = -\frac{i}{2} \ W(\reg, \fp).
\end{equation}
We can deduce that the Jost functions are holomorphic in $\Re \nu \geq -\epsilon$ and satisfy
\begin{equation}\label{conjJostfunction}
\overline{\alpha(\nu)} = \beta(\bar{\nu}).
\end{equation}
\vspace{0.2cm}
Now, let us give some elementary properties of the Jost functions (see section 4 for the details): the Jost function $\al$ does not vanish in the first  complex quadrant  $\Re \nu \geq 0$, $\Im \nu \geq 0$, whereas $\be$ does not  vanish in the fourth complex quadrant $\Re \nu \geq 0, \ \Im \nu \leq 0$.
The zeros of the Jost function $\be$ (belonging to the first quadrant), are called {\it{the Regge poles.}} These are the poles of the so-called {\it{Regge interpolation}}:
\begin{equation}\label{interpolation}
\sigma(\nu) = e^{i\pi(\nu +\half)} \ \frac{\al}{\be}.
\end{equation}
When $\nu >0$, it follows from (\ref{conjJostfunction}) that $\mid \sigma(\nu)\mid =1$, thus we can define {\it{the generalized phase shifts}} $\delta (\nu)$ as a continuous function in $(0,+\infty)$ through the relation:
\begin{equation}\label{phaseshifts}
\sigma (\nu) = e^{2i \delta(\nu)}.
\end{equation}
The generalized phase shifts become unique if we impose the condition $\delta(\nu) \rightarrow 0$ when $\nu \rightarrow +\infty$. Then, we deduce from (\ref{solJost}), (\ref{SFS}) and (\ref{phaseshifts}) that:
\begin{equation}\label{comportement}
\reg \ \sim \ 2 \ \mid \be \mid \ \sin\left( r-(\nu-\half)\pd + \delta(\nu) \right) \ \ ,\ \ r \rightarrow + \infty.
\end{equation}
In particular,  when $\nu = \nu(l)$,  (\ref{conditioneninfini}) implies that the generalized phase shifts are related to the {\it{physical}} phase shifts:
\begin{equation}\label {vraiphase}
\delta(\nu(l)) = \delta_l \ \ ,\ \ l \in \N.
\end{equation}

\vspace{0.2cm}\noindent
We emphasize that, if we can show  that for suitable potentials $q(r)$, there are no Regge poles in  a simply connected domain, the equation (\ref{phaseshifts}) allows us to define $\delta(\nu)$ as a holomorphic function in this domain.

\vspace{1cm}
Now, let us examine the free case, when the potential $q(r)=0$. In this case, the equation (\ref{Schradiale}) is a standard Bessel equation (see \cite{Leb}, p. 106). As a consequence,  we have explicit formulae for the regular solution $\rego$, the Jost solutions
$f_0^{\pm}(r,\nu)$ and the Jost functions $\alo$, $\beo$. First, we denote:
\begin{equation}\label{Anu}
A(\nu) = \sqrt{\frac{2}{\pi}} \ 2^{\nu} \ \Gamma (\nu+1).
\end{equation}
Then, we have:
\begin{equation}\label{phi0}
\rego = A(\nu) \ \sqrt{\frac{\pi r}{2}}\ J_{\nu}(r),
\end{equation}
\begin{equation}\label{fp0}
\fpo = e^{i(\nu+\half) \pd} \ \sqrt{\frac{\pi r}{2}}\ H_{\nu}^{(1)}(r),
\end{equation}
\begin{equation}\label{fm0}
\fmo = e^{-i(\nu+\half) \pd} \ \sqrt{\frac{\pi r}{2}}\ H_{\nu}^{(2)}(r),
\end{equation}
where $J_{\nu}(r)$ is the Bessel function of order $\nu$ and $H_{\nu}^{(j)}(r)$ are the Hankel functions of order $\nu$, (see the Appendix for details). So using
(\ref{Jostfunctionalpha}), (\ref{Jostfunctionbeta}) and (\cite{Leb}, Eq. (5.9.3)), we obtain for $\Re \nu \geq -\epsilon$,
\begin{equation}\label{alphazero}
\alo = \half \ A(\nu) \ e^{-i (\nu +\half) \pd},
\end{equation}
\begin{equation}\label{betazero}
\beo = \half \ A(\nu) \ e^{i (\nu +\half) \pd}.
\end{equation}
Obviously, it follows from (\ref{interpolation}), (\ref {phaseshifts}),  (\ref{alphazero}) and (\ref{betazero}) that in the free case the Regge interpolation
$\sigma_0(\nu) =1$ and the generalized phase shifts $\delta_0 (\nu) =0$.

\section{The regular solution $\reg$.}

In this section, first, we recall very briefly  the results obtained by Loeffel \cite{Lo}. Secondly, we give a new integral representation for the regular solution $\reg$.

\vspace{0.5cm}\noindent
Using the method of variation of constants, it is easy to see (\cite{AlRe}, Eq. $(3.14)$, or \cite{Lo}) that for $\Re \nu \geq -\epsilon$,

\begin{equation} \label{repint1}
\reg = r^{\nu + \half} + \frac{1}{2\nu} \int_0^r \left( (\frac{s}{r})^{\nu} - (\frac{r}{s})^{\nu} \right) \sqrt{rs} \ (1-q(s)) \ \varphi(s,\nu) \ ds.
\end{equation}
It follows that for a fixed $r>0$, $\nu \rightarrow \reg$, (resp. $\nu \rightarrow \varphi'(r,\nu)$) are holomorphic for $\Re \nu \geq -\epsilon$ and one has the following estimate (\cite{Lo}, Eq. (3))
\begin{equation}\label{estreg}
\mid \reg \mid \ \leq \ r^{\Re \nu + \half} \ exp \ \left( \frac{ r^{2\epsilon}}{\mid \nu \mid } \  \int_0^r s^{1-2\epsilon} \ \mid q(s)-1 \mid \ ds \right) \ \ {\rm{for}} \ \mid \nu \mid \geq 2 \epsilon.
\end{equation}
Moreover, for $\Re \nu \geq -\epsilon$, one has:
\begin{equation}\label{conjreg}
\overline{\reg} = \varphi (r, \bar{\nu}).
\end{equation}

\vspace{1cm}
We prefer to work with another Green kernel in order to obtain better estimates for the regular solution with respect to $\nu$.
We can find the next Proposition \ref{newrep} implicitly in (\cite{Ho1}, Eq. $(35)$), but in this paper, M. Horv\' ath wrote this lemma using the generalized phase shifts $\delta(\nu)$ for $\Re \nu >0$ which are, according to us, not well defined, in the presence of Regge poles.

\vspace{0.3cm}\noindent
First, let us introduce the Green kernel $K(r,s,\nu)$ we shall use. We denote:
\begin{equation}\label{u}
u(r) = \sqrt{\frac{\pi r}{2}} \ J_{\nu} (r),
\end{equation}
\begin{equation}\label{v}
v(r) = -i \sqrt{\frac{\pi r}{2}} \ H_{\nu}^{(1)} (r).
\end{equation}
The pair ($u(r), \ v(r))$ is a (FSS) of the equation (\ref{Schradiale}) when $q(r)=0$. The wronskian $W(u,v)=1$. Moreover, we have the elementary following lemma, (see \cite{Leb} for details):

\begin{lemma}\label{estimations}
\hfil\break
(i) When $r \rightarrow 0$,
\begin{eqnarray}\label{estimater=0}
u(r) &\sim &  \frac{1}{A(\nu)} \ r^{\nu+\half} ,\\
v(r) &\sim& - \frac{1}{2\nu} \  A(\nu) \ r^{-\nu +\half} \ {\rm{if}} \ \nu \not=0\ \ .
\end{eqnarray}
(ii) When $r \rightarrow +\infty$,
\begin{equation}\label{estimater=infini}
u(r) \sim \sin \left( r-(\nu-\half) \pd \right) \ \ ,\ \
v(r) \sim - e^{i\left( r-(\nu -\half)\pd \right) }.
\end{equation}
\end{lemma}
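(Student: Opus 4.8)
The plan is that this is a purely computational lemma: both $u$ and $v$ are normalized Bessel/Hankel functions, so everything follows by substituting the classical small- and large-argument asymptotics of $J_\nu$ and $H_\nu^{(1)}$ (as recalled in \cite{Leb} and in the Appendix) into the definitions \eqref{u}, \eqref{v}, and then matching constants and phases using the explicit form \eqref{Anu} of $A(\nu)$. There is no genuine obstacle here; the only thing requiring a little care is the bookkeeping of the gamma factors near $r=0$ and of the quarter-period phase shifts at infinity.

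For part (i), near the origin one uses $J_{\nu}(r) \sim \frac{1}{\Gamma(\nu+1)}\left(\frac{r}{2}\right)^{\nu}$, which gives
\begin{equation*}
u(r) = \sqrt{\frac{\pi r}{2}}\,J_{\nu}(r) \sim \sqrt{\frac{\pi}{2}}\,\frac{1}{2^{\nu}\Gamma(\nu+1)}\,r^{\nu+\half} = \frac{1}{A(\nu)}\,r^{\nu+\half},
\end{equation*}
the last equality being just the definition \eqref{Anu}. For $v$, I would write $H_{\nu}^{(1)} = J_{\nu} + iY_{\nu}$ and note that for $\nu\neq 0$ (more precisely $\Re\nu>0$) the dominant term near $0$ is $iY_{\nu}(r)\sim -\frac{i}{\pi}\Gamma(\nu)\left(\frac{2}{r}\right)^{\nu}$, since $J_\nu$ is subdominant. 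Hence
\begin{equation*}
v(r) = -i\sqrt{\frac{\pi r}{2}}\,H_{\nu}^{(1)}(r) \sim -\frac{1}{\sqrt{2\pi}}\,2^{\nu}\,\Gamma(\nu)\,r^{-\nu+\half},
\end{equation*}
and the key step is to recognize this as the claimed coefficient $-\frac{1}{2\nu}A(\nu)$ by using $\Gamma(\nu+1)=\nu\,\Gamma(\nu)$ in \eqref{Anu}.

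For part (ii), I would insert the oscillatory asymptotics $J_{\nu}(r)\sim\sqrt{\frac{2}{\pi r}}\cos\!\big(r-\frac{\nu\pi}{2}-\pq\big)$ and $H_{\nu}^{(1)}(r)\sim\sqrt{\frac{2}{\pi r}}\,e^{i(r-\frac{\nu\pi}{2}-\pq)}$. The $\sqrt{\pi r/2}$ factors cancel, leaving $u(r)\sim\cos\!\big(r-\frac{\nu\pi}{2}-\pq\big)$; converting via $\cos\theta=\sin(\theta+\pd)$ turns this into $\sin\!\big(r-(\nu-\half)\pd\big)$, as claimed. Similarly $v(r)\sim -i\,e^{i(r-\frac{\nu\pi}{2}-\pq)}$, and writing $-i=-e^{i\pi/2}$ to absorb the phase converts the $-\pq$ into $+\pq$, yielding $v(r)\sim -e^{i(r-(\nu-\half)\pd)}$. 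Thus every assertion reduces to a one-line substitution; the whole lemma is really a normalization check, and the statement $W(u,v)=1$ preceding it can be verified independently from the known Wronskian $W(J_\nu,H_\nu^{(1)}) = \frac{2i}{\pi r}$ together with the $\sqrt{\pi r/2}$ and $-i$ factors.
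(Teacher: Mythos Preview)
Your proof is correct and is exactly the approach the paper intends: the paper does not actually prove this lemma but simply refers to \cite{Leb}, so the content is precisely the substitution of the standard small- and large-argument asymptotics of $J_\nu$ and $H_\nu^{(1)}$ into the definitions \eqref{u}, \eqref{v}, together with the constant-matching via \eqref{Anu} that you carry out. Your bookkeeping of the gamma factors and the quarter-period phase shifts is accurate.
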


\vspace{0.5cm}\noindent
We define the Green kernel $K(r,s,\nu)$ for $r,s >0$ and $\Re \nu \geq 0$ by:

\begin{equation}\label{greenkernel}
K(r,s , \nu) = u(s)\ v(r) \ {\rm{if}} \ s \leq r \ \ ,\ \ K(r,s , \nu) = u(r)\ v(s) \ {\rm{if}} \ s \geq r .
\end{equation}

\vspace{0.2cm}\noindent
The following elementary Proposition will be powerful to prove our local uniqueness result:

\vspace{0.2cm}

\begin{prop} \label{newrep}
\hfil\break
Let $q(r)$ be a potential satisfying $(H_1)$ and $(H_2)$. For $\Re \nu > 0$, one has:
\begin{equation}\label{newrep1}
\reg = -2i\be \ e^{-i \pd (\nu-\half)} \ u(r) + \int_0^{+\infty} K(r,s, \nu) q(s)  \varphi(s,\nu) \ ds.
\end{equation}
\begin{equation}\label{newrep2}
\al \ e^{i \pd (\nu-\half)} + \be \ e^{-i \pd (\nu-\half)} \ =\ -\int_0^{+\infty} u(s) q(s) \varphi(s,\nu) \ ds.
\end{equation}
\end{prop}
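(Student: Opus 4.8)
The plan is to read (\ref{newrep1})--(\ref{newrep2}) as the variation-of-constants formula for $\reg$ relative to the free fundamental system $(u,v)$, and then to fix the two free constants by matching the boundary behaviour at $r=0$ and $r=+\infty$. Write the free operator as $L_0 w = -w'' + \left(\frac{\nu^2-\frac14}{r^2}-1\right)w$, so that $u,v$ solve $L_0 w =0$ while $\reg$ solves $L_0\reg = -q\,\reg$. Since $W(u,v)=1$, the function
\begin{equation*}
w_p(r) = \int_0^{+\infty} K(r,s,\nu)\,q(s)\,\varphi(s,\nu)\,ds = v(r)\int_0^r u\,q\,\reg\,ds + u(r)\int_r^{+\infty} v\,q\,\reg\,ds
\end{equation*}
is a particular solution of $L_0 w_p = -q\reg$: differentiating twice and using $u''=\left(\frac{\nu^2-\frac14}{r^2}-1\right)u$, $v''=\left(\frac{\nu^2-\frac14}{r^2}-1\right)v$ and $W(u,v)=1$, the boundary terms cancel and one recovers $L_0 w_p = -q\reg$. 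First I would check that the defining integrals converge: near $s=0$ the estimates of Lemma \ref{estimations} together with $\reg\sim r^{\nu+\half}$ give integrands that are $O(s^{2\nu+1}q(s))$ and $O(s\,q(s))$, both controlled by $(H_1)$; near $s=+\infty$, $u,v,\reg$ are bounded and $q\in L^1$ by $(H_2)$. Consequently $\reg - w_p$ solves $L_0 w=0$, hence $\reg = c_1(\nu)\,u + c_2(\nu)\,v + w_p$ for two constants depending only on $\nu$.

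Next I would show $c_2=0$. Near $r=0$ the term $u(r)\int_r^{+\infty} v\,q\,\reg\,ds$ contributes only at order $r^{\nu+\half}$, while $v(r)\int_0^r u\,q\,\reg\,ds = O(r^{\nu+\frac52})$ is subdominant; thus the only possible $r^{-\nu+\half}$ contribution to $\reg$ comes from $c_2 v$. Since for $\Re\nu>0$ the exponent $-\nu+\half$ is strictly more singular than $\nu+\half$ and $\reg\sim r^{\nu+\half}$ carries no such component, we must have $c_2=0$. Equivalently, from $\frac{d}{dr}W(\reg,u) = -q\,u\,\reg$ and the variation-of-constants expression $W(\reg,u) = -c_2 - \int_0^r u\,q\,\reg\,ds$, letting $r\to 0^+$ and using $\reg\sim A(\nu)u$, whence $W(\reg,u)(0^+)=0$, gives $c_2=0$.

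It remains to identify $c_1$ and to prove (\ref{newrep2}); here I would use Wronskians evaluated at both ends. Set $\theta = \pd(\nu-\half)$. With $c_2=0$ the representation gives, directly from the computation above,
\begin{equation*}
W(\reg,v)(r) = c_1 + \int_r^{+\infty} v\,q\,\reg\,ds, \qquad W(\reg,u)(r) = -\int_0^r u\,q\,\reg\,ds .
\end{equation*}
Because $\frac{d}{dr}W(\reg,v) = -q\,v\,\reg$ and $\frac{d}{dr}W(\reg,u)=-q\,u\,\reg$ are integrable at $+\infty$ (by $(H_2)$ and the boundedness of the solutions), both Wronskians have limits as $r\to+\infty$, equal respectively to $c_1$ and to $-\int_0^{+\infty} u\,q\,\reg\,ds$. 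On the other hand, $\reg=\al\fp+\be\fm\sim \al e^{ir}+\be e^{-ir}$ while, by Lemma \ref{estimations}, $u\sim\frac{1}{2i}(e^{i(r-\theta)}-e^{-i(r-\theta)})$ and $v\sim -e^{i(r-\theta)}$ (with the corresponding derivative asymptotics). Plugging these into $W(\reg,v)=\reg v'-\reg' v$ and $W(\reg,u)=\reg u'-\reg' u$, the oscillating $e^{\pm 2ir}$ terms cancel and the limits are
\begin{equation*}
W(\reg,v)(+\infty) = -2i\,\be\, e^{-i\theta}, \qquad W(\reg,u)(+\infty) = \al\,e^{i\theta}+\be\,e^{-i\theta}.
\end{equation*}
Equating the two evaluations yields $c_1 = -2i\,\be\,e^{-i\pd(\nu-\half)}$, which is (\ref{newrep1}), and $\al\,e^{i\pd(\nu-\half)}+\be\,e^{-i\pd(\nu-\half)} = -\int_0^{+\infty} u\,q\,\reg\,ds$, which is (\ref{newrep2}).

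The main obstacle is the rigorous justification of the two limits at infinity: one must know that $W(\reg,v)$ and $W(\reg,u)$ actually converge as $r\to+\infty$ --- which follows from the integrability of $q\,v\,\reg$ and $q\,u\,\reg$ guaranteed by $(H_2)$ and the boundedness of $u,v,\reg$ --- and that the $o(1)$ corrections to the Jost and Bessel asymptotics do not contaminate the constant part of the Wronskian. Since the limits are known to exist, it suffices to track the non-oscillating leading term, and the potentially dangerous $e^{\pm 2ir}$ contributions cancel identically; the remaining care is purely in handling the decaying error terms. The boundary analysis at $r=0$ requires only the leading Bessel asymptotics of Lemma \ref{estimations} and the constraint $\Re\nu>0$.
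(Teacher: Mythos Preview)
Your proof is correct and follows essentially the same route as the paper: variation of constants with respect to the free pair $(u,v)$, determination of the $v$-coefficient from the $r\to 0$ behaviour, and determination of the $u$-coefficient and formula (\ref{newrep2}) from the $r\to+\infty$ asymptotics. The only stylistic difference is that at infinity you phrase the matching via the Wronskians $W(\reg,u)$ and $W(\reg,v)$ and their limits, whereas the paper directly reads off the coefficients of $e^{\pm ir}$ in the expansion $\reg\sim Au(r)+\bigl(\int_0^{+\infty} u\,q\,\reg\,ds\bigr)v(r)$; the two computations are the same in substance.
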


\begin{proof}
Using the method of variation of constants, there exists $A,B \in \C$ such that
\begin{equation}\label{integrale}
\reg = A u(r) + B v(r) + \int_0^{+\infty} K(r,s, \nu) q(s)  \varphi(s,\nu) \ ds.
\end{equation}
Note that the (RHS) of (\ref{integrale}) is well defined with the help of Lemma \ref{estimations}. Then, we write
\begin{equation}\label{int1}
\reg = \left( A+\int_r^{+\infty} v(s)q(s)\varphi(s,\nu) \ ds \right) \ u(r) + \left( B+\int_0^r u(s)q(s)\varphi(s,\nu) \ ds \right) \ v(r).
\end{equation}
It follows from Lemma \ref{estimations} that
\begin{equation}
\reg = (B+o(1)) \ v(r) \ ,\ r \rightarrow 0.
\end{equation}
Since we have $\reg \sim r^{\nu +\half}$ when $r \rightarrow 0$, we have to take $B=0$. Then, the equation (\ref{int1}) implies, when $r \rightarrow +\infty$:
\begin{eqnarray*}
\reg &\sim & A u(r) + \left( \int_0^{+\infty} u(s)q(s) \varphi(s,\nu) \ ds  \right) \ v(r)\ , \\
     &\sim & A \sin \left( r-(\nu-\half) \pd \right) - \left( \int_0^{+\infty} u(s)q(s) \varphi(s,\nu) \ ds \right) \  e^{i\left( r-(\nu -\half)\pd \right) }.
\end{eqnarray*}
In other way, from (\ref{solJost}) and (\ref{SFS}), we deduce
\begin{equation}
\reg \sim \al e^{ir} + \be e^{-ir} \ \ ,\ \ r \rightarrow + \infty.
\end{equation}
So, we obtain easily
\begin{eqnarray}
\al &=& \left( \frac{A}{2i} - \int_0^{+\infty} u(s) q(s) \varphi (s,\nu) \ ds \right) \ e^{-i (\nu -\half)\pd  },\\
\be &=& - \frac{A}{2i} \ e^{i (\nu -\half)\pd  },
\end{eqnarray}
which implies the lemma.
\end{proof}

\section{The Jost solutions $f^{\pm} (r,\nu)$.}

As for the regular solution, using the method of variation of constants, the Jost solutions for $r>0$ are given by, (\cite{Lo}, Lemma 6):
\begin{equation}\label{integralequationjost}
f^{\pm} (r, \nu) = e^{\pm ir} + \int_r^{+\infty} \sin (r-s) \left( \frac{\nu^2-\frac{1}{4}}{s^2} +q(s) \right) \ f^{\pm}(s,\nu) \ ds.
\end{equation}
We can deduce easily that the maps $ \nu \rightarrow f^{\pm}(r,\nu)$, (resp. $ \nu \rightarrow {f^{\pm}}'(r,\nu)$), are holomorphic on $\C$, are even functions and
satisfy the following estimate:
\begin{equation}\label{estifpm}
\mid f^{\pm}(r,\nu) \mid \ \leq \ exp\ \left(\int_r^{+\infty} \mid  \frac{\nu^2-\frac{1}{4}}{s^2} +q(s) \mid \ ds \right) .
\end{equation}
In particular, for a fixed $r>0$, $ \nu \rightarrow f^{\pm}(r,\nu)$ is an entire function of order $2$, i.e there exists $A,B >0$ such that
$\mid f^{\pm}(r,\nu) \mid \ \leq \ A \ e^{\ B \ \mid \nu \mid^2}, \ \forall \nu \in \C$.

\vspace{1cm}
In this section, we  shall obtain new useful estimates for the Jost solutions $f^{\pm}(r,\nu)$ and we shall see that in particular the Jost solutions are entire
functions of order $1$ with infinite type with respect to  $\nu \in \C$, (see below for the definition). Before studying $f^{\pm}(r,\nu)$, let us examine in detail the free case. First, let us recall some well-known definitions for holomorphic functions (see for instance \cite{Lev}).

\vspace{0.5cm}\noindent
{\it{Definitions}}
\hfill \break
Let $ f:\C \rightarrow \C$ an entire function of the complex variable $z$. Let
\begin{equation}
M(r) = \sup_{\mid z\mid =r} \ \mid f(z) \mid.
\end{equation}
We say that $f$ is of order $\rho$ if
\begin{equation}
\limsup_{r \rightarrow +\infty} \ \frac {\log \log M(r)}{\log r} \ =\ \rho.
\end{equation}
A function of order $\rho$ is said of type $\tau$ if
\begin{equation}
\limsup_{r \rightarrow +\infty} \ \frac {\log  M(r)}{\log r^{\rho}} \ =\ \tau.
\end{equation}
If  $\tau = +\infty$, we say that the function $f$ is of order $\rho$ with infinite (or maximal) type.

\vspace{0.2cm}
\begin{lemma}\label{orderonezero1}
\hfill\break
For $r>0$ fixed, the free Jost solutions $\fpmo$ are holomorphic functions of order $1$ with infinite type with respect to $\nu$.
\end{lemma}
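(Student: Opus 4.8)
\emph{Setup and reduction.} The plan is to start from the explicit representations (\ref{fp0}), (\ref{fm0}), which reduce the statement to controlling the growth in $\nu$ of the Hankel functions $H_\nu^{(1)}(r)$, $H_\nu^{(2)}(r)$ at a \emph{fixed} argument $r>0$, viewed as entire functions of the order $\nu$. Entirety is immediate: $J_\nu(r)$ is entire in $\nu$ from its power series $J_\nu(r)=(r/2)^\nu\sum_{k\geq0}\frac{(-r^2/4)^k}{k!\,\Gamma(\nu+k+1)}$ (the reciprocal Gamma factors being entire), $Y_\nu(r)$ is entire because the apparent poles of $\frac{J_\nu\cos\nu\pi-J_{-\nu}}{\sin\nu\pi}$ at the integers are removable, and the prefactors $e^{\pm i(\nu+\half)\pd}\sqrt{\pi r/2}$ are entire of order $1$ and finite type. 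Moreover, by the conjugation relation (\ref{conjJost}), which in the free case reads $\overline{\fpo}=f_0^-(r,\bar\nu)$, the maximum moduli of $\fpo$ and $\fmo$ on the circle $|\nu|=R$ coincide; it therefore suffices to treat $\fpo$, and I write $M(R)=\sup_{|\nu|=R}|\fpo|$.

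\emph{Lower bound (the easy half).} For the lower bound I would invoke only the classical large-order asymptotics at fixed argument, which follow from the leading terms of the respective series and are elementary in the sense announced in the introduction: as $\nu\to+\infty$ along the positive reals, $J_\nu(r)\sim\frac{(r/2)^\nu}{\Gamma(\nu+1)}\to0$ while $Y_\nu(r)\sim-\frac{\Gamma(\nu)}{\pi}(2/r)^\nu$, so that $H_\nu^{(1)}(r)\sim iY_\nu(r)$ is governed by the Gamma factor. Since the prefactor has modulus $1$ on the real axis, Stirling gives $\log|\fpo|=\nu\log\nu+O(\nu)$ as $\nu\to+\infty$ in $\R$. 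Choosing $\nu=R$ real in the supremum yields $\log M(R)\geq R\log R+O(R)$. Consequently $\frac{\log\log M(R)}{\log R}\to1$, so the order is at least $1$, and $\frac{\log M(R)}{R}\geq\log R+O(1)\to+\infty$, so the type for order $1$ is infinite. Thus the whole content of the statement beyond entirety lies in the matching \emph{upper} bound on the order.

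\emph{Upper bound and the main obstacle.} Here one must show $\log M(R)\leq R\log R+O(R)$, i.e. order $\leq1$, \emph{uniformly over all directions} of $\nu$ (the lower bound used a single ray, but order and type are read off from the maximum on $|\nu|=R$). The naive route through the integral equation (\ref{integralequationjost}) is useless: in the free case it gives only (\ref{estifpm}), $|\fpo|\leq\exp(|\nu^2-\tfrac14|/r)$, which is of order $2$ — precisely the discrepancy this section is meant to resolve — so one is forced to use the special-function structure. My plan is to bound $H_\nu^{(1)}(r)$ through the Macdonald function $K_\nu$, using the connection $K_\nu(z)=\frac{\pi}{2}i^{\nu+1}H_\nu^{(1)}(iz)$ together with $K_\nu(z)=\int_0^\infty e^{-z\cosh t}\cosh(\nu t)\,dt$ for $\Re z>0$; the elementary inequality $|\cosh(\nu t)|^2=\sinh^2(\Re\nu\,t)+\cos^2(\Im\nu\,t)\leq\cosh^2(\Re\nu\,t)$ then yields $|K_\nu(z)|\leq K_{|\Re\nu|}(\Re z)$. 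Since $K_{|\Re\nu|}$ has, at fixed positive argument, exactly the growth $\exp(|\Re\nu|\log|\Re\nu|+O(|\Re\nu|))$ of the Gamma function and $|\Re\nu|\leq R$, this controls the most dangerous directions $\Re\nu\to\pm\infty$, while the order-$1$ finite-type prefactors $i^{\nu+1}$ and $e^{\pm i(\nu+\half)\pd}$ absorb the imaginary directions. The technical crux is that the value we need sits on the boundary $z=-ir$ (where $\Re z=0$), so the defining integral is only conditionally convergent there; the real work is to justify a rotation of the contour of integration producing a genuinely convergent representation of $H_\nu^{(1)}(r)$ from which the bound $|H_\nu^{(1)}(r)|\leq C(r)\,e^{c|\Im\nu|}K_{|\Re\nu|}(c'r)$ can be read off. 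Granting this, $\log M(R)\leq R\log R+O(R)$ in every direction, so combining with the lower bound gives order exactly $1$ with infinite type; the same conclusion for $\fmo$ follows from the conjugation symmetry noted above.
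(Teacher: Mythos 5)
Your reduction and your lower bound are correct and complete: entirety in $\nu$, the symmetry $\overline{\fpo}=f_0^-(r,\bar{\nu})$, and the growth $\log \mid \fpo \mid \ = \nu \log \nu + O(\nu)$ along the positive real axis (from $H_{\nu}^{(1)}(r) \sim i Y_{\nu}(r) \sim -\frac{i}{\pi}\Gamma(\nu) (r/2)^{-\nu}$) indeed give order $\geq 1$ and infinite type once order $\leq 1$ is established. But the upper bound --- which, as you say yourself, is the whole content of the statement --- is not proved: it rests on the estimate $\mid H_{\nu}^{(1)}(r) \mid \ \leq C(r)\, e^{c \mid \Im \nu \mid}\, K_{\mid \Re \nu \mid}(c'r)$, which you introduce with ``granting this''. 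The gap is genuine and sits exactly where you locate it. The representation $K_{\nu}(z)=\int_0^{\infty} e^{-z\cosh t}\cosh(\nu t)\,dt$ requires $\Re z>0$, and at the boundary point $z=-ir$ the integral is not even conditionally convergent once $\mid \Re \nu \mid \geq 1$ (substituting $u=\cosh t$, the integrand behaves like $e^{iru}u^{\Re \nu -1}$); Watson's boundary formula $H_{\nu}^{(1)}(r)=\frac{2}{\pi i}e^{-i\nu \pd}\int_0^{\infty}e^{ir \cosh t}\cosh(\nu t)\,dt$ is valid only in the strip $\mid \Re \nu \mid <1$, which excludes precisely the dangerous directions $\Re \nu \to \pm\infty$. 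Note also that the most obvious rotation $t \mapsto t e^{\pm i\theta}$ fails: along such rays $\Re(ir\cosh t)= \mp r \sinh(s\cos\theta)\sin(s \sin\theta)$ oscillates in sign with exponentially large amplitude. What does work is to deform onto the segment $[0, i\pd]$ followed by the horizontal line $\Im t = \pd$, where $\cosh(s+i\pd)=i\sinh s$ converts the oscillation into the decay $e^{-r\sinh s}$; the deformed integral converges for every $\nu \in \C$, is entire in $\nu$, and coincides with $H_{\nu}^{(1)}(r)$ for $\mid \Re \nu \mid <1$, hence for all $\nu$ by analytic continuation. One must then still carry out the Laplace estimate $\int_0^{\infty} e^{-r\sinh s + \mid \Re \nu \mid s}\,ds \leq \exp\left( \mid \Re \nu \mid \log \mid \Re \nu \mid + O(\mid \Re \nu \mid) \right)$ (maximum near $\cosh s \approx \mid \Re \nu \mid /r$), the vertical leg and the prefactors contributing only $O(e^{C\mid \Im \nu \mid})$. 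None of this is in your text, and it is all of the work.

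It is worth noting that the paper avoids contour deformations entirely, by a different trick: using evenness of $\fpo$ in $\nu$ it restricts to $\Re \nu \geq 0$, bounds the \emph{product} $\mid J_{\nu}(r) H_{\nu}^{(1)}(r) \mid \ \leq C_{\delta}\, e^{\pi \mid \Im \nu \mid} (1+\Re \nu)^{-\frac{\delta}{2}} r^{\frac{\delta-1}{2}}$ via Theorem \ref{estproduitbessel} (itself a consequence of the product representation of Theorem \ref{newrel}), and then divides by the uniform asymptotics $J_{\nu}(r)=\frac{(r/2)^{\nu}}{\Gamma(\nu+1)}(1+o(1))$ of Proposition \ref{unifBessel}, valid in $\Re \nu \geq 0$; this yields $\mid \fpo \mid \ \leq A\, e^{B \mid \nu \mid} \mid \Gamma(\nu+1)\mid$ and Stirling's formula concludes. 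If you complete the contour-rotation step sketched above, your route becomes a legitimate alternative (it even treats all directions of $\nu$ directly instead of invoking parity); as submitted, however, the proof is conditional on its pivotal estimate.
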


\begin{proof}
For instance, let us examine $\fpo$. We recall that $\fpo$ is even with respect to $\nu$, so it suffices to estimate $\mid \fpo \mid $ for $\Re \nu \geq 0$.
Using (\ref{fp0}), we have:
\begin{equation}
\mid \fpo \mid = e^{-\Im \nu \ \pd} \ \sqrt {\frac{\pi r}{2}} \ \mid H_{\nu}^{(1)} (r)\mid .
\end{equation}
On one hand, Theorem \ref{estproduitbessel} implies that, for $\Re \nu \geq 0$ and some $\delta \in (0,1)$ fixed:
\begin{equation}
\mid J_{\nu }(r)\  H_{\nu}^{(1)} (r) \mid \ \leq \ C_{\delta} \ e^{\pi \mid \Im\nu\mid}\ (1 + \Re \nu)^{-\frac{\delta}{2}} \ r^{\frac{\delta -1}{2}}.
\end{equation}
On the other hand, using Proposition \ref{unifBessel}, we have for $\Re \nu \geq 0$,
\begin{equation}
J_{\nu }(r) =  \  \frac{1}{ \Gamma(\nu+1)} \ \left(\frac{r}{2}\right)^{\nu} \ (1+o(1)) \ \ ,\  \ \nu \rightarrow \infty.
\end{equation}
Hence, it follows that there exists suitable  constants $A,B>0$ such that, for all $\Re \nu \geq 0$,
\begin{equation}\label{fpotypeinfini}
\mid \fpo \mid  \leq  A \ e^{B\  \mid \nu \mid} \  \mid \Gamma(\nu+1) \mid.
\end{equation}
Thus, the lemma follows from Stirling's formula, (\cite{Leb}, Eq. (1.4.24)).
\end{proof}

\vspace{0.2cm}
In the next Lemma, we precise the localization of the zeros of the free Jost solutions $\fpo$, $r>0$ fixed, with respect to the complex variable $\nu \in \C$. Of course, these zeros are also the zeros of the Hankel functions $H_{\nu}^{(j)} (r)$ as a function of its order $\nu$. We emphasize that this result is actually  a general fact for the Jost solutions $\fpm$. The proof is inspired from \cite{MaKo}. We recall that the first open quadrant of the complex plane is the set of the complex number $\nu$ such that $\Re \nu > 0$ and $\Im \nu > 0$, the second open quadrant is the set of the complex number $\nu$ such that $\Re \nu < 0$ and $\Im \nu > 0$, ...

\begin{lemma}\label{nonzero}
\hfill\break
For $r>0$ fixed, the zeros of the  free Jost solution $\fpo$, (resp. $\fmo$) as function of  $\nu$  belong to the first and third open quadrant, (resp. the second and the fourth open quadrant).
\end{lemma}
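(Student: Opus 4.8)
The plan is to argue directly from the free radial equation (\ref{Schradiale}) with $q=0$ rather than from the explicit Hankel representation, using a weighted energy identity in the spirit of \cite{MaKo}. Fix $r_0>0$ and suppose $\nu_0 \in \C$ is a zero of $\nu \mapsto \fpo$ at $r=r_0$, so that $u(r):=f_0^+(r,\nu_0)$ solves
\begin{equation}
-u'' + \frac{\nu_0^2-\frac14}{r^2}\, u = u , \qquad u(r) \sim e^{ir}\ (r\to+\infty), \qquad u(r_0)=0 ,
\end{equation}
and $u \not\equiv 0$ since $u\sim e^{ir}$ at infinity. (By (\ref{fp0}) these zeros coincide with the zeros of $H_{\nu}^{(1)}(r)$ in the order $\nu$, but we shall not need this.) First I would multiply the equation by $\bar u$, integrate over $[r_0,R]$ and integrate by parts once; because $u(r_0)=0$ the boundary contribution at $r_0$ vanishes, leaving
\begin{equation}
u'(R)\,\bar u(R) - \int_{r_0}^R |u'|^2\, dr = \Big(\nu_0^2-\frac14\Big)\int_{r_0}^R \frac{|u|^2}{r^2}\, dr - \int_{r_0}^R |u|^2\, dr .
\end{equation}

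The crucial step is to take imaginary parts. The three integrals $\int |u'|^2$, $\int |u|^2$ and $\int |u|^2 r^{-2}$ are real, and $\frac14$ is real, so everything drops except the factor $\Im(\nu_0^2)$, yielding
\begin{equation}
\Im\big[u'(R)\,\bar u(R)\big] = \Im(\nu_0^2)\, \int_{r_0}^R \frac{|u|^2}{r^2}\, dr .
\end{equation}
Next I would let $R\to +\infty$. On the right, the weight $r^{-2}$ together with the boundedness $|u(r)|^2 \to 1$ (from $u\sim e^{ir}$) makes the integral converge to a finite, strictly positive limit $I := \int_{r_0}^{+\infty} |u|^2 r^{-2}\, dr > 0$. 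On the left, the integral equation (\ref{integralequationjost}) with $q=0$ (differentiated once) gives $u(r)=e^{ir}+O(1/r)$ and $u'(r)=ie^{ir}+O(1/r)$, whence $u'(R)\bar u(R)\to i$ and $\Im[u'(R)\bar u(R)]\to 1$. Therefore $\Im(\nu_0^2)\, I = 1$, i.e. $\Im(\nu_0^2) = 1/I > 0$. Writing $\nu_0 = a+ib$ this reads $2ab>0$, so $a$ and $b$ have the same nonzero sign: $\nu_0$ lies in the first or the third open quadrant, the strict inequality automatically excluding the two axes.

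Finally, the statement for $\fmo$ follows from the identical computation with $u=f_0^-(r,\nu_0)$: now $u\sim e^{-ir}$ and $u'\sim -ie^{-ir}$, so the boundary term tends to $-i$ and $\Im(\nu_0^2)\,I=-1$, giving $2ab<0$ and hence the second and fourth open quadrants. The only genuinely nontrivial points are the convergence and positivity of the weighted integral $I$ and the evaluation of the boundary term at infinity; both are routine consequences of the asymptotics in Lemma \ref{estimations} and of the Jost integral equation (\ref{integralequationjost}), so I expect no serious obstacle.
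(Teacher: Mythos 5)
Your proof is correct and is essentially the paper's argument in disguise: taking the imaginary part of your integration-by-parts identity is exactly the paper's computation of $\frac{d}{dr}\,W\bigl(\fpo,\overline{\fpo}\bigr)=\frac{\bar{\nu}^2-\nu^2}{r^2}\mid \fpo \mid^2$ integrated over $[r_0,+\infty)$, with your relation $\Im(\nu_0^2)\,I=1$ coinciding with the paper's $2\,\Re\nu\,\Im\nu\int_{r_0}^{+\infty}\frac{\mid \fpo\mid^2}{r^2}\,dr=1$. The only cosmetic differences are that you truncate at $R$ and justify the boundary term via the Volterra equation (\ref{integralequationjost}), and that you treat the case of $\fmo$ by a direct repetition rather than by the conjugation symmetry (\ref{conjJost}); both are fine.
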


\begin{proof}
For instance, assume that $f_0^+(r_0, \nu)=0$ for some $r_0>0$ fixed. Using (\ref{Schradiale}) with $q=0$, we see that
\begin{equation}\label{deriveeW}
\frac{d}{dr}\ \left[W(\fpo, \overline{\fpo})\right] = \frac{\bar{\nu}^2 -\nu^2}{r^2} \mid \fpo\mid^2.
\end{equation}
Integrating (\ref{deriveeW}) over $[r_0,+\infty[$ and using (\ref{solJost}), we obtain:
\begin{equation}
\int_{r_0}^{+\infty} \frac{\bar{\nu}^2 -\nu^2}{r^2} \mid \fpo\mid^2 = -2i,
\end{equation}
or equivalently
\begin{equation}
2 \Re \nu \ \Im \nu \ \int_{r_0}^{+\infty} \frac{\mid \fpo\mid^2}{r^2}= 1.
\end{equation}
It follows that $\Re \nu \ \Im \nu >0$ which implies the Lemma.
\end{proof}

\vspace{1cm}
Now, let us study the Jost solutions $\fpm$. As in the previous section, we shall establish a new integral representation which will be useful for our inverse problem. Instead of using the Green kernel $K(r,s,\nu)$, we prefer to use a more convenience one, $N(r,s,\nu)$. Using the notation (\ref{u}), (\ref{v}), we set:
\begin{equation}\label{greenN}
N(r,s,\nu) = u(r) v(s)-u(s)v(r).
\end{equation}
Clearly, we have a new integral representation:
\begin{equation}\label{Jostint}
f^{\pm} (r,\nu) = f_0^{\pm}(r,\nu) + \int_r^{+\infty} N(r,s,\nu) q(s) f^{\pm}(s,\nu) \ ds.
\end{equation}

\noindent
As a by-product of this integral representation,  we can deduce that the Jost solutions, for $r>0$ fixed, are bounded when $\nu$ belongs to the
imaginary axis $i\R$ :

\begin{prop}\label{fpmimaginary}
\hfil\break
Assume that the potential $q$ satisfies the hypothesis $(H_2)$. Then, for all $r>0$ and $y \in \R$,
\begin{equation}\label{bornesuriR}
\mid f^{\pm} (r,iy) \mid \ \leq \  2^{\frac{1}{4}}\ exp\ \left( \sqrt{2} \int_r^{+\infty} \mid q(s) \mid \ ds\right).
\end{equation}
\end{prop}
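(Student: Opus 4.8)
The plan is to exploit the integral representation (\ref{Jostint}) together with the bounds on the free Green kernel $N(r,s,\nu)$ along the imaginary axis, and then close the estimate by a Gronwall-type argument. The key point is that on $i\R$ the free Jost solutions and the kernel are uniformly bounded, so the Volterra equation (\ref{Jostint}) can be iterated.

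First I would estimate the building blocks $u(s)$ and $v(s)$ when $\nu=iy$ is purely imaginary. Since $u(s)=\sqrt{\pi s/2}\,J_{iy}(s)$ and $v(s)=-i\sqrt{\pi s/2}\,H^{(1)}_{iy}(s)$, and these are exactly the objects controlled in Lemma \ref{estimations} (with the free Jost solutions $f_0^\pm$ recovered up to the unimodular phase factors $e^{\pm i(\nu+\frac12)\pd}$ appearing in (\ref{fp0})--(\ref{fm0})), I expect a clean bound of the form $|u(s)|,|v(s)|\le 2^{1/8}$ (or some explicit constant of that flavour) holding uniformly for all $s>0$ and all $y\in\R$. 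The numerical constant $2^{1/4}$ and the factor $\sqrt2$ in the statement strongly suggest that the relevant product bound is $|N(r,s,iy)|=|u(r)v(s)-u(s)v(r)|\le \sqrt 2$ for $s\ge r$ (so that the two terms each contribute $\le 2^{1/4}\cdot 2^{1/4}/\sqrt2$ after the symmetrization), and that $|f_0^\pm(r,iy)|\le 2^{1/4}$. I would verify these two elementary inequalities directly from the asymptotics and the known behaviour of Bessel functions of imaginary order, which for $\nu\in i\R$ are real-analytic and oscillatory rather than exponentially growing; this is precisely why the imaginary axis is the good axis on which boundedness survives even though $f^\pm$ has order $1$ with infinite type in $\nu$ elsewhere.

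With these in hand, the representation (\ref{Jostint}) reads, for $\nu=iy$,
\begin{equation*}
f^{\pm}(r,iy) = f_0^{\pm}(r,iy) + \int_r^{+\infty} N(r,s,iy)\, q(s)\, f^{\pm}(s,iy)\, ds,
\end{equation*}
and setting $M(r)=\sup_{t\ge r}|f^\pm(t,iy)|$ I would bound
\begin{equation*}
|f^{\pm}(r,iy)| \le 2^{\frac14} + \sqrt 2 \int_r^{+\infty} |q(s)|\, |f^{\pm}(s,iy)|\, ds.
\end{equation*}
Because the integration runs from $r$ to $+\infty$, this is a Volterra inequality "at infinity", so I would apply Gronwall's lemma in the reversed variable (or simply iterate the Neumann series for the Volterra operator) to obtain
\begin{equation*}
|f^{\pm}(r,iy)| \le 2^{\frac14}\,\exp\!\left(\sqrt 2 \int_r^{+\infty} |q(s)|\, ds\right),
\end{equation*}
which is exactly (\ref{bornesuriR}); hypothesis $(H_2)$ guarantees the integral is finite.

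The main obstacle, and the only nonroutine step, is establishing the uniform-in-$y$ pointwise bounds on $u$, $v$ and hence on $N(r,s,iy)$ and $f_0^\pm(r,iy)$ over the full range $s>0$. Lemma \ref{estimations} only gives the asymptotic behaviour as $s\to 0$ and $s\to+\infty$, so I would need a genuinely uniform control of $\sqrt{\pi s/2}\,J_{iy}(s)$ and $\sqrt{\pi s/2}\,H^{(1)}_{iy}(s)$ for intermediate $s$ and large $|y|$. I expect this to follow either from an integral representation of the Bessel functions of imaginary order combined with the factor $e^{-\Im\nu\,\pd}=e^{\mp y\pd}$ that compensates the growth (as already used in the proof of Lemma \ref{orderonezero1}), or from the monotonicity/energy identity of the type (\ref{deriveeW}) evaluated along $i\R$. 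Once the constant $2^{1/4}$ is pinned down on this axis the rest is the standard Gronwall closure.
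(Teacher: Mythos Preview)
Your strategy is exactly the paper's: starting from the Volterra equation \eqref{Jostint}, one needs precisely the two bounds $|f_0^{\pm}(r,iy)|\le 2^{1/4}$ and $|N(r,s,iy)|\le\sqrt{2}$, after which the Neumann iteration (equivalently your Gronwall closure) gives \eqref{bornesuriR} immediately. The paper carries this out via Proposition~\ref{imaxis} in the Appendix, which yields $|H^{(1)}_{iy}(r)|\le 2^{3/4}(\pi r)^{-1/2}e^{\pi y/2}$ and the analogous bound for $H^{(2)}_{iy}$; these are obtained from Nicholson's integral representation $J_{iy}^2+Y_{iy}^2=\frac{8}{\pi^2}\int_0^\infty K_0(2r\sinh t)\cos(2ty)\,dt$, which is precisely the ``integral representation combined with the compensating factor $e^{\mp y\pi/2}$'' you anticipated.

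One small correction: your tentative individual bounds $|u(s)|,|v(s)|\le 2^{1/8}$ cannot hold uniformly in $y$. Indeed $|J_{iy}(s)|\sim |\Gamma(1+iy)|^{-1}\sim (2\pi|y|)^{-1/2}e^{\pi|y|/2}$ as $|y|\to\infty$ for fixed $s$, so $u$ blows up. What saves the argument is that $f_0^{\pm}$ and $N$ are expressed purely in terms of the Hankel functions (the paper rewrites $N$ as $\tfrac{i\pi}{4}\sqrt{rs}\,(H^{(1)}_\nu(r)H^{(2)}_\nu(s)-H^{(1)}_\nu(s)H^{(2)}_\nu(r))$), and in each resulting product $H^{(1)}_{iy}H^{(2)}_{iy}$ the exponential factors $e^{\pm\pi y/2}$ cancel. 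So the bound is genuinely a product bound, not one assembled from individual factors. With that caveat your proposal is correct and matches the paper's proof.
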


\vspace{0.2cm}
\begin{proof}
We consider the case $(+)$ only and we solve (\ref{Jostint}) by iterations. We set:
\begin{eqnarray*}
\psi_0(r) &=& f_0^+(r,iy), \\
\psi_{k+1}(r)&=&\int_r^{+\infty} N(r,s, iy) q(s) \psi_k(s) \ ds.
\end{eqnarray*}
\vspace{0.2cm}\noindent
For $y \in \R$, one has $\mid f_0^+(r, iy) \mid = e^{-\pd y} \sqrt{\frac{\pi r}{2}} \mid H_{iy}^{(1)}(r) \mid$. So, using the first estimate in Proposition \ref{imaxis}, one sees that:
\begin{equation} \label{estifp}
\mid f_0^+(r,iy) \mid \leq 2^{\frac{1}{4}}
\end{equation}
Moreover, recalling that  ${\ds{J_{\nu}(r) = \half \left( H_{\nu}^{(1)}(r) + H_{\nu}^{(2)}(r)\right)}}$, (see the Appendix, Section A1), we obtain immediately
\begin{equation}\label{autre}
N(r,s,\nu) = i \pq \sqrt{rs} \ \left( H_{\nu}^{(1)}(r)H_{\nu}^{(2)}(s)-H_{\nu}^{(1)}(s)H_{\nu}^{(2)}(r) \right).
\end{equation}
Hence, it follows from (\ref{autre}) and Proposition \ref{imaxis} again, that $\forall s \geq r >0, \ \forall y \in \R$,
\begin{equation}
\mid  N(r,s,iy) \mid \ \leq \ \sqrt{2}.
\end{equation}
Then, for all $k \in \N$,  we easily prove by induction:
\begin{equation}
\mid \psi_k (r) \mid \ \leq \ 2^{\frac{1}{4}} \ \frac{1}{k!} \ \left( \sqrt{2} \int_r^{+\infty} \mid q(s) \mid \ ds \right)^k.
\end{equation}
Since ${\ds{f^+(r,iy) = \sum_{k=0}^{+\infty} \psi_k (r)}}$, one obtains the Lemma.
\end{proof}

\vspace{0.5cm}\noindent
When the potential $q(r)$ decays faster at infinity, roughly speaking when $q(r) = O(r^{-\rho})$ with $\rho > \frac{3}{2}$, we are able to prove that, for $r>0$ fixed, the Jost solutions $\fp \sim \fpo$ for $\nu \rightarrow \infty$ in the second or fourth quadrant, and $\fm \sim \fmo$ for $\nu \rightarrow \infty$ in the first or third quadrant. Let us explain briefly our strategy, (for instance, let us the study $\fp$ with $\nu$ in the fourth quadrant):

\vspace{0.4cm}
\noindent
It follows from Lemma \ref{nonzero} that $\fpo$ does not vanish in the fourth quadrant. Then, for $0<r\leq s$, we can set:
\begin{equation}\label{defM}
M(r,s,\nu) =\frac{f_0^+(s,\nu)}{\fpo} \ N(r,s,\nu).
\end{equation}
Thus, setting ${\ds{g(r) = \frac{\fp}{\fpo}}}$, we obtain immediately from (\ref{Jostint}):
\begin{equation}\label{intg}
g(r)= 1 + \int_r^{+\infty} M(r,s,\nu) \ q(s) g(s) \ ds.
\end{equation}


\vspace{0.2cm} \noindent
So, in order to solve the integral equation (\ref{intg}), we need uniform estimates for the Green kernel $M(r,s,\nu)$ for $0<r\leq s$ and $\nu$ in the fourth quadrant. This is the goal of the next Lemma which is rather technical: indeed, since the potential $q(r)$ may have a singularity at $r=0$ and decays like $r^{-\rho}$ with $\rho > \frac{3}{2}$, we have to distinguish the two different regimes $s \leq 1$ and $s \geq 1$:

\begin{lemma}\label{estimationM}
\hfil\break
For all $\delta \in (0,1)$, there exists $C_{\delta}>0$ such that for all $\nu$ in the fourth quadrant with $|\nu| \geq 1$ and for all $0<r\leq s$,
\begin{equation*}
\mid M(r,s,\nu) \mid \leq C_{\delta}\ \mid \nu \mid^{-\frac{\delta}{2}} \ \min \ (s, s^{\frac{\delta+1}{2}}).
\end{equation*}
\end{lemma}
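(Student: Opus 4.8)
The plan is to reduce everything to estimates for products of Hankel functions, the key point being that the exponential prefactors of the free Jost solutions cancel in the ratio. Starting from the definition (\ref{defM}) and using (\ref{fp0}) together with (\ref{v}), one has $\frac{f_0^+(s,\nu)}{f_0^+(r,\nu)} = \frac{v(s)}{v(r)} = \sqrt{\frac{s}{r}}\,\frac{H_\nu^{(1)}(s)}{H_\nu^{(1)}(r)}$, the factors $e^{i(\nu+\half)\pd}$ disappearing. Inserting this and the representation (\ref{autre}) of $N(r,s,\nu)$ yields the closed form
$$M(r,s,\nu) = i\,\pq\, s\, \frac{H_\nu^{(1)}(s)}{H_\nu^{(1)}(r)}\left( H_\nu^{(1)}(r)H_\nu^{(2)}(s) - H_\nu^{(1)}(s)H_\nu^{(2)}(r)\right).$$
First I would record that, by Lemma \ref{nonzero}, $H_\nu^{(1)}(r)$ has no zero in the fourth quadrant, so every ratio above is well defined there.

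The decisive point is how one groups the Hankel factors. A naive expansion produces terms such as $J_\nu(s)H_\nu^{(1)}(s)$ whose modulus genuinely grows like $e^{\pi|\Im\nu|}$ as $\nu\to i\R$ — this is exactly the content of the crude bound in Theorem \ref{estproduitbessel} — and such a bound is useless here since the conclusion carries no exponential. Instead I keep the antisymmetric combination intact and always pair an $H_\nu^{(1)}$ with an $H_\nu^{(2)}$: because these two solutions have reciprocal exponential behaviour in $\Im\nu$, the diagonal product $H_\nu^{(1)}(s)H_\nu^{(2)}(s)=J_\nu^2(s)+Y_\nu^2(s)$ stays bounded up to and on the imaginary axis (a Nicholson-type phenomenon), and quantitatively the Appendix estimates, Theorem \ref{estproduitbessel} and Proposition \ref{imaxis}, give a genuine $|\nu|^{-\frac{\delta}{2}}$ decay together with a factor $s^{\frac{\delta-1}{2}}$. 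Concretely I would bound
$$|M(r,s,\nu)| \le \pq\, s \left( \left|H_\nu^{(1)}(s)H_\nu^{(2)}(s)\right| + \left|\frac{H_\nu^{(1)}(s)}{H_\nu^{(1)}(r)}\right|\,\left|H_\nu^{(1)}(s)H_\nu^{(2)}(r)\right|\right),$$
so that the whole estimate rests on the diagonal product at $s$ and on the mixed product $H_\nu^{(1)}(s)H_\nu^{(2)}(r)$ with $r\le s$, which again pairs a $(1)$ with a $(2)$ and is therefore free of exponential growth.

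To control the surviving ratio I would prove that $\sqrt{x}\,|H_\nu^{(1)}(x)|$ (equivalently $|v(x)|$, or $|f_0^+(x,\nu)|$) is essentially non-increasing in the fourth quadrant for $x$ up to $s$, so that $\sqrt{s/r}\,|H_\nu^{(1)}(s)/H_\nu^{(1)}(r)| = |f_0^+(s,\nu)/f_0^+(r,\nu)|$ is bounded by a universal constant; this follows from a Wronskian computation in the spirit of (\ref{deriveeW}), using the normalization $f_0^+\sim e^{ir}$ at infinity. With this and the two product estimates, the prefactor $s$ combines with the $s^{\frac{\delta-1}{2}}$ decay of the products to give $s^{\frac{\delta+1}{2}}$ when $s\ge1$ and $s$ when $s\le1$, i.e. the announced $\min(s,s^{\frac{\delta+1}{2}})$. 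This is precisely why I would split the analysis into the two regimes: for $s\le1$ the small-argument asymptotics of Lemma \ref{estimations} and Proposition \ref{unifBessel} suffice, while for $s\ge1$ the large-order product bounds carry the $s$-decay.

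The hard part will be uniformity over the whole unbounded fourth quadrant, and especially the approach to the imaginary axis: there the turning point of the Bessel equation moves out to $|\nu|$, the individual Bessel and Hankel functions become exponentially large, and only the precise cancellation encoded in the combination $H_\nu^{(1)}(r)H_\nu^{(2)}(s)-H_\nu^{(1)}(s)H_\nu^{(2)}(r)$ — never a bare product $J_\nu H_\nu^{(1)}$ — keeps $M$ bounded. Extracting the $|\nu|^{-\frac{\delta}{2}}$ gain together with the sharp power of $s$ simultaneously and uniformly down to $\Re\nu=0$ is the delicate step, and it is exactly this that forces the separation of $s\le1$ from $s\ge1$ and the reliance on the Appendix estimates rather than on elementary manipulations.
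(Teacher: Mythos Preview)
Your decomposition has a real gap on the positive real axis. You propose to bound
\[
|M(r,s,\nu)| \le \tfrac{\pi}{4}\, s \left( \bigl|H_\nu^{(1)}(s)H_\nu^{(2)}(s)\bigr| + \Bigl|\tfrac{H_\nu^{(1)}(s)}{H_\nu^{(1)}(r)}\Bigr|\,\bigl|H_\nu^{(1)}(s)H_\nu^{(2)}(r)\bigr|\right),
\]
arguing that pairing a $(1)$--Hankel with a $(2)$--Hankel kills the exponential in $\Im\nu$. That is true on $i\R$, but it is \emph{false} on the real axis: for $\nu>0$ one has $H_\nu^{(2)}(s)=\overline{H_\nu^{(1)}(s)}$, hence $|H_\nu^{(1)}(s)H_\nu^{(2)}(s)|=|H_\nu^{(1)}(s)|^2$, and by Proposition~\ref{unifBessel} this behaves like $\pi^{-2}\Gamma(\nu)^2(s/2)^{-2\nu}$, which blows up as $\nu\to+\infty$ for fixed $s$. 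The same happens to the second term (for real $\nu$ the ratio has modulus $\le 1$ by monotonicity, but $|H_\nu^{(1)}(s)H_\nu^{(2)}(r)|\ge |H_\nu^{(1)}(s)|^2$ since $|H_\nu^{(1)}(r)|\ge|H_\nu^{(1)}(s)|$). So your triangle-inequality bound diverges on $\R^+$; the two terms you separated are individually huge and only their \emph{difference} is small. Note also that Theorem~\ref{estproduitbessel} bounds $J_\nu H_\nu^{(1)}$, not $H_\nu^{(1)}H_\nu^{(2)}$: the $|\nu|^{-\delta/2}$ gain you invoke is simply not available for the products you wrote.

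The paper avoids this trap in two ways. First, on the real axis it keeps the $J_\nu$ factor: writing $M=\tfrac{v(s)}{v(r)}(u(r)v(s)-u(s)v(r))$ and using the monotonicity of $|v|$ (Watson's classical result on $J_\nu^2+Y_\nu^2$, which is \emph{not} a Wronskian argument and holds only for real $\nu>\tfrac12$) reduces to the $J_\nu H_\nu^{(1)}$ products of Corollary~\ref{estGreenK}, where the rapid decay of $J_\nu$ compensates the growth of $H_\nu^{(1)}$. Second --- and this is the structural point you miss entirely --- the paper does \emph{not} try to bound $M$ directly in the interior of the quadrant. It establishes the estimate separately on $\nu\ge 0$ and on $\nu=iy$, $y\le 0$ (where your $H^{(1)}H^{(2)}$ grouping \emph{is} the right one, via Proposition~\ref{imaxis}), observes that $\nu\mapsto M(r,s,\nu)$ is of order~$1$ with infinite type, and then invokes Phragm\'en--Lindel\"of on $(1+\nu)^{\delta/2}M$ (resp.\ $\nu M$ for $s\le1$). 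Without this interpolation step the available Appendix tools give you nothing in the interior: Theorem~\ref{estproduitbessel} carries $e^{\pi|\Im\nu|}$ and Proposition~\ref{imaxis} lives only on $i\R$. Your Wronskian claim for the monotonicity of $|f_0^+|$ throughout the quadrant is also unproved (the computation in the spirit of (\ref{deriveeW}) controls $\Im(\overline{f_0^+}f_0^{+\prime})$, not $\Re$), and the paper neither needs nor proves it off the real axis.
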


\begin{proof}
It follows from (\ref{fp0}), (\ref{fm0}) and (\ref{autre}) that
\begin{equation} \label{defM1}
M(r,s,\nu)= \frac{i}{2} \ \frac{f_0^+(s,\nu)}{\fpo} \ \left( \fpo f_0^-(s,\nu)-f_0^+(s,\nu) \fmo \right).
\end{equation}
By Lemma \ref{orderonezero1}, $\fpmo$ are of order $1$ with infinite type and by Lemma \ref{nonzero}, $\fpo$ does not vanish in the fourth quadrant. So using Theorem 12, p.22 in \cite{Lev1}  and its corollary p.24, we deduce that $\nu \rightarrow M(r,s,\nu)$ is (at most) of order $1$ with infinite type in the fourth quadrant.
\vspace{0.2cm}\par \noindent
So, roughly speaking, by the Phragmen-Lindel\'{o}f's theorem, it suffices to estimate $M(r,s,\nu)$ for $\nu \geq 0$ and $\nu=iy$ with $y\leq0$.

\vspace{0.5cm}
\noindent
{\it{1 - First, let us estimate $M(r,s,\nu)$ for $\nu > 0$ and $\nu =iy, \ y <0$ when $s\leq 1$.}}

\vspace{0.2cm} \noindent
This case is rather simple since the variables $r$ and $s$ belong to a compact set. For $r \leq s$, we write:
\begin{eqnarray*}
N(r,s,\nu) &=&  v(r)v(s) \ \left( \frac{u(r)}{v(r)}-\frac{u(s)}{v(s)} \right) = v(r)v(s) \ \int_s^r \left( \frac{u}{v} \right) '(t) \ dt ,\\
           &=& v(r) v(s) \ \int_s^r \frac{u'(t)v(t) -u(t)v'(t)}{v^2(t)}  \ dt = v(r) v(s) \ \int_r^s \frac{1}{v^2(t)}  \ dt,
\end{eqnarray*}
where we have used $W(u,v)=1$. So, recalling that ${\ds{\fpo = i e^{i(\nu+\half)\pd} v(r)}}$, we have:
\begin{equation}\label{kernelM}
M(r,s,\nu) = v^2(s) \int_r^s \frac{1}{v^2(t)}  \ dt.
\end{equation}
Now, from Proposition \ref{unifBessel}, we see that for $r$ in a compact set, one has the uniform asymptotics:
\begin{equation}\label{uniform11}
 v(r) = - \frac{1}{\sqrt{\pi}}  \ \Gamma (\nu) \ \left(\frac{r}{2}\right)^{-\nu+\half } \ (1+O(\frac{1}{\nu})) \ ,\ \ \nu \rightarrow + \infty.
\end{equation}
\noindent
We deduce that the Green $M(r,s,\nu)$ satisfies for $\nu > 0$ and $s \leq 1$, the following uniform estimate:
\begin{equation}
\mid M(r,s,\nu) \mid \ \leq \ C \ \  s^{-2\nu + 1} \ \int_r^s \ t^{2 \nu -1} \ dt \ \leq \  \frac{Cs}{ \nu},
\end{equation}
where $C$  does not depend on $r,s$ and $\nu$. In the same way, using Proposition \ref{unifBessel} again and
\begin{equation}
H_{iy}^{(1)} (r) = \frac { e^{\pi y} J_{iy}(r) - J_{-iy}(r)}{\sinh (\pi y)},
\end{equation}
we obtain:
\begin{equation}
v(r) =    \frac{i\sqrt{\pi}}{ \Gamma (1+iy) \sinh(\pi y)} \ \left(\frac{r}{2}\right)^{\half +iy} \ (1+O(\frac{1}{y})) \ ,\ \ y \rightarrow - \infty.
\end{equation}
So, as in the case $\nu >0$, we have the uniform estimate for $y < 0$ and $s \leq 1$:
\begin{equation}
\mid M(r,s,iy) \mid \ \leq \ \frac{Cs}{\mid y\mid}.
\end{equation}
Now, we set:
\begin{equation}
f(\nu):= \nu \ M(r,s,\nu).
\end{equation}
Clearly, $f$ is of order $1$ with infinite type in the fourth quadrant, $f$ is bounded on his boundary by $Cs$. So, using the Phragmen-Lindel\'{o}f's theorem, we obtain for all $\nu$ in the fourth quadrant,
\begin{equation}
\mid f(\nu) \mid \leq \ C s,
\end{equation}
which of course, implies the Lemma in the case $s \leq 1$ and for $\mid \nu \mid \geq 1$.

\vspace{0.5cm}
\noindent
{\it{2 - Now, let us estimate $M(r,s,\nu)$ for $\nu \geq 0$ and $\nu =iy, \ y\leq0$ when $s\geq 1$.}}

\vspace{0.2cm} \noindent
Instead of using the integral representation (\ref{kernelM}), we start from
\begin{equation}
M(r,s,\nu) = \frac{v(s)}{v(r)}\ N(r,s,\nu) = \frac{v(s)}{v(r)} \left( u(r)v(s)-u(s)v(r) \right),
\end{equation}
so
\begin{equation}
\mid M(r,s,\nu)\mid  \leq  \frac{|v(s)|}{|v(r)|}\  |u(r)v(s)| + |u(s)v(s)|.
\end{equation}
Now, we use the following trick : for $\nu \geq 0$, we have (see the Appendix, section A.1):
\begin{eqnarray}\label{trick}
\mid H_{\nu}^{(1)} (r) \mid^2 &=& H_{\nu}^{(1)} (r) \overline{H_{\nu}^{(1)} (r)} = H_{\nu}^{(1)} (r) H_{\nu}^{(2)} (r) \nonumber \\
                               &=& (J_{\nu}(r) + i Y_{\nu}(r)) \ (J_{\nu}(r) -i Y_{\nu}(r)) = J_{\nu}^2 (r) + Y_{\nu}^2(r).
\end{eqnarray}
Then,
\begin{equation}\label{trick1}
\mid v(r) \mid^2 = \frac{\pi r}{2}\  \left (J_{\nu}^2 (r) + Y_{\nu}^2(r) \right).
\end{equation}
Thus, using (\cite{Wa}, p. 446), we see that for $\nu > \half$, the application  $r \rightarrow \mid v(r)\mid$ is strictly decreasing on $(0,+\infty)$, whereas for $\nu \in [0, \half]$ this application is increasing.
\footnote{ The function we denote $Y_{\nu}(r)$ is sometimes denoted by $N_{\nu}(r)$ in the litterature on Bessel functions.}

\vspace{0.1cm}
\par\noindent
For $\nu> \half$, it follows from the previous remark and Corollary \ref{estGreenK}, that for any $\delta \in (0.1)$, there exists $C_{\delta}>0$ such that
\begin{eqnarray*}
\mid M(r,s,\nu)\mid  &\leq &   |u(r)v(s)| + |u(s)v(s)| \\
 &\leq& C_{\delta} \ (1+\nu)^{-\frac{\delta}{2}}\ \left( (rs)^{\frac{1+\delta}{4}} + s^{\frac{1+\delta}{2}} \right) \\
 &\leq& 2C_{\delta} \ (1+\nu)^{-\frac{\delta}{2}}\ s^{\frac{1+\delta}{2}} .
\end{eqnarray*}

\vspace{0.2cm}\noindent
Now, let us study the case $\nu \in [0,\half]$. By Lemma \ref{estimations}, $\mid v(r) \mid \rightarrow 1$ when $r \rightarrow + \infty$. So, as for such $\nu$, $r \rightarrow \mid v(r) \mid$ is an increasing function, one has $ \mid v(r) \mid \leq 1$ for all $r >0$. Then, one has :

\begin{eqnarray*}
\mid M(r,s,\nu)\mid  &\leq &   \frac{\mid u(r) \mid }{\mid v(r) \mid } \mid v(s)\mid^2+ |u(s)v(s)| \\
&\leq &   \frac{\mid u(r) \mid }{\mid v(r) \mid } + |u(s)v(s)|.
\end{eqnarray*}
It follows from (\ref{trick}) that:
\begin{equation}
\frac{\mid u(r) \mid^2 }{\mid v(r) \mid^2 }  \leq  \frac{J_{\nu}^2(r)}{J_{\nu}^2 (r) + Y_{\nu}^2(r)} \leq 1,
\end{equation}
since for $\nu$ real, $J_{\nu}(r)$ and $Y_{\nu}(r)$ are real, (see Appendix, section A.1). Thus, as previously, for $s \geq 1$ and $\nu \in [0,\half]$, we have:
\begin{equation}
\mid M(r,s,\nu)\mid  \leq    1 + |u(s)v(s)| \leq C_{\delta} \ s^{\frac{1+\delta}{2}}.
\end{equation}

\vspace{0.2cm}\noindent
It remains to study the case $\nu=iy$ with $y \leq 0$. It follows from (\ref{defM}) that
\begin{equation}
M(r,s,iy) = i \frac{\pi}{4} s \left( H_{iy}^{(1)}(s) H_{iy}^{(2)}(s) - (H_{iy}^{(1)}(s))^2 \ \frac{H_{iy}^{(2)}(r)}{H_{iy}^{(1)}(r)} \right).
\end{equation}
Then, we use the following elementary facts (see the Appendix, section A.1):
\begin{equation}
H_{iy}^{(2)}(r) = \overline{H_{-iy}^{(1)}(r)} = e^{-y\pi} \overline{H_{iy}^{(1)}(r)}.
\end{equation}
Thus,
\begin{equation}
M(r,s,iy) = i \frac{\pi}{4} s \left( H_{iy}^{(1)}(s) H_{iy}^{(2)}(s) - e^{-y\pi} (H_{iy}^{(1)}(s))^2 \ \frac{\overline{H_{iy}^{(1)}(r)}}{H_{iy}^{(1)}(r)} \right).
\end{equation}
Then, using the first estimates in Proposition \ref{imaxis}, we obtain for $y \in [-1,0]$:
\begin{equation}
\mid M(r,s,iy) \mid \leq C,
\end{equation}
and using the second ones, we have for $y\leq -1$,
\begin{equation}
\mid M(r,s,iy) \mid \leq 2 {\sqrt{\frac{s} {\mid y\mid}}}.
\end{equation}

\vspace{0.2cm}\noindent
As a conclusion, we have proved the following estimate for $\nu \geq 0$ or $\nu =iy$ with $y \leq 0$:
\begin{equation}
\mid M(r,s,\nu) \mid \leq C_{\delta}\ (1+\mid \nu \mid)^{-\frac{\delta}{2}} \ s^{\frac{\delta+1}{2}},
\end{equation}
where $C_{\delta}$  does not depend on $r,s$ and $\nu$. Now, we follow the same strategy as for the case $s \leq 1$, setting
\begin{equation}
f(\nu):= (1+ \nu )^{\frac{\delta}{2}} \ M(r,s,\nu).
\end{equation}
This application is of order $1$ with infinite type in the fourth quadrant, bounded on his boundary by $C s^{\frac{\delta+1}{2}}$. Using the Phragmen-Lindel\'{o}f's theorem again, we obtain the Lemma as in the first case.
\end{proof}


\vspace{0.2cm}\noindent
As an application, we have the following result:

\vspace{0.2cm}\noindent

\begin{prop}\label{fpmrealaxis}
\hfil\break
Assume that $r^{\frac{\delta+1}{2}} q(r)$ satisfies the hypothesis $(H_2)$ for some $\delta \in ]0,1[$.  Then, there exists $C>0$ such that, for all $r >0$ and $\nu$ in the second and fourth quadrant with $\mid \nu \mid \geq 1$ for the case $(+)$), (resp. for all
$\nu$ in the first and third  quadrant with $\mid \nu \mid \geq 1$ for the case $(-)$) one has:
\begin{eqnarray*}
\mid f^{\pm}(r, \nu) \mid \ &\leq& \ exp\ \left(  C \ \int_r^{+\infty} \min (s, s^{\frac{\delta+1}{2}})\ \mid q(s) \mid \ ds\right) \ \mid f_0^{\pm}(r, \nu) \mid \ . \\
\end{eqnarray*}
\end{prop}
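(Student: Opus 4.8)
The plan is to solve the integral equation (\ref{intg}) for $g(r)=f^\pm(r,\nu)/f_0^\pm(r,\nu)$ by the standard method of successive approximations, using the uniform kernel estimate of Lemma \ref{estimationM} to control each iterate. I work out the case $(+)$ with $\nu$ in the fourth quadrant; the second-quadrant case follows by the same argument since $f^+$ and $f_0^+$ are even in $\nu$, and the case $(-)$ is entirely analogous (with the first and third quadrants, where Lemma \ref{nonzero} guarantees $f_0^-$ does not vanish).

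\textbf{Setting up the iteration.} First I would define the iterates by
\begin{equation*}
g_0(r)=1,\qquad g_{k+1}(r)=\int_r^{+\infty} M(r,s,\nu)\,q(s)\,g_k(s)\,ds,
\end{equation*}
so that formally $g=\sum_{k\ge 0} g_k$. The key point is that Lemma \ref{estimationM} gives, for $|\nu|\ge 1$ in the fourth quadrant and all $0<r\le s$,
\begin{equation*}
|M(r,s,\nu)|\le C_\delta\,|\nu|^{-\frac{\delta}{2}}\,\min(s,s^{\frac{\delta+1}{2}})\le C\,\min(s,s^{\frac{\delta+1}{2}}),
\end{equation*}
where the bound on $M$ is \emph{independent of $r$ and uniform in $\nu$}. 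The hypothesis that $r^{\frac{\delta+1}{2}}q(r)$ satisfies $(H_2)$ is exactly what makes the weight $\min(s,s^{\frac{\delta+1}{2}})\,|q(s)|$ integrable on $[r,+\infty)$ (near infinity the weight is $s^{\frac{\delta+1}{2}}$, and the short-range decay of $r^{\frac{\delta+1}{2}}q$ controls the tail; the $\min$ with $s$ keeps it finite near the origin as well).

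\textbf{The induction.} I would then prove by induction on $k$ that
\begin{equation*}
|g_k(r)|\le \frac{1}{k!}\left(C\int_r^{+\infty}\min(s,s^{\frac{\delta+1}{2}})\,|q(s)|\,ds\right)^{k}.
\end{equation*}
The base case $k=0$ is trivial. For the inductive step, insert the kernel bound and the bound on $g_k$ into the definition of $g_{k+1}$; writing $\Phi(r)=C\int_r^{+\infty}\min(s,s^{\frac{\delta+1}{2}})\,|q(s)|\,ds$, the integral telescopes because $\Phi'(s)=-C\min(s,s^{\frac{\delta+1}{2}})|q(s)|$, giving $\int_r^{+\infty}\frac{\Phi(s)^k}{k!}(-\Phi'(s))\,ds=\frac{\Phi(r)^{k+1}}{(k+1)!}$, which is the desired bound at level $k+1$. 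Summing the series then yields $|g(r)|\le e^{\Phi(r)}$, i.e. exactly
\begin{equation*}
|f^+(r,\nu)|\le \exp\!\left(C\int_r^{+\infty}\min(s,s^{\frac{\delta+1}{2}})\,|q(s)|\,ds\right)|f_0^+(r,\nu)|,
\end{equation*}
and the absolute convergence of $\sum_k g_k$ (guaranteed by the summable majorant) justifies the interchange that identifies $\sum_k g_k$ with the genuine solution $g$ of (\ref{intg}).

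\textbf{The main obstacle} is purely bookkeeping rather than conceptual: one must check that the series manipulation is legitimate, namely that $g=\sum_k g_k$ really solves (\ref{intg}) and coincides with $f^+/f_0^+$. This requires knowing that $f_0^+(s,\nu)\ne 0$ along the integration ray so that $M=\frac{f_0^+(s,\nu)}{f_0^+(r,\nu)}N$ and hence $g$ are well defined for $s\ge r$; this is supplied by Lemma \ref{nonzero}, which places the zeros of $f_0^+$ in the first and third quadrants only. Once the kernel estimate of Lemma \ref{estimationM} is in hand, the remaining work is the elementary telescoping computation above, and no further analytic input is needed.
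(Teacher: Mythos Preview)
Your proof is correct and follows essentially the same approach as the paper: set up the Picard iterates $g_0=1$, $g_{k+1}(r)=\int_r^{+\infty}M(r,s,\nu)q(s)g_k(s)\,ds$, use the kernel bound of Lemma \ref{estimationM} to obtain the factorial estimate $|g_k(r)|\le \Phi(r)^k/k!$, sum to an exponential, and reduce the remaining cases by parity and (\ref{conjJost}). Your write-up actually spells out the telescoping induction and the convergence justification more explicitly than the paper does.
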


\vspace{0.2cm}
\begin{proof}
For the case $(+)$ and $\nu$ in the fourth quadrant with $\mid \nu \mid \geq 1$, we solve (\ref{intg}) by iteration. We set
\begin{eqnarray*}
g_0(r)  &=& 1,\\
g_{k+1}(r) &=& \int_r^{+\infty} M(r,s,\nu) \ q(s) g_k(s) \ ds.
\end{eqnarray*}
Clearly, by Lemma \ref{estimationM}, we have the following estimate:
\begin{equation}
\mid g_k (r) \mid \leq \frac{1}{k!} \ \left( C \  \mid \nu \mid^{-\frac{\delta}{2}} \ \int_r^{+\infty} \min (s, s^{\frac{\delta+1}{2}})\ \mid q(s) \mid \ ds \right)^k.
\end{equation}
Hence, ${\ds{g(r)=\sum_{k=0}^{+\infty} g_k(r)}}$ satisfies
\begin{equation}
\mid g(r) \mid \leq exp\ \left( C \ \int_r^{+\infty}  \min (s, s^{\frac{\delta+1}{2}})\ \mid q(s) \mid \ ds\right),
\end{equation}
which implies the Proposition for the case $(+)$ and $\nu$ in the fourth quadrant. We deduce the other cases  from a parity argument and using (\ref{conjJost}).
\end{proof}

\begin{rem}
It follows from the proof of Proposition \ref{fpmrealaxis} that for $\nu$ in the second or fourth quadrant,
\begin{equation}
\fp \sim \fpo \ \ ,\ \  \nu \rightarrow \infty,
\end{equation}
whereas for $\nu$ in the first and the third quadrant,
\begin{equation}
\fm \sim \fmo \ \ , \ \ \nu \rightarrow \infty.
\end{equation}
\end{rem}

\vspace{0.5cm}\noindent
We deduce from Proposition \ref{fpmrealaxis} the following important result:

\vspace{0.2cm}
\begin{prop}
\hfil\break
The Jost solutions $\fpm$ are of order $1$ with infinite type with respect to  $\nu \in \C$.
\end{prop}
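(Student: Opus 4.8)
The plan is to prove, for each fixed $r>0$, a two-sided growth estimate for the entire function $\nu \mapsto \fp$ (and symmetrically $\fm$) matching that of the free Jost solution. It suffices to produce constants $A,B>0$, depending on $r$, with
\[
\mid \fp \mid \ \leq \ A\, e^{B\mid \nu \mid}\, \mid \Gamma(\nu+1)\mid \qquad (\nu \in \C),
\]
together with a matching lower bound along one ray: by Stirling's formula (exactly as in the proof of Lemma \ref{orderonezero1}) the right-hand side is of order $1$ with infinite type, so such a bound fixes the order to $1$ and the type to $+\infty$. Proposition \ref{fpmrealaxis} already delivers the upper bound in two of the four quadrants, namely $\mid \fp \mid \leq C \mid \fpo \mid$ for $\nu$ in the second and fourth quadrants, and then $\mid \fpo \mid \leq A\, e^{B\mid \nu \mid}\, \mid \Gamma(\nu+1)\mid$ by Lemma \ref{orderonezero1}.

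For the infinite type (the lower bound) I would work along the positive real axis, which lies on the boundary of the fourth quadrant. There $\fp \sim \fpo$, and from the large-order asymptotics of the Hankel function $\mid \fpo \mid = \sqrt{\tfrac{\pi r}{2}}\,\mid H_\nu^{(1)}(r)\mid \sim c_r\,\mid \Gamma(\nu)\mid\,(2/r)^{\nu}$ as $\nu\to+\infty$. Since $\log\bigl(\mid \Gamma(\nu)\mid (2/r)^\nu\bigr)\sim \nu\log\nu$ outgrows any $B\nu$, the maximum modulus $M(R)=\max_{\mid \nu\mid =R}\mid \fp\mid$ satisfies $\log M(R)/R\to+\infty$ and $\log\log M(R)/\log R\to 1$, which simultaneously gives that the order is at least $1$ and that the type is infinite.

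The main obstacle is the upper bound in the two remaining quadrants, which for $\fp$ are the first and third, precisely where $\fpo$ vanishes (Lemma \ref{nonzero}); thus the ratio $\fp/\fpo$ of Proposition \ref{fpmrealaxis} is unavailable, and the crude estimate (\ref{estifpm}) only gives order $2$ — exactly the critical order for a sector of opening $\pi/2$, hence too weak to be repaired by Phragmen--Lindel\"of alone (for instance $e^{-i\nu^2}$ is bounded on both coordinate axes yet of order $2$ inside the first quadrant). To circumvent this I would compare $\fp$ instead with the \emph{other} free solution $\fmo$, which by Lemma \ref{nonzero} is zero-free in the first and third quadrants. Writing $w(r)=\fp/\fmo$ and dividing the integral equation (\ref{Jostint}) by $\fmo$ gives
\[
w(r) = \frac{\fpo}{\fmo} + \int_r^{+\infty} \frac{f_0^-(s,\nu)}{\fmo}\, N(r,s,\nu)\, q(s)\, w(s)\, ds .
\]
The kernel $\tfrac{f_0^-(s,\nu)}{\fmo}N(r,s,\nu)$ is the first/third-quadrant analogue of the kernel $M(r,s,\nu)$ of Lemma \ref{estimationM}, and I would estimate it by repeating that lemma's argument verbatim with $\fmo$ and the first quadrant in place of $\fpo$ and the fourth (so that $\fmo$ is of order $1$ with infinite type by Lemma \ref{orderonezero1} and zero-free by Lemma \ref{nonzero}, the boundary estimates coming again from the Hankel bounds of the Appendix and Phragmen--Lindel\"of applying since the order is $1<2$). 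The inhomogeneous term $\fpo/\fmo$ is controlled by an order-one quantity on this region by the same Hankel asymptotics, so solving the Volterra equation by iteration as in Proposition \ref{fpmrealaxis} yields $\mid \fp \mid \leq C\mid \fmo \mid$ in the first and third quadrants, which is once more of order $1$.

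Finally I would assemble the four quadrant bounds into the global estimate $\mid \fp \mid \leq A\, e^{B\mid \nu\mid}\mid \Gamma(\nu+1)\mid$, deduce the corresponding statement for $\fm$ from the parity of the Jost solutions and the conjugation relation (\ref{conjJost}), and conclude by Stirling that both $\fpm$ are of order $1$ with infinite type. The delicate point, where I expect the most care is needed, is the transfer of the kernel estimate of Lemma \ref{estimationM} to the zero-free comparison in the complementary quadrants, since it is there that the genuine order-one character of the Jost solutions — invisible to the crude order-two bound — must be extracted.
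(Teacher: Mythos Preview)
Your approach is viable but different from the paper's. The paper does not redo a Volterra comparison in the remaining quadrants; instead it uses the algebraic relation coming from the fundamental system. For instance, to bound $\fm$ in the fourth quadrant, the paper observes that $\be\neq 0$ there (Proposition~\ref{zeroJost}), writes
\[
\fm=\frac{\reg-\al\,\fp}{\be},
\]
and then invokes: (i) the estimate (\ref{estreg}) showing $\reg$ is of order $\leq 1$ in $\nu$; (ii) Loeffel's bound making $\al,\be$ of order $1$ with infinite type for $\Re\nu\geq 0$; (iii) the already–obtained control of $\fp$ in the fourth quadrant from Proposition~\ref{fpmrealaxis}; and (iv) Levin's theorem on the order of a quotient with nonvanishing denominator. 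Parity then covers the second quadrant. This is a two-line argument once the ingredients are on the table.

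Your route---comparing $\fp$ with $\fmo$ in the first quadrant and iterating---can be made to work, but the inhomogeneous term requires more care than you indicate. In the iteration, $w_0(s)=f_0^+(s,\nu)/f_0^-(s,\nu)$ depends on $s$, so to get $|w(r)|\leq C$ (and hence $|\fp|\leq C|\fmo|$) you need a bound on $\sup_{s\geq r}|w_0(s)|$, not just on $|w_0(r)|$. This sup is in fact $\leq 1$ throughout the closed first quadrant: for each fixed $s>0$ the ratio has modulus exactly $1$ on both coordinate axes (since $\fmo=\overline{\fpo}$ when $\nu\in\R$ or $\nu\in i\R$, using evenness and (\ref{conjJost})), is analytic with nonvanishing denominator there by Lemma~\ref{nonzero}, and is of order $\leq 1$ by the same Levin theorem you would use for the kernel---so Phragmen--Lindel\"of (critical order $2$) gives $|w_0(s)|\leq 1$ uniformly. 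With this in hand, your kernel estimate (which is literally $\overline{M(r,s,\bar\nu)}$ and hence inherits Lemma~\ref{estimationM} by conjugation) and the usual iteration finish the job. So your argument closes, but the phrase ``controlled by an order-one quantity'' hides the step that actually makes the iteration converge uniformly.

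In short: the paper trades your second Volterra analysis for a single algebraic identity plus known growth of $\reg,\al,\be$; your argument is more self-contained (it does not appeal to Loeffel's estimates on the Jost functions) at the cost of repeating the kernel analysis and needing the extra Phragmen--Lindel\"of step for the inhomogeneous term.
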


\begin{proof}
From Lemma \ref{orderonezero1} and Proposition \ref{fpmrealaxis}, we see that $\fp$, (resp. $\fm$), is of order one and infinite type in the second and the fourth quadrant, (resp. in the first and the third quadrant). In the next section (see Proposition \ref{zeroJost}), we shall prove that $\beta(\nu)$ does not vanish in the fourth quadrant, so using (\ref{SFS}) we can write for such $\nu$:
\begin{equation}
\fm = \frac{\reg - \al \fp}{\be}
\end{equation}
Moreover, it follows from (\cite{Lo}, Eq. (80)) that $\al$ and $\be$ are of order $1$ with infinite type for $\Re \nu \geq 0$. So, as previously, using (\cite{Lev1}, Theorem 12, p.22) and  (\ref{estreg}), we deduce that
$\fm$ is (at most) of order one with infinite type in the fourth quadrant, and also in the second quadrant by a parity argument.
\end{proof}


\vspace{0.5cm}
\noindent
Let us finish this section by the following result which will be useful to study the localization of the Regge poles in the Section 7.
\vspace{0.2cm}

\begin{prop}\label{estimationfpm}
\hfill\break
Assume that the potential $q$ satisfies $(H_1)$ and has a compact support. For any $\delta >0$ small enough, there exists $C>0$ such that, for all $\nu$ large enough and $r>0$,
\begin{equation*}\label{equivsoljost}
\mid \fpm - \fpmo \mid \ \leq \ \frac{C}{\mid \nu \mid +1}\ \mid \fpmo\mid   \ \ {\rm{for}} \ \mid Arg\ \nu \mid \leq \pd -\delta .
\end{equation*}
\end{prop}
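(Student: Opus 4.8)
The plan is to leverage the integral equation \eqref{Jostint} together with the key kernel estimate from Lemma \ref{estimationM}, exactly as was done in the proof of Proposition \ref{fpmrealaxis}, but now under the stronger hypothesis that $q$ has compact support. The compact support is what upgrades the conclusion: because the integral $\int_r^{+\infty}\min(s,s^{\frac{\delta+1}{2}})\,|q(s)|\,ds$ is taken over a bounded set $[r,R]$ (where $R$ bounds the support), it is a uniformly bounded quantity, so the exponential factor appearing in Proposition \ref{fpmrealaxis} is itself bounded by a constant. Combined with the gain of a factor $|\nu|^{-\delta/2}$ in each iterate coming from Lemma \ref{estimationM}, this should produce the desired $O(1/(|\nu|+1))$ control on the \emph{difference} $\fpm-\fpmo$.

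Concretely, I would first treat the case $(+)$ with $\nu$ in the fourth quadrant. Writing $g(r)=\fp/\fpo$ and recalling from \eqref{intg} that $g(r)-1=\int_r^{+\infty}M(r,s,\nu)q(s)g(s)\,ds$, I would estimate this difference directly rather than just bounding $g$. Using the bound $|g(s)|\le \exp(C\int_r^{+\infty}\min(s,s^{\frac{\delta+1}{2}})|q(s)|\,ds)$ from Proposition \ref{fpmrealaxis}, which is $\le C'$ by compact support, and then inserting the kernel estimate of Lemma \ref{estimationM}, I obtain
\begin{equation*}
|g(r)-1| \ \le \ C'\,|\nu|^{-\frac{\delta}{2}} \int_r^{+\infty}\min(s,s^{\frac{\delta+1}{2}})\,|q(s)|\,ds \ \le \ \frac{C}{|\nu|+1},
\end{equation*}
for $|\nu|\ge 1$ say, where the last step absorbs the $|\nu|^{-\delta/2}$ decay into a $1/(|\nu|+1)$ bound (valid since $\delta>0$ and $\nu$ is large). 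Multiplying through by $|\fpo|$ gives $|\fp-\fpo|\le \frac{C}{|\nu|+1}|\fpo|$. The analogous statement for the case $(-)$ and for $\nu$ in the first or third quadrant follows by the parity of the Jost solutions and the conjugation relation \eqref{conjJost}, precisely as in the remark following Proposition \ref{fpmrealaxis}.

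The one genuine subtlety, and the step I expect to be the main obstacle, is justifying the uniform kernel estimate in the full angular sector $|\mathrm{Arg}\,\nu|\le \pd-\delta$ rather than only in a single quadrant. Lemma \ref{estimationM} as stated controls $M(r,s,\nu)$ only in the fourth quadrant; to cover the sector near the positive real axis one must verify that the Phragmen-Lindel\"of argument underlying Lemma \ref{estimationM} extends, or that the bounds on the boundary rays (the positive real axis and the rays at angle $\pd-\delta$) still hold with the relevant uniform asymptotics of the Bessel and Hankel functions. The restriction $|\mathrm{Arg}\,\nu|\le\pd-\delta$ is precisely what keeps $\nu$ away from the imaginary axis, where $\fpo$ can vanish (Lemma \ref{nonzero}) and where the ratio defining $M$ would no longer be controlled; staying in the open sector guarantees $\fpo$ stays nonvanishing and the asymptotic $\fpo\sim$ (a Gamma-function times a power of $r$) is uniform, so the iteration converges with the claimed rate. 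I would therefore take care to state and use the Bessel asymptotics uniformly on this sector, invoking Proposition \ref{unifBessel}, and check that the constant $C$ in the final bound indeed does not depend on $r$ thanks to the compact support collapsing all the $s$-integrals to a fixed bounded interval.
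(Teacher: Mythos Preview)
Your proposal contains a genuine gap in the rate you claim. Lemma \ref{estimationM} gives the kernel bound $|M(r,s,\nu)|\le C_\delta|\nu|^{-\delta/2}\min(s,s^{(\delta+1)/2})$ only for $\delta\in(0,1)$, so the best decay you can extract from the iteration is $|g(r)-1|=O(|\nu|^{-\delta/2})$ with $\delta<1$. Your sentence ``the last step absorbs the $|\nu|^{-\delta/2}$ decay into a $1/(|\nu|+1)$ bound'' is simply false: for $\delta<1$ one has $|\nu|^{-\delta/2}\gg |\nu|^{-1}$ as $|\nu|\to\infty$, so no such absorption is possible. The compact support alone does not rescue this, since it only makes the integral factor bounded; the power of $|\nu|$ in front is what is too weak.

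The paper does not invoke Lemma \ref{estimationM} at all here. Instead it exploits the compact support in a sharper way: since $q$ is supported in $[0,b]$, in the integral equation only $r\le s\le b$ matters, so both $r$ and $s$ lie in a \emph{fixed compact set}. On such a set Proposition \ref{unifBessel} gives the uniform asymptotic $v(t)=-\pi^{-1/2}\Gamma(\nu)(t/2)^{-\nu+1/2}(1+o(1))$ for $|\mathrm{Arg}\,\nu|\le\frac{\pi}{2}-\delta$, hence $v$ does not vanish there. Plugging this directly into the explicit representation \eqref{kernelM}, namely $M(r,s,\nu)=v^2(s)\int_r^s v^{-2}(t)\,dt$, yields
\[
|M(r,s,\nu)|\;\le\;C\,s^{-2\Re\nu+1}\int_r^s t^{\,2\Re\nu-1}\,dt\;\le\;\frac{C}{\Re\nu+1},
\]
uniformly in $0<r\le s\le b$. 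The iteration then gives $|g(r)-1|\le C/(\Re\nu+1)$, and since $\Re\nu\ge|\nu|\sin\delta$ in the sector, this is the desired $O(1/(|\nu|+1))$. Note that this argument handles the full sector $|\mathrm{Arg}\,\nu|\le\frac{\pi}{2}-\delta$ in one stroke, with no need for the quadrant-by-quadrant Phragm\'en--Lindel\"of reduction you were worried about.
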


\begin{proof}
We deduce from Proposition \ref{unifBessel} that, for $r$ in a compact set, one has the uniform asymptotics:
\begin{equation}\label{uniform}
 v(r) = - \frac{1}{\sqrt{\pi}}  \ \Gamma (\nu) \ \left(\frac{r}{2}\right)^{-\nu+\half } \ (1+o(1)) \ ,\ \ \nu \rightarrow \infty, \ \mid Arg \ \nu \mid \leq \pd - \delta .
\end{equation}
\noindent
It follows that for $\nu$ large enough in this domain, $v(r) \not=0$ and we can follow exactly the same strategy as in Proposition \ref{fpmrealaxis}.

\vspace{0.2cm}\noindent
We emphasize that for a potential $q$ with a compact support, the variables $s \geq r$ belong to a compact set, so using (\ref{kernelM}) and (\ref{uniform}) again, we see that the previous Green kernel $M(r,s,\nu)$ satisfies for
$\mid Arg \ \nu \mid \leq \pd - \delta$, the following estimate:
\begin{equation}
\mid M(r,s,\nu) \mid \ \leq \ C \ \  s^{-2\Re \nu + 1} \ \int_r^s \ t^{2\Re \nu -1} \ dt \ \leq \  \frac{C}{\Re \nu +1}.
\end{equation}
So, as in Proposition \ref{fpmrealaxis}, and setting again ${\ds{g(r) = \frac{\fp}{\fpo}}}$,  we obtain for $\mid Arg \ \nu \mid \leq \pd - \delta$,
\begin{equation}\label{controle}
\mid g(r) - 1  \mid \  \leq\ \frac{C}{\Re \nu +1}.
\end{equation}
Clearly, this implies the Proposition.
\end{proof}

\section{The Jost functions $\al$ and $\be$.}

In this Section, we recall first some well-known results for the Jost functions which can be found in \cite{Lo} for example. For convenience's reader,
we give the proofs since they are very simple and short.  In the second part of this Section, we shall establish some integral representations for the Jost functions.

\begin{lemma}\label{Jostimaginary}
\hfil\break
Assume that the potential $q$ satifies the hypotheses $(H_1)$ and $(H_2)$. For $y \in \R$, one has:
\begin{equation}
\mid \alpha(iy) \mid^2 - \mid \beta(iy) \mid^2 =y.
\end{equation}
\end{lemma}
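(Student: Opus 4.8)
The plan is to prove the identity
\[
\mid \alpha(iy) \mid^2 - \mid \beta(iy) \mid^2 = y
\]
via a Wronskian computation, exploiting the fact that the complex conjugation relations convert $\overline{\alpha(iy)}$ and $\overline{\beta(iy)}$ into the Jost functions evaluated back at $iy$. First I would recall the two conjugation relations established earlier: from (\ref{conjJostfunction}) we have $\overline{\alpha(\nu)} = \beta(\bar\nu)$, and from (\ref{conjreg}) we have $\overline{\reg} = \varphi(r,\bar\nu)$. For $\nu = iy$ with $y \in \R$ we have $\bar\nu = -iy$, and since $\nu \mapsto \reg$ and the Jost objects are \emph{even} in $\nu$ (as stated after (\ref{solJost}) and in the regular-solution discussion), evaluating at $-iy$ is the same as evaluating at $iy$. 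This is the mechanism that will turn products of the form $\alpha\bar\alpha$ and $\beta\bar\beta$ into Wronskians of $\varphi$ against the Jost solutions, which we can then evaluate explicitly.

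The key computational step is to start from the decomposition (\ref{SFS}), $\reg = \alpha(\nu)\fp + \beta(\nu)\fm$, together with the expressions (\ref{Jostfunctionalpha}) and (\ref{Jostfunctionbeta}) giving $\alpha(\nu) = \tfrac{i}{2}W(\reg,\fm)$ and $\beta(\nu) = -\tfrac{i}{2}W(\reg,\fp)$. The natural object to examine is the Wronskian $W(\reg, \overline{\reg})$ at $\nu = iy$. On one hand, using $\overline{\reg} = \varphi(r,\bar\nu)$ and parity to identify $\varphi(r,-iy) = \varphi(r,iy)$, one computes $W(\varphi,\overline{\varphi})$ directly; since $\varphi$ and its conjugate are both real multiples along the relevant axis, or more precisely since $\varphi(r,iy)$ is real-valued for $y$ real (the potential in (\ref{Schradiale}) has real coefficients $\tfrac{(iy)^2 - 1/4}{r^2} = \tfrac{-y^2-1/4}{r^2}$ when $\nu = iy$), the Wronskian $W(\varphi,\overline\varphi)$ vanishes. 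On the other hand, expanding $\reg$ and $\overline{\reg}$ through (\ref{SFS}) and using the conjugation relation $\overline{\fp} = f^-(r,\bar\nu) = \fm$ at $\nu = iy$ (from (\ref{conjJost}) with parity), one obtains a bilinear expression in $\alpha, \beta, \bar\alpha, \bar\beta$ whose Wronskian terms collapse via $W(\fp,\fm) = -2i$ from (\ref{wronskien}). Matching the two evaluations yields a relation between $\mid\alpha(iy)\mid^2$, $\mid\beta(iy)\mid^2$ and a constant term carrying the explicit $y$.

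A cleaner route, which I would actually carry out, is to track the Wronskian $W(\reg, \overline{\reg})$ through its $r$-derivative rather than its value: since both $\varphi$ and $\overline\varphi$ solve (\ref{Schradiale}) at $\nu = iy$ but the equation is \emph{not} self-adjoint for complex $\nu$, the derivative $\tfrac{d}{dr}W(\varphi,\overline\varphi)$ picks up the term $\tfrac{\overline{\nu^2} - \nu^2}{r^2}\mid\varphi\mid^2 = \tfrac{(-iy)^2-(iy)^2}{r^2}\mid\varphi\mid^2 = 0$, confirming $W(\varphi,\overline\varphi)$ is constant in $r$. I then evaluate this constant at two ends. Near $r\to 0$, the boundary behaviour (\ref{regsolution}) gives $\varphi \sim r^{iy+1/2}$, whence the leading Wronskian contribution is computable and furnishes the $y$ on the right-hand side. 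At $r\to+\infty$, expanding via (\ref{SFS}) and (\ref{solJost}) and using $W(\fp,\fm)=-2i$ produces $\mid\alpha(iy)\mid^2 - \mid\beta(iy)\mid^2$ up to a factor. Equating the two evaluations gives the identity.

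The main obstacle I anticipate is bookkeeping the phases and factors of $i$ correctly, in particular getting the normalization near $r=0$ right so that the constant emerges as exactly $y$ rather than $cy$ for some spurious constant $c$. The boundary term at the origin requires care because $\varphi \sim r^{\nu+1/2}$ means $\overline{\varphi}\sim r^{\bar\nu+1/2}$, and the Wronskian of $r^{\nu+1/2}$ with $r^{\bar\nu+1/2}$ at $\nu = iy$ produces a factor $(\bar\nu - \nu) = -2iy$; reconciling this with the factors of $i$ coming from (\ref{wronskien}) and from the definitions (\ref{Jostfunctionalpha})--(\ref{Jostfunctionbeta}) is where the delicate sign-tracking lives. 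Everything else is routine given the evenness and conjugation relations already established in Section 2.
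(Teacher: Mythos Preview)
Your ``cleaner route'' is essentially the paper's proof: the paper computes the Wronskian $W(\varphi(r,iy),\varphi(r,-iy))$, notes it equals $-2iy$ (from the behaviour at $r\to 0$, both functions solving the same real-coefficient equation), expands both $\varphi$'s via (\ref{SFS}) using the evenness of $f^{\pm}$, applies the conjugation relation $\alpha(-iy)=\overline{\beta(iy)}$, $\beta(-iy)=\overline{\alpha(iy)}$, and contracts against $W(f^+,f^-)=-2i$. Since $\overline{\varphi(r,iy)}=\varphi(r,-iy)$, your computation is literally the same one.

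However, your first paragraph and the ``on one hand'' part of the second paragraph contain a genuine error that you should excise: the regular solution $\reg$ is \emph{not} even in $\nu$. Only the Jost solutions $f^{\pm}(r,\nu)$ are even; the boundary condition $\varphi(r,\nu)\sim r^{\nu+1/2}$ manifestly breaks the $\nu\mapsto -\nu$ symmetry. Consequently $\varphi(r,iy)$ is \emph{not} real-valued (its leading behaviour $r^{iy+1/2}$ is complex for $y\neq 0$), and $W(\varphi,\overline{\varphi})$ does \emph{not} vanish --- indeed it equals $-2iy$, which is precisely where the $y$ on the right-hand side comes from. Your second paragraph is thus self-contradictory: you claim $W(\varphi,\overline\varphi)=0$ ``on one hand'' and then expect the other hand to produce a term carrying $y$. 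Drop that paragraph entirely and keep only the cleaner route, which is correct.
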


\begin{proof}
Since $\nu \rightarrow \fpm$ are even functions, (\ref{SFS}) implies:
\begin{eqnarray}
\varphi(r,iy) &=& \alpha(iy) f^+(r,iy) + \beta(iy)f^-(r,iy), \\
\varphi(r,-iy) &=& \alpha(-iy) f^+(r,iy) + \beta(-iy)f^-(r,iy).
\end{eqnarray}
Moreover, it is easy to see that the Wronskian  $W(\varphi(r,iy), \varphi(r,-iy))= -2iy$, and  using (\ref{conjJostfunction}), one has:
\begin{equation}
\alpha(-iy) = \overline{\beta(iy)} , \ \beta(-iy) = \overline{\alpha(iy)}.
\end{equation}
Then, one obtains:
\begin{equation}
W \left( \alpha(iy) f^+(r,iy) + \beta(iy)f^-(r,iy), \  \overline{\beta(iy)} f^+(r,iy) + \overline{\alpha(iy)} f^-(r,iy) \right) = -2iy.
\end{equation}
The lemma follows immediately from (\ref{wronskien}).
\end{proof}

\begin{lemma} \label{alphabeta}
\hfil\break
Assume that the potential $q$ satisfies the hypotheses $(H_1)$ and $(H_2)$. For $\Re \mu \geq 0, \ \Re \nu \geq 0$ such that $\Re (\mu +\nu)>0$, one has:
\begin{equation}
2i \ \left( \al \beta(\mu) - \alpha(\mu) \be \right) = (\nu^2 -\mu^2) \ \int_0^{+\infty} \frac{\reg \varphi(r,\mu)}{r^2} dr.
\end{equation}
\end{lemma}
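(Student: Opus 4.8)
The plan is to exploit the classical Wronskian identity between the two regular solutions of (\ref{Schradiale}) associated with the distinct angular momenta $\nu$ and $\mu$. Writing $\varphi(r,\nu)$ and $\varphi(r,\mu)$ for these solutions, each satisfies (\ref{Schradiale}) with its own centrifugal term, so that $\varphi''(r,\nu) = \left( \frac{\nu^2-\frac14}{r^2} + q(r) - 1 \right)\varphi(r,\nu)$ and similarly for $\mu$. Differentiating the Wronskian $W(\varphi(r,\nu),\varphi(r,\mu)) = \varphi(r,\nu)\varphi'(r,\mu) - \varphi'(r,\nu)\varphi(r,\mu)$, the potential $q(r)$ and the energy term $-1$ cancel, leaving only the difference of the two centrifugal terms:
\begin{equation*}
\frac{d}{dr} W(\varphi(r,\nu),\varphi(r,\mu)) = \frac{\mu^2-\nu^2}{r^2}\,\varphi(r,\nu)\,\varphi(r,\mu).
\end{equation*}
The whole statement will then follow by integrating this identity over $(0,+\infty)$ and identifying the two boundary contributions.

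First I would check that the integral on the right-hand side of the claimed formula converges absolutely, which is exactly where the hypothesis $\Re(\mu+\nu)>0$ enters. Near $r=0$, the boundary condition (\ref{regsolution}) gives $\varphi(r,\nu)\varphi(r,\mu)/r^2 \sim r^{\nu+\mu-1}$, integrable at the origin precisely when $\Re(\nu+\mu)>0$; at infinity the two regular solutions are bounded by (\ref{comportement}), so the integrand is $O(r^{-2})$ and causes no trouble. The same behaviour at the origin, applied to the Wronskian itself, gives $W(\varphi(r,\nu),\varphi(r,\mu)) \sim (\mu-\nu)\,r^{\nu+\mu} \to 0$ as $r\to 0$, so the boundary contribution at the origin vanishes.

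For the boundary term at infinity I would substitute the decomposition (\ref{SFS}), namely $\reg = \al f^+(r,\nu) + \be f^-(r,\nu)$ and its analogue for $\mu$, and use $f^{\pm}(r,\nu)\sim e^{\pm ir}$ together with the matching asymptotics of the derivatives. In the resulting bilinear Wronskian the purely oscillatory contributions carrying $e^{\pm 2ir}$ (from the $f^+f^+$ and $f^-f^-$ pairings) cancel, exactly as they do in the normalization (\ref{wronskien}), and only the cross terms survive, yielding
\begin{equation*}
\lim_{r\to+\infty} W(\varphi(r,\nu),\varphi(r,\mu)) = 2i\left( \alpha(\mu)\,\be - \al\,\beta(\mu) \right).
\end{equation*}
Integrating the Wronskian identity from $0$ to $+\infty$ and inserting the two boundary values then gives $2i(\alpha(\mu)\be - \al\beta(\mu)) = (\mu^2-\nu^2)\int_0^{+\infty} \varphi(r,\nu)\varphi(r,\mu)\,r^{-2}\,dr$, which is the asserted identity after multiplying through by $-1$.

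The main obstacle is the rigorous justification of the limit of the Wronskian at infinity: since $\varphi(r,\nu)$ and $\varphi(r,\mu)$ solve \emph{different} equations, their Wronskian is not constant, so one cannot simply read off its value. I would control this by noting that the derivative computed above is integrable on $(1,+\infty)$, the integrand being $O(r^{-2})$, so that $\lim_{r\to+\infty}W$ exists; I would then evaluate this limit using the refined asymptotics $f^{\pm}(r,\nu) = e^{\pm ir}+o(1)$ and ${f^{\pm}}'(r,\nu) = \pm i e^{\pm ir}+o(1)$, checking that the error terms do not contribute to the limit. Everything else is a routine Wronskian computation.
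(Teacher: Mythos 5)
Your proposal is correct and is essentially the paper's own proof: differentiate the cross-Wronskian of $\varphi(r,\nu)$ and $\varphi(r,\mu)$, integrate over $(0,+\infty)$ using $\Re(\mu+\nu)>0$ to kill the boundary term at the origin, and evaluate the limit at infinity via the decomposition (\ref{SFS}) and the Wronskian asymptotics of the Jost solutions. Your extra care in justifying the existence of the limit at infinity (integrability of the derivative, plus the refined asymptotics of $f^\pm$ and ${f^\pm}'$) fills in a step the paper treats as elementary, and all signs check out.
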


\begin{proof}
First, we remark that the integral  converges since $\Re (\mu +\nu)>0$. Secondly, using (\ref{Schradiale}), one has:
\begin{eqnarray}
\left( \varphi(r,\mu) \varphi '(r,\nu)- \varphi ' (r,\mu) \varphi (r,\nu) \right)'& = & \varphi (r,\mu) \varphi''(r,\nu)  - \varphi ''(r,\mu) \reg ,   \nonumber  \\
          &=& \varphi(r,\mu) \varphi (r,\nu) \ \frac{\nu^2-\mu^2}{r^2}. \label{deriv}
\end{eqnarray}

\noindent
Integrating (\ref{deriv}) onto $(0, +\infty)$, we obtain
\begin{eqnarray}
(\nu^2-\mu^2) \ \int_0^{+\infty} \frac{\varphi(r,\mu) \varphi (r,\nu)}{r^2} \ dr &=& \left[ W( \varphi(r,\mu)) , \reg \right]_{r=0}^{r=+\infty}, \nonumber \\
        &=& \left[ W(\varphi(r,\mu)), \reg \right]_{|r=+\infty}, \label{deriv1}
\end{eqnarray}
since $\Re (\mu +\nu)>0$. In order to calculate this last wronskian, we use (\ref{SFS}) again:
\begin{eqnarray*}
\varphi(r,\mu) &=& \alpha(\mu) f^+(r,\mu) + \beta(\mu) f^-(r,\mu), \\
\reg &=& \al \fp + \be \fm.
\end{eqnarray*}
Using (\ref{deriv1}) and the following elementary asymptotics, when $r \rightarrow + \infty$:
\begin{equation}
W(\fp, f^+(r,\mu)) \rightarrow 0 \ \ ,\ \ W(\fp, f^-(r,\mu)) \rightarrow -2i,
\end{equation}
the lemma is proved.
\end{proof}

\vspace{0.5cm}\noindent
Hence, we can deduce easily:

\begin{coro} \label{link}
\hfil\break
Assume that the potential $q$ satifies the hypotheses $(H_1)$ and $(H_2)$. For $\Re \nu >0$, one has:
\begin{equation}
\mid \al \mid^2 - \mid \be \mid^2 = 2 \ \Re \nu \ \Im \nu \ \int_0^{+\infty} \frac{\mid \reg \mid^2}{r^2} \ dr.
\end{equation}
\end{coro}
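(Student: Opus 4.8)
The plan is to specialize Lemma \ref{alphabeta} to the single choice $\mu = \bar\nu$, which collapses the bilinear combination of Jost functions into a difference of squared moduli and, simultaneously, turns the integrand into $|\reg|^2$. First I would check that this choice is admissible: since $\Re\nu > 0$, taking $\mu = \bar\nu$ gives $\Re\mu = \Re\nu > 0$ and $\Re(\mu+\nu) = 2\Re\nu > 0$, so the hypotheses of Lemma \ref{alphabeta} are satisfied and the integral on its right-hand side converges.

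Next I would rewrite the two factors appearing in Lemma \ref{alphabeta} using the conjugation symmetry (\ref{conjJostfunction}), namely $\overline{\alpha(\nu)} = \beta(\bar\nu)$. Applying this same identity with $\nu$ replaced by $\bar\nu$ also yields $\alpha(\bar\nu) = \overline{\beta(\nu)}$. Hence $\beta(\mu) = \beta(\bar\nu) = \overline{\al}$ and $\alpha(\mu) = \alpha(\bar\nu) = \overline{\be}$, so that the bracket $\al\,\beta(\mu) - \alpha(\mu)\,\be$ becomes precisely $|\al|^2 - |\be|^2$.

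For the right-hand side, I would invoke the conjugation relation (\ref{conjreg}) for the regular solution, $\varphi(r,\bar\nu) = \overline{\reg}$, which turns the integrand $\reg\,\varphi(r,\mu) = \reg\,\overline{\reg}$ into $|\reg|^2$. It then remains to compute the prefactor: writing $\nu = \Re\nu + i\,\Im\nu$ gives $\nu^2 - \bar\nu^2 = 4i\,\Re\nu\,\Im\nu$. Substituting all of this into Lemma \ref{alphabeta} produces $2i\,(|\al|^2 - |\be|^2) = 4i\,\Re\nu\,\Im\nu \int_0^{+\infty} |\reg|^2/r^2\, dr$, and dividing by $2i$ gives the stated identity.

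I do not expect any genuine obstacle, since the result is a direct corollary of Lemma \ref{alphabeta}. The only steps requiring a little care are the bookkeeping of the conjugation symmetries — in particular noticing that (\ref{conjJostfunction}), evaluated at $\bar\nu$, supplies the second relation $\alpha(\bar\nu) = \overline{\be}$ — together with the elementary verification that the integral converges throughout the regime $\Re\nu > 0$.
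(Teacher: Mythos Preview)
Your proof is correct and follows exactly the same approach as the paper: specialize Lemma \ref{alphabeta} to $\mu=\bar\nu$ and invoke the conjugation relations (\ref{conjJostfunction}) and (\ref{conjreg}). The paper's proof is a one-line version of what you wrote.
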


\begin{proof}
We take $\mu= \bar{\nu}$ in the previous lemma and one uses (\ref{conjreg}).
\end{proof}

\vspace{0.5cm}
Lemma \ref{alphabeta} and Corollary \ref{link} allow us to localize the zeros of the Jost functions. We recall that the first quadrant (resp. the fourth quadrant) of the complex plane is the set of the complex number $\nu$ such that $\Re \nu \geq 0$ and $\Im \nu \geq 0$,  (resp $\Re \nu \geq 0$ and $\Im \nu \leq 0)$. At least, the Regge poles are the complex zeros of the Jost function $\be$.

\vspace{0.2cm}
\begin{prop}\label{zeroJost}
\hfil\break
Assume that the potential $q$ satifies the hypotheses $(H_1)$ and $(H_2)$. Then, the Jost function $\al$, (resp. $\be$) does not vanish in the first quadrant, (resp. the fourth quadrant). In other words, the Regge poles belong to the first quadrant.
\end{prop}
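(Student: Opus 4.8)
The plan is to read the non-vanishing directly off the two identities already established, namely Lemma \ref{Jostimaginary} on the imaginary axis and Corollary \ref{link} off the axes, treating the positive real half-axis by a separate elementary argument. I would decompose each closed quadrant into three pieces: its open interior, its imaginary half-axis, and the positive real half-axis (the origin being shared). On each piece a different identity does the work.

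For the interiors I would invoke Corollary \ref{link}. Take $\nu$ in the \emph{open} fourth quadrant, so $\Re\nu>0$ and $\Im\nu<0$. Since the regular solution behaves like $r^{\nu+\frac12}$ near $0$ (with $\Re\nu+\frac12>0$) it is not identically zero, hence $\int_0^{+\infty}\frac{|\varphi(r,\nu)|^2}{r^2}\,dr>0$; the product $\Re\nu\,\Im\nu$ being strictly negative, Corollary \ref{link} gives $|\alpha(\nu)|^2-|\beta(\nu)|^2<0$, i.e. $|\beta(\nu)|^2>|\alpha(\nu)|^2\geq 0$, so $\beta(\nu)\neq 0$. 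The symmetric computation for $\nu$ in the open first quadrant ($\Im\nu>0$) yields $|\alpha(\nu)|^2>|\beta(\nu)|^2\geq 0$, whence $\alpha(\nu)\neq 0$.

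For the imaginary half-axes I would use Lemma \ref{Jostimaginary}, $|\alpha(iy)|^2-|\beta(iy)|^2=y$. On the positive imaginary half-axis ($y>0$) this forces $|\alpha(iy)|^2\geq y>0$, so $\alpha(iy)\neq 0$; on the negative imaginary half-axis ($y<0$) it forces $|\beta(iy)|^2\geq -y>0$, so $\beta(iy)\neq 0$. These are precisely the boundary pieces of the first and fourth quadrants lying on $i\R$.

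The genuinely separate, and main, case is the positive real half-axis, where both identities degenerate: for real $\nu>0$ Corollary \ref{link} yields only $|\alpha(\nu)|=|\beta(\nu)|$ (equally a consequence of the conjugation relation $\overline{\alpha(\nu)}=\beta(\bar\nu)$), and the same happens at the origin. Here I would argue by non-triviality of the regular solution: the decomposition $\varphi(r,\nu)=\alpha(\nu)f^+(r,\nu)+\beta(\nu)f^-(r,\nu)$, together with $W(f^+,f^-)=-2i\neq 0$ and $\varphi(r,\nu)\not\equiv 0$, shows that $\alpha(\nu)$ and $\beta(\nu)$ cannot both vanish; combined with $|\alpha(\nu)|=|\beta(\nu)|$ this forces both to be nonzero simultaneously, which covers the real half-axis and the origin. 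The only (mild) obstacle is exactly this real-axis boundary, where the quantitative sign information disappears and one must fall back on the qualitative fact that $\varphi$ is a nonzero solution; everywhere else the two quadratic identities settle the matter at once.
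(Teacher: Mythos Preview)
Your proof is correct and follows essentially the same approach as the paper: Corollary \ref{link} handles the open interiors, Lemma \ref{Jostimaginary} handles the nonzero imaginary half-axes, and on the real half-axis (including the origin) you combine $|\alpha(\nu)|=|\beta(\nu)|$ with the non-triviality of $\varphi(\cdot,\nu)$ in the decomposition $\varphi=\alpha f^++\beta f^-$. The only cosmetic difference is that the paper first invokes the conjugation relation $\overline{\alpha(\nu)}=\beta(\bar\nu)$ to reduce everything to a statement about $\alpha$ in the first quadrant, whereas you treat $\alpha$ and $\beta$ in parallel.
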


\begin{proof}
Since $\al = \overline{\beta(\bar{\nu})}$, we only study the zeros of the Jost function $\al$. For $\Re \nu >0$ and $\Im \nu >0$, Corollary \ref{link} implies that
$\alpha(\nu) \not=0$. In the same way, if $\nu =iy$ with $y\not=0$, using Lemma \ref{Jostimaginary}, we see that $\alpha(iy) \not=0$. At least, if $\nu \geq 0$,
we have $\be = \overline{\al}$ and (\ref{SFS}) implies
\begin{equation}
\reg = \al \fp + \overline{\al} \fm.
\end{equation}
It follows that $\al \not=0$.
\end{proof}

\vspace{1cm}
In the next Propostion, we give  integral representations for the difference of two Jost functions which are a slight generalization of (\cite{AlRe}, p. 38). We adopt the following rule: if $q$ and $\tilde{q}$ are two potentials, we use the notation $Z$ and $\tilde{Z}$ for all the relevant  scattering quantities relative to these potentials.

\begin{prop}\label{diffJost}
\hfil\break
Let $q$ and $\tilde{q}$ two potentials satisfying $(H_1)$ and $(H_2)$.  For $\Re \nu \geq 0$, one has:
\begin{eqnarray}
\al -  \tal & = & \ \frac{1}{2i} \ \int_0^{+\infty} (q (r)-\tilde{q}(r)) \ \fm \ \treg \ dr. \\
\be  - \tbe  & = & - \frac{1}{2i} \ \int_0^{+\infty} (q (r)-\tilde{q}(r)) \ \fp \  \treg\ dr.
\end{eqnarray}
\end{prop}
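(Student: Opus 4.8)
The plan is to run a Green's identity (Wronskian) argument, pairing the regular solution $\treg$ of the equation (\ref{Schradiale}) with potential $\tilde{q}$ against the Jost solution $\fm$ (resp. $\fp$) of (\ref{Schradiale}) with potential $q$. Since these two functions solve equations differing only in their potential term, a direct computation using $W(u,v)=uv'-u'v$ together with the two copies of (\ref{Schradiale}) (the centrifugal and energy terms cancelling) gives
\begin{equation*}
\frac{d}{dr}\, W(\treg, \fpm) \ = \ \left( q(r) - \tilde{q}(r) \right) \, \treg \, \fpm .
\end{equation*}
First I would integrate this identity over $(0,+\infty)$ and identify the two boundary contributions. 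Convergence of the resulting integral must be checked along the way: near $r=0$ one has $\treg\,\fpm = O(r)$ (the regular factor $\sim r^{\nu+\half}$ meets the singular part $\sim r^{-\nu+\half}$ of the Jost solution), which is integrable against $|q-\tilde{q}|$ by $(H_1)$; near $+\infty$ the two factors are bounded and $|q-\tilde{q}|$ is integrable by $(H_2)$. This also legitimizes passing to the limit in $\int_\varepsilon^R$.

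For the endpoint $r\to+\infty$ I would use the asymptotics $\treg \sim \tal\, e^{ir} + \tbe\, e^{-ir}$ coming from (\ref{SFS}) for $\tilde{q}$, combined with $\fpm \sim e^{\pm ir}$ from (\ref{solJost}); a one-line computation then yields $W(\treg, \fm) \to -2i\,\tal$ and $W(\treg, \fp) \to 2i\,\tbe$.

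The delicate point, and the main obstacle, is the endpoint $r\to 0$: the mixed Wronskian $W(\treg,\fpm)$ need not have an obviously computable limit there. I would avoid computing it directly and instead exploit that $W(\reg,\fpm)$ — the Wronskian of two solutions of the \emph{same} equation (potential $q$) — is constant in $r$, and by (\ref{Jostfunctionalpha})--(\ref{Jostfunctionbeta}) equals $-2i\,\al$ (resp. $2i\,\be$). Writing $W(\treg,\fpm) = W(\reg,\fpm) + W(\treg-\reg,\fpm)$, it remains only to show that the remainder tends to $0$ as $r\to0$. Both regular solutions share the same leading behaviour $r^{\nu+\half}$ by (\ref{regsolution}), since the leading term in the integral representation (\ref{repint1}) is independent of the potential; hence $\treg-\reg$ and its derivative vanish strictly faster than $r^{\nu+\half}$ and $r^{\nu-\half}$ respectively. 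Pairing these against $\fpm = O(r^{-\Re\nu+\half})$ and its derivative $O(r^{-\Re\nu-\half})$ makes $W(\treg-\reg,\fpm)=O(r^{\gamma})$ for some $\gamma>0$, so the $r=0$ boundary term of the mixed Wronskian equals that of the constant one, namely $-2i\,\al$ (resp. $2i\,\be$).

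Combining the two endpoints gives $\int_0^{+\infty}(q-\tilde{q})\,\treg\,\fm\,dr = -2i\tal-(-2i\al)=2i(\al-\tal)$ and $\int_0^{+\infty}(q-\tilde{q})\,\treg\,\fp\,dr = 2i\tbe-2i\be=-2i(\be-\tbe)$, which are exactly the two claimed identities after dividing by $2i$. Finally, having established the formulas for $\Re\nu>0$, I would extend them to $\Re\nu\geq0$ by continuity, using that both sides are holomorphic in $\Re\nu\geq0$ (the integrals by dominated convergence and the Jost functions by the analyticity recalled in Section 2). I expect the bookkeeping of the $r\to0$ remainder — establishing the sharp $o(r^{\nu+\half})$ bound on $\treg-\reg$ and its derivative from (\ref{repint1}), and the singular near-origin bound on $\fpm$ — to be the only genuinely technical step.
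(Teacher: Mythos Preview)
Your proposal is correct and follows essentially the same Wronskian argument as the paper: differentiate $W(\fm,\treg)$, integrate, and identify the boundary terms. The paper handles the $r\to 0$ endpoint in one line by the observation $\treg\sim\reg$ (hence $W(\fm,\treg)\to W(\fm,\reg)=2i\al$), which is exactly the content of your decomposition $W(\treg,\fpm)=W(\reg,\fpm)+W(\treg-\reg,\fpm)$; so what you describe as the ``main obstacle'' is in fact dispatched by the same idea, just stated more tersely.
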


\begin{proof}
We follow the same strategy as in Lemma \ref{alphabeta}. Using (\ref{Schradiale}), one has:
\begin{equation}
\left( \fm \tilde{\varphi}' (r,\nu) -  f^{-'}(r,\nu)\treg \right)' =  (\tilde{q}(r) -q(r))\ \fm \treg. \label{derivJost}
\end{equation}

\noindent
Integrating (\ref{derivJost}) onto $(0, +\infty)$, we obtain
\begin{equation}
\left[  W( \fm, \treg ) \right]_{r=0}^{r=+\infty} = \int_0^{+\infty} (\tilde{q}(r) -q(r))\ \fm \treg\ dr.
\end{equation}
When $r \rightarrow +\infty$, $\fm \sim \tfm$ (and also for the derivatives). It follows from (\ref{Jostfunctionalpha}) that
$W( \fm, \treg) \rightarrow 2i \tal$. In the same way, when $r\rightarrow 0$, $\treg \sim \reg$. Thus, as previously, one has $W( \fm, \treg) \rightarrow 2i \al$.
\end{proof}

\vspace{0.2cm}
As a consequence, we have the following integral representation which is the key point to prove our local uniqueness inverse result in Theorem  \ref{Mainresult}:

\vspace{0.2cm}
\begin{prop}\label{differenceJost}
\hfil\break
Let $q$ and $\tilde{q}$ be two potentials satisfying $(H_1)$ and $(H_2)$. Then, for $\Re \nu \geq 0$,
\begin{equation}
\al \tbe - \tal \be \ = \ \frac{1}{2i} \ \int_0^{+\infty} (q (r)-\tilde{q}(r)) \ \reg\ \treg \ dr.
\end{equation}
\end{prop}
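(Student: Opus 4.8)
The plan is to imitate the Wronskian identity already exploited in Lemma \ref{alphabeta} and Proposition \ref{diffJost}, but to apply it directly to the two regular solutions $\reg$ and $\treg$. Both solve the radial equation (\ref{Schradiale}) at the \emph{same} value of $\nu$, one with the potential $q$ and the other with $\tilde q$. Keeping the paper's convention $W(u,v)=uv'-u'v$, I would first compute
\[
\frac{d}{dr}\,W(\treg,\reg) \;=\; \treg\,\varphi''(r,\nu)-\tilde{\varphi}''(r,\nu)\,\reg .
\]
Substituting $\varphi''(r,\nu)=\big(\tfrac{\nu^2-1/4}{r^2}+q(r)-1\big)\reg$ and the analogous expression for $\tilde{\varphi}''(r,\nu)$ with $\tilde q$, the common centrifugal and energy terms cancel, leaving the clean identity $\frac{d}{dr}\,W(\treg,\reg)=(q(r)-\tilde q(r))\,\reg\,\treg$.

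Next I would integrate this over $(0,+\infty)$. The right-hand integral converges: near $r=0$ one has $\reg\treg\sim r^{2\nu+1}$, whose modulus is $r^{2\Re\nu+1}$, and since $\Re\nu\ge 0$ this is dominated by $r^{1-2\epsilon}$ there, so hypothesis $(H_1)$ gives integrability; near infinity $\reg$ and $\treg$ are bounded by (\ref{comportement}), and $(H_2)$ controls the tail. It then remains to evaluate the boundary term $\big[W(\treg,\reg)\big]_0^{+\infty}$.

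At $r\to 0$, the boundary conditions (\ref{regsolution}) give $\reg\sim r^{\nu+\frac12}$, $\varphi'(r,\nu)\sim(\nu+\tfrac12)r^{\nu-\frac12}$, and the same for the tilde quantities; the two leading contributions to $W(\treg,\reg)$ cancel, so this boundary term vanishes. At $r\to+\infty$, I would expand both regular solutions through the fundamental system (\ref{SFS}), $\reg=\al\fp+\be\fm$ and $\treg=\tal\tfp+\tbe\tfm$, and use that every Jost solution satisfies $f^\pm,\tilde f^\pm\sim e^{\pm ir}$ by (\ref{solJost}). The four limiting cross-Wronskians then follow from (\ref{wronskien}): $W(\tfp,\fp)\to 0$, $W(\tfm,\fm)\to 0$, $W(\tfp,\fm)\to -2i$ and $W(\tfm,\fp)\to 2i$. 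Collecting terms yields $W(\treg,\reg)\to 2i\,(\al\tbe-\tal\be)$, and equating with the integral gives
\[
2i\,(\al\tbe-\tal\be)=\int_0^{+\infty}(q(r)-\tilde q(r))\,\reg\,\treg\,dr,
\]
which is exactly the claimed formula after dividing by $2i$.

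The computation is elementary; the only genuinely delicate points are of bookkeeping nature: verifying that the $r\to 0$ boundary term really vanishes (the leading powers must cancel, not merely tend to zero, unless $\Re\nu>0$), and getting the signs of the four asymptotic Wronskians right. A purely algebraic alternative --- writing $\al\tbe-\tal\be=\tbe(\al-\tal)-\tal(\be-\tbe)$ and inserting the two formulas of Proposition \ref{diffJost} --- does not close up cleanly, because it produces the mixed combination $\tal\fp+\tbe\fm$, which is not one of the regular solutions and so cannot be resummed. For this reason I would carry out the direct Wronskian argument above instead.
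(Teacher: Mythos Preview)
Your direct Wronskian argument is correct and gives a clean self-contained proof. However, your dismissal of the algebraic route is mistaken, and in fact the paper proceeds precisely that way. You wrote $\al\tbe-\tal\be=\tbe(\al-\tal)-\tal(\be-\tbe)$, which indeed produces the unusable combination $\tal\fp+\tbe\fm$. But the paper uses instead the decomposition
\[
\al\tbe-\tal\be=(\al-\tal)\,\be-(\be-\tbe)\,\al,
\]
with the \emph{untilded} Jost functions on the outside. Inserting the two formulas of Proposition~\ref{diffJost}, one obtains
\[
\frac{1}{2i}\int_0^{+\infty}(q-\tilde q)\bigl[\be\,\fm+\al\,\fp\bigr]\treg\,dr,
\]
and now $\al\fp+\be\fm=\reg$ by (\ref{SFS}), so the result follows in one line.

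The two approaches are equivalent in substance: your Wronskian identity for the pair $(\treg,\reg)$ is just the sum of the two Wronskian identities underlying Proposition~\ref{diffJost}, weighted by $\be$ and $\al$. The paper's route is slightly more economical because it reuses the already-established Proposition~\ref{diffJost} and avoids re-examining the $r\to 0$ boundary term; your route has the virtue of being independent of that proposition. The delicate point you flag about the $r\to 0$ limit when $\Re\nu=0$ is real but manageable: since $\reg$ and $\treg$ share the same leading behaviour $r^{\nu+1/2}$, the Wronskian is $o(r^{2\Re\nu})$, which vanishes for $\Re\nu\ge 0$.
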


\begin{proof}
We write:
\begin{equation}
 \al \tbe - \tal \be = (\al - \tal) \be - (\be - \tbe) \al.
\end{equation}
The result follows immediately from (\ref{SFS}) and Proposition \ref{diffJost}.
\end{proof}

\vspace{0.2cm}\noindent
We can also deduce from the previous Proposition the next technical result used in the proof of Theorem \ref{Mainresult}:

\begin{coro}\label{compact}
\hfil\break
Let $q$ and $\tilde{q}$ be two potentials satisfying $(H_1)$ and $(H_2)$. Assume also that $q = \tilde{q}$ a.e on $[a +\infty[$. Then, there exists $ C>0$ such that:
\begin{equation}
\mid \al \tbe - \tal \be \mid \ \leq \ \frac{C}{\Re \nu +1} \ a^{2\Re \nu} \ \ , \  \forall \ \Re \nu \geq 0.
\end{equation}
\end{coro}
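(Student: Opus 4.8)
The plan is to start from the integral representation established in Proposition \ref{differenceJost}, namely
\begin{equation*}
\al \tbe - \tal \be \ = \ \frac{1}{2i} \ \int_0^{+\infty} (q(r) - \tilde{q}(r)) \ \reg \ \treg \ dr,
\end{equation*}
and exploit the hypothesis $q = \tilde{q}$ a.e.\ on $[a,+\infty[$. The crucial observation is that the integrand vanishes for $r \geq a$, so the integral is actually over the \emph{compact} interval $(0,a]$:
\begin{equation*}
\al \tbe - \tal \be \ = \ \frac{1}{2i} \ \int_0^{a} (q(r) - \tilde{q}(r)) \ \reg \ \treg \ dr.
\end{equation*}
This reduction to a bounded domain of integration is exactly what will let us get uniform control in $\nu$.

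Next I would bound the integrand using the estimate (\ref{estreg}) on the regular solution. That estimate gives, for $\mid \nu \mid \geq 2\epsilon$,
\begin{equation*}
\mid \reg \mid \ \leq \ r^{\Re \nu + \half} \ \exp\left( \frac{r^{2\epsilon}}{\mid \nu \mid} \int_0^r s^{1-2\epsilon} \mid q(s) - 1 \mid \ ds \right),
\end{equation*}
and the analogous bound holds for $\mid \treg \mid$ with $\tilde{q}$ in place of $q$. For $r \in (0,a]$ the exponential factor is bounded by a constant $C_a$ depending only on $a$ (and on the fixed potentials, via $(H_1)$), uniformly once $\mid \nu \mid$ is bounded below; in particular it stays bounded as $\Re \nu \to +\infty$. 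Hence $\mid \reg \treg \mid \leq C_a \, r^{2\Re \nu + 1}$ on $(0,a]$, and since $\mid q - \tilde q\mid$ is integrable against $r^{1-2\epsilon}$ by $(H_1)$, I would absorb the potential factor and estimate
\begin{equation*}
\mid \al \tbe - \tal \be \mid \ \leq \ C \int_0^a r^{2\Re \nu} \mid q(r) - \tilde q(r)\mid \ dr \ \leq \ C' \, a^{2\Re \nu} \int_0^a r^{2\Re \nu} \frac{\mid q - \tilde q\mid}{a^{2\Re\nu}} \, dr.
\end{equation*}

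The cleanest route to the claimed factor $\frac{1}{\Re \nu + 1} \, a^{2\Re \nu}$ is to write $r^{2\Re\nu} = a^{2\Re\nu} (r/a)^{2\Re\nu}$ and note that $(r/a)^{2\Re\nu} \leq 1$ on $(0,a]$, so that after pulling out $a^{2\Re\nu}$ one is left with $\int_0^a (r/a)^{2\Re\nu}\,\mid q-\tilde q\mid\,dr$. To extract the extra decay $\frac{1}{\Re\nu+1}$ I would handle the mild singularity at $0$ via $(H_1)$: splitting off the weight $r^{1-2\epsilon}$ and bounding the remaining power $r^{2\Re\nu - 1 + 2\epsilon}$ by its integral $\int_0^a r^{2\Re\nu-1+2\epsilon}\,dr = O\big(a^{2\Re\nu+2\epsilon}/(\Re\nu+1)\big)$ gives precisely the $\frac{1}{\Re\nu+1}$ gain while keeping the $a^{2\Re\nu}$ scaling. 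The main obstacle is this last bookkeeping: one must reconcile the singularity of $q-\tilde q$ at the origin (controlled only in the weighted $(H_1)$ sense) with the need for a clean power $a^{2\Re\nu}$ and the polynomial gain $\frac{1}{\Re\nu+1}$, uniformly over all $\Re\nu \geq 0$, including small $\nu$ where (\ref{estreg}) is not directly available and one instead uses continuity and evenness of $\reg$ to absorb the bounded range into the constant $C$.
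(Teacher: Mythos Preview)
Your overall approach matches the paper's exactly: start from Proposition \ref{differenceJost}, truncate the integral to $(0,a]$, bound $|\reg\,\treg|\leq C\,r^{2\Re\nu+1}$ via (\ref{estreg}), and reduce to estimating $\int_0^a r^{2\Re\nu+1}|q-\tilde q|\,dr$. The paper itself does not spell out this last estimate either, citing instead Lemma~3.1 of \cite{Ho1}.

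However, the justification you give for extracting the factor $\frac{1}{\Re\nu+1}$ does not work as written. You propose to factor $r^{2\Re\nu+1}|q-\tilde q| = r^{2\Re\nu+2\epsilon}\cdot r^{1-2\epsilon}|q-\tilde q|$ and then ``bound the remaining power by its integral''. But $(H_1)$ only gives an $L^1$ bound on $r^{1-2\epsilon}|q-\tilde q|$, so the pairing you can actually perform is
\[
\int_0^a r^{2\Re\nu+2\epsilon}\,\bigl(r^{1-2\epsilon}|q-\tilde q|\bigr)\,dr \ \leq\ \sup_{(0,a]} r^{2\Re\nu+2\epsilon}\cdot \|r^{1-2\epsilon}(q-\tilde q)\|_{L^1} \ =\ O(a^{2\Re\nu}),
\]
with no $1/(\Re\nu+1)$ gain; replacing a pointwise power by its integral is not a legitimate step unless the \emph{other} factor is in $L^\infty$. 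Indeed, if $q-\tilde q$ were allowed to concentrate like $(a-r)^{-1/2}$ near $r=a$, the integral would only decay like $(\Re\nu)^{-1/2}$. The missing ingredient is the paper's standing hypothesis that the potentials are \emph{piecewise continuous} on $(0,+\infty)$: fix $\delta<\min(1,a)$ and split. On $[\delta,a]$ the difference $|q-\tilde q|$ is bounded by some $M$, so $\int_\delta^a r^{2\Re\nu+1}|q-\tilde q|\,dr\leq M\,\frac{a^{2\Re\nu+2}}{2\Re\nu+2}$, which gives exactly the desired bound. On $(0,\delta)$ one has $r^{2\Re\nu+2\epsilon}\leq\delta^{2\Re\nu+2\epsilon}$, and since $\delta<a$ this contribution is $O\bigl((\delta/a)^{2\Re\nu}a^{2\Re\nu}\bigr)$, which decays faster than $a^{2\Re\nu}/(\Re\nu+1)$. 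Your remark about handling small $\nu$ by continuity is correct and is also needed here.
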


\begin{proof}
For $r\leq a$, we know from (\ref{estreg}) that there exists $C>0$ such that, for all $\nu$ with $\Re \nu \geq 0$,
\begin{equation}
\mid \varphi (r,\nu) \mid \leq C \  r^{\Re \nu + \half} .
\end{equation}
and identically for $\treg$.  Then, applying Proposition \ref{differenceJost}, we obtain:
\begin{equation}
\mid \al \tbe - \tal \be \mid \ \leq \ C \  \int_0^a r^{2\Re \nu +1} \mid q(r)-\tilde{q}(r) \mid \ dr.
\end{equation}
Thus, using for instance Lemma 3.1 in  \cite{Ho1}), we obtain the Corollary.
\end{proof}

\vspace{0.5cm}\noindent
Now, roughly speaking, the following Proposition asserts that  the Jost functions $\al$ and $\be$ are
suitable perturbations of the free ones in the regime  $\nu \rightarrow +\infty$, ($\nu$ real), when the potential decays as $O(r^{-\rho})$ with $\rho>\frac{3}{2}$ at infinity.

\begin{prop}\label{equivJost}
\hfil\break
Let $q(r)$ be a potential satisfying $(H_1)$. Assume also that $r^{\frac{1+\delta}{2}} q(r)$ satisfies $(H_2)$ for some $\delta \in (0,1)$. Then,
\begin{equation}
\al \sim \alo \ \ ,\ \ \be \sim \beo \ \ \ {\rm{when}} \ \nu \rightarrow + \infty.
\end{equation}
\end{prop}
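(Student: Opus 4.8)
The plan is to compare the Jost functions of $q$ with those of the free problem $\tilde q = 0$, for which $\tal = \alo$, $\tbe = \beo$ and $\treg = \rego$ are given explicitly by \eqref{phi0}, \eqref{alphazero}, \eqref{betazero}. Specializing Proposition \ref{diffJost} to $\tilde q = 0$ yields
\begin{equation*}
\al - \alo = \frac{1}{2i}\int_0^{+\infty} q(r)\,\fm\,\rego\,dr, \qquad \be - \beo = -\frac{1}{2i}\int_0^{+\infty} q(r)\,\fp\,\rego\,dr .
\end{equation*}
Since for real $\nu>0$ one reads off $|\alo| = |\beo| = \half A(\nu)$ from \eqref{alphazero}--\eqref{betazero}, the claim $\al\sim\alo$, $\be\sim\beo$ amounts to showing that both integrals above are $o\!\left(A(\nu)\right)$ as $\nu\to+\infty$.

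First I would bound the integrands pointwise. For real $\nu$ one has $\rego = A(\nu)\,u(r)$ with $u$ as in \eqref{u}, while $|\fpo| = |\fmo| = |v(r)|$ with $v$ as in \eqref{v}, because $H_\nu^{(2)} = \overline{H_\nu^{(1)}}$ on the real axis. Proposition \ref{fpmrealaxis} applies, since its hypothesis (that $r^{\frac{1+\delta}{2}}q$ satisfies $(H_2)$) is exactly the one assumed here, and it automatically forces $(H_2)$ for $q$ as well. Extending its bound to the positive real axis — the common boundary of the quadrants involved, with a constant independent of $\nu$ — gives $|\fpm| \le C\,|\fpmo| = C\,|v(r)|$ uniformly. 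Hence $|\fm\,\rego|$ and $|\fp\,\rego|$ are both $\le C\,A(\nu)\,|u(r)v(r)|$, and dividing by $|\alo|=|\beo|=\half A(\nu)$ reduces everything to proving that
\begin{equation*}
\int_0^{+\infty} |q(r)|\,|u(r)v(r)|\,dr = o(1), \qquad \nu\to+\infty .
\end{equation*}

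The decisive estimate is a pointwise bound for the diagonal product $|u(r)v(r)| = \frac{\pi r}{2}\,\left|J_\nu(r)H_\nu^{(1)}(r)\right|$, for which I would split the integral at $r=1$. On $[1,+\infty)$, Theorem \ref{estproduitbessel} gives $\frac{\pi r}{2}\left|J_\nu H_\nu^{(1)}\right| \le C_\delta(1+\nu)^{-\delta/2}\,r^{\frac{1+\delta}{2}}$, so this part is $\le C_\delta(1+\nu)^{-\delta/2}\int_1^{+\infty} r^{\frac{1+\delta}{2}}|q(r)|\,dr \to 0$ by the $(H_2)$ hypothesis on $r^{\frac{1+\delta}{2}}q$. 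On $(0,1]$ the bound $r^{\frac{1+\delta}{2}}$ is not good enough near the origin — its exponent may be smaller than $1-2\epsilon$, so the resulting integral need not converge from $(H_1)$ — and this is where the real work lies. Using the uniform Bessel asymptotics of Proposition \ref{unifBessel} (equivalently the small-$r$ behaviour recorded in Lemma \ref{estimations}), I would establish $u(r)v(r)\sim -r/(2\nu)$ uniformly for large $\nu$, hence $|u(r)v(r)| \le C\,r/\nu$ on $(0,1]$; then $\int_0^1 |q(r)|\,|u(r)v(r)|\,dr \le \frac{C}{\nu}\int_0^1 r\,|q(r)|\,dr$, and $\int_0^1 r|q|\,dr \le \int_0^1 r^{1-2\epsilon}|q|\,dr < \infty$ by $(H_1)$. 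Summing the two pieces gives the desired $o(1)$, for both $\al$ and $\be$.

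The step I expect to be the main obstacle is precisely this near-origin control on $(0,1]$: one must upgrade the pointwise asymptotic $u(r)v(r)\sim -r/(2\nu)$ into an estimate uniform simultaneously in $r\in(0,1]$ and in large $\nu$. This hinges on the cancellation between the $(r/2)^{\nu}$ decay of $J_\nu$ and the $(r/2)^{-\nu}$ growth of $Y_\nu$ in the product $J_\nu H_\nu^{(1)}$, for which one must control the remainders in Proposition \ref{unifBessel} uniformly on the compact $r$-range. Everything else is a routine assembly of the integral representation of Proposition \ref{diffJost} with the Jost and Bessel bounds already collected in the paper.
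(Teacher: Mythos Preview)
Your proposal is correct and follows essentially the same route as the paper's proof: apply Proposition \ref{diffJost} with $\tilde q=0$, divide by $\beo$ (equivalently $A(\nu)$), bound $\fpm$ by $\fpmo$ via Proposition \ref{fpmrealaxis}, reduce to $\int_0^{+\infty}|q(r)|\,r\,|J_\nu(r)H_\nu^{(1)}(r)|\,dr$, split at $r=1$, and handle the two pieces with Proposition \ref{unifBessel} and Theorem \ref{estproduitbessel} respectively. The obstacle you flag on $(0,1]$ is exactly what the paper resolves by invoking Proposition \ref{unifBessel}; the needed uniformity down to $r\to 0$ follows directly from the power series \eqref{serieBessel}, since the $O(1/\nu)$ remainders carry only bounded $r$-dependent factors on $(0,1]$.
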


\begin{proof}
For instance, let us show $\be \sim \beo$ when $\nu \rightarrow +\infty$. We use Proposition \ref{diffJost} with the potential $\tilde{q}=0$:
\begin{equation}
\be - \beo =    \frac{1}{2i} \ \int_0^{+\infty} q(r) \ f^+(r,\nu) \  \varphi_0(r,\nu) \ dr.
\end{equation}
We recall that:
\begin{equation*}
\beo = \half \ A(\nu) \ e^{i(\nu +\half) \pd}.
\end{equation*}
Hence, one obtains:
\begin{equation}
\mid \frac {\be}{\beo} -1 \mid \ \leq \ \int_0^{+\infty} \mid q(r) \fp \frac{\varphi_0(r,\nu)}{A(\nu)} \mid \ dr.
\end{equation}
Using Proposition \ref{fpmrealaxis}, we obtain:
\begin{eqnarray*}
\mid \frac {\be}{\beo} -1 \mid &\leq& C \ \int_0^{+\infty} \mid q(r) \fpo \frac{\varphi_0(r,\nu)}{A(\nu)} \mid \ dr, \\
                                &\leq& C \ \int_0^{+\infty} \mid rq(r) H_{\nu}^{(1)}(r) J_{\nu}(r) \mid \ dr,
\end{eqnarray*}
where we have used (\ref{phi0}) and (\ref{fp0}).  Thus, it follows from Proposition \ref{unifBessel} and Theorem \ref{estproduitbessel} that:
\begin{eqnarray*}
\mid \frac {\be}{\beo} -1 \mid \ &\leq& C \ \left( \int_0^1 \mid rq(r) H_{\nu}^{(1)}(r) J_{\nu}(r) \mid \ dr +  \int_1^{+\infty} \mid rq(r) H_{\nu}^{(1)}(r) J_{\nu}(r) \mid \ dr \right) \\
&\leq& \frac{C}{\nu} \ \int_0^1 r \mid q(r) \mid \ dr \ + \ C_{\delta}\  \nu^{-\frac{\delta}{2}} \ \int_1^{+\infty} \mid q(r) \mid r^{\frac{\delta+1}{2}} \ dr,
\end{eqnarray*}
for some $\delta \in (0,1)$, which implies the Proposition.
\end{proof}

\section{The generalized phase shifts $\delta(\nu)$.}

In this section, we give some properties of the generalized phase shifts $\delta(\nu)$ for complex values of the angular momentum $\nu$. We recall that they are defined for $\nu>0$ by the formula:
\begin{equation}\label{defphase}
e^{2i \delta(\nu)} = e^{i\pi(\nu +\half)} \ \frac{\al}{\be},
\end{equation}
using the convention $\delta(\nu) \rightarrow 0$ when $\nu \rightarrow +\infty$.  Of course, in order to define properly $\delta(\nu)$ for complex variables $\nu$, we have to ensure that $\al$ and $\be$ do not vanish in a simply connected domain. We recall that the zeros of $\al$ belong to the fourth quadrant, whereas the zeros of $\be$, called the Regge poles, are located  in the first quadrant.

\vspace{0.2cm}\noindent
{\it{Definition: we say that a potential $q(r)$ satisfies the property $(R)$ if there exists $A \geq 0$ such that there are no Regge poles in the simply connected domain :}}
\begin{equation}
\Gamma_A= \{ \nu \in \C\ ;\ \Re \nu > A \}.
\end{equation}

\vspace{0.2cm} \noindent
Reminding that $\overline{\al} = \beta(\bar{\nu})$, we see that (\ref{defphase}) allows us to define $\delta(\nu)$ as an holomorphic function on this domain. We shall give in the next section some examples of such potentials.

\vspace{0.5cm} \noindent
The first property obtained in this Section has been observed by M. Horvath in \cite{Ho1}, but as we said previously, we think that his argument is not correct since, in general, the phase shifts $\delta(\nu)$ are not well defined in the presence of Regge poles. So, it is necessary to assume that the property $(R)$ is satisfied.

\vspace{0.2cm}\noindent
Now, let us recall some useful facts on holomorphic functions of the complex variable $z$.

\vspace{0.2cm} \noindent
{\it{Definition}}: A function $f(z)$ that is holomorphic in the half-upper plane $\Im z >0$ and takes its values in the half-upper plane is called a {\it{Herglotz function}}.

\vspace{0.2cm} \noindent
A Herglotz function has a nice growth property (see \cite{Lev1}, Theorem 8):
\begin{equation}\label{growth}
\forall z, \ \Im z > 0, \quad \mid f(z) \mid \ \leq \ 5 \mid f(i)\mid  \frac{\mid z\mid^2}{\Im z}.
\end{equation}

\vspace{0.2cm} \noindent
We deduce  immediately from Corollary \ref{link} the following result:

\begin{prop}\label{herglotz}
\hfill\break
Let $q(r)$ be a potential satisfying $(H_1)$, $(H_2)$ and the property $(R)$.
Then, the function $\delta(\nu)- \pd \nu$ is Herglotz in the variable $z= -(\nu-A)^2$, $\nu \in \Gamma_A,\ \Im \nu<0$.
\end{prop}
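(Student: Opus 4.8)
The plan is to exhibit the explicit conformal change of variable that turns $\Gamma_A \cap \{\Im \nu < 0\}$ into the open upper half-plane, and then to reduce the defining property of a Herglotz function (nonnegative imaginary part) to the single inequality $\mid \be \mid \geq \mid \al \mid$, which follows at once from Corollary \ref{link}.

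First I would analyze the map $\nu \mapsto z = -(\nu-A)^2$. Writing $w = \nu - A$, the region $\{\Re \nu > A,\ \Im \nu < 0\}$ becomes the open fourth quadrant $\{\Re w > 0,\ \Im w < 0\}$; since $\arg w \in (-\pd, 0)$ one gets $\arg z = \pi + 2\arg w \in (0, \pi)$, so $z$ ranges precisely over the open upper half-plane $\Im z > 0$. This map is a biholomorphism of the two domains (its derivative $-2(\nu-A)$ never vanishes there), so any function holomorphic on $\{\Re\nu > A,\ \Im\nu<0\}$ may legitimately be regarded as a holomorphic function of $z$ on $\Im z > 0$.

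Next I would verify that $\delta(\nu) - \pd\,\nu$ is genuinely holomorphic on $\Gamma_A$. By hypothesis $(R)$ there are no Regge poles in $\Gamma_A$, i.e. $\be$ does not vanish there; and since $\al = \overline{\beta(\bar\nu)}$ and $\Gamma_A$ is invariant under conjugation, $\al$ does not vanish there either. As $\Gamma_A$ is simply connected, the branch of $\log(\al/\be)$ fixed by the normalization $\delta(\nu)\to 0$ as $\nu \to +\infty$ makes $\delta(\nu)$ a single-valued holomorphic function through (\ref{defphase}). Solving (\ref{defphase}) for $\delta$ yields
\begin{equation*}
\delta(\nu) - \pd\,\nu \ = \ \pq \ - \ \frac{i}{2}\ \log \frac{\al}{\be},
\end{equation*}
whence
\begin{equation*}
\Im\left( \delta(\nu) - \pd\,\nu \right) \ = \ -\half\ \log \left| \frac{\al}{\be} \right| \ = \ \half\ \log \left| \frac{\be}{\al} \right|.
\end{equation*}

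Finally, on the region $\Re\nu > A \geq 0$, $\Im\nu < 0$, Corollary \ref{link} gives $\mid\al\mid^2 - \mid\be\mid^2 = 2\,\Re\nu\,\Im\nu \int_0^{+\infty} \mid\reg\mid^2 r^{-2}\,dr \leq 0$, since $\Re\nu > 0$, $\Im\nu < 0$ and the integral is nonnegative. Hence $\mid\be\mid \geq \mid\al\mid$ and $\Im(\delta(\nu) - \pd\,\nu) \geq 0$ throughout the domain, which, transported through the conformal map above, is exactly the Herglotz property in the variable $z$. I expect the delicate point to be not this inequality — which is immediate — but the bookkeeping that legitimizes $\delta(\nu)$ as a single-valued holomorphic object: one must combine $(R)$ with the conjugation symmetry $\al=\overline{\beta(\bar\nu)}$ to exclude zeros of \emph{both} $\al$ and $\be$ in $\Gamma_A$, and carefully check that $\nu \mapsto -(\nu-A)^2$ sends the \emph{lower} half $\Im\nu<0$ to the \emph{upper} half $z$-plane, so that the sign of the imaginary part comes out correctly.
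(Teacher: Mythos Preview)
Your proof is correct and follows essentially the same approach as the paper: apply Corollary \ref{link} in the region $\Re\nu>0$, $\Im\nu<0$ to obtain $\mid\al\mid\leq\mid\be\mid$, and then read off $\Im(\delta(\nu)-\pd\nu)\geq 0$ from (\ref{defphase}). The paper's proof is a three-line version of yours; your additional care with the conformal map $\nu\mapsto -(\nu-A)^2$ and with the well-definedness of $\delta(\nu)$ via the symmetry $\al=\overline{\beta(\bar\nu)}$ is sound and makes explicit what the paper leaves implicit.
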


\begin{proof}
Let $\nu$ be a complex number in the fourth quadrant with $\nu \in \Gamma_A$. Corollary \ref{link} implies
\begin{equation}
\left| \frac{\al}{\be} \right| \ < 1.
\end{equation}
So, using (\ref{defphase}), we obtain:
\begin{equation}
\left| e^{2i (\delta(\nu) -(\nu+\half) \pd)} \right| \ < 1,
\end{equation}
or equivalently $\Im (\delta(\nu) - \pd \nu ) > 0$.
\end{proof}

\vspace{0.5cm} \noindent
The second property  was cited in \cite{Ho1} for potentials such that $rq(r)$ satisfies $(H_2)$.
We generalize this result to potentials $q(r)$ which has a slower decay at infinity. For simplicity,
we assume here that $q(r)$ is regular at $r=0$, but we can certainly allow some singularity at the origin. Of course, as previously, we need to assume that the property $(R)$ is satisfied.

\vspace{0.2cm}
\begin{prop}\label{deltanupoly}
\hfill\break
Let $q(r)$ be a potential satisfying the property $(R)$.
We also assume that $\mid q(r) \mid \leq C \ (1+r)^{-\rho}$ with $\rho> \frac{3}{2}$  for all $r>0$.
Then, for $\nu \in \Gamma_A$, there exists $C>0$ such that:
\begin{equation}
\mid \delta (\nu) \mid \ \leq \ C \ \mid \nu \mid^{4}.
\end{equation}
\end{prop}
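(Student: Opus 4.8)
The plan is to read the estimate off directly from the Herglotz structure just established in Proposition \ref{herglotz}, combining the growth inequality (\ref{growth}) with a reflection symmetry that reduces the two half-planes to one. First I would record the reflection identity for $\delta$. From $\overline{\alpha(\nu)} = \beta(\bar\nu)$ (equation (\ref{conjJostfunction})) together with the defining relation (\ref{defphase}) and the normalization $\delta(\nu) \to 0$ as $\nu \to +\infty$, one gets $\delta(\bar\nu) = \overline{\delta(\nu)}$ on $\Gamma_A$, hence $|\delta(\bar\nu)| = |\delta(\nu)|$. It therefore suffices to prove the bound for $\nu \in \Gamma_A$ with $\Im\nu \le 0$, which is exactly the region covered by Proposition \ref{herglotz}.

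Next I would set $g(\nu) = \delta(\nu) - \pd\,\nu$ and $z = -(\nu - A)^2$. Writing $\nu - A = a + ib$ with $a > 0$, $b \le 0$, a one-line computation gives $|z| = |\nu - A|^2$ and $\Im z = -2ab = 2(\Re\nu - A)\,|\Im\nu| > 0$, so $z$ lies in the open upper half-plane and $g$, viewed as a function of $z$, is Herglotz. Applying (\ref{growth}) at the fixed normalizing point $z = i$ (which corresponds to the constant $\nu = A + e^{-i\pi/4}$) yields
\begin{equation*}
\left| \delta(\nu) - \pd\,\nu \right| \ \le \ 5\,|g(i)|\,\frac{|z|^2}{\Im z} \ = \ C\,\frac{|\nu - A|^4}{2(\Re\nu - A)\,|\Im\nu|}.
\end{equation*}
Since $|\nu - A|^4 \le C'|\nu|^4$ and $\pd\,|\nu| \le C''|\nu|^4$, this already gives $|\delta(\nu)| \le C|\nu|^4$ on every part of $\Gamma_A$ that stays a fixed distance from $\partial\Gamma_A$, i.e. where the product $(\Re\nu - A)\,|\Im\nu|$ is bounded below. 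Here the exponent $4$ is simply $|z|^2 = |\nu-A|^4$, so the bound is deliberately crude and has plenty of room.

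The main obstacle is precisely the degeneration of the Herglotz estimate near $\partial\Gamma_A$, where $\Im z \to 0$. Two pieces require separate care. Near the positive real axis ($|\Im\nu|$ small, $\Re\nu$ large) the estimate is rescued by smallness of $\delta$ itself: the asymptotics $\alpha \sim \alpha_0$, $\beta \sim \beta_0$ from Proposition \ref{equivJost}, extended to a neighbourhood of the real axis, force $\delta(\nu)\to 0$, so $|\nu|^4$ dominates trivially. Near the left boundary $\Re\nu = A$ the genuine difficulty is that zeros of $\be$ (Regge poles) may sit on the line itself, where $\delta$ blows up and no polynomial bound can hold up to that line.

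This last point is where property $(R)$ is essential, and where I would concentrate the real work. Since $(R)$ furnishes a pole-free \emph{open} half-plane $\Re\nu > A$, one may replace $A$ by any $A_1 > A$ and work in $\Gamma_{A_1}$, on whose closed left boundary $\be$ (and $\al$, by reflection) does not vanish, so $\delta$ is holomorphic and locally bounded there. Feeding the two boundary controls together with the interior Herglotz bound into a maximum-principle (Phragmén--Lindel\"of) argument on the upper half $z$-plane—legitimate because the Herglotz bound shows $g$ has at most polynomial growth in $z$—then propagates the estimate up to the boundary and delivers the uniform conclusion $|\delta(\nu)| \le C|\nu|^4$ on $\Gamma_A$, with $C$ depending on $A$ and $q$.
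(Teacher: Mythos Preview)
Your approach matches the paper's: the Herglotz bound of Proposition~\ref{herglotz} together with (\ref{growth}) handles the region away from the real axis, a direct estimate covers a strip around it, and the reflection $\overline{\delta(\nu)}=\delta(\bar\nu)$ reduces everything to $\Im\nu\leq 0$. The only substantive difference is the strip estimate. Where you cite Proposition~\ref{equivJost} ``extended to a neighbourhood of the real axis'' to force $\delta(\nu)\to 0$, the paper instead works from scratch on $\Omega=\{\nu\in\Gamma_A:\ |\Im\nu|<1\}$ using the integral representation (\ref{newrep1})--(\ref{newrep2}) and the Green-kernel bound of Corollary~\ref{estGreenK}, arriving at $|e^{2i\delta(\nu)}-1|\leq C\,|\nu|^{-1/2}$ and hence $\delta(\nu)=O(|\nu|^{-1/2})$ there. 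Your shortcut is legitimate---the proof of Proposition~\ref{equivJost} carries over verbatim to the strip via the same Bessel estimates, and this is exactly where the hypothesis $\rho>3/2$ is used---but the paper's computation is more explicit and self-contained. Your closing Phragm\'en--Lindel\"of step is unnecessary: once $A$ is replaced by $A_1>A$ (as you correctly suggest), the denominator $2(\Re\nu-A)\,|\Im\nu|$ is bounded below by $2(A_1-A)$ on $\{\Re\nu\geq A_1,\ |\Im\nu|\geq 1\}$, so the Herglotz inequality already yields $|\delta(\nu)|\leq C\,|\nu|^{4}$ there, and a plain case-split against the strip estimate finishes the job---which is precisely how the paper concludes.
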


\begin{proof}
We follows the same strategy as in \cite{Ho1}, Section 3.  We start from Proposition \ref{newrep}:
\begin{equation}\label{rappel0}
\reg = -2i\be \ e^{-i \pd (\nu-\half)} \ u(r) + \int_0^{+\infty} K(r,s, \nu) q(s)  \varphi(s,\nu) \ ds.
\end{equation}
We define the set:
\begin{equation}
\Omega = \{ \nu \in \Gamma_A \ :\ \ \mid \Im \nu \mid  <1 \ \}.
\end{equation}
Using Corollary \ref{estGreenK}, we see that for any $\delta\in (0,1)$ and $\nu \in \Omega$,
\begin{equation}
\mid K(r,s,\nu) \mid \ \leq \ C \ \mid \nu \mid^{-\frac{\delta}{2}} \ (rs)^{\frac{\delta+1}{4}},
\end{equation}
where the constant $C$ depends implicitly of $\delta$. We deduce from (\ref{rappel0}):
\begin{equation}\label{premiere}
\mid \reg + 2i\be \ e^{-i \pd (\nu-\half)} \ u(r) \mid \ \leq \ C \ \mid \nu \mid^{-\frac{\delta}{2}} \ r^{\frac{\delta+1}{4}} \
\int_0^{+\infty}  s^{\frac{\delta+1}{4}} \mid q(s)  \varphi(s,\nu)\mid \ ds,
\end{equation}
and also:
\begin{equation}\label{seconde}
\mid \reg \mid \ \leq \ 2 \mid \be e^{-i \pd (\nu-\half)} \ u(r) \mid + C \ \mid \nu \mid^{-\frac{\delta}{2}} \ r^{\frac{\delta+1}{4}} \
\int_0^{+\infty}  s^{\frac{\delta+1}{4}} \mid q(s)  \varphi(s,\nu)\mid \ ds.
\end{equation}
We multiply (\ref{seconde}) by $r^{\frac{\delta+1}{4}} \mid q(r)\mid$ and we integrate over $(0,+\infty)$:
\begin{eqnarray*}\label{troisieme}
\int_0^{+\infty}  r^{\frac{\delta+1}{4}} \mid q(r)  \varphi(r,\nu)\mid \ dr &\leq&  C \ \mid \be \mid \int_0^{+\infty}  r^{\frac{\delta+1}{4}} \mid q(r) u(r) \mid \ dr \\
& & \hspace{-0.5cm} + \ C   \ \mid \nu \mid^{-\frac{\delta}{2}} \ \int_0^{+\infty}  r^{\frac{\delta+1}{2}} \mid q(r) \mid \ dr  \cdot \int_0^{+\infty}  s^{\frac{\delta+1}{4}} \mid q(s)  \varphi(s,\nu)\mid \ ds.
\end{eqnarray*}
By our hypothesis, if we choose $\delta>0$ small enough, the integral $\int_0^{+\infty}  r^{\frac{\delta+1}{2}} \mid q(r) \mid \ dr$ is convergent, thus:
\begin{eqnarray}
\int_0^{+\infty}  r^{\frac{\delta+1}{4}} \mid q(r)  \varphi(r,\nu)\mid \ dr &\leq&  C \ \mid \be \mid \int_0^{+\infty}  r^{\frac{\delta+1}{4}} \mid q(r) u(r) \mid \ dr \nonumber  \\
& &  + \ C   \ \mid \nu \mid^{-\frac{\delta}{2}} \  \int_0^{+\infty}  s^{\frac{\delta+1}{4}} \mid q(s)  \varphi(s,\nu)\mid \ ds.
\end{eqnarray}
Hence, for $\nu \in \Omega$ large enough, one obtains:
\begin{equation}\label{quatre}
\int_0^{+\infty}  r^{\frac{\delta+1}{4}} \mid q(r)  \varphi(r,\nu)\mid \ dr \ \leq\   C \ \mid \be \mid \int_0^{+\infty}  r^{\frac{\delta+1}{4}} \mid q(r) u(r) \mid \ dr.
\end{equation}
Putting (\ref{quatre}) into (\ref{premiere}), and recalling that $\mid \Im \nu \mid$ is bounded, we have:
\begin{eqnarray}
\mid \reg + 2i\be \ e^{-i \pd (\nu-\half)} \ u(r) \mid \ &\leq & \ C \ \mid \nu \mid^{-\frac{\delta}{2}} \ r^{\frac{\delta+1}{4}} \
\mid -2i \be e^{-i \pd (\nu-\half)} \mid \nonumber \\
& & . \  \int_0^{+\infty}  s^{\frac{\delta+1}{4}} \mid q(s) u(s) \mid \ ds,
\end{eqnarray}
or equivalently,
\begin{equation}\label{cinq}
\left| \frac{\reg}{ -2i \be e^{-i \pd (\nu-\half)}} -u(r) \right| \leq C \mid \nu \mid^{-\frac{\delta}{2}} \ r^{\frac{\delta+1}{4}} \
\int_0^{+\infty}  s^{\frac{\delta+1}{4}} \mid q(s) u(s) \mid \ ds,
\end{equation}
since $\be$ does not vanish in $\Omega$. On the other hand, Proposition \ref{newrep} asserts:
\begin{equation}\label{rappel1}
\al \ e^{i \pd (\nu-\half)} + \be \ e^{-i \pd (\nu-\half)} \ =\ -\int_0^{+\infty} u(r) q(r) \varphi(r,\nu) \ dr.
\end{equation}
Dividing (\ref{rappel1}) by $\be e^{-i \pd (\nu-\half)}$ and using (\ref{defphase}), we obtain:
\begin{equation}
e^{2i\delta(\nu)} -1 = -2i \ \int_0^{+\infty} u(r) q(r) \frac{\reg}{ -2i \be e^{-i \pd (\nu-\half)}}\ dr.
\end{equation}
Thus,
\begin{eqnarray}
e^{2i\delta(\nu)} -1 &=& -2i \ \int_0^{+\infty} u^2(r) q(r) \ dr \nonumber \\
 & & -2i \ \int_0^{+\infty} u(r) q(r) \left( \frac{\reg}{ -2i \be e^{-i \pd (\nu-\half)}}-u(r) \right) \ dr.
\end{eqnarray}
Thus, we deduce from (\ref{cinq}) that:
\begin{equation}
\left| e^{2i\delta(\nu)} -1 + 2i \ \int_0^{+\infty} u^2(r) q(r) \ dr \right| \ \leq \ C \
\mid \nu \mid^{-\frac{\delta}{2}} \ \left( \int_0^{+\infty}  r^{\frac{\delta+1}{4}} \mid q(r) u(r) \mid \ dr \right)^2 .
\end{equation}
Using again that, for  $\delta>0$ small enough, the integral ${\ds{\int_0^{+\infty}  r^{\frac{\delta+1}{2}} \mid q(r) \mid \ dr}}$ is convergent, the Cauchy-Schwartz's inequality implies:
\begin{equation}
\left| e^{2i\delta(\nu)} -1 + 2i \ \int_0^{+\infty} u^2(r) q(r) \ dr \right| \ \leq \ C \ \mid \nu \mid^{-\frac{\delta}{2}}  \
\int_0^{+\infty}   \mid q(r) \mid \ \mid  u(r) \mid^2 \ dr .
\end{equation}
It follows that, for $\nu \in \Omega$:
\begin{eqnarray}\label{estimationps}
\left| e^{2i\delta(\nu)} -1  \right| \ &\leq& \ C \  \int_0^{+\infty}   \mid q(r) \mid \ \mid  u(r) \mid^2 \ dr, \nonumber \\
&\leq& \ C \   \ \int_0^{+\infty}   \mid r q(r) \mid \ \mid J_{\nu}(r) \mid^2 \ dr .
\end{eqnarray}
Now, by our hypothesis, we use the following estimate ${\ds{\mid r q(r) \mid \leq  \frac{C}{\sqrt{r}} }}$ for all $r>0$, and we obtain:
\begin{equation}
\left| e^{2i\delta(\nu)} -1  \right| \ \leq \ C \
\int_0^{+\infty}   \frac{ \mid  \ J_{\nu}(r) \mid^2}{\sqrt{r}} \ dr.
\end{equation}
This last integral can be estimated using Corollary \ref{majorationint}; for all $\nu \in \Omega$,
\begin{equation}
\int_0^{+\infty}   \frac{ \mid  \ J_{\nu}(r) \mid^2}{\sqrt{r}}  \ dr \ \leq \ C \ \mid \nu \mid^{-\half}.
\end{equation}
It follows that for $\nu \in \Omega$,
\begin{equation}
\left| e^{2i\delta(\nu)} -1  \right| \ \leq \ C \ \mid \nu \mid^{-\half}.
\end{equation}
We deduce that:
\begin{equation}
\delta(\nu) = k(\nu)\pi + \epsilon(\nu),
\end{equation}
with $k(\nu) \in \Z$ and $\epsilon(\nu) = O(\mid \nu \mid^{-\half})$. For $\nu$ large enough in $\Omega$, $\mid \epsilon(\nu) \mid  <\pi$, hence $\nu \rightarrow k(\nu)$ is a continuous function which implies that $k(\nu)$ is constant for $\nu \in \Omega$ large enough. Since  $\delta(\nu) \rightarrow 0$ as $\nu \rightarrow +\infty$, this constant is equal to zero, and  we have obtained:
\begin{equation}\label{dansomega}
\delta(\nu) = O(\mid \nu \mid^{-\half}) \ \ {\rm{for}} \ \nu \in \Omega.
\end{equation}
Now, let us consider  complex numbers $\nu \in \Gamma_A$ with $\Im \nu \leq -1$. Using (\ref{growth}) and Proposition \ref{herglotz}, one has:
\begin{equation}\label{loindeomega}
\mid \delta(\nu) -\pd \nu \mid \ \leq C \  \frac{\mid \nu-A \mid^{4}}{\mid \Im (\nu-A)^{2} \mid} \ \leq C \ \mid \nu-A \mid^{4}.
\end{equation}
Thus, using that ${\overline{\delta(\nu)}} = \delta(\bar{\nu})$ (which follows from the definition of the generalized phase shifts), and using (\ref{dansomega}), (\ref{loindeomega}), we have:
\begin{equation}
\mid \delta(\nu) \mid \ \leq C \   \mid \nu \mid^{4} \ \ {\rm{for}} \ \nu \in \Gamma_A.
\end{equation}
\end{proof}

\vspace{0.5cm}\noindent
As a by-product of the proof of Proposition \ref{deltanupoly}, we can give some estimates on the generalized phase shifts $\delta(\nu)$, when $\nu \rightarrow + \infty$ and for the class of potentials with super-exponential decay at $+\infty$. Propositions \ref{deltanupoly}  and \ref{super} will be very useful later to prove
the existence of an infinite number of Regge poles.

\vspace{0.5cm}\noindent
{\it{Definition:}}
A function $f :\ ]0, +\infty[ \rightarrow \R$ is super-exponentially decreasing if for any $A \geq 0$, there exists $C>0$ (depending on $A$) such that:
\begin{equation}
\mid f(r) \mid \leq C \  e^{-Ar} \ \ {\rm{for\ all}} \ r>0.
\end{equation}

\vspace{0.3cm}\noindent
We have the following result which is very close to \cite{Ho1}, Corollary 1, Eq. (14). We shall use this Proposition in the proof of Theorem \ref{nombreinfini}.

\vspace{0.2cm}
\begin{prop}\label{super}
\hfill\break
Let $q(r)$ be a potential such that ${\ds{\mid q(r)\mid  \ \leq\  C\ e^{-Ar},\ \forall r >0}}$. Then, for all $B<A$,
\begin{equation*}
\delta(\nu) = O \left( \frac{1}{\sqrt{\nu}} \ e^{-\nu \ Argcosh \ (1+\frac{B^2}{2})} \right) \ ,\ \nu \rightarrow +\infty.
\end{equation*}
In particular, if the potential $q(r)$ is  super-exponentially decreasing, the generalized phase shifts $\delta(\nu)$ are super-exponentially decreasing, ($\nu$ real).
\end{prop}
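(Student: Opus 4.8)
The plan is to reuse, on the positive real axis, the key estimate obtained in the course of the proof of Proposition~\ref{deltanupoly}. Since a potential with $|q(r)| \leq C\, e^{-Ar}$ is in particular $O((1+r)^{-\rho})$ for every $\rho$, the computation leading to (\ref{estimationps}) goes through; moreover none of its steps requires the property $(R)$ once $\nu$ is taken real and positive, because there $\be = \overline{\al}$ does not vanish by Proposition~\ref{zeroJost}, so that $\delta(\nu)$ is genuinely well defined. Thus, for $\nu$ real and large enough,
\[
\left| e^{2i\delta(\nu)} - 1 \right| \ \leq \ C \int_0^{+\infty} |r\, q(r)| \ |J_{\nu}(r)|^2 \ dr .
\]

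Next I would convert the exponential decay of $q$ into a clean exponential weight. Fixing any $B < A$ and using $\sup_{r>0} r\, e^{-(A-B)r} < \infty$, the hypothesis $|q(r)| \leq C\, e^{-Ar}$ gives $|r\,q(r)| \leq C_B\, e^{-Br}$, whence
\[
\left| e^{2i\delta(\nu)} - 1 \right| \ \leq \ C_B \int_0^{+\infty} e^{-Br} \ J_{\nu}^2(r) \ dr .
\]
The remaining integral is a classical Laplace transform of a squared Bessel function and admits the closed form
\[
\int_0^{+\infty} e^{-Br} \ J_{\nu}^2(r) \ dr \ = \ \frac{1}{\pi} \ Q_{\nu - \half}\!\left( 1 + \frac{B^2}{2} \right),
\]
where $Q_{\nu-\half}$ denotes the Legendre function of the second kind.

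The final step is to insert the large-order asymptotics of this Legendre function. Writing $\cosh \eta = 1 + \frac{B^2}{2}$, that is $\eta = \mathrm{Argcosh}\!\left(1 + \frac{B^2}{2}\right)$, one has
\[
Q_{\nu - \half}(\cosh \eta) \ \sim \ \sqrt{\frac{\pi}{2\nu \sinh \eta}} \ e^{-\nu \eta} \ , \qquad \nu \rightarrow +\infty .
\]
Combining the three displays yields $\left| e^{2i\delta(\nu)} - 1 \right| \leq \frac{C}{\sqrt{\nu}}\, e^{-\nu\, \mathrm{Argcosh}(1 + B^2/2)}$, and since $\delta(\nu) \rightarrow 0$ as $\nu \rightarrow +\infty$ we may bound $|\delta(\nu)| \leq C \left| e^{2i\delta(\nu)} - 1 \right|$ for $\nu$ large, which is exactly the announced estimate. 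If moreover $q$ is super-exponentially decreasing, then $A$, and hence $B$, may be chosen arbitrarily large; as $\mathrm{Argcosh}(1 + B^2/2) \rightarrow +\infty$ when $B \rightarrow +\infty$, the phase shifts $\delta(\nu)$ then decay faster than $e^{-c\nu}$ for every $c > 0$, i.e. they are super-exponentially decreasing.

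The only genuinely analytic ingredients are the closed form of the Laplace transform and the large-order asymptotics of $Q_{\nu - \half}$; everything else is bookkeeping. I expect the main point requiring care to be the reduction of the weight $r\,e^{-Ar}$ to the pure exponential $e^{-Br}$: it is what forces the strict loss $B < A$ and explains why the decay rate reads $\mathrm{Argcosh}(1 + B^2/2)$ rather than $\mathrm{Argcosh}(1 + A^2/2)$. A secondary point is to justify that (\ref{estimationps}) is legitimately invoked on the real axis alone, where $\delta(\nu)$ is defined unconditionally and $\be$ is automatically non-vanishing, so that the property $(R)$ plays no role in this statement.
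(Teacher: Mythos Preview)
Your argument is correct and follows essentially the same route as the paper: invoke the estimate (\ref{estimationps}) on the positive real axis, absorb the factor $r$ into the exponential at the cost of passing from $A$ to any $B<A$, identify the remaining integral $\int_0^{\infty} e^{-Br} J_\nu^2(r)\,dr$ with a Legendre function $Q_{\nu-\frac12}(1+B^2/2)$, and conclude via its large-order asymptotics. Your additional remarks on why property $(R)$ is not needed on the real axis and on the passage from $|e^{2i\delta(\nu)}-1|$ to $|\delta(\nu)|$ are welcome clarifications that the paper leaves implicit.
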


\begin{proof}
By hypothesis, for any $B<A$, there exists $C>0$ such that
\begin{equation}\label{pot}
\mid q(r) \mid \ \leq \ \frac{C}{r} \ e^{-Br} \ , \ \forall r > 0.
\end{equation}
So, using (\ref{estimationps}), we have for $\nu>0$ large enough:
\begin{eqnarray*}\label{estimationsps1}
\left| e^{2i\delta(\nu)} -1  \right| \  &\leq& \ C \ \left( \ \int_0^{+\infty}   e^{-Br} \ \mid J_{\nu}(r) \mid^2 \ dr  \right), \\
                                        &\leq  & C \ Q_{\nu -\half}^0 (1+ \frac{B^2}{2}),
\end{eqnarray*}
where $Q_{\nu -\half}^0 $ is the Legendre function of the second kind (see \cite{Ol}, (10.22.6)). But, when $\nu \rightarrow +\infty$, we know (see \cite{Ol}, (14.3.10), (14.15.14) and (10.25.3)) that:
\begin{eqnarray} \label{legendre}
Q_{\nu -\half}^0 (\cosh \eta) & = &    \sqrt{\frac{\eta}{\sinh( \eta)}} \ K_0 (\nu \eta) \ (1+o(1)), \nonumber \\
       &=&    \sqrt{\frac{\pi}{2 \nu \sinh(\eta)}} \ e^{-\nu \eta} \ (1+o(1)).
\end{eqnarray}
Then, the  proposition taking $\cosh \eta = 1+ \frac{B^2}{2}$.
\end{proof}

\section{Localization of the Regge poles.}

\subsection{The case of super-exponentially decreasing potentials.}

In this section, using Propositions \ref{deltanupoly} and \ref{super}, we prove that for a non-zero super-exponentially decreasing potential, the number of Regge poles is always infinite, and moreover their real parts tend to infinity in the first quadrant. As we have said in the Introduction, this theoretical result contradicts Theorem 5.2 in \cite{HBM}, which says that for an integrable potential $q(r)$ on $(0,+\infty)$, there are finitely many Regge poles in the right-half plane. We emphasize that our theoretical result will be confirmed in the next subsection, where for potentials with compact support, the asymptotics of the Regge poles are given.

\begin{thm}\label{nombreinfini}
\hfill\break
Let $q(r)$ be a non-zero potential satisfying $(H_1)$ and which is super-exponentially decreasing. \par\noindent
Then, the  number of Regge poles is infinite and their real parts tend to infinity in the first quadrant.
\end{thm}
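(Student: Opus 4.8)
The plan is to establish the contrapositive in the sharp form: if $q$ satisfies $(H_1)$, is super-exponentially decreasing and $q\not\equiv0$, then for every $A\geq0$ the Jost function $\be$ has at least one zero with $\Re\nu>A$. This single assertion proves the theorem at once, for applying it with $A=1,2,3,\dots$ produces Regge poles $\nu_k$ with $\Re\nu_k>k$, so there are infinitely many of them and their real parts tend to $+\infty$. Phrased through the property $(R)$, it suffices to show that $(R)$ \emph{cannot} hold; that is, the existence of some $A$ with no Regge poles in $\Gamma_A=\{\Re\nu>A\}$ would force $q\equiv0$.

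So suppose, for contradiction, that $(R)$ holds for some $A$. Then $\be$ does not vanish in $\Gamma_A$, and since $\al$ never vanishes in the first quadrant (Proposition \ref{zeroJost}), the generalized phase shift $\delta(\nu)$ is holomorphic on $\Gamma_A$. I would then combine two estimates of opposite nature. Since $q$ is super-exponentially decreasing we have $|q(r)|\le C_B e^{-Br}$ for every $B>0$; inserting this in Proposition \ref{super} (which is valid on the positive real axis unconditionally, as $\be$ never vanishes there) shows that $\delta(\nu)$ decays faster than $e^{-K\nu}$ for every $K>0$ as $\nu\to+\infty$ along the real axis. On the other hand, super-exponential decay trivially gives $|q(r)|\le C(1+r)^{-\rho}$ for every $\rho$, so Proposition \ref{deltanupoly} applies and yields the polynomial bound $|\delta(\nu)|\le C|\nu|^{4}$ on all of $\Gamma_A$; the mild singularity at the origin permitted by $(H_1)$ is absorbed exactly as in the remark following that proposition.

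The crux is to upgrade these two facts to $\delta\equiv0$ on $\Gamma_A$ by a Phragmén--Lindel\"of argument. Fix $K>0$ and set $F_K(\nu)=e^{K\nu}\delta(\nu)$, which is holomorphic and of exponential type (order one) in $\Gamma_A$. By the reflection $\overline{\delta(\nu)}=\delta(\bar\nu)$ it is enough to work in the lower quarter-plane $Q=\{\Re\nu>A,\ \Im\nu<0\}$, whose opening is $\pi/2$. On its real edge $F_K$ is bounded, because $|\delta|$ beats $e^{-K\nu}$ there; on its vertical edge $\{\Re\nu=A\}$ one has $|F_K|=e^{KA}|\delta|\le C\,e^{KA}|\nu|^4$. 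Since the interior growth of $F_K$ is of order one, hence strictly below the critical order $2$ associated with an opening of $\pi/2$, the Phragmén--Lindel\"of principle (applied after dividing $F_K$ by the fixed polynomial $(\nu-A+1)^4$, which is bounded below on $Q$, to reduce the boundary data to a constant) gives $|F_K(\nu)|\le C\,e^{KA}\langle\nu\rangle^{4}$ throughout $Q$, that is $|\delta(\nu)|\le C\,\langle\nu\rangle^{4}\,e^{-K(\Re\nu-A)}$ with $C$ independent of $K$. Since $\Re\nu-A>0$ for $\nu\in\Gamma_A$, letting $K\to+\infty$ at any fixed such $\nu$ forces $\delta(\nu)=0$, whence $\delta\equiv0$ on $\Gamma_A$. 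I expect this Phragmén--Lindel\"of step to be the main technical obstacle: one must check that the admissible interior order stays below $2$ and treat with care the corner of $Q$ and the regime $\nu\to A^+$, where only the polynomial bound is available.

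Finally, $\delta\equiv0$ on $\Gamma_A$ means $\sigma(\nu)=e^{2i\delta(\nu)}=1=\sigma_0(\nu)$ on an open set, equivalently $\al\,\beo-\alo\,\be\equiv0$ there. Specialising Proposition \ref{differenceJost} to the pair $q$ and $\tilde q\equiv0$ (so that $\tal=\alo$, $\tbe=\beo$ and $\treg=\rego$) then gives $\int_0^{+\infty} q(r)\,\reg\,\rego\,dr=0$ for all $\nu\in\Gamma_A$. The uniqueness property of the Regge interpolation recalled in the Introduction (Loeffel) forces $q\equiv0$, contradicting $q\not\equiv0$. This contradiction shows that $(R)$ fails for every $A$, which, as explained at the outset, is exactly the assertion that the Regge poles are infinite in number and that their real parts tend to $+\infty$ in the first quadrant.
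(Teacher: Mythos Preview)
Your overall strategy coincides with the paper's: argue by contradiction assuming property $(R)$, combine the polynomial bound of Proposition~\ref{deltanupoly} with the super-exponential decay on the real axis from Proposition~\ref{super}, deduce $\delta\equiv0$ on $\Gamma_A$, and then conclude $q=0$ via Loeffel's uniqueness theorem. The difference lies entirely in the passage from those two estimates to $\delta\equiv0$.

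There is a genuine gap in your Phragm\'en--Lindel\"of step, precisely at the assertion ``with $C$ independent of $K$''. On the real edge of your quarter-plane $Q$ you only know that $|F_K(\nu)|=e^{K\nu}|\delta(\nu)|$ is \emph{bounded}, say by some constant $D_K$; this $D_K$ inherits the implicit constant in the estimate $|\delta(\nu)|\le D_K\, e^{-K\nu}$ furnished by Proposition~\ref{super}, which in turn depends on the unspecified constants $C_B$ in the hypothesis $|q(r)|\le C_B\, e^{-Br}$, and nothing prevents $D_K$ from growing arbitrarily with $K$. Phragm\'en--Lindel\"of therefore only delivers $|\delta(\nu)|\le\langle\nu\rangle^4\max(D_K,\,C\,e^{KA})\,e^{-K\Re\nu}$ on $Q$, and without control of $D_K$ you cannot send $K\to\infty$. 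Your own caveat that this step is ``the main technical obstacle'' is well placed: the proposed resolution does not close it. The paper circumvents exactly this difficulty by passing to the Hardy class: the function $F(\nu)=\delta(\nu+A+1)/(\nu+A+1)^5$ belongs to $H_+^2$ thanks to the polynomial bound, and Proposition~\ref{Hardy} (a Paley--Wiener type estimate) yields $|F(\nu)|\le \|F\|\,(4\pi\,\Re\nu)^{-1/2}\,e^{-B\Re\nu}$, where the constant is the \emph{fixed} Hardy norm $\|F\|$, independent of $B$; letting $B\to\infty$ then forces $F\equiv0$. Your concluding paragraph is correct, though the detour through Proposition~\ref{differenceJost} is unnecessary: once $\sigma\equiv1=\sigma_0$ on an open set, analytic continuation plus Loeffel's theorem apply directly, as in the paper.
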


\begin{proof}
Assume that the Regge poles are contained in a vertical strip in the first quadrant , i.e there exists $A>0$ such that $\be \not=0$ for $\Re \nu > A$, (i.e for
$\nu \in \Gamma_A$). Then $\delta(\nu)$ can be defined as an holomorphic function on $\Gamma_A$ and Proposition
\ref{deltanupoly} asserts that for $\Re \nu >A$, there exists $C>0$ such that:
\begin{equation}\label{majdelta}
\mid \delta (\nu) \mid \ \leq C \ \mid \nu \mid^4.
\end{equation}

\vspace{0.2cm}\noindent
Now, let us recall an elementary result for functions of the complex variable, belonging to the Hardy class.
The Hardy class $H_+^2$,  (see for instance \cite{Lev}, Lecture 19) is the set of analytic functions $F$ in the right half-plane
$\Omega = \{ \nu \in \C \ ,\ \Re \nu >0 \}$, satisfying the condition
\begin{equation}\label{Hardy0}
\sup_{x >0} \ \int_{\R} \ \mid F(x+i y) \mid ^2 \ dy \ < \infty,
\end{equation}
and equipped with the norm
\begin{equation}\label{norme}
\mid \mid F \mid \mid = \left( \sup_{x >0} \ \int_{\R} \ \mid F(x+i y) \mid ^2 \ dy  \right)^{\frac{1}{2}}.
\end{equation}

\vspace{0.2cm}\noindent
The Paley-Wiener Theorem asserts that a function $F(\nu)$ belongs to the Hardy space $H_+^2$ if and only if there exists a function $f \in L^2(0,+\infty)$ such that
\begin{equation}\label{PaleyWiener}
F(\nu) = \frac{1}{\sqrt{2\pi}}\ \int_0^{+\infty} e^{-t\nu} \ f(t) \ dt \ ,\ \forall \nu \in \Omega.
\end{equation}
Moreover, we have:
\begin{equation} \label{normeL2}
\mid \mid F \mid \mid \  = \   \mid \mid f \mid \mid_{L^2(0,\infty)}.
\end{equation}


\vspace{0.2cm}\noindent
A function $F(\nu)$ belonging in the Hardy class has a very nice property: if one knows that $F(\nu)= O\left( e^{-B\nu} \right), \ \nu \rightarrow +\infty$,  one can show that $F(\nu)$ satisfies a {\it{uniform bound}} for $\nu \in \Omega$. Actually, we emphasize that we have a better result; it suffices to have the previous estimate for $\nu$ integer, (see \cite{DGN}, Proposition 4.2). We prefer to give here this result in this discrete setting since we shall use it again in the next Section.

\vspace{0.2cm}\noindent

\begin{prop}\label{Hardy}
\hfill \break
Let $F$ be a function in the Hardy class $H_+^2 $. Assume that for some $B>0$, we have $F(l) = O\left( e^{-Bl} \right), \ l \rightarrow +\infty$, ($l$ integer).
Then,
\begin{equation} \label{majoration}
\mid F(\nu) \mid  \ \leq \  \frac{\mid \mid F \mid \mid}{\sqrt{4\pi \Re \nu}}  \ e^{-B  \Re  \nu} \ , \ \forall \nu \in \Omega.
\end{equation}
\end{prop}

\vspace{0.2cm}\noindent
We use Proposition \ref{Hardy}  with the function $F(\nu)$ defined by:
\begin{equation}
F(\nu) = \frac{\delta(\nu +A+1)}{(\nu +A+1)^5},
\end{equation}
which belongs to the Hardy class thanks to (\ref{majdelta}).
Now, we use the fact that $q(r)$ is super-exponentially decreasing. Proposition \ref{super} implies : $\forall B>0$,
\begin{equation}
F(\nu) = O \left( e^{-B \nu} \right) \ \ ,\ \ \nu \rightarrow + \infty.
\end{equation}
Hence, we deduce from Proposition \ref{Hardy} that $F(\nu) =0$ since $B>0$ is arbitrary. It follows from (\ref{interpolation}) and (\ref{phaseshifts}) that:
\begin{equation} \label{equa}
\al = e^{-i(\nu +\half) \pi} \ \be  \ \ {\rm{for \ all\ }} \Re \ \nu \geq A+1,
\end{equation}
and by analytical continuation, (\ref{equa}) holds for $\Re \nu>0$. Then the Regge interpolation $\sigma (\nu)=1$ for all $\Re \nu >0$, so
applying the Loeffel's uniqueness Theorem, (see \cite{Lo}, Theorem 2), we obtain $q = 0$. Note we could also use Novikov's results to obtain $q=0$, (see \cite{No1}, \cite{No2}), since  the potential $q(r)$ is exponentially decreasing.

\end{proof}





\subsection{Potentials with compact support.}

 Now, let us study the case of a potential having compact support. We have seen above that the Regge poles are always in infinite number and their real parts go to infinity. For potentials with compact support, we can improve our previous result. The first proposition (which is certainly known) obtained in this Section shows that the Regge poles concentrate (in some sense given below) on the positive imaginary axis. This theoretical result will be confirmed in the second Proposition where the precise asymptotics for the Regge poles and for a large class of potentials with compact support are obtained.

\vspace{0,2cm}\noindent

 \begin{prop}\label{nombrefini}
 \hfill\break
 Let $q$ be a potential satisfying $(H_1)$ and with compact support. Then, for all $\delta>0$, there are a finite number of Regge poles in the sector
 $Arg \ \nu \in [0, \pd-\delta]$.
 \end{prop}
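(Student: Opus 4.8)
The plan is to show that, inside the closed sector $\{\,0 \le \mathrm{Arg}\,\nu \le \pd - \delta\,\}$, the Jost function $\be$ is asymptotically equal to the free Jost function $\beo$ as $|\nu| \to \infty$. Since $\beo = \half A(\nu)\, e^{i(\nu+\half)\pd}$ never vanishes (the factor $\Gamma(\nu+1)$ in $A(\nu)$ has poles but no zeros, and is even holomorphic and nonzero for $\Re\nu \ge 0$), this forces $\be \neq 0$ for $|\nu|$ large in the sector. As $\be$ is holomorphic on $\Re\nu \ge -\epsilon$ and is not identically zero (by Proposition \ref{equivJost} it satisfies $\be \sim \beo$ along the positive real axis), its zeros are isolated; the portion of the sector with $|\nu|$ bounded is compact, hence contains only finitely many of them. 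Combining the two regions yields finitely many Regge poles in the sector.

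To make the asymptotic precise, I would start from Proposition \ref{diffJost} applied with $\tilde{q} = 0$, which gives
\begin{equation*}
\be - \beo = -\frac{1}{2i} \int_0^{+\infty} q(r)\, \fp\, \rego\, dr .
\end{equation*}
Since $q$ has compact support, say in $[0,a]$, the integral runs over $[0,a]$ only. Dividing by $\beo$ and invoking Proposition \ref{estimationfpm}, which yields $|\fp| \le (1 + C/(|\nu|+1))\,|\fpo| \le 2\,|\fpo|$ in the sector for $|\nu|$ large, one reduces matters to the free kernel $\fpo\, \rego / \beo$. Using the explicit formulae (\ref{phi0}), (\ref{fp0}) and (\ref{betazero}), the prefactors $A(\nu)$ and $e^{i(\nu+\half)\pd}$ cancel and
\begin{equation*}
\frac{\fpo\, \rego}{\beo} = \pi r\, H_{\nu}^{(1)}(r)\, J_{\nu}(r),
\end{equation*}
so that
\begin{equation*}
\left| \frac{\be}{\beo} - 1 \right| \ \le \ 2 \int_0^a |q(r)|\; \pi r\, \bigl|H_{\nu}^{(1)}(r)\, J_{\nu}(r)\bigr|\, dr .
\end{equation*}

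It remains to control $\pi r\, H_{\nu}^{(1)}(r)\, J_{\nu}(r)$ for $r$ in the compact support and $\nu \to \infty$ in the sector. Here I would use the uniform Bessel asymptotics of Proposition \ref{unifBessel} together with the companion asymptotics (\ref{uniform}): in the notation $u(r), v(r)$ of (\ref{u})--(\ref{v}) one gets $u(r) v(r) = -\frac{r}{2\nu}\,(1+o(1))$ uniformly for $r \in [0,a]$, whence $\pi r\, H_{\nu}^{(1)}(r)\, J_{\nu}(r) = 2i\, u(r)v(r) = O(r/|\nu|)$. Consequently
\begin{equation*}
\left| \frac{\be}{\beo} - 1 \right| \ \le \ \frac{C}{|\nu|} \int_0^a r\,|q(r)|\, dr \ \longrightarrow \ 0 ,
\end{equation*}
the last integral being finite by $(H_1)$ (since $r \le r^{1-2\epsilon}$ for $r \le 1$) and the local boundedness of $q$ on $[1,a]$. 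This gives $\be \neq 0$ for $|\nu|$ large in the sector and, with the isolation of the zeros on the bounded compact part, completes the argument. The only delicate point is the uniformity of the Bessel asymptotics down to $r = 0$, where $v(r)$ itself may blow up; this is harmless because it is the product $uv$ — tending to $0$ like $r$ — that enters the estimate, but it is precisely the step requiring the uniform (rather than pointwise) control of the Bessel and Hankel functions.
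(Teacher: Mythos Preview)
Your proposal is correct and follows essentially the same route as the paper: apply Proposition \ref{diffJost} with $\tilde q=0$, divide by $\beo$, control $|\fp|$ by $|\fpo|$ via Proposition \ref{estimationfpm}, reduce to the free kernel $\pi r\,J_\nu(r)H_\nu^{(1)}(r)$, and invoke the uniform Bessel asymptotics of Proposition \ref{unifBessel} on the compact support to get $|\be/\beo - 1|\le C/|\nu|$. Your added remarks about the finiteness of zeros on the bounded part of the sector and the uniformity of the product $u(r)v(r)$ near $r=0$ are welcome clarifications that the paper leaves implicit.
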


\begin{proof}
Let $q$ be a potential with support in $[0,b]$. By Proposition \ref{diffJost} with $\tilde{q}=0$, we have:
\begin{equation}
\be - \beo = \frac{1}{2i} \ \int_0^b q(r) \fp \rego \ dr.
\end{equation}
So,
\begin{equation}
\frac{\be}{\beo}-1 = \frac{1}{2i} \ \int_0^b q(r) \fp \frac{\rego}{\beo} \ dr.
\end{equation}
It follows from (\ref{Anu}), (\ref{phi0}) and (\ref{betazero}) that:
\begin{equation}
\frac{\rego}{\beo} = \sqrt{2\pi r} \ e^{-i(\nu+\half)\pd} \ J_{\nu}(r),
\end{equation}
then,
\begin{equation}
\left| \frac{\rego}{\beo} \right| \ \leq  \  \sqrt{2\pi r}\  e^{\Im \nu \pd} \  \mid J_{\nu}(r) \mid.
\end{equation}
Now, using Proposition \ref{estimationfpm}, we see that there exists $C>0$ such that:
\begin{equation}
\mid \fp \mid \leq C \mid \fpo \mid \ \ {\rm{for}} \  Arg \ \nu \in [0, \pd-\delta].
\end{equation}
It follows from (\ref{fp0}) that:
\begin{equation}
\mid \fpm \mid \leq C \sqrt{r} \ e^{-\Im \nu \pd}\ \mid H_{\nu}^{(1)}(r) \mid.
\end{equation}
Then,
\begin{equation}
\left| \frac{\be}{\beo}-1 \right| \leq C \ \int_0^b r \mid q(r) \mid \ \mid J_{\nu}(r) H_{\nu}^{(1)}(r)\mid \ dr .
\end{equation}
Since $r$ belongs to a compact set, Proposition \ref{unifBessel} implies:
\begin{equation}
\left| \frac{\be}{\beo}-1 \right| \leq \frac{C}{\mid \nu \mid +1},
\end{equation}
and thus, for $\nu$ large with $Arg \ \nu \in [0, \pd-\delta]$, $\be \not=0$.
\end{proof}

\vspace{0,5cm}\noindent
Now, we are able to give the precise asymptotics of the Regge poles for a potential with compact support. As we shall see, the Regge poles concentrate (in a certain sense)  along the positive imaginary axis. We begin by a first technical Lemma. In this Lemma, in order to simplify the proof, $q$ is assumed to be $C^2$ on his support.

\vspace{0,2cm}\noindent

\begin{lemma}
\hfill \break
Let $q(r)$ be a piecewise continuous potential having his support in $[0,a]$. We assume that $q$ is $C^2$ in $[0,a]$ and let $\delta>0$ small enough.
Then, for $Arg \ \nu \in [\pd -\delta, \pd]$, one has as $\nu \rightarrow \infty$:
\begin{equation} \label{lemmetechnique}
\frac{\be}{\beo} = \left[ 1 -\frac{2i\pi}{(\nu+1) \Gamma^2(\nu +1)} \ \left( \frac{a}{2} \right)^{2\nu+2} \
\left( q(a-0) + O\left(\frac{1}{\nu}\right) \right) \right] \ + \ O\left(\frac{1}{\nu}\right).
\end{equation}
\end{lemma}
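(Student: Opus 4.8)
The plan is to start from the exact integral representation for the difference of the Jost functions and reduce everything to free Bessel functions. Applying Proposition \ref{diffJost} with $\tilde{q}=0$, using that $q$ is supported in $[0,a]$ and the free identity $\frac{\rego}{\beo}=\sqrt{2\pi r}\,e^{-i(\nu+\half)\pd}J_{\nu}(r)$ (from (\ref{phi0}) and (\ref{betazero})), I obtain the exact formula
\begin{equation*}
\frac{\be}{\beo}-1 = \frac{1}{2i}\int_0^a q(r)\,\fp\,\sqrt{2\pi r}\,e^{-i(\nu+\half)\pd}\,J_{\nu}(r)\,dr .
\end{equation*}
I would then write $\fp=\fpo+(\fp-\fpo)$. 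Replacing $\fp$ by $\fpo=e^{i(\nu+\half)\pd}\sqrt{\pi r/2}\,H_{\nu}^{(1)}(r)$ produces a \emph{main term} $\frac{\pi}{2i}\int_0^a q(r)\,r\,H_{\nu}^{(1)}(r)J_{\nu}(r)\,dr$ in which only free Bessel functions occur, while the remainder $R(\nu)=\frac{1}{2i}\int_0^a q(r)(\fp-\fpo)\sqrt{2\pi r}\,e^{-i(\nu+\half)\pd}J_{\nu}(r)\,dr$ collects the correction, which I aim to absorb into the $O(1/\nu)$ sitting outside the bracket.

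The heart of the computation is the exact identity, obtained by inserting $H_{\nu}^{(1)}=J_{\nu}+iY_{\nu}$ and $Y_{\nu}=(J_{\nu}\cos\pi\nu-J_{-\nu})/\sin\pi\nu$,
\begin{equation*}
H_{\nu}^{(1)}(r)J_{\nu}(r) = \frac{2}{1-e^{2i\pi\nu}}\,J_{\nu}(r)^2 \;-\; \frac{i\,J_{\nu}(r)J_{-\nu}(r)}{\sin\pi\nu}.
\end{equation*}
The crucial point, and the source of the factor $2$ in front of $2i\pi$, is that for $\mathrm{Arg}\,\nu\in[\pd-\delta,\pd]$ with $\Im\nu\to+\infty$ one has $e^{2i\pi\nu}\to 0$ exponentially, so the coefficient of $J_{\nu}^2$ equals $2\,(1+O(e^{-2\pi\Im\nu}))$ and \emph{not} $1$; meanwhile $iJ_{\nu}J_{-\nu}/\sin\pi\nu=\frac{i}{\pi\nu}(1+O(1/\nu))$ only feeds the outer $O(1/\nu)$. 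Inserting the uniform asymptotics $J_{\nu}(r)^2=\Gamma(\nu+1)^{-2}(r/2)^{2\nu}(1+O(1/\nu))$ of Proposition \ref{unifBessel}, the main term becomes $\frac{\pi}{i\,(1-e^{2i\pi\nu})\,\Gamma^2(\nu+1)}\int_0^a q(r)\,r\,(r/2)^{2\nu}\,dr$.

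Since $|(r/2)^{2\nu}|$ is largest at the endpoint $r=a$, I would extract the boundary contribution there by integrating by parts twice — this is precisely where the $C^2$ hypothesis on $q$ is used. The first integration by parts gives $\int_0^a q(r)\,r\,(r/2)^{2\nu}\,dr \simeq q(a-0)\,\frac{4}{2\nu+1}(a/2)^{2\nu+2}=\frac{2}{\nu+1}(a/2)^{2\nu+2}q(a-0)\,(1+O(1/\nu))$, while the second integration by parts shows the remaining integral is smaller by a factor $O(1/\nu)$; this produces exactly the leading factor $-\frac{2i\pi}{(\nu+1)\Gamma^2(\nu+1)}(a/2)^{2\nu+2}$ and the internal correction "$q(a-0)+O(1/\nu)$". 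The exponentially small deviations of $\frac{2}{1-e^{2i\pi\nu}}$ from $2$ are negligible compared with every algebraic term and disappear into the errors.

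The main obstacle is the estimate $R(\nu)=O(1/\nu)$, uniform in the sector up to the imaginary axis. This is delicate because Proposition \ref{estimationfpm}, which yields $\fp\sim\fpo$, holds only for $|\mathrm{Arg}\,\nu|\le\pd-\delta$ and genuinely fails here: by Lemma \ref{nonzero} the free solution $\fpo$ (equivalently $H_{\nu}^{(1)}(r)$ in its order) vanishes in the open first quadrant, so one cannot simply control $\fp/\fpo$. I would substitute the Volterra equation (\ref{Jostint}) for $\fp-\fpo$ and estimate the resulting integrals through the uniform bounds for products of Bessel and Hankel functions (Theorem \ref{estproduitbessel}, and Proposition \ref{imaxis} on the imaginary axis), gaining the needed factor $1/\nu$ from the Green kernel $N(r,s,\nu)$. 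Concretely, I would run a Phragmén–Lindel\"of argument in the sector bounded by the ray $\mathrm{Arg}\,\nu=\pd-\delta$, where Proposition \ref{estimationfpm} applies, and the imaginary axis, where Proposition \ref{fpmimaginary} gives boundedness, exactly mirroring the method already used in Lemma \ref{estimationM} for the fourth quadrant. Securing these bounds uniformly as $\mathrm{Arg}\,\nu\to\pd$, while steering clear of the zeros of $\fpo$, is the technical crux; once it is in place, the remaining steps are the bookkeeping of explicit constants carried out above.
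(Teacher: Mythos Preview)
Your computation of the main term is essentially correct and matches the paper: your identity for $H_\nu^{(1)}(r)J_\nu(r)$ is exactly what underlies Corollary~\ref{uniformeprecis}, and the two integrations by parts extracting $q(a-0)$ are the same.

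The genuine gap is in the remainder $R(\nu)$, and it stems from a choice you made at the very start. Proposition~\ref{diffJost} can be applied in two ways: with $(\tilde q=0)$, giving an integral in $\fp\,\rego$ (your choice), or with $(q=0,\tilde q=q)$, giving an integral in $\fpo\,\reg$ (the paper's choice). The paper takes the second route precisely to avoid the difficulty you flag. With the free Jost solution $\fpo$ already inside, all that remains is to control $\reg-\rego$, and for this the paper invokes Ramm's transformation-operator representation
\[
\reg=\rego+\int_0^r R(r,s)\,\varphi_0(s,\nu)\,\frac{ds}{s^2},
\]
where the kernel $R(r,s)$ is \emph{independent of $\nu$}. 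Thus every occurrence of $\nu$ sits inside free Bessel functions, and Corollary~\ref{uniformeprecis} disposes of the second line of~(\ref{fraction}) directly, with no need for Phragm\'en--Lindel\"of.

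Your Phragm\'en--Lindel\"of argument for $R(\nu)$, as written, does not close. Proposition~\ref{fpmimaginary} indeed bounds $\fp$ and $\fpo$ on $i\R$, so $\fp-\fpo$ is bounded there; but the remaining factor $\sqrt{2\pi r}\,e^{-i(\nu+\half)\pd}J_\nu(r)=\rego/\beo$ is \emph{not}: for $\nu=iy$, $y>0$, one has $|e^{-i(\nu+\half)\pd}|=e^{\pi y/2}$ and $|J_{iy}(r)|\asymp e^{\pi y/2}/\sqrt{y}$ (from $|\Gamma(1+iy)|\sim\sqrt{2\pi y}\,e^{-\pi y/2}$), so $|R(iy)|$ grows like $e^{\pi y}$. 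With exponential growth on one boundary ray and $O(|\nu|^{-2})$ on the other, Phragm\'en--Lindel\"of in a sector of opening $\delta$ yields an interpolated bound that still blows up off the lower ray; you do not recover $R(\nu)=O(1/\nu)$ in the interior. Swapping the roles of the free and perturbed objects, as the paper does, sidesteps this entirely because there is no analogue of Ramm's $\nu$-free kernel available for $\fp-\fpo$.
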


\begin{proof}
We use the following integral representation for the regular solution proved in \cite{Ra}: there exists an integral kernel $R(r,s)$ {\it{independent of $\nu$} } such that
\begin{equation}\label{RammInt}
\reg = \rego + \int_0^r R(r,s) \ \varphi_0(s,\nu) \ \frac{ds}{s^2},
\end{equation}
where
\begin{enumerate}
\item $R(r,s)$ is $C^2$ with respect to $r,s$ : $0<r<+\infty, 0<s\leq r$.
\item ${\ds{R(r,r) = \frac{r}{2} \int_0^r s q(s) \ ds}}$.
\item $R(r,0)=0.$
\item There exists $C>0$ such that for all $r > 0$, ${\ds{ \int_0^r \mid R(r,s) \mid \ \frac{ds}{s} \ \leq \ C}}$.
\item $K(r,s)$ has first derivatives which are bounded.
\end{enumerate}
Note in particular that a simple application to the mean value Theorem gives us the estimate
\begin{equation} \label{Rs0}
  R(r,s) = O(s), \quad s \to 0.
\end{equation}
\vspace{0.2cm}\noindent

Now, Proposition \ref{diffJost} with $q=0$ and $\tilde{q} = q$ gives:
\begin{equation}
\be -\beo = \frac{1}{2i} \ \int_0^a \  q(r) \fpo \ \reg \ dr.
\end{equation}
So, using (\ref{fp0}) and (\ref{betazero}), one has:
\begin{equation}
\frac{\be}{\beo} = 1 -i \int_0^a \ \sqrt{\frac{\pi r}{2}} q(r) \ H_{\nu}^{(1)} (r) \ \frac{\reg}{A(\nu)} \ dr.
\end{equation}
It follows from (\ref{phi0}) and (\ref{RammInt}) that:
\begin{eqnarray}\label{fraction}
\frac{\be}{\beo} &=& 1 - \frac{i\pi}{2} \int_0^a \ r q(r) \ J_{\nu}(r) \ H_{\nu}^{(1)} (r)  \ dr \nonumber \\
&& - \frac{i\pi}{2} \int_0^a \ {\sqrt{r}} q(r)\ \int_0^r s^{-\frac{3}{2}}\  R(r,s) \ J_{\nu}(s) \ H_{\nu}^{(1)} (r) \ ds \ dr.
\end{eqnarray}
For $Arg \ \nu \in [\pd -\delta, \pd]$ and for $r,s$ in a compact set, one has the following uniform asymptotics (see Corollary \ref{uniformeprecis})
when $\nu \rightarrow \infty$:
\begin{eqnarray}\label{termedominant}
J_{\nu}(s) \ H_{\nu}^{(1)} (r) &= & \frac{1}{i\pi \nu} \left( \frac{s}{r}\right)^{\nu} \ \left(1+\frac{r^2 -s^2}{4\nu} +O\left(\frac{1}{\nu^2}\right)\right)  \nonumber \\
& &\  + \ \frac{2}{\Gamma^2 (\nu+1)} \left(\frac{rs}{4}\right)^{\nu} \left(1- \frac{r^2 +s^2}{4\nu} +O\left(\frac{1}{\nu^2}\right) \right) .
\end{eqnarray}
Now, using (\ref{termedominant}), we can estimate the first term in the (RHS) of (\ref{fraction}). One has:
\begin{eqnarray*}
\int_0^a  r q(r) \ J_{\nu}(r)  H_{\nu}^{(1)} (r)  \ dr &=& \int_0^a \ r q(r) \left[ \frac{1}{i\pi \nu} \left( 1+ O\left(\frac{1}{\nu^2}\right) \right)\right.\\
& & \left. + \ \frac{2}{\Gamma^2 (\nu+1)} \left(\frac{r}{2}\right)^{2\nu} \left( 1-\frac{r^2}{2\nu}  +O\left(\frac{1}{\nu^2}\right) \right) \right] \ dr \\
&=& \frac{2^{1-2\nu}}{\Gamma^2 (\nu+1)} \int_0^a \ r^{2\nu+1} q(r)  \left( 1 -\frac{r^2}{2\nu} +  O\left(\frac{1}{\nu^2}\right) \right) \ dr + O\left(\frac{1}{\nu}\right)
\end{eqnarray*}
Let us examine the above integral. For instance, integrating twice by parts, we see easily that:
\begin{equation}
\int_0^a \ r^{2\nu+1} q(r)\ dr = \frac{a^{2\nu +2}}{2\nu+2} \left( q(a-0)+ O\left(\frac{1}{\nu}\right) \right),
\end{equation}
and so on for the other terms. Thus, we obtain easily:
\begin{equation}\label{ligne1}
\int_0^a  r q(r) \ J_{\nu}(r)  H_{\nu}^{(1)} (r)  \ dr = \frac{4}{(\nu+1)\Gamma^2 (\nu+1)} \left( \frac{a}{2} \right)^{2\nu+2} \
\left( q(a-0)  + O\left(\frac{1}{\nu}\right) \right)   +\ O\left(\frac{1}{\nu}\right) .
\end{equation}
Similarly, we claim that:
\begin{eqnarray} \label{ligne2}
\int_0^a  {\sqrt{r}} q(r)\ && \hspace{-0.6cm} \int_0^r s^{-\frac{3}{2}}  R(r,s)  J_{\nu}(s)  H_{\nu}^{(1)} (r) \ ds \ dr =
O\left( \frac{1}{\nu^2\Gamma^2 (\nu+1)} \left( \frac{a}{2} \right)^{2\nu+2} \right) +  O\left(\frac{1}{\nu^2}\right)
\end{eqnarray}
For instance, using (\ref{termedominant}), we need to estimate:
$$
	I_1 = \frac{1}{i\pi \nu} \int_0^a r^{\frac{1}{2} - \nu} q(r) \int_0^r s^{-\frac{3}{2} + \nu} R(r,s) ds dr.
$$
Using (\ref{Rs0}) and an integration par parts, we see that
\begin{equation} \label{Rs1}
  \int_0^r s^{-\frac{3}{2} + \nu} R(r,s) ds = \frac{r^{\nu - \frac{1}{2}}}{\nu - \frac{1}{2}} \left( R(r,r) + O \left(\frac{1}{\nu} \right) \right).
\end{equation}
Hence we get easily
\begin{equation}
  I_1 =  \frac{1}{i\pi \nu (\nu - \frac{1}{2})} \int_0^a q(r) R(r,r) dr \ + \, O \left(\frac{1}{\nu^2} \right)  = O \left(\frac{1}{\nu^2} \right).
\end{equation}
We can estimate the other terms similarly and we leave the details to the reader.
\end{proof}

\vspace{1cm}\noindent
We use the previous Lemma to give the  main result of this Section. When the potential $q(r)$ has a discontinuity on the boundary of his support, i.e when $q(a-0)\not=0$, we can calculate precisely the asymptotics of the Regge poles in the first quadrant. We have in mind the example of the square well potential defined by $q(r)=q_0$ if $r \in ]0,a]$ and $q(r)=0$ for $r>a$.
Let us denote by $\nu_p$ the Regge poles in the first quadrant and assume that they are ordered according to their increasing modulus. We shall see that the leading term of this asymptotic expansion does not depend on the depth and the width of the potential, and that the Regge poles concentrate on the positive imaginary axis in the following meaning:
\begin{equation}
\frac{\Im \nu_p}{\Re \nu_p} = \frac{2\log p}{\pi}\ (1+o(1)) \ \ {\rm{ when}}\ \  p \rightarrow +\infty.
\end{equation}

\vspace{0.2cm}\noindent
\begin{thm}
\hfill \break
Let $q(r)$ be a piecewise continuous potential having his support in $[0,a]$. We assume that $q$ is $C^2$ in $[0,a]$ and $q(a-0) \not=0$. Then, the Regge poles $\nu_p$, $p \in \N$ satisfy:
\begin{equation}
\nu_p =  \left( \frac{p\pi^2}{2\log^2 p} +i \ \frac{p\pi}{\log p} \right) (1+o(1)) \ \ ,\ \ p \rightarrow +\infty.
\end{equation}
\end{thm}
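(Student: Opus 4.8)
The plan is to locate the Regge poles, which by definition are the zeros of $\be$ in the first quadrant, by combining the asymptotic expansion of the previous Lemma with a Rouch\'e-type argument. First I would invoke Proposition \ref{nombrefini}: only finitely many Regge poles lie in the sector $Arg\ \nu \in [0,\pd-\delta]$, so all but finitely many of the poles $\nu_p$ lie in the remaining sector $Arg\ \nu\in(\pd-\delta,\pd]$, which is exactly where the previous Lemma applies. Since $\beo = \half A(\nu)\, e^{i(\nu+\half)\pd}$ never vanishes (see (\ref{Anu}) and (\ref{betazero})), in this sector the Regge poles are precisely the zeros of $\be/\beo$, and by the Lemma these satisfy
\begin{equation*}
G(\nu) := \frac{2i\pi\, q(a-0)}{(\nu+1)\,\Gamma^2(\nu+1)}\left(\frac{a}{2}\right)^{2\nu+2} = 1 + O\!\left(\frac{1}{\nu}\right), \qquad \nu\rightarrow\infty.
\end{equation*}

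Next I would determine the approximate zeros by solving $G(\nu)=1$ exactly, i.e. $\log G(\nu)\in 2i\pi\Z$. Taking the logarithm and inserting Stirling's formula (\cite{Leb}, Eq. (1.4.24)) for $2\log\Gamma(\nu+1)$, the equation reduces, after collecting the dominant terms, to the transcendental relation
\begin{equation*}
2\nu\,\log\frac{ae}{2\nu} - 2\log\nu + C = -2i\pi p + o(1), \qquad p\in\N,
\end{equation*}
for a fixed constant $C$, where the sign of the integer index is forced by the requirement $\Im\nu>0$ (the product $\nu\log\nu$ has argument in $(0,\pi)$ in the open first quadrant). I would then solve this by bootstrap: the leading balance $\nu\log\nu \sim i\pi p$ gives $|\nu|\log|\nu|\sim\pi p$, hence $|\nu|\sim \pi p/\log p$, while matching arguments yields $Arg\ \nu = \pd - \frac{\pi}{2\log p} + o\!\left(\frac{1}{\log p}\right)$. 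Writing $\nu=|\nu|\,e^{i\,Arg\,\nu}$ and expanding $\cos(Arg\,\nu)\sim \frac{\pi}{2\log p}$ and $\sin(Arg\,\nu)\sim 1$, this produces approximate poles $\nu_p^0$ with $\Re\nu_p^0 \sim \frac{p\pi^2}{2\log^2 p}$ and $\Im\nu_p^0\sim\frac{p\pi}{\log p}$, which is precisely the asserted asymptotics.

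Finally, to pass from the approximate zeros of $G(\nu)-1$ to the genuine zeros of $\be/\beo$, I would apply Rouch\'e's theorem on a shrinking family of small disks centred at the points $\nu_p^0$, together with an exhaustion of the sector by large contours to guarantee that no pole is lost and that each sufficiently large integer $p$ contributes exactly one pole. The key quantitative point is that the relative error $O(1/\nu)$ in the Lemma, and the correction $q(a-0)+O(1/\nu)$, displace each $\nu_p^0$ only by an amount that is $o(1)$ relative to $\nu_p^0$ itself, so the leading-order formula $\nu_p=\bigl(\frac{p\pi^2}{2\log^2 p}+i\frac{p\pi}{\log p}\bigr)(1+o(1))$ survives. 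I expect this last step to be the main obstacle: one must control the error terms uniformly throughout the sector and carry the asymptotic inversion of the transcendental equation to sufficient precision to ensure simultaneously that the counting is exact (a bijection between large integers $p$ and Regge poles) and that the perturbation remains within the claimed $o(1)$ tolerance.
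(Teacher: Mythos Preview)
Your proposal is correct and follows essentially the same route as the paper: reduce via the previous Lemma and Stirling's formula to a transcendental equation for the zeros, solve it asymptotically, then invoke Rouch\'e to match the approximate zeros with the actual Regge poles. The only cosmetic difference is that the paper rewrites the equation as $z\log z=\alpha_p$ and quotes the known asymptotics of the Lambert $W$ function, whereas you solve the same relation by a direct bootstrap on modulus and argument; both yield the identical leading terms, and both leave the Rouch\'e counting as a standard (and admittedly slightly delicate) step, for which the paper simply refers to \cite{FrYu}.
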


\begin{proof}
We give only the main ingredients of the proof and we leave the details to the reader. For $Arg \ \nu \in [\pd-\delta, \pd]$, we define:
\begin{equation}
f(\nu) \ =\ 1 -\frac{2i\pi}{(\nu+1) \Gamma^2(\nu +1)} \ \left( \frac{a}{2} \right)^{2\nu+2} \ q(a-0).
\end{equation}
Using Stirling's formula, we obtain easily:
\begin{equation}\label{asymptw}
f(\nu) = 1 -iq(a-0)\ e^{-2w[\log w -1 -\log(\frac{a}{2})]} \ \left( 1+ O\left(\frac{1}{w}\right) \right),
\end{equation}
where we have set $w=\nu+1$. Now, if we set $A =1+ \log(\frac{a}{2})$ and $z = e^{-A}w$, we can write the equation (\ref{asymptw}) as:
\begin{equation}
f(\nu)= 1 -iq(a-0) \ e^{-2e^Az \log z} \ \left( 1+ O\left(\frac{1}{z}\right) \right).
\end{equation}
For instance, we assume that $q(a-0)>0$ and we define the function $g(z)$ by:
\begin{equation}
g(z) = 1 -iq(a-0) \ e^{-2e^Az \log z}.
\end{equation}
We obtain immediately that the zeros $z_p$ of the function $g(z)$ must verify, for $p \in \Z$, the equation:
\begin{equation} \label{lambert0}
z_p \log z_p = \frac{1}{2e^A} \ \left( \log q(a-0) +i (2p+\half) \pi \right) := \alpha_p .
\end{equation}
Since we are looking for the Regge poles in the first quadrant, we only have to consider the case where $p \in \N$. Setting $z_p = e^{u_p}$ in (\ref{lambert0}) we see that we have to study the roots $u_p$ of the following equation:
\begin{equation}\label{lambert}
u_p\ e^{u_p} = \alpha_p,
\end{equation}
which is the so-called Lambert's equation. The solution is given by $u_p =W(\alpha_p)$ where $W$ stands for the Lambert's function.
Using the asymptotics of the Lambert function (for the principal branch), (see for instance \cite{De}, or \cite{CG}, Eq. $(4.20)$), we obtain:
\begin{equation}
u_p = \log \alpha_p - \log_2 \alpha_p +o(1) \ \ ,\ \ p \rightarrow + \infty.
\end{equation}
So, we deduce:
\begin{equation}
z_p = \frac {\alpha_p}{\log \alpha_p} \ (1+ \ o(1)).
\end{equation}
An easy calculation gives:
\begin{equation}
\log \alpha_p = \log \left( \frac{p}{2e^A}\right) +i \pd +o(1),
\end{equation}
thus it follows that:
\begin{eqnarray}
\Re z_p \ &=&  \ \frac{p \pi^2}{2 e^A(\log p)^2}\ (1+o(1)), \\
\Im z_p \  &=& \ \frac{p \pi}{e^A \log p}\ (1+o(1)).
\end{eqnarray}
Now, the end of the proof follows from a standard application of Rouché Theorem. We refer to (\cite{FrYu}, pp. 35-36) for the details.
\end{proof}

\subsection{The case of potentials dilatable analytically.}

In this section, we study the localization of the Regge poles for short-range potentials $q(r)$ which can be extended analytically in the complex angular sector of the positive real axis $\mid \arg \ z \mid \leq b\leq \pd$.
In particular, the results of this section generalize to the case $b =\pd$ those results of Barut and Diley \cite{BaDi} who proved that the Regge poles are confined into a domain which is contained in a vertical strip on the first quadrant, (see also \cite{Bes}). In order to make complex scaling, we introduce the following formalism:

\vspace{0.2cm}\noindent
\begin{defi}
Let $\mathcal{V}$ a small complex neighborhood of $0 \in \C$. We say that a potential $q(r)$ is dilation-analytic short range if $q(r)$ can be extended analytically in a conic neighborhood of $(0,+\infty)$ with:
\begin{equation}\label{dilation}
\mid q(e^{\theta} r) \mid \ \leq C \ (1+r)^{-\rho} \ ,\ \rho>1 \ , \ {\rm{for \ all}} \ \theta \in \mathcal{V} .
\end{equation}
\end{defi}

\vspace{0.5cm}\noindent
In other words, if the disc $D(0,b) \subset \mathcal{V}$ with $b \leq \pd$, the above definition means that $q(r)$ can be extended analytically in the complex angular sector of the positive real axis  $\mid \arg \ z \mid \leq b\leq \pd$. For simplicity, we shall say that the potential $q(r)$ belongs to the class $\mathcal{A}_b$.

\vspace{0.2cm}\noindent
Now, let us consider radial Schrödinger equations with the energy $\lambda=k^2$:
\begin{equation}\label{Schrodingerenergie}
\left( - \frac{d^2}{dr^2} + \frac{\nu^2-\frac{1}{4}}{r^2} +q(r) \right) \ u(r) = k^2 u(r).
\end{equation}

\vspace{0.2cm}\noindent
We emphasize that in this section, the dependance with respect to the energy and the potential is important since we shall make later complex scalings. The regular solution is denoted by $\varphi(r,\nu; k, q)$ and satisfy:
\begin{equation}
\varphi(r,\nu; k, q) \ \sim \  r^{\nu + \half} \ ,\ r \rightarrow 0.
\end{equation}
If we solve the integral equation (\ref{repint1}) taking into account the energy $k^2$, we obtain easily that, for $\Re \nu \geq 0$,  the regular solution $\varphi(r,\nu, k, q)$ is analytic with respect to $k^2$, (for details, see \cite{AlRe}, p. 19).

\vspace{0.2cm}\noindent
Similarly, the Jost solutions are denoted by $f^{\pm}(r,\nu;k,q)$ and satisfy:
\begin{equation} \label{asympcomplexe}
f^{\pm}(r,\nu;k,q)\  \sim \ e^{\pm i k r} \ ,\ r \rightarrow + \infty.
\end{equation}
As for the regular solution, taking into account the energy $k^2$ in the integral equation (\ref{integralequationjost}), we can show that, for $\Re \nu \geq 0$, the Jost solution
$f^+(r,\nu;k,q)$ is analytic with respect to $k$ for $\Im k \geq 0$, whereas $f^-(r,\nu;k,q)$ is analytic for $\Im k \leq 0$, (see \cite{AlRe}, p. 24).

\vspace{0.2cm}\noindent
Now, as in Section 2, we define the Jost functions:
\begin{eqnarray}\label{jostcomplexe}
\alpha(\nu;k,q) &=& \ \frac{i}{2} \ W( \varphi(r,\nu; k, q), f^-(r,\nu;k,q) ), \nonumber \\
\beta(\nu;k,q) &=& - \frac{i}{2} \ W( \varphi(r,\nu; k, q), f^+(r,\nu;k,q) ).
\end{eqnarray}
Clearly, $\alpha(\nu;k,q)$ is defined for $\Re \nu \geq 0$ and $\Im k \leq 0$, whereas $\beta(\nu;k,q)$ is defined for $\Re \nu \geq 0$ and $\Im k \geq 0$.

\vspace{0.5cm}\noindent
In order to exploit the fact that our potential is dilatable analytically, we begin with the following elementary lemma whose easy proof is omitted:

\begin{lemma}\label{scaling}
\hfill \break
For $\theta \in \R$, the regular solution and the Jost solutions satisfy:
\begin{eqnarray}
\varphi(e^{\theta} r,\nu; k, q) & =& e^{(\nu+\half)\theta} \ \varphi ( r,\nu; e^{\theta} k, q_{\theta}), \\
f^{\pm}(e^{\theta}  r,\nu;k,q)  & =& f^{\pm}(  r,\nu;e^{\theta} k,q_{\theta}),
\end{eqnarray}
where $q_{\theta}(r) = e^{2\theta}\ q(e^{\theta}r)$.
\end{lemma}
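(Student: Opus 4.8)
The plan is to derive both identities from the single elementary observation that the change of variable $r\mapsto e^{\theta}r$ conjugates the radial operator at energy $k^2$ with potential $q$ into the radial operator at energy $(e^{\theta}k)^2$ with the rescaled potential $q_{\theta}$; once this is established, both equalities follow by matching the boundary behaviour at $0$ and at $+\infty$ and invoking the uniqueness of the regular and Jost solutions. So first I would set $g(r)=\varphi(e^{\theta}r,\nu;k,q)$ and simply differentiate, using that $\theta\in\R$ is a fixed real parameter. Since $g'(r)=e^{\theta}\,\varphi'(e^{\theta}r,\nu;k,q)$ one gets
\[
g''(r)=e^{2\theta}\,\varphi''(e^{\theta}r,\nu;k,q).
\]

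Inserting the point $s=e^{\theta}r$ into (\ref{Schrodingerenergie}), replacing $\varphi''(s)=e^{-2\theta}g''(r)$, $\varphi(s)=g(r)$, $s^2=e^{2\theta}r^2$, $q(s)=q(e^{\theta}r)$, and then multiplying through by $e^{2\theta}$, I obtain
\begin{equation*}
-g''(r)+\left(\frac{\nu^2-\frac14}{r^2}+e^{2\theta}q(e^{\theta}r)\right)g(r)=e^{2\theta}k^2\,g(r).
\end{equation*}
The key algebraic facts are now $e^{2\theta}q(e^{\theta}r)=q_{\theta}(r)$ and $e^{2\theta}k^2=(e^{\theta}k)^2$, so this is exactly (\ref{Schrodingerenergie}) written for the potential $q_{\theta}$ at energy $(e^{\theta}k)^2$; hence $g$ is a solution of the transformed equation. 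The identical computation applied to $h(r)=f^{\pm}(e^{\theta}r,\nu;k,q)$ shows that $h$ solves the same transformed equation.

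It then remains to identify these solutions by their boundary normalizations. As $r\to0$ one has $\varphi(e^{\theta}r,\nu;k,q)\sim(e^{\theta}r)^{\nu+\half}=e^{(\nu+\half)\theta}\,r^{\nu+\half}$, so $e^{-(\nu+\half)\theta}g$ is a solution of the transformed equation with the normalized behaviour $r^{\nu+\half}$ at the origin; by uniqueness of the regular solution this forces $g(r)=e^{(\nu+\half)\theta}\,\varphi(r,\nu;e^{\theta}k,q_{\theta})$, which is the first claim. For the Jost solutions, (\ref{asympcomplexe}) gives $h(r)\sim e^{\pm i k e^{\theta}r}=e^{\pm i(e^{\theta}k)r}$ as $r\to+\infty$, which is precisely the normalization of $f^{\pm}(r,\nu;e^{\theta}k,q_{\theta})$; uniqueness of the Jost solution then yields $h(r)=f^{\pm}(r,\nu;e^{\theta}k,q_{\theta})$, the second claim.

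There is essentially no genuine obstacle here, which is why the paper states the lemma with its proof omitted; the only two points deserving a word of care are, first, that for real $\theta$ the rescaled potential $q_{\theta}(r)=e^{2\theta}q(e^{\theta}r)$ still satisfies $(H_1)$, $(H_2)$ (and indeed the $\mathcal{A}_b$ bound, since $e^{2\theta}(1+e^{\theta}r)^{-\rho}\leq C'(1+r)^{-\rho}$), so that the regular and Jost solutions for $q_{\theta}$ are well defined and unique and the uniqueness arguments above apply; and second, that one must keep careful track of the energy parameter, since the transformed energy is $(e^{\theta}k)^2$ and the Jost asymptotics (\ref{asympcomplexe}) are energy-dependent. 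Both verifications are routine.
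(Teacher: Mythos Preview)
Your argument is correct and is exactly the elementary verification the authors had in mind: the paper explicitly omits the proof of this lemma as ``easy'', and the change of variable $s=e^{\theta}r$ together with matching of the normalizations at $0$ and $+\infty$ and uniqueness of the regular/Jost solutions is the natural way to establish it. There is nothing to add.
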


\vspace{0.3cm}\noindent
Hence, for dilation-analytic potentials in  $\mathcal{V}$, Lemma \ref{scaling} allows us to define analytically $\varphi(e^{\theta} r,\nu; k, q)$ for $\theta \in \mathcal{V}$, and $f^{\pm}(e^{\theta}  r,\nu;k,q)$ for $\theta \in \mathcal{V}$ with the condition $\pm  \Im (e^{\theta}k) \geq 0$.

\vspace{0.2cm}\noindent
We can deduce the following result for the Jost function $\be= \beta(\nu;1,q)$:

\begin{coro}\label{scaling1}
\hfill \break
Assume that the potential $q(r)$  is dilation-analytic short range. Then, for $\theta \in \mathcal{V}$, one has:
\begin{equation}
\beta(\nu;1, q) = e^{(\nu-\half) \theta} \ \beta(\nu; e^{\theta}, q_{\theta}).
\end{equation}
\end{coro}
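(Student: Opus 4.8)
The plan is to unwind the definition of the Jost function $\beta$ as a Wronskian and feed it the scaling identities of Lemma \ref{scaling}, reducing the whole statement to a one-line chain-rule computation followed by an analytic continuation in $\theta$. Recall from (\ref{jostcomplexe}) that
\begin{equation*}
\beta(\nu; e^{\theta}, q_{\theta}) = -\frac{i}{2} \ W\left( \varphi(r,\nu; e^{\theta}, q_{\theta}), \ f^+(r,\nu; e^{\theta}, q_{\theta}) \right),
\end{equation*}
where $W(u,v)=uv'-u'v$ is taken in the variable $r$; since both factors solve the same equation (\ref{Schrodingerenergie}) with the same parameters, this Wronskian is independent of $r$. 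First, for real $\theta$, I would invert the two relations of Lemma \ref{scaling} (with $k=1$) to write $\varphi(r,\nu; e^{\theta}, q_{\theta}) = e^{-(\nu+\half)\theta}\,\varphi(e^{\theta}r,\nu;1,q)$ and $f^+(r,\nu;e^{\theta},q_{\theta}) = f^+(e^{\theta}r,\nu;1,q)$.

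Then I would compute the Wronskian of these two functions of $r$. Abbreviating $\Phi(s)=\varphi(s,\nu;1,q)$ and $F(s)=f^+(s,\nu;1,q)$ and putting $s=e^{\theta}r$, each differentiation in $r$ contributes one factor $e^{\theta}$ via the chain rule, so
\begin{equation*}
W\!\left( e^{-(\nu+\half)\theta}\Phi(e^{\theta}r),\ F(e^{\theta}r) \right)
= e^{-(\nu+\half)\theta}\,e^{\theta}\,\left(\Phi F' - \Phi' F\right)(e^{\theta}r)
= e^{-(\nu-\half)\theta}\,W(\Phi,F).
\end{equation*}
Because $\Phi$ and $F$ solve (\ref{Schrodingerenergie}) with the same $\nu$, $k=1$ and $q$, the quantity $W(\Phi,F)$ is constant and, by the definition of $\beta$ at $k=1$, equals $2i\,\beta(\nu;1,q)$. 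Substituting back yields $\beta(\nu;e^{\theta},q_{\theta}) = e^{-(\nu-\half)\theta}\,\beta(\nu;1,q)$ for every $\theta\in\R$ (near $0$), which is exactly the asserted identity.

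Finally, I would remove the restriction $\theta\in\R$ by analytic continuation. By the dilation analyticity (\ref{dilation}) together with the energy-analyticity of the regular and Jost solutions recalled just before the Corollary (following \cite{AlRe}), the maps $\theta\mapsto \varphi(e^{\theta}r,\nu;1,q)$ and $\theta\mapsto f^{+}(e^{\theta}r,\nu;1,q)$, and hence $\theta\mapsto\beta(\nu;e^{\theta},q_{\theta})$, extend holomorphically on $\mathcal{V}$. Both sides of $\beta(\nu;1,q) = e^{(\nu-\half)\theta}\,\beta(\nu;e^{\theta},q_{\theta})$ are then holomorphic in $\theta$ on $\mathcal{V}$ and coincide on the real axis, so they coincide on all of $\mathcal{V}$ by the identity theorem.

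The computation itself is entirely routine; the only genuine point requiring care is the last step, namely verifying that $\beta(\nu;e^{\theta},q_{\theta})$ is well-defined and holomorphic throughout $\mathcal{V}$ — this is precisely where the dilation analyticity of $q$ is used, and one should keep track of the admissibility condition $\Im(e^{\theta})\geq 0$ under which $f^{+}$ is originally defined, the relation above in fact furnishing the holomorphic extension of $\beta(\nu;e^{\theta},q_{\theta})$ to the remainder of $\mathcal{V}$.
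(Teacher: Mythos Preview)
Your proof is correct and follows essentially the same route as the paper: invert the scaling relations of Lemma \ref{scaling} for real $\theta$, compute the Wronskian via the chain rule to pick up the factor $e^{-(\nu+\half)\theta}\cdot e^{\theta}=e^{-(\nu-\half)\theta}$, and then invoke analytic continuation in $\theta$. Your added remark about the admissibility condition $\Im(e^{\theta})\ge 0$ for $f^{+}$ is a welcome clarification that the paper leaves implicit.
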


\begin{proof}
Using Lemma \ref{scaling}, one has for $\theta \in \R$,
\begin{eqnarray*}
\varphi ( r,\nu; e^{\theta} , q_{\theta}) &=& e^{- (\nu+\half)\theta} \ \varphi(e^{\theta} r,\nu; 1, q), \\
f^+(  r,\nu;e^{\theta} ,q_{\theta}) &=& f^+(e^{\theta}  r,\nu;1,q).
\end{eqnarray*}
Using (\ref{jostcomplexe}), one obtains
\begin{eqnarray}
\beta(\nu; e^{\theta}, q_{\theta}) &=& -\frac{i}{2} \ W ( e^{- (\nu+\half)\theta} \ \varphi(e^{\theta} r,\nu; 1, q), f^+(e^{\theta}  r,\nu;1,q)), \\
                           &=& e^{- (\nu+\half)\theta} \ e^{\theta} \ \beta(\nu; 1,q).
\end{eqnarray}
Then, the result follows from a standard analytic continuation.

\end{proof}

\vspace{0.5cm}\noindent
Now, we can establish our main result concerning the localization of the Regge poles:


\begin{thm}\label{absencepoles}
\hfill \break
Let $q(r)$ be a potential belonging to the class $\mathcal{A}_b$ for some $b \in]0, \pd]$ with $\rho>1$.
Then, there exists $A>0$ such that there are no Regge poles in $\Gamma^b_{A}= \{ \nu \in \C\ ;\ \Re \nu >A,\ \mid \arg(\nu-A)\mid \leq b \}$.
\end{thm}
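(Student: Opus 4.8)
The plan is to combine the complex-scaling identity of Corollary \ref{scaling1} with a perturbative comparison of the scaled Jost function to its free counterpart, exactly in the spirit of Proposition \ref{equivJost}.

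First, fix $\nu \in \Gamma^b_{A}$. By Corollary \ref{scaling1}, for every $\theta \in \mathcal{V}$ one has $\be = e^{(\nu-\half)\theta}\,\beta(\nu; e^\theta, q_\theta)$, and since the prefactor $e^{(\nu-\half)\theta}$ never vanishes, the Regge poles of $q$ (the zeros of $\be$) coincide with the zeros of the scaled Jost function $\nu \mapsto \beta(\nu; e^\theta, q_\theta)$. Hence it suffices to exhibit one admissible $\theta$ for which $\beta(\nu; e^\theta, q_\theta) \neq 0$ throughout $\Gamma^b_{A}$ once $\Re \nu$ is large. I would take $\theta = i\phi$ purely imaginary with $0 < \phi \le b$; this is legitimate precisely because $q \in \mathcal{A}_b$ extends analytically in the sector $|\arg z| \le b$, so that $q_\theta(r) = e^{2i\phi} q(e^{i\phi} r)$ is well-defined and still short-range. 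The effect is to rotate the energy from $1$ to $e^{2i\phi}$, turning the Jost solution $f^+(\cdot,\nu; e^{i\phi}, q_\theta) \sim e^{i e^{i\phi} r}$ from oscillating into exponentially decaying; this gain of decay is what compensates for the weaker hypothesis $\rho > 1$ in place of $\rho > \frac{3}{2}$.

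Next I would record the free scaled Jost function, which follows from the Bessel representations exactly as in $(\ref{betazero})$, namely
\[
\beta_0(\nu; e^\theta) = \half\, A(\nu)\, e^{-(\nu+\half)\theta}\, e^{i(\nu+\half)\pd},
\]
and this is zero-free for $\Re\nu \ge 0$ because $A(\nu) = \sqrt{2/\pi}\,2^{\nu}\Gamma(\nu+1)$ never vanishes there. It then remains to prove that $\beta(\nu; e^\theta, q_\theta)/\beta_0(\nu; e^\theta) \to 1$ as $\nu \to \infty$ in the sector. Setting $g(r) = f^+(r,\nu;e^\theta,q_\theta)/f_0^+(r,\nu;e^\theta)$ and running the variation-of-constants iteration of Proposition \ref{fpmrealaxis} at the rotated energy, I would bound $g$ uniformly by a Green-kernel estimate (the analog of Lemma \ref{estimationM}, now with complex argument $e^{i\phi} r$). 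Feeding this into the difference formula of Proposition \ref{diffJost} at energy $e^\theta$ and using the identity $\varphi_0 f_0^+/\beta_0 = \pi e^\theta r\, J_\nu(e^\theta r) H_\nu^{(1)}(e^\theta r)$, one obtains, for a nonzero constant $c$,
\[
\frac{\beta(\nu; e^\theta, q_\theta)}{\beta_0(\nu; e^\theta)} - 1 = c\int_0^{+\infty} q_\theta(r)\, r\, J_\nu(e^{i\phi} r)\, H_\nu^{(1)}(e^{i\phi} r)\, g(r)\, dr,
\]
where the Bessel product $J_\nu(w)H_\nu^{(1)}(w)$ decays like $1/w$ along the rotated ray $w = e^{i\phi}r$ and, crucially, carries a smallness factor in $\Re\nu$ (the analog of Theorem \ref{estproduitbessel}).

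Putting these ingredients together, the integral above is $O(|\nu|^{-\delta/2})$, so once $\Re\nu$ exceeds a suitable threshold $A$ its modulus is strictly less than $1$, whence $\beta(\nu; e^\theta, q_\theta) \neq 0$; by the reduction of the first step this means $\Gamma^b_{A}$ contains no Regge poles. The hard part will be the Bessel analysis along the rotated ray: establishing uniform asymptotics and bounds for $J_\nu$ and $H_\nu^{(1)}$ of large \emph{complex} order $\nu$ at the \emph{complex} argument $e^{i\phi} r$, and checking that $\phi$ may be pushed all the way up to $b$ — letting it depend on $\arg(\nu - A)$ if necessary — so that both the boundedness of $g$ and the decay of the Bessel product persist throughout the full sector. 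In particular one must verify that the rotation tames, rather than amplifies, the exponential factors $e^{\pi|\Im\nu|}$ that plague the real-argument estimates, and it is exactly here that the width $b$ of the analyticity sector enters quantitatively.
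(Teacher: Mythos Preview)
Your opening move --- using Corollary \ref{scaling1} to replace $\be$ by the scaled Jost function $\beta(\nu;e^{i\phi},q_{i\phi})$ --- coincides with the paper's. But from that point on the two arguments diverge completely, and the route you choose leaves exactly the step you flag as ``the hard part'' unresolved.

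The paper does \emph{not} compare $\beta(\nu;e^{ib},q_{ib})$ to its free counterpart at all. Instead it argues by contradiction via an energy (virial) identity: if $\nu$ is a Regge pole in the sector, then after scaling by $\theta=ib$ the regular solution $\Psi(r)=\varphi(r,\nu;e^{ib},q_{ib})$ equals a multiple of $f^+(r,\nu;e^{ib},q_{ib})$ and hence decays exponentially. Multiplying the equation by $\overline{\Psi}$, integrating by parts, and using the elementary identity for $\int|\Psi'-\Psi/(2r)|^2\,dr$, one obtains (after multiplying by $e^{i(\pd-b-\arg\nu)}$ and taking the real part) the bound
\[
|\nu|\,\cos\!\left(\arg\nu+\tfrac{\pi}{2}-b\right)\ \le\ \tfrac{C}{2},
\]
i.e.\ $\Re(\nu\,e^{i(\pd-b)})\le C/2$, which is exactly the statement. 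No Bessel asymptotics, no iteration, no control of $g=f^+/f_0^+$: only $|q_{ib}(r)|\le C/r$ is used, which is why $\rho>1$ suffices.

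Your perturbative scheme, by contrast, needs uniform control of $J_\nu(e^{i\phi}r)H_\nu^{(1)}(e^{i\phi}r)$ for \emph{complex} $\nu$ ranging over the whole sector $|\arg(\nu-A)|\le b$, and of the iteration kernel $M(r,s,\nu)$ at complex energy. The only tool in the paper for this, Theorem \ref{estproduitbessel}, carries a factor $e^{\pi|\Im\nu|}$; when $b$ is close to $\pd$ and $\nu$ approaches the edge of the sector, $|\Im\nu|/\Re\nu$ is unbounded and this factor swamps the $(1+\Re\nu)^{-\delta/2}$ gain. You correctly identify that one would have to show the rotation ``tames rather than amplifies'' these exponentials, but you do not indicate how; nor is it clear that $f_0^+(r,\nu;e^{i\phi})$ stays zero-free in the relevant $\nu$-region (the analog of Lemma \ref{nonzero} at complex energy), without which the ratio $g$ is not even defined. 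So as written the proposal is a plausible outline with its central analytic estimate missing, whereas the paper's energy-identity argument sidesteps all of this.
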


\vspace{0.5cm}\noindent
Or course, this result is really pertinent  when $0\leq \arg(\nu-A) \leq b$ since we know that there are no Regge poles in the fourth quadrant.
Moreover, if we take $b =\pd$ in Theorem \ref{absencepoles}, we obtain immediately the following Corollary which has been proved in \cite{BaDi} by Barut and Diley  in the case $\rho>2$ only.

\begin{coro}\label{pasdepoles}
\hfill\break
Let $q(r)$ be a potential wich can be extended analytically in $\Re z \geq 0$ and such that
\begin{equation}
\mid q(z)\mid \ \leq \ C\ (1+\mid z \mid)^{-\rho} \ \ ,\ \ \rho>1.
\end{equation}
Then, the Regge poles are bounded to the right in the first quadrant.
\end{coro}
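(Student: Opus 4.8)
The Regge poles are by definition the zeros of $\be=\beta(\nu;1,q)$ lying in the first quadrant, so the statement amounts to proving that $\beta(\nu;1,q)\neq 0$ once $\Re\nu$ is large and $0\le\arg(\nu-A)\le b$ (the part $\arg(\nu-A)\le 0$ being already covered by Proposition \ref{zeroJost}). The plan is to exploit the dilation-analyticity through Corollary \ref{scaling1}: for every $\theta\in\mathcal{V}$,
\[
\beta(\nu;1,q)=e^{(\nu-\frac12)\theta}\,\beta(\nu;e^{\theta},q_{\theta}),
\]
so, since the prefactor never vanishes, the Regge poles of $q$ coincide with the zeros of the \emph{scaled} Jost function $\beta(\nu;e^{\theta},q_{\theta})$. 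I would pick a purely imaginary $\theta=is$ with $0<s<b\le\pd$; this rotates the energy to $k=e^{is}$, which lies in the open upper half plane ($\Im k=\sin s>0$), so that $\beta(\nu;e^{is},q_{is})$ is well defined. At the same time the scaled potential $q_{is}(r)=e^{2is}q(e^{is}r)$ still satisfies $|q_{is}(r)|\le C(1+r)^{-\rho}$ with $\rho>1$, hence remains short range (though now complex valued).

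Next I would reduce everything to the free Jost function. A short scaling computation (Lemma \ref{scaling} with $q=0$) gives $\beta_0(\nu;k)=k^{\frac12-\nu}\,\beta_0(\nu)$, and since $\beta_0(\nu)=\half A(\nu)e^{i(\nu+\half)\pd}$ involves only $\Gamma(\nu+1)$ and exponentials, $\beta_0(\nu;e^{is})$ has \emph{no} zeros at all. Thus it suffices to show that $\beta(\nu;e^{is},q_{is})$ is, for large $|\nu|$, a small perturbation of $\beta_0(\nu;e^{is})$, uniformly in the sector $0\le\arg\nu\le b$. Concretely I would use the energy-$k$ analogue of Proposition \ref{diffJost},
\[
\beta(\nu;k,q_{is})-\beta_0(\nu;k)=\frac{1}{2i}\int_0^{+\infty} q_{is}(r)\,f^+(r,\nu;k)\,\varphi_0(r,\nu;k)\,dr,
\]
bound the true Jost solution by the free one via the energy-$k$ version of Proposition \ref{fpmrealaxis}, and then reduce matters (as in the proof of Proposition \ref{equivJost}) to controlling an integral of the shape $\int_0^{+\infty}|q_{is}(r)|\,\bigl|J_\nu(e^{is}r)\,H_\nu^{(1)}(e^{is}r)\bigr|\,r\,dr$. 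Showing that this tends to $0$ as $\nu\to\infty$ yields $\beta(\nu;e^{is},q_{is})/\beta_0(\nu;e^{is})\to1$, so the scaled Jost function is nonzero for large $\nu$ in the sector, and hence so is $\beta(\nu;1,q)$; choosing $A$ large enough then removes all Regge poles from $\Gamma^b_A$.

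The main obstacle is precisely this last uniform estimate. The unrotated bounds available so far (Proposition \ref{estimationfpm}) hold only for $|\arg\nu|\le\pd-\delta$ and degenerate as $\nu$ approaches the imaginary axis, which is exactly where the Regge poles cluster; the rotation by $s$ is what must restore smallness of the relevant Green kernel in the enlarged sector $\arg\nu\in[0,b]$. Making this rigorous requires the uniform asymptotics of Bessel and Hankel functions of \emph{large complex order and rotated argument} $z=e^{is}r$ (the rotated counterparts of Proposition \ref{unifBessel} and Theorem \ref{estproduitbessel}): the effective Stokes condition governing the decay of $J_\nu H_\nu^{(1)}$ becomes, after rotation, a condition that allows $\arg\nu$ to run all the way up to $b$, and one must treat with care the transition region $|\nu|\approx|e^{is}r|$. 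I would also stress that the elementary Wronskian argument used for real energy in Corollary \ref{link} and Proposition \ref{zeroJost} is \emph{not} available here, because $q_{is}$ is complex and the conjugation identity $\overline{\varphi(r,\nu)}=\varphi(r,\bar\nu)$ is lost; this is exactly what forces one into the asymptotic/perturbative route described above.
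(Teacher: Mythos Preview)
Your starting point---using Corollary \ref{scaling1} to transfer the problem to the scaled Jost function $\beta(\nu;e^{ib},q_{ib})$---matches the paper exactly. But from that point on the paper takes a completely different, and much more elementary, route than the perturbative one you outline.

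Instead of comparing $\beta(\nu;e^{ib},q_{ib})$ to its free counterpart via Bessel asymptotics, the paper works directly with the scaled regular solution $\Psi(r)=\varphi(r,\nu;e^{ib},q_{ib})$. If $\nu$ is a Regge pole then $\beta(\nu;e^{ib},q_{ib})=0$, so $\Psi$ is proportional to $f^{+}(r,\nu;e^{ib},q_{ib})$ and hence exponentially decaying. One then multiplies the rotated equation
\[
-\Psi''+\Bigl(\frac{\nu^{2}-\frac14}{r^{2}}+q_{ib}(r)\Bigr)\Psi=e^{2ib}\Psi
\]
by $\overline{\Psi}$ and integrates by parts. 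After using the algebraic identity $\int_0^\infty|\Psi'-\Psi/(2r)|^{2}\,dr=\int_0^\infty\bigl(|\Psi'|^{2}-|\Psi|^{2}/(4r^{2})\bigr)\,dr$ and multiplying by the phase $e^{i(\pi/2-b-\arg\nu)}$ before taking real parts, one arrives at the quantitative inequality
\[
|\nu|\,\cos\bigl(\arg\nu+\tfrac{\pi}{2}-b\bigr)\ \le\ \tfrac{C}{2},
\]
using only the crude bound $|q_{ib}(r)|\le C/r$ (hence only $\rho>1$). This single inequality confines all Regge poles in $\mathcal{O}_b$ to a half-plane $\Re(\nu e^{i(\pi/2-b)})\le C/2$, and taking $b=\pd$ gives the Corollary. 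No Bessel asymptotics are needed at any point.

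Your perturbative approach, by contrast, runs straight into the obstacle you yourself identify: a uniform bound on $\int_0^\infty|q_{is}(r)|\,|J_\nu(e^{is}r)H_\nu^{(1)}(e^{is}r)|\,r\,dr$ for $\arg\nu$ up to $b$ would require rotated analogues of Theorem \ref{estproduitbessel} and Proposition \ref{unifBessel} that the paper never develops (and which would be delicate near the turning region $|\nu|\sim r$). Note also that even on the real $\nu$-axis the perturbative estimate in Proposition \ref{equivJost} already needs $\rho>\frac32$; pushing it down to $\rho>1$ by your route would require an additional idea, whereas the paper's multiplier argument gets $\rho>1$ for free. In short, your diagnosis that the conjugation-based Wronskian argument of Proposition \ref{zeroJost} is unavailable is correct, but the remedy the paper finds is not asymptotic perturbation theory: it is a virial-type energy identity that survives the loss of that symmetry.
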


\begin{proof}
We follow in spirit and simplify the approach given in (\cite{AlRe}, pp. 80-82) for  Yukawian potentials, (i.e for exponentially decreasing potentials).
We define:
\begin{equation}
\mathcal{O}_b = \{ \nu \in \C, \ Arg \ \nu \in [0,b] \},
\end{equation}
and we denote by $\mathcal{R}_b$ the set of the Regge poles belonging to $\mathcal{O}_b$. To simplify the notation, we set:
\begin{eqnarray}
\Psi(r) &=& \varphi(r,\nu, e^{ib}, q_{ib}), \\
W(r) &=& q_{ib}(r).
\end{eqnarray}
Let $\nu \in \mathcal{R}_b$ be a Regge pole for the potential $q(r)$ at the energy $\lambda=1$, i.e we assume that $\be=\beta(\nu;1,q)=0$. First, we remark that $\Re \nu >0$ since $\be$ does not vanish on the imaginary axis. We shall use implicitly this fact to give sense of the next integrals.  Secondly, using Corollary \ref{scaling1}, we get $\beta(\nu; e^{ib}, W)=0$. Thus, using (\ref{SFS}) and a standard analytic continuation argument, we have:
\begin{eqnarray}
\Psi(r) &=& \alpha(\nu; e^{ib}, W)\  f^+(r, \nu; e^{ib}, W)   + \beta(\nu; e^{ib}, W) \ f^-(r, \nu; e^{ib}, W), \nonumber \\
        &=& \alpha(\nu; e^{ib}, W)\  f^+(r, \nu; e^{ib}, W).
\end{eqnarray}
We emphasize that it follows from (\ref{asympcomplexe}) that $\Psi(r)$, (and its derivative), decay exponentially when $r \rightarrow +\infty$. Now, we start from:
\begin{equation}\label{eqpsi}
-\Psi'' + \left( \frac{\nu^2-\frac{1}{4}}{r^2} +W(r) \right) \ \Psi = e^{2ib} \ \Psi.
\end{equation}
We multiply (\ref{eqpsi}) by $\overline{\Psi}$ and we integrate by parts on $(0,+\infty)$. We obtain:
\begin{equation}
\int_0^{+\infty} \mid \Psi' \mid^2 \ dr - \left[ \Psi' \overline{\Psi} \right]_0^{+\infty} + \int_0^{+\infty} \left( \frac{\nu^2-\frac{1}{4}}{r^2} +W(r) \right) \
\mid \Psi \mid^2 \ dr = \int_0^{+\infty} e^{2ib} \ \mid \Psi \mid^2 \ dr.
\end{equation}
When $r \rightarrow + \infty$, $\Psi' \overline{\Psi}$ is exponentially decreasing, and when $r \rightarrow 0$, $\Psi' \overline{\Psi} =O(r^{2\Re \nu})$, with $\Re \nu >0$. Hence, we get:
\begin{equation}\label{prem}
\int_0^{+\infty} \mid \Psi' \mid^2 \ dr  + \int_0^{+\infty} \left( \frac{\nu^2-\frac{1}{4}}{r^2} +W(r) \right) \
\mid \Psi \mid^2 \ dr = \int_0^{+\infty} e^{2ib} \ \mid \Psi \mid^2 \ dr.
\end{equation}
\vspace{0.1cm}\noindent
Now, let us remark that:
\begin{equation}\label{carre}
\int_0^{+\infty} \left| \Psi' - \frac{\Psi}{2r} \right|^2 \ dr = \int_0^{+\infty} \mid \Psi ' \mid^2 - \Re (\frac{\Psi' \overline{\Psi}}{r})
+ \frac{\mid \Psi \mid^2}{4r^2} \ dr .
\end{equation}
Integrating by parts on $(0,+\infty)$, one has easily:
\begin{equation}
\int_0^{+\infty} \frac{\Psi' \overline{\Psi}}{r} \ dr = \int_0^{+\infty} \frac{\mid \Psi \mid^2}{r^2} \ dr - \int_0^{+\infty} \frac{\Psi \overline{\Psi'}}{r} \ dr,
\end{equation}
so we obtain:
\begin{equation}\label{partiereelle}
2 \Re \left( \int_0^{+\infty} \frac{\Psi' \overline{\Psi}}{r} \ dr \right) = \int_0^{+\infty} \frac{\mid \Psi \mid^2}{r^2} \ dr.
\end{equation}
Putting (\ref{partiereelle}) into (\ref{carre}), we have:
\begin{equation} \label{carre1}
\int_0^{+\infty} \left| \Psi' - \frac{\Psi}{2r} \right|^2 \ dr = \int_0^{+\infty} \mid \Psi ' \mid^2 - \frac{\mid \Psi \mid^2}{4r^2} \ dr .
\end{equation}
Thus, using (\ref{prem}) and (\ref{carre1}), one obtains:
\begin{equation}\label{carre2}
\int_0^{+\infty} \left[ \frac{\nu^2}{r} - re^{2ib} \right] \frac{\mid \Psi \mid^2}{r} \ dr = - \int_0^{+\infty} \left| \Psi' - \frac{\Psi}{2r} \right|^2 \ dr
- \int_0^{+\infty} W \mid \Psi \mid^2 \ dr
\end{equation}
Now,  multiplying (\ref{carre2}) by $ e^{i (\pd-b -Arg \ \nu)}$ and taking the real part, one has:
\begin{equation}
\mid \nu \mid \ \cos (\pd-b +Arg \ \nu)\ \int_0^{+\infty} \left( \frac{\mid \nu \mid}{r}
+ \frac{r}{\mid \nu \mid} \right) \frac{\mid \Psi \mid^2}{r} \ dr = - \int_0^{+\infty} \Re \left( e^{i(\pd-b -Arg \ \nu)} W \right) \mid \Psi \mid^2 \ dr \nonumber
\end{equation}
\begin{equation}\label{carre3}
-\cos (\pd -b -Arg\ \nu)\ \int_0^{+\infty} \left| \Psi' - \frac{\Psi}{2r} \right|^2 \ dr.
\end{equation}
Since $b \in ]0,\pd]$, $\cos (\pd -b -Arg\ \nu) \geq 0$, thus, we obtain immediately:
\begin{eqnarray}
\mid \nu \mid \ \cos (Arg\ \nu +\pd -b)\ \int_0^{+\infty} \left( \frac{\mid \nu \mid}{r}
+ \frac{r}{\mid \nu \mid} \right) \frac{\mid \Psi \mid^2}{r} \ dr && \nonumber \\
&&\hspace{-2cm} \leq - \int_0^{+\infty} \Re \left( e^{i(\pd-b -Arg \ \nu)} W \right) \mid \Psi \mid^2 \ dr \nonumber \\
&&\hspace{-2cm} \leq  \int_0^{+\infty} \mid W \mid \mid \Psi \mid^2 \ dr .
\end{eqnarray}
Our main hypothesis on the potential $q(r)$ implies a fortiori ${\ds{\mid W(r) \mid \leq \frac{C}{r}}}$, thus
\begin{equation}
\mid \nu \mid \ \cos (Arg\ \nu +\pd -b)\ \int_0^{+\infty} \left( \frac{\mid \nu \mid}{r}
+ \frac{r}{\mid \nu \mid} \right) \frac{\mid \Psi \mid^2}{r} \ dr \leq C \ \int_0^{+\infty}  \frac{\mid \Psi \mid^2}{r} \ dr .
\end{equation}
Reminding we are looking for the Regge poles in $\mathcal{O}_b$, one sees that $\cos (Arg\ \nu +\pd -b) \geq 0$, so using that
$\frac{\mid \nu \mid}{r} + \frac{r}{\mid \nu \mid} \geq 2$, one has:
\begin{equation}
\mid \nu \mid \ \cos (Arg\ \nu +\pd -b)\ \int_0^{+\infty} \frac{\mid \Psi \mid^2}{r} \ dr \leq \frac{C}{2}  \int_0^{+\infty} \frac{\mid \Psi \mid^2}{r} \ dr ,
\end{equation}
so we have
\begin{equation}
\mid \nu \mid \ \cos (Arg\ \nu +\pd -b)\  \leq \frac{C}{2},
\end{equation}
or equivalently
\begin{equation}
\Re \left( \nu \ e^{i(\pd-b)} \right)  \leq \frac{C}{2}.
\end{equation}
The Theorem follows easily.

\end{proof}

\section{Proof of the Theorem \ref{analyticalcase}.}

The goal of this section is to prove our Theorem \ref{analyticalcase}. Our proof is self-contained, elementary, and very close in spirit with the celebrated local  Borg-Marchenko's uniqueness Theorem, (see  \cite{Be, GS, Si, Te}). In particular, we emphasize that we do not use the Regge-Loeffel's uniqueness theorem, ( \cite{Lo}, Theorem 2).

\subsection{Uniqueness of the Regge interpolation function.}

Let us consider potentials $q(r)$, (resp. $\tilde{q}(r)$),  belonging to the class $\mathcal{A}$, i.e each potential can be extended analytically
in the domain  $\Re z \geq 0$ and such that, for all $z$ in this domain, we have (for instance for the potential $q(r)$):
\begin{equation*}
\mid q(z) \mid \ \leq \ C \  (1+ \mid z \mid)^{-\rho}\ \ , \ \ \rho > \frac{3}{2}.
\end{equation*}

\vspace{0.2cm}\noindent

Using Corollary \ref{pasdepoles}, there exists $N\in \N$ large enough such that $\be$ and $\tbe$ do not vanish for $\Re \nu >N-1$. So, we can define in this region the generalized phase shifts $\delta(\nu)$ and $\tilde{\delta}(\nu)$. Moreover, using Proposition \ref{deltanupoly}, for all $\nu$ with $\Re \nu \geq N$,
\begin{equation}\label{polynomial}
\mid \delta(\nu) \mid \leq C \mid \nu \mid^4 \ ,\ \mid \ \tilde{\delta}(\nu) \mid \leq C \mid \nu \mid^4.
\end{equation}

\noindent
It follows from (\ref{polynomial}) that the function $F(\nu)$ given by
\begin{equation}\label{defF}
F(\nu) = \frac{\delta(\nu + \frac{n-2}{2}+ N) - \tilde{\delta}(\nu  + \frac{n-2}{2}+ N)} {(\nu+N)^5}
\end{equation}
belongs to the Hardy class $H_+^2$, (see section 7 for the definition), and from our main hypothesis on the phase shifts, we have for all $A>0$ and $l \in \N$,
\begin{equation}
F(l) = O(e^{-Al}) \ ,\ l \rightarrow + \infty.
\end{equation}

\vspace{0.2cm}\noindent
Now, using Proposition \ref{Hardy} and since $A>0$ is here arbitrary, we obtain that $F(\nu) = 0$ for all $\nu$ with $\Re \nu \geq 0$. Hence, for $\Re \nu >0$ large enough, we have
$\delta(\nu)= \tilde{\delta}(\nu)$. It follows from (\ref{phaseshifts}) that
\begin{equation}
\sigma(\nu)  = \tilde{\sigma} (\nu) \ \ {\rm{for}} \ \Re \nu >0 \  {\rm{large \ enough}},
\end{equation}
or equivalently, using (\ref{interpolation}) and (\ref{phaseshifts}), $\al \tbe - \tal \tbe =0$  for $\Re \nu >0$ large enough. By a standard analytic continuation, this last equality holds true for $\Re \nu >0$. Thus using again (\ref{interpolation}) and (\ref{phaseshifts}), we have obtained:
\begin{equation}\label{equalphase}
 \sigma(\nu)  = \tilde{\sigma} (\nu) ,
 \end{equation}
 for $\Re \nu >0$ (here  both functions are meromorphic).

\subsection{A new proof of the Regge-Loeffel's theorem.}

At this stage, we could use Regge-Loeffel's uniquenes Theorem, (see \cite{Lo}, Theorem 2) as a black box to obtain $q = \tilde{q}$. Nevertheless, we prefer to  give here another proof  which has the advantage to be very simple, short and self-contained. We emphasize we shall also use this new approach for the proof of Theorem \ref{Mainresult}. Moreover, as we will see at the end of this section, this strategy allows us to obtain a new Regge-Loeffel's theorem which is local in nature.

\vspace{0.2cm}\noindent
We follow an idea close to the local Borg Marchenko uniqueness Theorem, (see  \cite{Be, GS, Si, Te}). We fix $r>0$, and we define $F(r,\nu)$ as a function of the complex variable $\nu$ by:
\begin{equation}\label{definitionF}
F(r,\nu) = \fp \tfm - \fm\tfp.
\end{equation}

\vspace{0.2cm}\noindent
As we have seen in the Section 4, $F(r,\nu)$ is holomorphic on $\C$ with respect to $\nu$, is even, and of order $1$ with infinite type.
Moreover, Proposition \ref{fpmimaginary} implies that $F(r,\nu)$ is bounded  on the imaginary axis.

\vspace{0.2cm}\noindent
Now, we aim at showing that $F(r,\nu) \rightarrow 0$ when $\nu \rightarrow + \infty$. For $\nu \geq0$, as $\be \not=0$,  we can set:
\begin{equation}
\Phi (r,\nu) =\frac{\reg}{\be}.
\end{equation}
Clearly, using (\ref{SFS}), we get:
\begin{equation}
\fm = \Phi(r,\nu) - \frac{\al}{\be} \ \fp,
\end{equation}
and thus,
\begin{eqnarray}\label{mainrelation1}
F(r,\nu) & = & \tilde{\Phi}(r,\nu) \fp - \Phi(r,\nu) \tfp  \nonumber   \\
          &  & + \ \left( \frac{\al}{\be} -    \frac{\tal}{\tbe} \right) \ \fp \tfp.
\end{eqnarray}
So, using (\ref{interpolation}), we deduce:
\begin{eqnarray}\label{mainrelation2}
F(r,\nu) & = & \tilde{\Phi}(r,\nu) \fp - \Phi(r,\nu) \tfp  \nonumber   \\
          &  & + \ e^{-i\pi (\nu+\half)} \ \left( \sigma(\nu) - \tilde{\sigma}(\nu) \right) \ \fp \tfp.
\end{eqnarray}
Hence, by (\ref{equalphase}), we see that for $\nu \geq 0$, $F(r,\nu)$ can be written  as:
\begin{equation}
F(r,\nu) = \tilde{\Phi}(r,\nu) \fp - \Phi(r,\nu) \tfp   .
\end{equation}

\vspace{0.2cm}\noindent
For instance, let us examine  $\Phi(r,\nu) \tfp $.  Propositions \ref{fpmrealaxis} and \ref{equivJost} imply for $\nu \geq 0$:
\begin{equation}
\mid \Phi(r,\nu) \tfp \mid \ \leq \ C \ \mid \frac{\reg}{\beo }\mid  \ \mid \fpo \mid .
\end{equation}
Using (\ref{estreg}), we get for a fixed $r>0$ and for all $\nu \geq 0$:
\begin{equation}
\mid \reg \mid \leq \ C \ r^{\nu},
\end{equation}
and Proposition \ref{unifBessel} gives:
\begin{equation}
\mid \fpo \mid \ \leq \ C \ (\frac{r}{2})^{-\nu + \half} \ \Gamma(\nu).
\end{equation}
Thus, using (\ref{betazero}),  we obtain easily:
\begin{equation}
\Phi(r,\nu) \tfp  =O(\nu^{-1})\ \ {\rm{when}} \ \ \nu \rightarrow + \infty.
\end{equation}

\vspace{0.2cm}\noindent
At this stage, we have then proved that $F(r,\nu) \rightarrow 0$ when $\nu \rightarrow + \infty$. In particular, by parity, $F(r,\nu)$ is bounded on the real axis.
Applying the Phragmén-Lindelöf theorem (see \cite{Boa}, Theorem 1.4.2 for instance) in each quadrant of the complex plane, we see that $F(r,\nu)$ is bounded on $\C$, and so is constant by Liouville's Theorem. As the
limit is $0$ when $\nu \rightarrow +\infty$, we have  $F(r,\nu) =0$ for all $\nu \in \C$.  So, using (\ref{definitionF}), we have :
\begin{equation}\label{equalJost}
\fp \tfm = \fm\tfp \ ,\ \forall \nu \in \C, \ \forall r>0.
\end{equation}
For $\nu \in \R$ fixed, we remark that, for all $r >0$, $\fpm \not=0$.\footnote{We could also use the same strategy as in Lemma \ref{nonzero} to obtain this result.} Indeed, assume for instance that $\fp =0$ for some $r>0$.
Since ${\ds{\overline{\fm} = \fp}}$, we have also $\fm=0$ which contradicts $W(\fp, \fm)=-2i$.
Then, we can write (\ref{equalJost}) as
\begin{equation}
\frac{\fp}{\fm} = \frac{\tfp}{\tfm} \ ,\ \forall \nu>0, \ \forall r>0.
\end{equation}
Differentiating  and using that $W(\fp, \fm)=-2i$, it follows that $(\fm)^2 = (\tfm)^2$. We take the logarithmic derivative of this and we differentiate once more. We obtain:
\begin{equation}
\frac{(\fm)''}{\fm} = \frac{(\tfm)''}{\tfm},
\end{equation}
Using (\ref{Schradiale}), we deduce $q = \tilde{q}$, for all $r >0$.

\vspace{0.5cm}
\noindent 
As we have said in the beginning of this section, it is not difficult to see that the previous approach allows us to obtain a {\it{local}} Regge-Loeffel's theorem:

\begin{thm} \hfill\break
Let $q(r)$ and
$\tilde{q}(r)$ be two potentials satisfying $(H_1)$ and such that $r^{\rho}q(r)$, $r^{\rho} \tilde{q}(r)$ satisfy $(H_2)$ for some
$\rho > \half$. If $\sigma(\nu)-\tilde{\sigma}(\nu) = o \left( \nu \left(\frac{ae}{2\nu}\right)^{2\nu}  \right)$ when $\nu \rightarrow +\infty$, then
$q(r) = \tilde{q}(r)$ for almost all $r \in (a,+\infty)$.
\end{thm}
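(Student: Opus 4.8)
The plan is to re-run the self-contained argument of Section 8.2 almost verbatim, but to \emph{keep} the term carrying $\sigma(\nu)-\tilde\sigma(\nu)$ that was annihilated there by the global equality $\sigma\equiv\tilde\sigma$. For fixed $r>0$ I would set $F(r,\nu)=\fp\tfm-\fm\tfp$ and recall from Section 4 that $\nu\mapsto F(r,\nu)$ is entire, even, of order $1$ with infinite type, and, by Proposition \ref{fpmimaginary} (whose hypothesis $(H_2)$ for $q$ follows from $r^{\rho}q$ satisfying $(H_2)$ with $\rho>\half$), bounded on the imaginary axis. Writing $\Phi(r,\nu)=\reg/\be$, $\tilde\Phi(r,\nu)=\treg/\tbe$ and using (\ref{SFS}) together with the definition (\ref{interpolation}) of $\sigma$, I would reach exactly the decomposition (\ref{mainrelation2}),
\begin{equation*}
F(r,\nu)=\tilde\Phi(r,\nu)\fp-\Phi(r,\nu)\tfp+e^{-i\pi(\nu+\half)}\bigl(\sigma(\nu)-\tilde\sigma(\nu)\bigr)\fp\tfp,
\end{equation*}
valid for $\nu\geq 0$, where $\be,\tbe\neq 0$ for $\nu$ large by Proposition \ref{equivJost}.

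The first two terms are controlled as in Section 8.2: choosing $\delta\in(0,1)$ with $(1+\delta)/2<\rho$, Propositions \ref{fpmrealaxis} and \ref{equivJost} give $\fp\sim\fpo$ and $\be\sim\beo$, and together with the estimate (\ref{estreg}) for $\reg$ and the uniform asymptotics of $\fpo$ (Proposition \ref{unifBessel}) one gets $\Phi(r,\nu)\tfp=O(\nu^{-1})$ and, symmetrically, $\tilde\Phi(r,\nu)\fp=O(\nu^{-1})$ as $\nu\to+\infty$, for each fixed $r>0$.

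The genuinely new point, and the step I expect to be the main obstacle, is to show that the third term also tends to $0$ when $r\geq a$. Here I would bound $|\fp\tfp|\leq C|\fpo|^{2}$ and use $|\fpo|\sim\pi^{-1/2}\Gamma(\nu)(r/2)^{-\nu+\half}$, so that the third term is dominated by $C\,|\sigma(\nu)-\tilde\sigma(\nu)|\,\Gamma(\nu)^{2}(r/2)^{-2\nu+1}$. A Stirling computation gives $\Gamma(\nu)^{2}(ae/2\nu)^{2\nu}(r/2)^{-2\nu}\sim(2\pi/\nu)(a/r)^{2\nu}$, so this bound rewrites as
\begin{equation*}
C'\,\frac{\bigl|\sigma(\nu)-\tilde\sigma(\nu)\bigr|}{\nu\,(ae/2\nu)^{2\nu}}\,\Bigl(\frac{a}{r}\Bigr)^{2\nu},
\end{equation*}
with $C'$ depending on $r$. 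By the hypothesis $\sigma(\nu)-\tilde\sigma(\nu)=o\bigl(\nu(ae/2\nu)^{2\nu}\bigr)$ the first factor tends to $0$, while for $r\geq a$ the factor $(a/r)^{2\nu}$ stays bounded (it equals $1$ at $r=a$ and decays for $r>a$); hence the third term vanishes in the limit. The delicate aspect is precisely this calibration: the decay rate postulated for $\sigma-\tilde\sigma$ is exactly the one cancelling the $\Gamma(\nu)^{2}(r/2)^{-2\nu}$ blow-up at the endpoint $r=a$, so the estimate is sharp and I must track the Stirling constants carefully rather than settle for orders of magnitude.

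Combining the three estimates yields $F(r,\nu)\to 0$ as $\nu\to+\infty$ for each fixed $r\geq a$; by parity $F(r,\cdot)$ is then bounded on $\R$. Applying the Phragm\'en-Lindel\"of theorem (\cite{Boa}, Theorem 1.4.2) in each of the four quadrants, which is legitimate since $F$ has order $1<2$ and is bounded on both axes, I conclude that $F(r,\cdot)$ is bounded on $\C$, hence constant by Liouville and identically $0$ since its limit is $0$. Thus $\fp\tfm=\fm\tfp$ for all $\nu\in\C$ and all $r\geq a$. Finally, fixing $\nu>0$ and noting $\fpm\neq 0$ (because $\overline{\fm}=\fp$ and $W(\fp,\fm)=-2i$), I would write $\fp/\fm=\tfp/\tfm$, differentiate in $r$ and use the Wronskian to get $(\fm)^{2}=(\tfm)^{2}$ on $(a,+\infty)$, then take a logarithmic derivative and differentiate once more to obtain $(\fm)''/\fm=(\tfm)''/\tfm$; equation (\ref{Schradiale}) then forces $q(r)=\tilde q(r)$ for almost every $r>a$.
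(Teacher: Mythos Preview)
Your proposal is correct and follows essentially the same approach as the paper's own proof, which simply says that under the hypotheses, using (\ref{mainrelation2}) and Proposition \ref{fpmrealaxis}, one sees that $F(r,\nu)=o(1)$ for $r\geq a$ and then concludes as in Section 8.2. You have filled in precisely the details the paper leaves implicit, including the Stirling calibration showing that the decay rate $o\bigl(\nu(ae/2\nu)^{2\nu}\bigr)$ on $\sigma-\tilde\sigma$ exactly compensates the growth $\Gamma(\nu)^2(r/2)^{-2\nu}$ of $|\fpo|^2$ at the endpoint $r=a$.
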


\begin{proof}
Under these hypotheses, using  (\ref{mainrelation2}) and Proposition \ref{fpmrealaxis}, we see that, for $r \geq a$, the function $F(r,\nu) = o(1)$ when $\nu \rightarrow +\infty$, and we conclude as previously.
\end{proof}

\vspace{0.2cm}
\noindent 
We also note that we shall prove in Proposition \ref{necessarycondition} the following result :
if $q(r) = \tilde{q}(r)$ for almost all $r \in (a,+\infty)$, then 
\begin{equation}
\sigma(\nu)-\tilde{\sigma}(\nu) = O \left( \frac{1}{\nu^2} \left(\frac{ae}{2\nu}\right)^{2\nu}  \right).
\end{equation}

\section{Proof of the Theorem \ref{Mainresult}.}

\subsection{Proof of $(A_2) \Longrightarrow (A_1)$.}

We begin with the following Proposition which proves the implication $(A_2) \Longrightarrow (A_1)$ of Theorem \ref{Mainresult}. We emphasize that here, we only use the fact that the potentials decay sufficiently rapidly  at infinity; in particular, we do not use explicitly that the potentials belong to the class $\mathcal{C}$. Note that this Proposition has been proved in \cite{Ho} using a variational approach for the generalized phase shifts. For convenience's reader, we present here a shorter proof.

\begin{prop}\label{necessarycondition}
\hfil\break
Let $q$ and $\tilde{q}$ be two potentials satisfying $(H_1)$ and such that $r^{\rho} q(r)$ and $r^{\rho}  \tilde{q}(r)$ satisfy $(H_2)$ for some $\rho > \half$. Assume that $q(r)=\tilde{q}(r)$ for almost all $r \in (a,+\infty)$. Then, the corresponding phase shifts satisfy:
\begin{equation}
\delta_l - \tilde{\delta}_l \ = \ O \left(\frac{1}{l^{n}} \  \left( \frac{ae}{2l}\right)^{2l}\right) \ \ ,\ \ l\rightarrow + \infty.
\end{equation}
\end{prop}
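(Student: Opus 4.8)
The plan is to reduce the estimate to the integral representation of Proposition \ref{differenceJost} through the bound of Corollary \ref{compact}, and then to convert the resulting growth in the complex angular momentum $\nu$ into the stated growth in the integer $l$ by Stirling's formula. The scattering-theoretic content is entirely carried by those two statements together with the equivalence $\be \sim \beo$ of Proposition \ref{equivJost}; the remaining work is bookkeeping.

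First I would express the difference of the phase shifts through the Jost functions. Since both potentials satisfy $(H_1)$ and $(H_2)$, Proposition \ref{zeroJost} shows that $\be$ and $\tbe$ do not vanish on the positive real axis, so $\delta(\nu)$ and $\tilde{\delta}(\nu)$ are well defined there. Using the definition (\ref{interpolation})--(\ref{phaseshifts}) of the generalized phase shifts, one has
\begin{equation*}
e^{2i\delta(\nu)} - e^{2i\tilde{\delta}(\nu)} = e^{i\pi(\nu+\half)}\left(\frac{\al}{\be} - \frac{\tal}{\tbe}\right) = e^{i\pi(\nu+\half)}\ \frac{\al\tbe - \tal\be}{\be\tbe}.
\end{equation*}
Since $\delta(\nu),\tilde{\delta}(\nu)\to 0$ as $\nu\to+\infty$, the modulus of the left-hand side equals $2\,|\delta(\nu)-\tilde{\delta}(\nu)|\,(1+o(1))$, the exponential factor has modulus $1$ for real $\nu$, and (\ref{vraiphase}) gives $\delta(\nu(l))=\delta_l$; hence for $\nu=\nu(l)$ large
\begin{equation*}
|\delta_l - \tilde{\delta}_l| \ \leq \ C\ \frac{|\al\tbe - \tal\be|}{|\be\tbe|}.
\end{equation*}
For the numerator, Corollary \ref{compact} applies at $\Re\nu = \nu(l)$ and gives $|\al\tbe-\tal\be|\leq \frac{C}{\nu+1}\,a^{2\nu}$. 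For the denominator, the hypotheses let me choose $\delta\in(0,1)$ with $\delta < 2\rho-1$, so that $r^{\frac{1+\delta}{2}}q(r)$ and $r^{\frac{1+\delta}{2}}\tilde{q}(r)$ satisfy $(H_2)$; Proposition \ref{equivJost} then yields $\be\sim\beo$ and $\tbe\sim\beo$ as $\nu\to+\infty$ (the free Jost function being common to both potentials), so $|\be\tbe|\sim|\beo|^2$. From (\ref{Anu}) and (\ref{betazero}) one computes, for real $\nu$, $|\beo|^2 = \frac{1}{2\pi}\,2^{2\nu}\,\Gamma(\nu+1)^2$. Combining and using $a^{2\nu}2^{-2\nu}=(a/2)^{2\nu}$ gives
\begin{equation*}
|\delta_l-\tilde{\delta}_l| \ \leq\ \frac{C}{\nu+1}\ \frac{(a/2)^{2\nu}}{\Gamma(\nu+1)^2}, \qquad \nu=\nu(l).
\end{equation*}

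Then I would insert Stirling's formula $\Gamma(\nu+1)^2\sim 2\pi\nu\,(\nu/e)^{2\nu}$ to turn the right-hand side into $O\!\left(\nu^{-2}\left(\frac{ae}{2\nu}\right)^{2\nu}\right)$, which is exactly the intermediate bound for $\sigma(\nu)-\tilde{\sigma}(\nu)$ announced at the end of Section 8. The final and most delicate step is the passage from $\nu=\nu(l)=l+\frac{n-2}{2}$ to $l$: writing $c=\frac{n-2}{2}$ and expanding $2\nu\log\nu = 2(l+c)\log(l+c) = 2l\log l + 2c\log l + 2c + O(1/l)$, one obtains
\begin{equation*}
\left(\frac{ae}{2\nu}\right)^{2\nu} = \left(\frac{a}{2}\right)^{n-2}\ l^{-(n-2)}\ \left(\frac{ae}{2l}\right)^{2l}\ (1+O(1/l)),
\end{equation*}
and together with $\nu^{-2}=l^{-2}(1+O(1/l))$ this produces precisely the factor $l^{-n}$, giving $|\delta_l-\tilde{\delta}_l|=O\!\left(l^{-n}\,(ae/2l)^{2l}\right)$. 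I expect this last Stirling bookkeeping --- tracking the logarithmic terms carefully enough to land on the precise power $l^{-n}$ rather than $l^{-(n-2)}$ --- to be the only point requiring genuine care.
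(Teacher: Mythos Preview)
Your proof is correct and follows essentially the same route as the paper's: reduce $e^{2i\delta(\nu)}-e^{2i\tilde{\delta}(\nu)}$ to $(\al\tbe-\tal\be)/(\be\tbe)$, bound the numerator by Corollary~\ref{compact}, replace the denominator by $|\beo|^2$ via Proposition~\ref{equivJost}, and finish with Stirling's formula at $\nu=\nu(l)$. Your explicit verification of the hypothesis of Proposition~\ref{equivJost} and your detailed Stirling bookkeeping tracking the power $l^{-n}$ are more carefully spelled out than in the paper, which simply asserts the final step.
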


\begin{proof}
We use  the Regge interpolation  (\ref{interpolation}) for $\nu>0$:
\begin{equation}
e^{2i\delta(\nu)} - e^{2i\tilde{\delta}(\nu)} = e^{i\pi(\nu+\half)} \ \frac{\al \tilde{\be} - \tilde{\al}\be}{\be \tilde{\be}}.
\end{equation}
Now, Corollary \ref{compact}, Proposition \ref{equivJost} and  (\ref{betazero}) imply:
\begin{equation} \label{inequality}
\mid  e^{2i\delta(\nu)} - e^{2i\tilde{\delta}(\nu)} \mid  \ \leq \ C \ \frac{a^{2\nu}}{(\nu+1)\mid \beo\mid^2}
                                                            \leq  C \ \left( \frac{a}{2} \right)^{2\nu} \ \frac{1}{(\nu+1)\Gamma^2(\nu+1)}.
\end{equation}
By definition, $\delta(\nu)$, (resp. $\tilde{\delta}(\nu)$)  $\rightarrow 0$ when $\nu \rightarrow +\infty$. Hence, it follows from (\ref{inequality}) that
\begin{equation}
\mid \delta (\nu) -\tilde{\delta}(\nu) \mid \ \leq \  C \ \left( \frac{a}{2} \right)^{2\nu} \ \frac{1}{(\nu+1)\Gamma^2(\nu+1)}.
\end{equation}
Then, applying Stirling's formula with $\nu=\nu(l) = l + \frac{n-2}{2}$, we obtain the result.

\end{proof}

\subsection{Proof of $(A_1) \Longrightarrow (A_2)$.}

\subsubsection{Reduction to the analytic case.}

Let us consider two potentials $q$ and $\tilde{q}$ belonging to the class $\mathcal{C}$, i.e
$q = q_1 + q_2$ such that $q_1$ has compact support in $[0,b]$, and $q_2$ can be extended holomorphically in $\Re z \geq 0$. In the same way,
$\tilde{q} = \tilde{q}_1 + \tilde{q}_2$ with the same properties. For $j=1,2$, we denote $\delta_l^j$, (resp. $\tilde{\delta}_l^j$), the phases shifts corresponding to the potential $q_j$, (resp. $\tilde{q}_j$).

\vspace{0.2cm}\noindent
First, we prove the following elementary result. This Lemma permits us to reduce our proof to the analytic case of Theorem \ref{analyticalcase}.

\begin{lemma}\label{reduc}
\hfill\break
Let $q$ and $\tilde{q}$  be two potentials belonging to the class $\mathcal{C}$, assume that
\begin{equation}
\delta_l - \tilde{\delta}_l \ = \ o \left( \frac{1}{l^{n-3}} \ \left( \frac{ae}{2l}\right)^{2l}\right) \ \ ,\ \ l\rightarrow + \infty.
\end{equation}
We set $c = max\ (a,b)$. Then,
\begin{equation}
\delta_l^2 - \tilde{\delta}_l^2 \ = \ o \left( \frac{1}{l^{n-3}} \ \left( \frac{ce}{2l}\right)^{2l}\right) \ \ ,\ \ l\rightarrow + \infty.
\end{equation}
\end{lemma}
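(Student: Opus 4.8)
The plan is to reduce everything to Proposition \ref{necessarycondition} by exploiting that, for each potential, the compactly supported piece is the \emph{only} difference between the full potential and its analytic part. Precisely, since $q_1$ is supported in $[0,b]$ we have $q(r)=q_2(r)$ for almost all $r\in(b,+\infty)$, and likewise $\tilde q(r)=\tilde q_2(r)$ for almost all $r\in(b,+\infty)$. Both $q$ and $q_2$ satisfy $(H_1)$, and since $q_2\in\mathcal{A}$ decays like $r^{-\rho}$ with $\rho>\frac32$, one may pick $\rho'\in(\frac12,\rho-1)$ so that $r^{\rho'}q(r)$ and $r^{\rho'}q_2(r)$ satisfy $(H_2)$. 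Hence Proposition \ref{necessarycondition}, applied with the radius $a$ replaced by $b$, is available for the pairs $(q,q_2)$ and $(\tilde q,\tilde q_2)$, which coincide off the compact supports.

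First I would record the two consequences of that Proposition:
\begin{equation*}
\delta_l-\delta_l^2=O\!\left(\frac{1}{l^{n}}\left(\frac{be}{2l}\right)^{2l}\right),\qquad
\tilde\delta_l-\tilde\delta_l^2=O\!\left(\frac{1}{l^{n}}\left(\frac{be}{2l}\right)^{2l}\right),\qquad l\to+\infty.
\end{equation*}
Then I would use the elementary decomposition
\begin{equation*}
\delta_l^2-\tilde\delta_l^2=(\delta_l-\tilde\delta_l)-(\delta_l-\delta_l^2)+(\tilde\delta_l-\tilde\delta_l^2),
\end{equation*}
so that it suffices to bound each of the three terms on the right-hand side by $o\!\left(\frac{1}{l^{n-3}}\left(\frac{ce}{2l}\right)^{2l}\right)$.

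The final step is a comparison of growth rates using $c=\max(a,b)$. Since $c\geq a$ and $c\geq b$, one has $\left(\frac{ae}{2l}\right)^{2l}\leq\left(\frac{ce}{2l}\right)^{2l}$ and $\left(\frac{be}{2l}\right)^{2l}\leq\left(\frac{ce}{2l}\right)^{2l}$ for every $l$. Thus the hypothesis $\delta_l-\tilde\delta_l=o\!\left(\frac{1}{l^{n-3}}\left(\frac{ae}{2l}\right)^{2l}\right)$ already yields the desired $o\!\left(\frac{1}{l^{n-3}}\left(\frac{ce}{2l}\right)^{2l}\right)$ for the first term. For the two $O(\cdot)$ terms, writing $\frac{1}{l^{n}}=\frac{1}{l^{3}}\cdot\frac{1}{l^{n-3}}$ and noting $\frac{1}{l^{3}}\to0$ upgrades $O\!\left(\frac{1}{l^{n}}\left(\frac{be}{2l}\right)^{2l}\right)$ to $o\!\left(\frac{1}{l^{n-3}}\left(\frac{ce}{2l}\right)^{2l}\right)$. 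Summing the three estimates gives the conclusion.

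There is no serious obstacle here; the only points requiring a little care are the verification that Proposition \ref{necessarycondition} genuinely applies to $(q,q_2)$ and $(\tilde q,\tilde q_2)$ — namely checking the decay hypothesis $r^{\rho'}q\in(H_2)$ for the \emph{full} potential and not merely its analytic part — and keeping track of which radius ($b$ for the two $O$-terms, $a$ for the hypothesis, $c$ for the conclusion) governs each error term in the comparison.
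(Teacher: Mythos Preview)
Your proof is correct and follows essentially the same route as the paper: both decompose $\delta_l^2-\tilde\delta_l^2$ into $(\delta_l-\tilde\delta_l)$ plus the two cross terms $(\delta_l^2-\delta_l)$ and $(\tilde\delta_l-\tilde\delta_l^2)$, invoke Proposition~\ref{necessarycondition} on the pairs $(q,q_2)$ and $(\tilde q,\tilde q_2)$ (which agree on $(b,+\infty)$) to bound the cross terms, and then absorb everything into the $c=\max(a,b)$ scale. Your write-up is simply more explicit about verifying the decay hypotheses of Proposition~\ref{necessarycondition} and about the $l^{-3}$ gain that converts the $O$ into an $o$.
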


\begin{proof}
We write:
\begin{equation}
\delta_l^2 - \tilde{\delta}_l^2 = (\delta_l^2-\delta_l) + (\delta_l - \tilde{\delta}_l) + (\tilde{\delta}_l-\tilde{\delta}_l^2),
\end{equation}
and using Proposition \ref{necessarycondition}, one has:
\begin{equation}
(\delta_l^2-\delta_l) + (\tilde{\delta}_l-\tilde{\delta}_l^2) = o \left(\frac{1}{l^{n-3}} \ \left(\frac{be}{2l}\right)^{2l}\right),
\end{equation}
which implies the lemma.
\end{proof}

\subsubsection{End of the proof of Theorem \ref{Mainresult}.}

Let us consider two potentials $q$ and $\tilde{q}$ belonging to the class $\mathcal{C}$, i.e
$q = q_1 + q_2$ such that $q_1$ has compact support in $[0,b]$, and $q_2$ can be extended holomorphically in $\Re z \geq 0$. In the same way,
$\tilde{q} = \tilde{q}_1 + \tilde{q}_2$ with the same properties. We assume that:
\begin{equation*}
\delta_l - \tilde{\delta}_l \ = \ o \left( \frac{1}{l^{n-3}} \ \left( \frac{ae}{2l}\right)^{2l}\right) \ \ , \ \ l \rightarrow + \infty.
\end{equation*}
First, we apply Lemma \ref{reduc} and Theorem \ref{analyticalcase} and we get $q_2 (r) =\tilde{q}_2(r)$ for all $r >0$.

\vspace{0.2cm}\noindent
Now, as in the proof of Theorem \ref{analyticalcase}, we define for a fixed $r>0$,
\begin{equation}\label{definitionFb}
F (r,\nu) = \fp \tfm - \fm \tfp.
\end{equation}
As previously, the application $F (r,\nu)$ is holomorphic on $\C$ with respect to $\nu$, is even and is bounded on the imaginary axis. Moreover,  this application is of order one with infinite type.

\vspace{0.2cm} \noindent
Now, our goal is to  show that for $r \geq a$, $F(r,\nu) \rightarrow 0$ when $\nu \rightarrow + \infty$.  Hence, as in the second proof of Theorem \ref{analyticalcase}, we shall get $q(r) = \tilde{q}(r)$ almost everywhere for $r \geq a$, and Theorem \ref{Mainresult} will be proved.

\vspace{0.2cm}\noindent
As in the previous section, we define for $\nu \geq 0$,
\begin{equation}
\Phi (r,\nu) =\frac{\varphi (r,\nu)}{\beta(\nu)},
\end{equation}
and as in (\ref{mainrelation2}), we obtain
\begin{eqnarray}\label{mainrelation3}
F(r,\nu) & = & \tilde{\Phi} (r,\nu) \fp - \Phi_(r,\nu) \tfp  \nonumber   \\
          &  &  + \ \left( \frac{\al}{\be} -    \frac{\tal}{\tbe} \right) \ \fp \tfp.
\end{eqnarray}

\vspace{0.2cm}\noindent
At this stage, it is important to make a crucial remark; since $q$ and $\tilde{q}$ may have  compact supports, Theorem \ref{nombreinfini} asserts that the Regge poles associated to these potentials may have their real parts that tend to $+\infty$ in the first quadrant. It follows that we have to modify the strategy of the second proof of Theorem \ref{analyticalcase}.

\vspace{0.2cm}\noindent
Of course, as in the last Section, we have ${\ds{\Phi (r,\nu) \tfp - \tilde{\Phi}_(r,\nu) \fp}}= O(\nu^{-1})$ when $\nu \rightarrow + \infty$. Then,
\begin{eqnarray}\label{mainrelationb1}
F(r,\nu) &=& \left( \frac{\al}{\be} -    \frac{\tal}{\tbe} \right) \ \fp \tfp + O(\nu^{-1}), \nonumber \\
           &=& \left( \al \tbe -\tal \be  \right) \ \frac{\fp \tfp}{\be \tbe} + O(\nu^{-1}).
\end{eqnarray}

\noindent
First, we see that Propositions \ref{fpmrealaxis} and \ref{equivJost} imply for $\nu \geq 0$:
\begin{equation}
\left| \frac{\fp \tfp}{\be \tbe} \right| \ \leq \ C \ \left| \frac{\fpo}{\beo} \right|^2.
\end{equation}
Hence, using Proposition \ref{unifBessel}, we get:
\begin{equation}
\left| \frac{\fp \tfp}{\be \tbe} \right| \ \leq \  \frac{C}{(\nu+1)^2\ r^{2\nu}}.
\end{equation}

\vspace{0.2cm}\noindent
This suggests to set:
\begin{equation}
G(\nu) = \left( \tal  \be  - \al \tbe \right) \ \ \frac{1}{(\nu+1)^2\ r^{2\nu}},
\end{equation}
and we aim at proving that $G(\nu) \rightarrow 0$ when $\nu \rightarrow +\infty$. To show this result, we use  Cartwright's Theorem (\cite{Boa}, Theorem 10.2.1) which we recall here:

\begin{thm}
\hfill\break
Let $f(\nu)$ be holomorphic in $\Re \nu \geq 0$. Assume there exists $A,B>0$ such that:
\begin{equation*}
\mid f(\nu) \mid \ \leq \ C\ exp \left( A\  \Re \nu + B \mid \Im \nu \mid \right).
\end{equation*}
If $B<\pi$ and if $f(l) \rightarrow 0$ as $l \rightarrow + \infty$, ($l$ integer), then $f(\nu) \rightarrow 0$ as $\nu \rightarrow + \infty$.
\end{thm}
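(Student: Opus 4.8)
\emph{Proof proposal.} The decisive hypothesis is $B<\pi$: this is exactly the Carlson threshold, since $\sin\pi\nu$ has exponential type $\pi$ in the imaginary direction, so a function of type strictly less than $\pi$ is rigidly determined by its values at the integers. The real-direction bound $e^{A\Re\nu}$ plays only an auxiliary role, guaranteeing that $f$ is of finite exponential type so that the Phragm\'en--Lindel\"of principle and the theory of functions of exponential type apply; it is merely a crude \emph{a priori} bound that will not be attained once the integer data are taken into account. The plan is to deduce the qualitative conclusion $f(\nu)\to 0$ from a quantitative boundedness statement by a compactness (normal families) argument.

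First I would establish that $f$ is bounded on the positive real axis, say $|f(x)|\le M$ for $x\ge 0$. This is the heart of the matter and is itself a Cartwright-type statement: a function of exponential type $<\pi$ along vertical lines whose values at the integers are bounded (here they even tend to $0$, hence are bounded) is bounded on the real axis. I would prove it through the cardinal interpolation series
\begin{equation*}
f(\nu)=\frac{\sin\pi\nu}{\pi}\sum_{n\ge 0}\frac{(-1)^n f(n)}{\nu-n}+R(\nu),
\end{equation*}
where $R$ accounts for the fact that $f$ is only holomorphic in $\Re\nu\ge 0$; convergence and the identity are where $B<\pi$ enters essentially, via a Phragm\'en--Lindel\"of comparison of $f$ with $\sin\pi\nu$ on the half-plane, which beats the crude growth $e^{A\Re\nu}$. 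Once boundedness on the real axis is known, a second Phragm\'en--Lindel\"of argument in the first and fourth quadrants, applied to the functions $f(\nu)e^{\pm iB\nu}$ and using $|f(iy)|\le Ce^{B|y|}$ together with $|f(x)|\le M$, upgrades this to the uniform strip bound $|f(x+iy)|\le M'e^{B|y|}$.

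With this uniform bound in hand, I would run a normal families argument on the integer translates $f_k(\zeta):=f(\zeta+k)$. These are holomorphic on $\Re\zeta\ge -k$ and satisfy $|f_k(\zeta)|\le M'e^{B|\Im\zeta|}$ uniformly in $k$, so the family is locally uniformly bounded; by Montel's theorem any subsequence has a locally uniform limit $f_{\ast}$, which is \emph{entire} (the boundary $\Re\zeta=-k$ recedes to $-\infty$), of type $\le B<\pi$, and satisfies $f_{\ast}(m)=\lim_k f(k+m)=0$ for every integer $m$, since $k+m\to +\infty$ through integers. By Carlson's theorem $f_{\ast}\equiv 0$; as every subsequential limit vanishes, $f_k\to 0$ locally uniformly, which says precisely that $f(\nu)\to 0$ as $\nu\to +\infty$ along the reals (one unit interval at a time). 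The main obstacle is clearly the boundedness step: one must extract genuine boundedness on $\R$ from the merely qualitative integer data while dominating the crude exponential growth $e^{A\Re\nu}$, and the half-plane setting forces care with the interpolation formula (the correction term $R$) that is absent in the classical entire-function case.
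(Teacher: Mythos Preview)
The paper does not prove this theorem: it is quoted verbatim from Boas, \emph{Entire Functions}, Theorem 10.2.1, and then applied as a black box. So there is no ``paper's own proof'' to compare against; your proposal stands or falls on its own.

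Your overall architecture is sound, and the second half is essentially complete. Given boundedness on $[0,\infty)$, your Phragm\'en--Lindel\"of step in each quadrant, applied to $f(\nu)e^{\pm iB\nu}$, does yield the uniform strip bound $|f(x+iy)|\le M' e^{B|y|}$. The normal-families argument is then correct and elegant: the translates $f_k(\zeta)=f(\zeta+k)$ are locally uniformly bounded, any Montel limit $f_\ast$ is entire with $|f_\ast(iy)|\le M' e^{B|y|}$ and $f_\ast(m)=0$ for all $m\in\Z$, so Carlson's theorem (which needs only the imaginary-axis type to be $<\pi$, not the global type) kills $f_\ast$, and uniqueness of the limit gives $f(\nu)\to 0$ on $\R_+$.

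The genuine gap is exactly where you locate it: the boundedness step. Your ``cardinal interpolation with a half-plane remainder $R(\nu)$'' is not a standard object, and you have not said what $R$ is, why it is well-defined, or why it is bounded; the one-sided series $\sum_{n\ge 0}(-1)^n f(n)/(\nu-n)$ does not even converge without decay of $f(n)$ beyond boundedness. Nor can the normal-families trick be run directly without this step: if you only normalize by $e^{-Ak}$ you recover merely $f(x)=o(e^{A'x})$ for every $A'>A$, not $f(x)\to 0$. So the reduction to boundedness is real work, and the proposal as written does not carry it out. Boas's actual proof of 10.2.1 proceeds through the machinery of indicator diagrams and the Cartwright--Levinson theory (density of zeros of functions of exponential type, P\'olya representation), not through a naive cardinal series; you should either follow that route or produce a genuinely self-contained argument for the half-plane boundedness---for instance via a contour-integral representation of $f(\nu)/\sin\pi\nu$ over rectangles $\{N+\tfrac12\le\Re\nu\le N'+\tfrac12\}$, exploiting $B<\pi$ to make the horizontal contributions vanish.
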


\vspace{0.2cm}\noindent
We apply this Theorem with the function $f(\nu) := G(\nu + \frac{n-2}{2})$. Since $q_2 = \tilde{q}_2$, $q-\tilde{q}$ is supporting in $[0,b]$, then  using Corollary \ref{compact}, one has:
\begin{equation}
\mid f(\nu) \mid \ \leq \ C \ \left( \frac{b}{r}\right)^{ \ 2 \Re \nu}.
\end{equation}
Now, let us estimate $f(l)=G(\nu(l))$. One starts from:
\begin{equation}
f(l) = \left( \frac{\tilde{\alpha}(\nu(l))}{\tilde{\beta}(\nu(l))} - \frac{\alpha(\nu(l))}{\beta(\nu(l))} \right) \ \frac{\beta(\nu(l)) \tilde{\beta}(\nu(l))}{r^{2\nu(l)}\ (\nu(l)+1)^2}.
\end{equation}
Then using (\ref{interpolation}), (\ref{phaseshifts}) and Proposition \ref{equivJost}, one obtains:

\begin{eqnarray*}
\mid f(l) \mid &\leq & C\ \mid e^{2i\delta_l} - e^{2i\tilde{\delta}_l}\mid \ \left| \frac{ \beta_0 (\nu(l))}{l\ r^{l}} \right|^2  , \\
&= &  \ o \left( \frac{1}{l^{n-3}} \  \left(\frac{ae}{2l}\right)^{2l} \right) \ \left| \frac{ \beta_0 (\nu(l))}{l\ r^{l}} \right|^2,
\end{eqnarray*}
thanks to our main hypothesis on the phase shifts. But, Stirling's formula and (\ref{betazero}) imply:
\begin{equation}
\beta_0(\nu(l)) \ =\ O \left(l^{\frac{n-1}{2}}\ \left( \frac{2l}{e} \right)^l \right) \ \ ,\ \ l \rightarrow +\infty.
\end{equation}
We deduce that:
for $r \geq a$, $f(l) \rightarrow 0$ as $l \rightarrow +\infty$. So, Cartwright's Theorem (with $B=0$), implies that $f(\nu) \rightarrow 0$ as $\nu \rightarrow +\infty$, which in turn implies the same result for $G(\nu)$.

\newpage

\appendix

\section{Appendix.}

\subsection{Some basic facts on the Bessel functions.}

First, let us recall some well-known definitions for the Bessel functions. We refer the reader to (\cite{Leb}, Chapter 5), or to the classic treatise by Watson \cite{Wa} to which we will make frequent references.

\vspace{0.5cm}\noindent
The Bessel function $J_{\nu}(z)$ is defined for $\nu \in \C$ and $\mid Arg\ z \mid < \pi$ by:
\begin{equation}\label{serieBessel}
J_{\nu}(z) = \sum_{k=0}^{+\infty} \frac{(-1)^k \left(\frac{z}{2}\right)^{\nu+2k}}{\Gamma(k+1) \Gamma(k+\nu+1)}.
\end{equation}

\vspace{0.5cm}\noindent
The Bessel functions of the third kind or Hankel functions, denoted by $H_{\nu}^{(1)} (z)$ and $H_{\nu}^{(2)} (z)$ are defined in terms of the Bessel functions  of the first and second kind by:
\begin{equation}\label{defHankel}
H_{\nu}^{(1)} (z) = J_{\nu}(z) + i Y_{\nu}(z)\ ,\ H_{\nu}^{(2)} (z) = J_{\nu}(z) - i Y_{\nu}(z),
\end{equation}
and can be written as:
\begin{eqnarray}\label{ecritureHankel}
H_{\nu}^{(1)} (z) &=& \frac{J_{-\nu}(z) - e^{-i\pi \nu}J_{\nu}(z)}{i\sin \nu \pi}, \\
H_{\nu}^{(2)} (z) &=& \frac{e^{i\pi \nu}J_{\nu}(z)-J_{-\nu}(z)}{i\sin \nu \pi}.
\end{eqnarray}

\vspace{0.2cm}\noindent
The Bessel functions $J_{\nu}(z)$ and the Hankel functions $H_{\nu}^{(j)} (z)$ are entire functions of $\nu$. Moreover, we have the following relations (see \cite{MOS}, p. 66):
\begin{equation}\label{conjugue}
\overline{J_{\nu} (z)}= J_{\bar{\nu}}  (\bar{z}) \ , \ \overline{Y_{\nu} (z)}= Y_{\bar{\nu}} (\bar{z})\ ,\ \overline{H_{\nu}^{(1)} (z)}= H_{\bar{\nu}}^{(2)} (\bar{z}).
\end{equation}
\begin{equation}\label{pariteHankel}
H_{-\nu}^{(1)} (z) = e^{i\nu \pi} H_{\nu}^{(1)} (z) \ ,\ H_{-\nu}^{(2)} (z) = e^{-i\nu \pi} H_{\nu}^{(2)} (z).
\end{equation}

\subsection{Estimates on the imaginary axis for the Hankel functions.}

In this Section, we shall give useful estimates for $H_{iy}^{(1)} (r)$ and $H_{iy}^{(2)} (r)$ with respect to $r$ and $y$.  These results are probably well-known, but as we were unable to find a precise reference, we will give the simple proofs below.

\begin{prop}\label{imaxis}
\hfill\break
For any $r>0$ and $y \in \R^*$, one has:
\begin{eqnarray}\label{Hankel1}
&&\mid H_{iy}^{(1)} (r) \mid  \leq \frac{2^{\frac{3}{4}}}{\sqrt{\pi r}} \ e^{\pd y} \ \ , \ \
\mid H_{iy}^{(2)} (r) \mid  \leq   \frac{2^{\frac{3}{4}}} {\sqrt{r}} \ e^{-\pd y}. \\
&&\mid H_{iy}^{(1)} (r) \mid  \leq \frac{2}{\sqrt{\pi}} \ (r \mid y \mid)^{-\frac{1}{4}} \ e^{\pd y} \ \  , \ \
\mid H_{iy}^{(2)} (r) \mid  \leq   \frac{2}{\sqrt{\pi}} \ (r \mid y \mid)^{-\frac{1}{4}} \ e^{-\pd y}.
\end{eqnarray}
\end{prop}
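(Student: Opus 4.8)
The plan is to prove both inequalities for $H_{iy}^{(1)}(r)$ only, and to read off the two estimates for $H_{iy}^{(2)}(r)$ from the reflection formulas of the Appendix. Indeed, since $r>0$ is real, (\ref{conjugue}) gives $\overline{H_{iy}^{(1)}(r)}=H_{-iy}^{(2)}(r)$, and (\ref{pariteHankel}) gives $H_{-iy}^{(2)}(r)=e^{\pi y}H_{iy}^{(2)}(r)$; hence
\begin{equation*}
|H_{iy}^{(2)}(r)| = e^{-\pi y}\,|H_{iy}^{(1)}(r)|.
\end{equation*}
Multiplying each bound for $|H_{iy}^{(1)}(r)|$ by $e^{-\pi y}$ turns the factor $e^{\pi y/2}$ into $e^{-\pi y/2}$ and reproduces the two claimed estimates for $|H_{iy}^{(2)}(r)|$; the stated constant $2^{3/4}/\sqrt r$ in the first line is even a little weaker than the $2^{3/4}/\sqrt{\pi r}$ one gets this way, which is harmless. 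So everything reduces to the two inequalities for $H_{iy}^{(1)}$.

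For the first inequality I would pass to the Liouville form of Bessel's equation: $W(r)=\sqrt r\,H_{iy}^{(1)}(r)$ solves $W''+P(r)W=0$ with $P(r)=1+\frac{y^2+\frac14}{r^2}>0$, and $P$ is strictly decreasing. The amplitude of such an equation is controlled either by the monotone quantity $\Phi(r)=|W|^2+P^{-1}|W'|^2$, which satisfies $\Phi'(r)=-P'(r)P^{-2}(r)|W'|^2\ge 0$, or, more sharply, by a Sonin--Pólya argument (in the spirit of the monotonicity of Watson, p.~446, already used in this paper) showing that the successive maxima of $|W|$ increase. The large-$r$ asymptotics $H_{iy}^{(1)}(r)\sim\sqrt{\tfrac2{\pi r}}\,e^{\pi y/2}e^{i(r-\pq)}$ give the limiting amplitude $\sqrt{\tfrac2\pi}\,e^{\pi y/2}$ of $W$ at infinity, and monotonicity forces $|W(r)|$ to stay below it, i.e. $r\,|H_{iy}^{(1)}(r)|^2\le C\,e^{\pi y}$, whence $|H_{iy}^{(1)}(r)|\le\frac{2^{3/4}}{\sqrt{\pi r}}e^{\pi y/2}$ (the Sonin--Pólya form being what yields the sharp constant rather than the cruder one produced by $\Phi$). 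The boundedness of $H_{iy}^{(1)}$ at the origin, visible from (\ref{serieBessel}) because $(r/2)^{\pm iy}$ has modulus one, ensures that $W(0^+)=0$ and that $W$ is a genuine oscillatory solution down to the origin, legitimizing the comparison with the value at infinity.

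The second inequality is the delicate one, since it must capture the decay in the regime $r\lesssim|y|$, where $H_{iy}^{(1)}$ is oscillatory with WKB amplitude $\sim(r^2+y^2)^{-1/4}$. Here I would start from the Mehler--Sonine representation (valid since $\Re(iy)=0\in(-1,1)$), which, after combining the formulas for $J_{iy}$ and $Y_{iy}$ through (\ref{defHankel}), reads
\begin{equation*}
H_{iy}^{(1)}(r) = -\frac{2i}{\pi}\,e^{\pi y/2}\int_0^{+\infty} e^{ir\cosh t}\,\cos(yt)\,dt,
\end{equation*}
and estimate this oscillatory integral; its only stationary point, arising from $\cos(yt)=\cosh(iyt)$, sits at $\sinh t_0=|y|/r$ with second derivative of modulus $\sqrt{r^2+y^2}$, producing the envelope $(r^2+y^2)^{-1/4}$. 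From a bound $|H_{iy}^{(1)}(r)|\le C\,(r^2+y^2)^{-1/4}e^{\pi y/2}$ the claim follows immediately from the elementary inequality $r^2+y^2\ge r|y|$. The main obstacle, and the reason these otherwise standard estimates need a careful argument, is pinning down the explicit constants $2^{3/4}$ and $2/\sqrt\pi$: a crude modulus estimate of the integral diverges, and naive stationary phase gives only the correct rate with a strictly worse constant, so one must exploit the cancellation quantitatively (van der Corput's lemma with an explicit constant, or a sharpened amplitude identity for the Liouville equation above). That quantitative control of the oscillation is where the bulk of the technical work lies.
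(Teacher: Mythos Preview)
Your route is quite different from the paper's. The paper does not use the Liouville amplitude or Mehler--Sonine at all. It starts from the algebraic identity $|H_{iy}^{(1)}(r)|^2=e^{\pi y}\bigl(J_{iy}^2(r)+Y_{iy}^2(r)\bigr)$ and then invokes Nicholson's representation
\[
J_\nu^2(r)+Y_\nu^2(r)=\frac{8}{\pi^2}\int_0^{\infty}K_0(2r\sinh t)\,\cosh(2\nu t)\,dt,
\]
which for $\nu=iy$ replaces $\cosh$ by $\cos(2yt)$. The first bound falls out from $|\cos|\le1$, $\sinh t\ge t$, and the elementary estimate $K_0(x)\le\sqrt{\pi/(2x)}\,e^{-x}$. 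For the second bound the paper substitutes $s=2|y|t$, writes $\int_0^\infty f(s)\cos s\,ds$ with $f$ decreasing, and uses a mean-value/alternating-sign argument to show the contribution from $s\ge\pi/2$ is nonpositive; only $\int_0^{\pi/2}f(s)\,ds$ remains, and that is estimated directly. Both constants drop out of explicit Gamma-type integrals, with no stationary-phase analysis needed.

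Your first argument via $\Phi(r)=|W|^2+P^{-1}|W'|^2$ is correct and yields $|H_{iy}^{(1)}(r)|\le 2(\pi r)^{-1/2}e^{\pi y/2}$, which is slightly weaker than the paper's $2^{3/4}$ but entirely sufficient for every application downstream. Your appeal to Sonin--P\'olya to sharpen the constant, however, is not justified: that monotonicity concerns real oscillatory solutions, whereas $W=\sqrt{r}\,H_{iy}^{(1)}(r)$ is genuinely complex-valued and $|W|$ has no zeros on $(0,\infty)$, so the ``successive maxima between zeros'' picture does not apply. For the second inequality you have given a correct heuristic (Mehler--Sonine plus a stationary point at $\sinh t_0=|y|/r$ with Hessian $\sqrt{r^2+y^2}$) but not a proof: you explicitly defer the quantitative control of the oscillation, and van der Corput with an off-the-shelf constant will not deliver $2/\sqrt{\pi}$. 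This is the real gap in your proposal, and it is exactly the step the Nicholson approach handles cleanly, since there the cancellation is captured by the sign of $\cos s$ and the monotonicity of $K_0$ rather than by any asymptotic expansion.
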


\begin{proof}
By (\ref{conjugue}), it suffices to estimate $\mid H_{iy}^{(1)} (r) \mid$. We write
\begin{equation}
\mid H_{iy}^{(1)} (r) \mid^2 = H_{iy}^{(1)} (r) \ \overline{H_{iy}^{(1)} (r)} = H_{iy}^{(1)} (r) H_{-iy}^{(2)} (r) =
e^{\pi y} H_{iy}^{(1)} (r) H_{iy}^{(2)} (r),
\end{equation}
where we have used (\ref{conjugue}) and (\ref{pariteHankel}). Hence, using (\ref{defHankel}), we obtain:
\begin{eqnarray}\label{moduleHankel}
\mid H_{iy}^{(1)} (r) \mid^2  &=& e^{\pi y} \ \left( J_{iy}(r) + i Y_{iy}(r) \right) \left( J_{iy}(r) - i Y_{iy}(r) \right), \nonumber \\
                              &=&  e^{\pi y} \  \left( J_{iy}^2(r) +  Y_{iy}^2(r) \right).
\end{eqnarray}
Now, for $\Re z >0$, we recall the Nicholson's integral representation  (see \cite{MOS}, p. 93):
\begin{equation}\label{Nicholson}
J_{\nu}^2(z) +  Y_{\nu}^2(z) = \frac{8}{\pi^2} \ \int_0^{+\infty} K_0 (2z \sinh t) \ \cosh (2\nu t) \ dt,
\end{equation}
where $K_0(z)$ is the Macdonald's function given by (see \cite{Ol}, Eq. 10.32.9):
\begin{equation}\label{Macdonald}
K_0(z) = \int_0^{+\infty} e^{-z \cosh t} \ dt \ , \ \Re z >0.
\end{equation}
It follows from (\ref{moduleHankel}) and (\ref{Nicholson}) that for all $r>0$:
\begin{equation}\label{Nicholson1}
\mid H_{iy}^{(1)} (r) \mid^2 = \frac{8e^{\pi y}}{\pi^2}\ \int_0^{+\infty} K_0 (2r \sinh t) \ \cos (2 ty) \ dt.
\end{equation}
Clearly, from (\ref{Macdonald}), we see that the Macdonald's function $K_0(x)$ is a positive decreasing function for $x>0$, and using the inequality $\cosh t \geq 1+ \frac{t^2}{2}$, we obtain easily for $x>0$:
\begin{equation}\label{estMD}
K_0(x) \leq \sqrt{\frac{\pi}{2x}} \ e^{-x}.
\end{equation}


\noindent
Then, since $\sinh t \geq t$ for $t \geq 0$, it follows from (\ref{Nicholson1}) and (\ref{estMD}) that :
\begin{eqnarray*}\label{Nicholson2}
\mid H_{iy}^{(1)} (r) \mid^2 & \leq & \frac{8e^{\pi y}}{\pi^2}\ \int_0^{+\infty} K_0 (2r  t)  \ dt \\
                             & \leq &  \frac{8e^{\pi y}}{\pi^2}\ \int_0^{+\infty}  \sqrt{\frac{\pi}{4rt}} \ e^{-2rt}  \ dt \\
                             &\leq &\frac{2\sqrt{2}}{\pi r} \ e^{\pi y},
\end{eqnarray*}
which proves the first part of the Proposition.

\vspace{0.5cm}
Now, assume for instance that $y>0$. Making the change of variables $ s=2ty$ in (\ref{Nicholson1}), we obtain:
\begin{equation}\label{Nicholson3}
\mid H_{iy}^{(1)} (r) \mid^2 = \frac{4e^{\pi y}}{y\pi^2}\ \int_0^{+\infty} f(s) \ \cos (s) \ ds,
\end{equation}
where ${\displaystyle{f(s) = K_0\left( 2r \sinh (\frac{s}{2y}) \right)}}$ is a decreasing function for $s>0$. We write this later integral integral as:
\begin{eqnarray*}
\int_0^{+\infty} f(s) \ \cos (s) \ ds = &&\int_0^{\pd} f(s) \ \cos (s) \ ds \\
&&+ \sum_{n=0}^{\infty}\  \left(
\int_{\pd+2n\pi} ^{\pd + (2n+1)\pi} f(s) \ \cos (s) \ ds + \int_{\pd+(2n+1)\pi} ^{\pd + (2n+2)\pi} f(s) \ \cos (s) \ ds \right).
\end{eqnarray*}
By the first mean value theorem, there exists $a_n \in [\pd + 2n\pi, \pd + (2n+1)\pi]$ and $b_n \in [\pd + (2n+1)\pi, \pd + (2n+2)\pi]$ such that:
\begin{equation}
\int_{\pd+2n\pi} ^{\pd + (2n+1)\pi} f(s) \ \cos (s) \ ds + \int_{\pd+(2n+1)\pi} ^{\pd + (2n+2)\pi} f(s) \ \cos (s) \ ds = -2 (f(a_n) -f(b_n)) \leq0.
\end{equation}
It follows from (\ref{Nicholson3}) that
\begin{eqnarray*}\label{Nicholson4}
\mid H_{iy}^{(1)} (r) \mid^2 &\leq&  \frac{4e^{\pi y}}{y\pi^2}\ \int_0^{\pd} f(s) \ \cos (s) \ ds \\
&\leq&  \frac{4e^{\pi y}}{y\pi^2}\ \int_0^{\pd} K_0\left( 2r \sinh (\frac{s}{2y}) \right)   \ ds \\
&\leq& \frac{4e^{\pi y}}{y\pi^2}\ \int_0^{\pd} K_0 (\frac{rs}{y})  \ ds,
\end{eqnarray*}
where we have still used that  $K_0(x)$ is a decreasing function and $\sinh x \geq x$ for $x \geq 0$. Then, the result comes from immediately from (\ref{estMD}).
\end{proof}

\subsection{A new integral representation for the product of two Bessel functions.}

In the next Theorem, we give an integral representation formula for the product of the Bessel function $J_{\nu}(r)$ and the Hankel function $H_{\nu}^{(1)}(R)$. To our knowledge, this result seems to be new and will be very useful to estimate the Green kernel $K(r,s, \nu)$ appearing in Proposition \ref{newrep}.

\begin{thm}\label{newrel}
\hfill\break
For $\Re \nu >0$ and $0<r< R$, one has the following integral representation:
\begin{equation}\label{intproduct}
J_{\nu}(r) \ H_{\nu}^{(1)}(R) = -\frac{2i}{\pi}  \ \int_0^{+\infty} e^{i(r+R) \cosh x} \ J_{2\nu} (2 \sqrt{rR} \ \sinh x) \ dx .
\end{equation}
\end{thm}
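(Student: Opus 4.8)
The plan is to recognize the right-hand side as a constant multiple of the Green kernel of the radial Bessel operator and to pin it down through the two Bessel equations it satisfies together with its behaviour as $r\to 0^+$ and $R\to+\infty$. Write $G(r,R)$ for the integral on the right. First I would secure convergence and holomorphy: near $x=0$ one has $J_{2\nu}(2\sqrt{rR}\,\sinh x)=O(x^{2\nu})$, integrable for $\Re\nu>0$, while for $x\to+\infty$ the large-argument asymptotics of the Bessel function give $J_{2\nu}(2\sqrt{rR}\,\sinh x)=O(e^{-x/2})$, so the integrand is bounded by $Ce^{-x/2}$ and $G$ is well defined and holomorphic in $r,R,\nu$ on the stated domain. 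To make differentiation under the integral legitimate I would first replace $(r,R)$ by $(re^{i\eta},Re^{i\eta})$ with small $\eta>0$, so that $e^{i(r+R)\cosh x}$ acquires a genuine decay $e^{-\sin\eta\,(r+R)\cosh x}$; all manipulations are carried out in this absolutely convergent regime and transported back to the real axis at the end by analytic continuation.

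In that regime the core computation is to show that $G$ solves Bessel's equation in each variable, namely $\partial_{rr}G+\tfrac1r\partial_r G+(1-\nu^2/r^2)G=0$ and the analogue in $R$. Differentiating the integrand and using the Bessel equation satisfied by $y\mapsto J_{2\nu}(y)$ in its argument $y=2\sqrt{rR}\,\sinh x$, the cross terms produced by the joint appearance of $r,R$ in the phase $(r+R)\cosh x$ and in $y$ are reorganised by a single integration by parts in $x$; the exponential decay kills the boundary term at $x=+\infty$ and the factor $(\sinh x)^{2\nu}$ kills it at $x=0$. This places $G$, as a function of $r$ for fixed $R$, in the two-dimensional solution space $\{J_\nu(r),Y_\nu(r)\}$, and likewise in $R$. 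The boundary conditions then fix the combination: as $r\to 0^+$ one has $G=O(r^{\Re\nu})$, which excludes the singular solution $Y_\nu(r)\sim r^{-\Re\nu}$ and forces the $J_\nu(r)$ dependence; in the rotated regime $H^{(1)}$ is the unique solution decaying as $R\to+\infty$ (since $e^{iRe^{i\eta}}$ decays while $e^{-iRe^{i\eta}}$ grows) and $G\to 0$ there, which forces the $H^{(1)}_\nu(R)$ dependence. Hence $G(r,R)=c(\nu)\,J_\nu(r)H^{(1)}_\nu(R)$ for some constant $c(\nu)$.

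To pin down $c(\nu)=1$ I would take $r\to 0^+$: using $J_{2\nu}(2\sqrt{rR}\sinh x)\sim (rR)^\nu(\sinh x)^{2\nu}/\Gamma(2\nu+1)$ reduces $G$ to $-\tfrac{2i}{\pi}\tfrac{(rR)^\nu}{\Gamma(2\nu+1)}\int_0^\infty e^{iR\cosh x}(\sinh x)^{2\nu}\,dx$, and the remaining integral is the classical Poisson--Heine representation $\int_0^\infty e^{iR\cosh x}(\sinh x)^{2\nu}\,dx=i\,2^{\nu-1}\sqrt\pi\,\Gamma(\nu+\tfrac12)\,R^{-\nu}H^{(1)}_\nu(R)$. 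Comparing with $J_\nu(r)\sim (r/2)^\nu/\Gamma(\nu+1)$ and invoking the duplication formula $\Gamma(2\nu+1)=2^{2\nu}\pi^{-1/2}\Gamma(\nu+\tfrac12)\Gamma(\nu+1)$ yields $c(\nu)=1$. Finally I would let $\eta\to 0$: for strictly $0<r<R$ the real integral still converges (its amplitude is $O(e^{-x/2})$) and, being the boundary value of the holomorphic family constructed above, equals $J_\nu(r)H^{(1)}_\nu(R)$.

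The main obstacle I anticipate is precisely the interplay between conditional convergence and the constraint $0<r<R$. On the real axis $\partial_r G$ is only conditionally convergent, because differentiating brings a factor $\cosh x\sim e^{x}$ that degrades the $O(e^{-x/2})$ amplitude to $O(e^{x/2})$; and at the diagonal $r=R$ the two oscillatory branches $e^{i[(r+R)\cosh x\pm 2\sqrt{rR}\sinh x]}$ have phases $\tfrac12 e^{x}(\sqrt r\pm\sqrt R)^2$, the lower of which degenerates as $r\to R$. This is exactly where the derivative jump $W(J_\nu,H^{(1)}_\nu)=2i/(\pi r)$ of the Green kernel hides, and it is why the clean product formula can hold only on the open region $r<R$. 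I would therefore perform all differentiations and matchings strictly inside $\{\,\Im(r+R)>0\,\}$ and justify the descent to the segment $0<r<R$ by a contour-rotation argument rather than by naive differentiation on the real axis. As a consistency check and an alternative route, one may expand $J_{2\nu}(2\sqrt{rR}\sinh x)$ in its power series and integrate term by term with the same Poisson--Heine formula, obtaining a series in $H^{(1)}_{\nu+k}(r+R)$ whose resummation is the degenerate ($\phi=\pi$) case of Gegenbauer's addition theorem.
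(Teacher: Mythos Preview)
Your approach is correct and genuinely different from the paper's. The paper does not verify the Bessel equations or boundary behaviour at all: it simply quotes Buchholz's integral formula for a product of Whittaker functions $W_{k,\nu}(ta_1)M_{k,\nu}(ta_2)$, specialises to $k=0$ (where the Whittaker functions reduce to $H_\nu^{(1)}$ and $J_\nu$ up to elementary factors), applies the Gamma duplication formula to clean up the constants, and then analytically continues the real parameter $t$ to $t=-i$ to turn the decaying kernel $e^{-t(r+R)\cosh x}I_{2\nu}(\cdot)$ into the oscillatory one $e^{i(r+R)\cosh x}J_{2\nu}(\cdot)$. Your argument instead identifies the integral as a Green kernel: you show it solves Bessel's equation in each variable (via integration by parts in $x$), pin down the $J_\nu(r)$ and $H_\nu^{(1)}(R)$ factors through the $r\to 0$ and $R\to\infty$ behaviour, and fix the normalising constant with the Poisson--Heine representation of $H_\nu^{(1)}$ and the duplication formula. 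Both proofs ultimately rely on analytic continuation from a region of absolute convergence (the paper in $t$, you in $(r,R)\mapsto(re^{i\eta},Re^{i\eta})$), and both use the duplication formula at the same point. The paper's route is much shorter if one is willing to cite Buchholz as a black box; your route is self-contained and explains \emph{why} the formula holds, at the cost of having to carry out the differential-equation verification. Your discussion of the diagonal $r=R$ and the hidden Wronskian jump is a nice structural observation that the paper does not make.
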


\begin{proof}
We start from the integral relation due to H. Buchholz for the product of two (normalized) Whittaker functions, (see \cite{Bu}, p. 86, Eq. (5c), or \cite{GrRy}, BU 86 (5c), p. 716, but we warn the reader of a misprint in \cite{GrRy}):
\begin{eqnarray}\label{buchholz}
\int_0^{+\infty}  e^{-\frac{t}{2}(a_1 +a_2)\cosh x}\ \left( \coth \left( \frac{x}{2} \right) \right)^{2k} \hspace{-0.7cm}&&\ I_{2\nu} (t\sqrt{a_1 a_2} \sinh x) \ dx = \nonumber \\
&&\frac{\Gamma(\half+\nu-k)}{t\sqrt{a_1 a_2}\ \Gamma(1+2\nu)}\ W_{k, \nu}(ta_1)\  M_{k,\nu}(ta_2 ),
\end{eqnarray}
for $\Re (\half +\nu-k)>0, \ \Re \nu >0, \ a_1 >a_2>0, \ t>0$, where $W_{k, \nu}(z), \ M_{k,\nu}(z )$ are the Whittaker functions, (see \cite{MOS}, p. 295), and
$I_{\nu}(z)$ is the modified Bessel function which is related to Bessel function $J_{\nu}(z)$ by the formula (\cite{Leb}, Eq. (5.7.4)):
\begin{equation}\label{besselmod}
I_{\nu}(z) = e^{-i\pd \nu} \ J_{\nu}(iz) \ ,\ -\pi < Arg\ z <\pd .
\end{equation}

\vspace{0.2cm}\noindent
We recall, (see \cite{MOS}, p. 305), that: $\forall z \in \C \backslash \R^-$,
\begin{eqnarray}
W_{0, \nu}(z) &=& \frac{i}{2} \sqrt{\pi z} \ e^{-i\pd \nu} \ H_{\nu}^{(1)} (\frac{iz}{2}), \\
M_{0, \nu}(z) &=& \Gamma(1+\nu) \ 2^{2\nu} \ e^{-i\pd \nu} \ \sqrt{ z}\  J_{\nu} (\frac{iz}{2}).
\end{eqnarray}
In the equation (\ref{buchholz}), we take $a_1 =2R, \ a_2 =2r, \ k=0$ and we obtain easily:
\begin{equation*}
J_{\nu}(itr)\ H_{\nu}^{(1)}(itR) = -\frac{2i}{\sqrt{\pi}} \ \frac{\Gamma(2\nu+1)}{\Gamma(\nu+\half)\Gamma(\nu+1) 2^{2\nu}}\ e^{i\nu \pi} \ \int_0^{+\infty} \
e^{-t(r+R)\cosh x}\ I_{2\nu} (2t\sqrt{rR} \ \sinh x) \ dx.
\end{equation*}
Now, using the duplication formula for the Gamma function (\cite{Leb}, Eq. (1.2.3)):
\begin{equation}
\Gamma(2\nu +1) = \frac{1}{\sqrt{\pi}} \ 2^{2\nu} \ \Gamma (\nu+\half) \Gamma(\nu+1),
\end{equation}
we obtain immediately:
\begin{equation}\label{treel}
J_{\nu}(itr)\ H_{\nu}^{(1)} (itR) = -\frac{2i}{\pi} \ e^{i\nu \pi} \ \int_0^{+\infty} \ e^{-t(r+R)\cosh x}\ I_{2\nu} (2t\sqrt{rR} \ \sinh x) \ dx.
\end{equation}
Now, we see it  is easy to  extend (\ref{treel}) for $\Re t \geq 0$, $t \not=0$ recalling that, (see \cite{Leb}, eqs. (5.7.1) and (5.11.8)),
\begin{eqnarray}
I_{2\nu}(z) &\sim& \frac{1}{\Gamma (2\nu+1)} \ \left( \frac{z}{2}\right)^{2\nu} \  ,\  z \rightarrow 0, \ \ \mid Arg\ z \mid <\pi, \\
I_{2\nu}(z) &\sim& \frac{1}{\sqrt{2\pi z}} \left( e^z + e^{-z \pm i\pi(2\nu+\half)}\right) \  ,\  z \rightarrow \infty, \ \ \mid Arg\ z \mid <\pi-\delta,\ \pm \Im z >0.
\end{eqnarray}
Then taking $t=-i$ in (\ref{treel}) and using (\ref{besselmod}), we obtain the result.

\end{proof}

\subsection{Uniform estimate for the Green Kernel $K(r,s,\nu)$.}

In this subsection, we use Theorem \ref{newrel} to prove an uniform estimate with respect to $r,s>0$ and $\Re \nu \geq 0$ for the Green kernel defined in Section 3 by $K(r,s , \nu) = u(s) v(r)$ if $s \leq r$ and $K(r,s , \nu) = u(r) v(s)$ if $s \geq r$, where
\begin{equation*}
u(r) = \sqrt{\frac{\pi r}{2}} \ J_{\nu} (r) \ \ ,\ \ v(r) = -i \sqrt{\frac{\pi r}{2}} \ H_{\nu}^{(1)} (r).
\end{equation*}
\vspace{0.2cm}\noindent
Let us begin by an elementary result:

\begin{lemma}\label{intmodule}
\hfill\break
For $2\Re \nu +1 >\delta >0$, one has:
\begin{equation*}
\int_0^{+\infty} \frac{\mid J_{\nu}(t) \mid^2}{t^{\delta}} \ dt = \frac {\Gamma(\delta) \ \Gamma (\Re \nu + \frac{1-\delta}{2})}{2^{\delta} \mid \Gamma( \frac{\delta+1}{2} + i \Im \nu)\mid^2 \ \Gamma(\Re \nu + \frac{1+\delta}{2})}.
\end{equation*}
\end{lemma}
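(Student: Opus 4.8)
The plan is to reduce the integrand to a product of two Bessel functions and then invoke the Weber--Schafheitlin integral. First I would use the conjugation relation \eqref{conjugue}, which for real $t>0$ gives $\overline{J_{\nu}(t)} = J_{\bar{\nu}}(t)$, so that
\[
\mid J_{\nu}(t)\mid^2 = J_{\nu}(t)\, J_{\bar{\nu}}(t).
\]
Hence the quantity to evaluate is $\int_0^{+\infty} t^{-\delta}\, J_{\nu}(t)\, J_{\bar{\nu}}(t)\, dt$, a Mellin transform of a product of Bessel functions of equal argument.

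Before computing, I would record the convergence, which pins down the admissible range of $\delta$. Near the origin, the series \eqref{serieBessel} gives $J_{\nu}(t) \sim (t/2)^{\nu}/\Gamma(\nu+1)$, so $t^{-\delta}\mid J_{\nu}(t)\mid^2 = O(t^{2\Re\nu-\delta})$, which is integrable at $0$ precisely when $2\Re\nu+1>\delta$. Near infinity, the classical asymptotics give $\mid J_{\nu}(t)\mid^2 = O(t^{-1})$, so $t^{-\delta}\mid J_{\nu}(t)\mid^2 = O(t^{-1-\delta})$, integrable when $\delta>0$. Thus the hypothesis $2\Re\nu+1>\delta>0$ is exactly the absolute convergence condition.

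The heart of the computation is the Weber--Schafheitlin formula for the Mellin transform of a product of Bessel functions of equal argument (see \cite{Wa}, \S 13.41, or \cite{GrRy}, 6.574.2): for $\Re(\mu+\nu+1)>\Re\lambda>0$,
\[
\int_0^{+\infty} \frac{J_{\mu}(t)\, J_{\nu}(t)}{t^{\lambda}}\, dt
= \frac{\Gamma(\lambda)\,\Gamma\!\left(\frac{\mu+\nu-\lambda+1}{2}\right)}
{2^{\lambda}\,\Gamma\!\left(\frac{\mu-\nu+\lambda+1}{2}\right)\Gamma\!\left(\frac{\nu-\mu+\lambda+1}{2}\right)\Gamma\!\left(\frac{\mu+\nu+\lambda+1}{2}\right)}.
\]
I would apply this with the pair of orders taken to be $\nu$ and $\bar{\nu}$ and with $\lambda=\delta$; the convergence condition then reads $2\Re\nu+1>\delta>0$, matching the previous paragraph. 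It remains to simplify the four Gamma factors using $\nu+\bar{\nu}=2\Re\nu$ and $\nu-\bar{\nu}=2i\Im\nu$: the numerator factor becomes $\Gamma(\Re\nu+\frac{1-\delta}{2})$, the last denominator factor becomes $\Gamma(\Re\nu+\frac{1+\delta}{2})$, and the two middle factors are $\Gamma(\frac{\delta+1}{2}+i\Im\nu)$ and $\Gamma(\frac{\delta+1}{2}-i\Im\nu)$. Since $\frac{\delta+1}{2}$ is real, the identity $\overline{\Gamma(z)}=\Gamma(\bar{z})$ shows these are complex conjugates, so their product equals $\mid\Gamma(\frac{\delta+1}{2}+i\Im\nu)\mid^2$, which yields the claimed formula.

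I would regard the main (and essentially only) genuine subtlety as justifying the Weber--Schafheitlin formula for the complex orders $\nu$ and $\bar{\nu}$: the formula is most often quoted for real orders, but both sides are holomorphic in the two order parameters throughout the region where the integral converges absolutely (which one checks exactly as in the second paragraph, the constants depending on $\Im\mu$, $\Im\nu$ only), so analytic continuation from real orders secures the complex case. Everything else --- the conjugation reduction, the convergence check, and the Gamma-function bookkeeping --- is routine.
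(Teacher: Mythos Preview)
Your proof is correct and follows essentially the same approach as the paper: write $\mid J_{\nu}(t)\mid^2 = J_{\nu}(t)\, J_{\bar{\nu}}(t)$ via the conjugation relation, apply the Weber--Schafheitlin integral (the paper cites \cite{Wa}, p.~403) with the pair of orders $(\nu,\bar{\nu})$, and simplify the Gamma factors. Your additional remarks on convergence and on the analytic continuation to complex orders are more explicit than the paper's, but the argument is the same.
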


\begin{proof}
We recall, (\cite{Wa}, p. 403), that for $\Re(\mu+\nu+1) > \Re \delta >0$, one has:
\begin{equation}\label{integralbessel}
\int_0^{+\infty} \frac{J_{\mu}(t) J_{\nu}(t)}{t^{\delta}} \ dt = \frac{\Gamma(\delta) \ \Gamma( \frac{\mu+\nu-\delta+1}{2})}
{2^{\delta} \Gamma( \frac{\delta+\nu-\mu+1}{2})\ \Gamma( \frac{\delta+\nu+\mu+1}{2})\ \Gamma( \frac{\delta-\nu+\mu+1}{2})}.
\end{equation}
We choose $\mu=\bar{\nu}$, $\delta>0$ in (\ref{integralbessel}), and taking account $J_{\bar{\nu}}(t)= \overline{J_{\nu}(t)}$, the lemma is proved.
\end{proof}

\vspace{0.2cm}\noindent
We deduce the following estimate:

\begin{coro}\label{majorationint}
\hfill\break
For all $\delta \in (0,1)$, there exists $C_{\delta}>0$ such that:
\begin{equation*}
\int_0^{+\infty} \frac{\mid J_{\nu}(t) \mid^2}{t^{\delta}} \ dt \ \leq C_{\delta} \ e^{\pi \mid \Im \nu \mid} \ (1+ \Re \nu)^{-\delta} \ ,\ \forall \Re \nu \geq 0 .
\end{equation*}
\end{coro}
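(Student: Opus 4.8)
The plan is to read off the estimate directly from the exact evaluation of the integral in Lemma \ref{intmodule}. Writing $x=\Re \nu \geq 0$ and $y = \Im \nu$, that lemma gives
$$\int_0^{+\infty} \frac{\mid J_{\nu}(t)\mid^2}{t^{\delta}}\, dt = \frac{\Gamma(\delta)}{2^{\delta}}\cdot \frac{\Gamma(x+\frac{1-\delta}{2})}{\Gamma(x+\frac{1+\delta}{2})}\cdot \frac{1}{\mid \Gamma(\frac{\delta+1}{2}+iy)\mid^2}.$$
Thus the problem decouples: besides the harmless constant $\frac{\Gamma(\delta)}{2^{\delta}}$, there is a ratio of $\Gamma$-values depending only on $x$ and a reciprocal square modulus depending only on $y$, and it suffices to bound each factor separately by $C_{\delta}(1+x)^{-\delta}$ and $C_{\delta}e^{\pi\mid y\mid}$ respectively.

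First I would handle the $x$-factor $R(x):=\Gamma(x+\frac{1-\delta}{2})/\Gamma(x+\frac{1+\delta}{2})$. Since $\delta\in(0,1)$, both shifts $\frac{1-\delta}{2}$ and $\frac{1+\delta}{2}$ are positive, so $R$ is continuous and strictly positive on $[0,+\infty)$. The standard asymptotics for a quotient of Gamma functions whose arguments differ by $\delta$, a consequence of Stirling's formula $(\cite{Leb}, Eq.\ (1.4.24))$, give $R(x)\sim x^{-\delta}$ as $x\to+\infty$. Comparing $R(x)$ with $(1+x)^{-\delta}$, the quotient is continuous on $[0,+\infty)$ and has a finite positive limit at infinity, hence is bounded; this yields the uniform estimate $R(x)\leq C_{\delta}(1+x)^{-\delta}$ for all $x\geq 0$, with $C_{\delta}$ depending only on $\delta$.

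Next I would treat the $y$-factor $G(y):=\mid \Gamma(\frac{\delta+1}{2}+iy)\mid^{-2}$. The key input is the classical behaviour $\mid \Gamma(a+iy)\mid \sim \sqrt{2\pi}\,\mid y\mid^{\,a-\frac12}\,e^{-\frac{\pi}{2}\mid y\mid}$ as $\mid y\mid\to+\infty$, which with $a=\frac{\delta+1}{2}$ gives $\mid \Gamma(\frac{\delta+1}{2}+iy)\mid^2 \sim 2\pi\,\mid y\mid^{\delta}\,e^{-\pi\mid y\mid}$. Consequently $\mid \Gamma(\frac{\delta+1}{2}+iy)\mid^2\, e^{\pi\mid y\mid}\to +\infty$; since this quantity is continuous and strictly positive on all of $\R$ (the argument $\frac{\delta+1}{2}+iy$ has positive real part and so stays away from the poles of $\Gamma$), it is bounded below by a positive constant. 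Inverting this lower bound gives $G(y)\leq C_{\delta}\,e^{\pi\mid y\mid}$.

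Multiplying the two bounds and absorbing $\frac{\Gamma(\delta)}{2^{\delta}}$ into the constant yields the claim. The only genuinely delicate point is that each bound must be made \emph{uniform} in its variable rather than merely asymptotic: in both cases I combine the large-argument asymptotics with a continuity argument on the remaining bounded region to upgrade the asymptotic statement into a global inequality valid for all $x\geq 0$ and all $y\in\R$. Everything else is routine bookkeeping of constants depending only on $\delta$.
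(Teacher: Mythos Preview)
Your proof is correct and follows essentially the same approach as the paper: both start from the exact formula of Lemma \ref{intmodule}, factor into an $x$-part and a $y$-part, and bound each separately using standard Gamma asymptotics. The only cosmetic difference is in the $y$-factor: the paper invokes the explicit inequality $|\Gamma(x+iy)|\geq \Gamma(x)/\sqrt{\cosh(\pi y)}$ (valid for $x\geq \tfrac12$, which applies since $\tfrac{\delta+1}{2}\geq \tfrac12$) to get the bound in one line, whereas you use the classical asymptotic $|\Gamma(a+iy)|\sim \sqrt{2\pi}\,|y|^{a-1/2}e^{-\pi|y|/2}$ together with a continuity argument on the bounded region---both routes are standard and equivalent here.
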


\begin{proof}
Let us fix $\delta \in (0,1)$. By Lemma \ref{intmodule}, one has:
\begin{equation}\label{rappel}
\int_0^{+\infty} \frac{\mid J_{\nu}(t) \mid^2}{t^{\delta}} \ dt \ = \ \frac {\Gamma(\delta)}{2^{\delta}} \ \frac{\Gamma (\Re \nu + \frac{1-\delta}{2})}
{\Gamma(\Re \nu + \frac{1+\delta}{2})} \ \ \frac{1}{\mid \Gamma( \frac{\delta+1}{2} + i \Im \nu)\mid^2}.
\end{equation}
We recall that for $x \geq \half$ (see \cite{Ol}, Eq. (5.6.7)),
\begin{equation}\label{minorationGamma}
\mid \Gamma(x+iy) \mid \geq \frac{1}{\sqrt{\cosh (\pi y)}} \ \Gamma(x).
\end{equation}
Moreover, one has the following asymptotics (\cite{Leb}, p. 15),
\begin{equation}\label{asymptGamma}
\frac{\Gamma(z+\alpha)}{\Gamma (z+\beta)} \ \sim z^{\alpha-\beta} \ ,\ z \rightarrow + \infty.
\end{equation}
Hence, the corollary follows immediately from (\ref{rappel}), (\ref{minorationGamma}) and (\ref{asymptGamma}).
\end{proof}

\vspace{0.5cm}\noindent
Now, we can establish the following result:

\begin{thm}\label{estproduitbessel}
\hfill\break
For any  $\delta \in (0,1)$, there exists $C_{\delta}>0$ such that, for all $0<r\leq R$ and  $\Re \nu \geq 0$,
\begin{equation*}
\mid J_{\nu}(r) \ H_{\nu}^{(1)}(R) \mid \ \leq \ C_{\delta} \ e^{\pi \mid \Im \nu \mid} \ (1+\Re \nu)^{-\frac{\delta}{2}} \ (rR)^{\frac{\delta -1}{4}} .
\end{equation*}
\end{thm}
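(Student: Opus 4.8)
The plan is to start from the integral representation of Theorem \ref{newrel} and to reduce the desired pointwise bound to the weighted $L^2$ estimate of Corollary \ref{majorationint} by a single application of the Cauchy-Schwarz inequality. Fix $0<r<R$ and $\Re \nu >0$. Since $\cosh x$ is real for real $x$ and $r+R>0$, the exponential $e^{i(r+R)\cosh x}$ has modulus $1$, so taking absolute values in Theorem \ref{newrel} gives
\[
\mid J_{\nu}(r)\ H_{\nu}^{(1)}(R)\mid \ \leq\ \frac{2}{\pi}\int_0^{+\infty} \mid J_{2\nu}\left(2\sqrt{rR}\ \sinh x\right)\mid \ dx.
\]
I would then perform the substitution $t=2\sqrt{rR}\ \sinh x$, for which $\cosh x\ dx = \frac{dt}{2\sqrt{rR}}$ and hence $dx = \frac{dt}{\sqrt{4rR+t^2}}$, turning the right-hand side into $\frac{2}{\pi}\int_0^{+\infty}\mid J_{2\nu}(t)\mid\ (4rR+t^2)^{-\frac{1}{2}}\ dt$.

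The next step is to split the integrand as $\mid J_{2\nu}(t)\mid\ t^{-\frac{\delta}{2}}\cdot t^{\frac{\delta}{2}}(4rR+t^2)^{-\frac{1}{2}}$ and apply Cauchy-Schwarz. The first factor is controlled by Corollary \ref{majorationint} applied to the order $2\nu$: since $\mid\Im(2\nu)\mid = 2\mid\Im \nu\mid$ and $\Re(2\nu)=2\Re \nu\geq \Re \nu$, one obtains
\[
\left(\int_0^{+\infty}\frac{\mid J_{2\nu}(t)\mid^2}{t^{\delta}}\ dt\right)^{\frac{1}{2}}\ \leq\ C_{\delta}\ e^{\pi \mid\Im \nu\mid}\ (1+\Re \nu)^{-\frac{\delta}{2}}.
\]
The factor $2$ in the order is precisely what converts the $e^{\pi\mid\Im(2\nu)\mid}$ of the corollary, after taking the square root, into the target exponential $e^{\pi\mid\Im \nu\mid}$; this matching of exponents is the small but essential observation.

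For the remaining factor I would use the scaling $t=2\sqrt{rR}\ s$, which yields
\[
\int_0^{+\infty}\frac{t^{\delta}}{4rR+t^2}\ dt\ =\ 2^{\delta-1}\ (rR)^{\frac{\delta-1}{2}}\int_0^{+\infty}\frac{s^{\delta}}{1+s^2}\ ds,
\]
where the last integral converges for $0<\delta<1$ to a finite constant depending only on $\delta$. Taking the square root produces exactly the factor $(rR)^{\frac{\delta-1}{4}}$, and multiplying the two estimates gives the claim for $0<r<R$ and $\Re \nu>0$. Finally, since for fixed $r,R>0$ both $J_{\nu}(r)H_{\nu}^{(1)}(R)$ and the right-hand side of the inequality are continuous in $\nu$ on $\Re \nu\geq 0$, the estimate extends to the boundary $\Re \nu=0$, and the borderline case $r=R$ follows by continuity in $R$.

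The only genuinely delicate point is the bookkeeping in the Cauchy-Schwarz splitting: the weight $t^{-\frac{\delta}{2}}$ must be chosen so that the convergence hypothesis $2\Re(2\nu)+1>\delta$ of Corollary \ref{majorationint} holds (automatic here, since $\delta<1\leq 1+4\Re \nu$) while the two target exponents $(1+\Re \nu)^{-\frac{\delta}{2}}$ and $(rR)^{\frac{\delta-1}{4}}$ emerge simultaneously. I emphasize that no uniform (Langer-type) asymptotics of Bessel functions are required: everything reduces, through Corollary \ref{majorationint}, to the closed-form integral of Lemma \ref{intmodule} and the new product representation of Theorem \ref{newrel}.
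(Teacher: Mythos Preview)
Your proof is correct and follows essentially the same route as the paper: starting from the integral representation of Theorem \ref{newrel}, substituting $t=2\sqrt{rR}\,\sinh x$, applying Cauchy--Schwarz with the splitting $|J_{2\nu}(t)|\,t^{-\delta/2}\cdot t^{\delta/2}(4rR+t^2)^{-1/2}$, invoking Corollary \ref{majorationint} at order $2\nu$, and computing the remaining weight integral by scaling. Your explicit remark that the factor $2$ in the order turns $e^{\pi|\Im(2\nu)|}$ into $e^{\pi|\Im\nu|}$ after the square root, together with the continuity argument for $\Re\nu=0$ and $r=R$, matches the paper's reasoning exactly.
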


\begin{proof}
Of course, by a standard continuity argument, it suffices to prove the estimate for $0<r<R$ and $\Re \nu >0$. Using Theorem \ref{newrel}, one has:
\begin{equation}\label{inequalityproduct}
\mid J_{\nu}(r) \ H_{\nu}^{(1)}(R)\mid  \leq  \frac{2}{\pi}  \ \int_0^{+\infty}  \mid  J_{2\nu} (2 \sqrt{rR} \ \sinh x) \mid \  dx.
\end{equation}
We make the change of variables $s=2\sqrt{rR} \ \sinh x$ and we obtain:
\begin{equation}\label{inequalityproduct1}
\mid J_{\nu}(r) \ H_{\nu}^{(1)}(R)\mid  \leq  \frac{1}{\pi \sqrt{rR} }  \ \int_0^{+\infty}  \mid  J_{2\nu} (s) \mid \ \left(1+\frac{s^2}{4rR} \right)^{-\half} \   ds.
\end{equation}
We take $\delta \in (0,1)$ and by the Cauchy-Schwarz inequality, one deduces from (\ref{inequalityproduct1}):
\begin{eqnarray}\label{inequalityproduct2}
\mid J_{\nu}(r) \ H_{\nu}^{(1)}(R)\mid  &\leq&  \frac{1}{\pi \sqrt{rR} }  \ \left( \int_0^{+\infty}  \frac{\mid  J_{2\nu} (s) \mid^2}{s^{\delta}} \ ds \right)^{\half}
\nonumber \\
&& \cdot \ \left( \int_0^{+\infty} s^{\delta} \ \left( 1+\frac{s^2}{4rR} \right)^{-1} \ ds \right)^{\half} .
\end{eqnarray}
By Corollary \ref{majorationint},
\begin{equation}
\left( \int_0^{+\infty}  \frac{\mid  J_{2\nu} (s) \mid^2}{s^{\delta}} \ ds \right)^{\half} \ \leq \ C_{\delta} \ e^{\pi \mid \Im \nu \mid } \ (1+\Re \nu)^{-\frac{\delta}{2}}.
\end{equation}
Hence, the Theorem follows from the obvious inequality:
\begin{equation}
\int_0^{+\infty} s^{\delta} \ \left( 1+\frac{s^2}{4rR} \right)^{-1} \ ds \ \leq \ C_{\delta} \ (rR)^{\frac{\delta +1}{2}}.
\end{equation}
\end{proof}

\vspace{0.5cm}\noindent
From the definition of the Green kernel $K(r,s, \nu)$, we deduce an uniform bound with respect to $r,s$ and $\nu$ in the right complex half-plane :

\begin{coro}\label{estGreenK}
\hfill\break
For any  $\delta \in (0,1)$, there exists $C_{\delta}>0$ such that, for all $r,s>0$ and  $\Re \nu >0$,
\begin{equation*}
\mid K(r,s,\nu) \mid \ \leq \ C_{\delta} \ e^{\pi \mid \Im \nu \mid} \ (1+\Re \nu)^{-\frac{\delta}{2}} \ (rs)^{\frac{\delta +1}{4}} .
\end{equation*}
\end{coro}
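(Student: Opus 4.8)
The plan is to reduce the estimate directly to Theorem \ref{estproduitbessel}, since $K(r,s,\nu)$ is, up to an explicit elementary factor, exactly a product of the form $J_\nu \cdot H_\nu^{(1)}$ evaluated at the two arguments $r$ and $s$. First I would unwind the definition of $K$ in terms of $u$ and $v$ given in (\ref{u}), (\ref{v}) and (\ref{greenkernel}). In the regime $s \leq r$ one has
\begin{equation*}
K(r,s,\nu) = u(s)\,v(r) = -\frac{i\pi}{2}\,\sqrt{rs}\; J_\nu(s)\,H_\nu^{(1)}(r),
\end{equation*}
while in the regime $s \geq r$ the roles of $r$ and $s$ are simply exchanged, giving $K(r,s,\nu) = -\frac{i\pi}{2}\,\sqrt{rs}\,J_\nu(r)\,H_\nu^{(1)}(s)$.

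Next, in each case I would apply Theorem \ref{estproduitbessel}, taking care to identify the smaller of the two arguments with the variable $r$ and the larger with the variable $R$ appearing there. For $s \leq r$ this yields
\begin{equation*}
|J_\nu(s)\,H_\nu^{(1)}(r)| \ \leq \ C_\delta\, e^{\pi|\Im\nu|}\,(1+\Re\nu)^{-\frac{\delta}{2}}\,(rs)^{\frac{\delta-1}{4}},
\end{equation*}
and the symmetric inequality holds when $s \geq r$. The final step is purely bookkeeping: multiplying by the prefactor $\frac{\pi}{2}\sqrt{rs} = \frac{\pi}{2}(rs)^{\frac{1}{2}}$ and combining the powers of $rs$, namely $\frac{1}{2} + \frac{\delta-1}{4} = \frac{\delta+1}{4}$, produces exactly the claimed bound, with a new constant absorbing the factor $\pi/2$.

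There is essentially no hard part here: the corollary is a direct consequence of Theorem \ref{estproduitbessel}. The only point requiring a small amount of care is the case distinction in the definition (\ref{greenkernel}) of $K$, together with matching the hypothesis $0<r\leq R$ of Theorem \ref{estproduitbessel} to whichever of $r,s$ is the smaller. Since both the prefactor $\sqrt{rs}$ and the resulting bound are symmetric in $r$ and $s$, the two cases give the identical estimate, so no continuity or matching argument at the diagonal $r=s$ is needed, and the stated uniform bound holds for all $r,s>0$ and $\Re\nu>0$.
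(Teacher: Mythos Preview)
Your proof is correct and is exactly the approach the paper intends: the corollary is stated without proof immediately after Theorem \ref{estproduitbessel}, and your argument simply makes explicit the two-line deduction from that theorem together with the definition (\ref{greenkernel}) of $K$. The computation of the exponent $\tfrac{1}{2}+\tfrac{\delta-1}{4}=\tfrac{\delta+1}{4}$ and the symmetric handling of the two cases $s\leq r$ and $s\geq r$ are precisely what is needed.
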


\subsection{Uniform asymptotics for the Bessel functions with respect to the order.}

In this section, we shall recall some  uniform asymptotics for the Bessel function $J_{\nu}(r)$ and the Hankel function $H_{\nu}^{(1)}$ with respect to $\nu$ when $r$ belongs to a {\it{compact set}}. We emphasize that all these uniform asymptotics  fail if $r \in (0, +\infty)$.

\begin{prop}\label{unifBessel}
\hfill\break
Let $\delta>0$ be small enough. For $r>0$ belonging to a compact set, we have the uniform asymptotics when $\nu \rightarrow \infty$:
\begin{eqnarray*}
J_{\nu}(r) &=& \frac { \left(\frac{r}{2}\right)^{\nu} }{\Gamma(\nu +1) } \ \left( 1 + O\left(\frac{1}{\nu}\right) \right) \ \ , \ \mid Arg \ \nu \mid \leq \pi -\delta. \\
H_{\nu}^{(1)} (r) &=& -\frac{i}{\pi} \left( \frac{r}{2} \right)^{-\nu} \ \Gamma(\nu) \ \left( 1 + O\left(\frac{1}{\nu}\right) \right)  \ \ , \ \mid Arg \ \nu \mid \leq \pd -\delta.
\end{eqnarray*}

\end{prop}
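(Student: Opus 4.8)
The plan is to read both asymptotics directly off the power series \eqref{serieBessel} and the reflection formula for the Gamma function, the only delicate point being a maximum--modulus argument to cross the zeros of $\sin \pi \nu$ in the Hankel case.

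For $J_\nu(r)$ I would factor the dominant ($k=0$) term out of \eqref{serieBessel}, using $\Gamma(\nu+1)/\Gamma(\nu+k+1) = \prod_{j=1}^{k}(\nu+j)^{-1}$:
\[
J_\nu(r) = \frac{(r/2)^\nu}{\Gamma(\nu+1)}\left(1 + \sum_{k=1}^{+\infty}\frac{(-1)^k (r/2)^{2k}}{k!\,(\nu+1)(\nu+2)\cdots(\nu+k)}\right).
\]
The elementary geometric fact needed is that for $\mid Arg\ \nu\mid \le \pi-\delta$ and every real $j\ge 0$ one has $\mid \nu+j\mid \ge \mid \nu\mid \sin\delta$ (the distance from $\nu$ to the ray $(-\infty,0]$ is at least $\mid \nu\mid \sin\delta$). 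Consequently the tail is bounded by $\exp\big((r/2)^2/(\mid\nu\mid\sin\delta)\big)-1 = O(1/\nu)$, uniformly for $r$ in a compact set, which is the first formula.

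For $H_\nu^{(1)}(r)$ I would start from \eqref{ecritureHankel}, rewritten as $H_\nu^{(1)}(r) = \frac{i}{\sin\pi\nu}\big(e^{-i\pi\nu}J_\nu(r) - J_{-\nu}(r)\big)$, and first work where $\nu$ stays away from the integers, say $\mathrm{dist}(\nu,\Z)\ge \tfrac14$, inside the sector $\mid Arg\ \nu\mid \le \pd-\delta$ (so that $\Re\nu \ge \mid\nu\mid\sin\delta \to +\infty$). Applying the reflection formula $\Gamma(s)\Gamma(1-s)=\pi/\sin\pi s$ with $s=\nu-k$ to each term of the series for $J_{-\nu}$ gives
\[
J_{-\nu}(r) = \frac{\sin\pi\nu}{\pi}\,(r/2)^{-\nu}\,\Gamma(\nu)\left(1+\sum_{k=1}^{+\infty}\frac{(r/2)^{2k}}{k!\,(\nu-1)(\nu-2)\cdots(\nu-k)}\right),
\]
and the same kind of tail estimate applies: splitting the sum at $k\simeq \tfrac12\Re\nu$, the low--order terms are $O(1/\nu)$ since $\mid\nu-j\mid \ge \tfrac12\Re\nu \ge \tfrac12\mid\nu\mid\sin\delta$ there, while the high--order terms are killed by the factor $k!$ (using the crude bound $\mid\nu-j\mid\ge\tfrac14$). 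Thus $J_{-\nu}(r)=\frac{\sin\pi\nu}{\pi}(r/2)^{-\nu}\Gamma(\nu)(1+O(1/\nu))$. Comparing magnitudes, $\mid e^{-i\pi\nu}J_\nu(r)\mid$ is smaller than $\mid J_{-\nu}(r)\mid$ by a factor of order $\mid (r/2)^{2\nu}\mid/(\mid\nu\mid\,\mid\Gamma(\nu)\mid^2)$ (the factor $e^{\pi\Im\nu}/\mid\sin\pi\nu\mid$ being bounded on this region), which is super-exponentially small; hence the $J_{-\nu}$ term dominates and $H_\nu^{(1)}(r)=-\frac{i}{\pi}(r/2)^{-\nu}\Gamma(\nu)(1+O(1/\nu))$ away from the integers.

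The main obstacle is to remove the restriction $\mathrm{dist}(\nu,\Z)\ge\tfrac14$, because the two pieces above individually blow up at the (real) zeros of $\sin\pi\nu$ whereas their combination $H_\nu^{(1)}$ is entire in $\nu$. I would dispatch this by a maximum principle: the function $g(\nu)=H_\nu^{(1)}(r)\big/\big(-\tfrac{i}{\pi}(r/2)^{-\nu}\Gamma(\nu)\big)$ is holomorphic in $\Re\nu>0$ (the denominator being holomorphic and non-vanishing there), and the estimate just proved gives $\mid g(\nu)-1\mid \le C/\mid\nu\mid$ on the circles $\mid\nu-m\mid=\tfrac14$, $m\in\N$, which lie in the good region. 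Since the only zeros of $\sin\pi\nu$ are the real integers, every point with $\mathrm{dist}(\nu,\Z)<\tfrac14$ lies in some disc $\mid\nu-m\mid<\tfrac14$, on which $g-1$ is holomorphic; the maximum principle then propagates $\mid g(\nu)-1\mid\le C/m = O(1/\nu)$ into these discs, covering the whole sector $\mid Arg\ \nu\mid\le\pd-\delta$. Uniformity in $r$ over compact sets is automatic throughout, as every $O(\cdot)$ constant depends only on $\delta$ and on a bound for $r$.
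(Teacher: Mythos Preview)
Your argument is correct.  For $J_\nu$ it coincides with the paper's, which simply says the estimate ``follows directly from'' the series \eqref{serieBessel}; you have just written out the tail bound explicitly.  For $H_\nu^{(1)}$ the two proofs diverge: the paper does not touch the decomposition \eqref{ecritureHankel} at all, but instead quotes the ready-made asymptotic $K_\nu(z)=\tfrac12\Gamma(\nu)(z/2)^{-\nu}(1+O(1/\nu))$ for the Macdonald function from Sidi--Hoggan \cite{SH} and then transports it via the connection formula $H_\nu^{(1)}(r)=-\tfrac{2i}{\pi}e^{-i\nu\pi/2}K_\nu(-ir)$.  That route is a two-line black-box argument, whereas yours is self-contained: you re-sum the $J_{-\nu}$ series through the reflection formula, show the $e^{-i\pi\nu}J_\nu$ contribution is negligible, and then use the maximum principle on the small discs $|\nu-m|<\tfrac14$ to bridge the removable singularities of the $J_{-\nu}/\sin\pi\nu$ representation.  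The payoff of your approach is that it needs no external reference and makes transparent exactly where the sector restriction $|\mathrm{Arg}\,\nu|\le\tfrac{\pi}{2}-\delta$ enters (through $\Re\nu\ge|\nu|\sin\delta$, needed both for the $J_{-\nu}$ tail and for the super-exponential smallness of $(r/2)^{2\Re\nu}/|\Gamma(\nu)|^2$); the paper's route is shorter but hides this inside \cite{SH}.
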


\begin{proof}
The first asymptotics follows directly from (\ref{serieBessel}). We refer to (\cite{Olver}, p. 374) for the same explanation for the modified Bessel function $I_{\nu}(z)$. To prove the second asymptotics, we use the following result (\cite{SH}, Eq. (1.2)) for the Macdonald's function $K_{\nu}(z)$:
\begin{equation}
K_{\nu}(z) = \frac{1}{2} \ \frac{\Gamma(\nu)}{(\frac{z}{2})^{\nu}} \  \left(1 + O\left(\frac{1}{\nu}\right)\right) , \ \mid Arg \ \nu \mid \leq \pd -\delta.
\end{equation}
We conclude using the relation (\cite{Leb}, Eq. (5.7.1)):
\begin{equation}
H_{\nu}^{(1)}(r) = -\frac{2i}{\pi} \ e^{-i \nu \pd} \ K_{\nu}(-ir).
\end{equation}
\end{proof}

\vspace{0.2cm}\noindent
\begin{rem}\label{uniformeprecis0}
 A more precise uniform asymptotic expansion for $J_{\nu}(r)$ and $H_{\nu}^{(1)}(r)$ for $r$ in a compact set is necessary for the study of the Regge poles. Following \cite{SH}, one has:

 \begin{eqnarray}
J_{\nu}(r) &=& \frac { \left(\frac{r}{2}\right)^{\nu} }{\Gamma(\nu +1) } \ \left( 1 - \frac{r^2}{4\nu}+ O\left(\frac{1}{\nu^2}\right) \right) \ \ , \ \mid Arg \ \nu \mid \leq \pi -\delta. \\
H_{\nu}^{(1)} (r) &=& -\frac{i}{\pi} \left( \frac{r}{2} \right)^{-\nu} \ \Gamma(\nu) \ \left( 1+\frac{r^2}{4\nu} + O\left(\frac{1}{\nu^2}\right) \right)  \ \ , \ \mid Arg \ \nu \mid \leq \pd -\delta.
\end{eqnarray}

\end{rem}

We deduce from the previous remark the following result:

\vspace{0.2cm}
\begin{coro}\label{uniformeprecis}
\hfill\break
For $Arg \ \nu \in [\pd -\delta, \pd]$ and for $r,s$ in a compact set, one has the following uniform asymptotics
when $\nu \rightarrow \infty$:
\begin{equation*}
J_{\nu}(s) \ H_{\nu}^{(1)} (r) =  \frac{1}{i\pi \nu} \left( \frac{s}{r}\right)^{\nu} \ \left(1+\frac{r^2 -s^2}{4\nu} +O\left(\frac{1}{\nu^2}\right)\right)  + \frac{2}{\Gamma^2 (\nu+1)} \left(\frac{rs}{4}\right)^{\nu} \left(1- \frac{r^2 +s^2}{4\nu} +O\left(\frac{1}{\nu^2}\right) \right) .
\end{equation*}
\end{coro}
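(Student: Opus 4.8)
The plan is to avoid the asymptotics of $H_{\nu}^{(1)}$ itself — which in Remark \ref{uniformeprecis0} are only valid for $|Arg\ \nu|\le\pd-\delta$, hence fail in the sector of interest near the imaginary axis — and instead read the product off the connection formula (\ref{ecritureHankel}),
\[ H_{\nu}^{(1)}(r) = \frac{J_{-\nu}(r) - e^{-i\pi\nu}J_{\nu}(r)}{i\sin\nu\pi}, \]
so that
\[ J_{\nu}(s)\,H_{\nu}^{(1)}(r) = \frac{J_{\nu}(s)J_{-\nu}(r)}{i\sin\nu\pi} - e^{-i\pi\nu}\,\frac{J_{\nu}(s)J_{\nu}(r)}{i\sin\nu\pi}. \]
The whole point is that the first summand will produce the $\left(\frac{s}{r}\right)^{\nu}$ term and the second the $\left(\frac{rs}{4}\right)^{\nu}$ term, and that the awkward factor $1/\sin\nu\pi$ either cancels exactly or contributes only an exponentially small error.

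First I would record the two products from the refined expansion of $J_{\nu}$ in Remark \ref{uniformeprecis0}. For $Arg\ \nu \in [\pd-\delta,\pd]$ one has $|Arg\ \nu|\le\pi-\delta$ and, for $\delta$ small, $|Arg(-\nu)|\in[\pd,\pd+\delta]\le\pi-\delta$, so the $J_{\nu}$-asymptotics apply (uniformly for $r,s$ in the given compact set) to both $J_{\nu}$ and $J_{-\nu}$. Multiplying gives
\[ J_{\nu}(s)J_{\nu}(r) = \frac{(rs/4)^{\nu}}{\Gamma^2(\nu+1)}\left(1-\frac{r^2+s^2}{4\nu}+O\left(\frac{1}{\nu^2}\right)\right), \]
\[ J_{\nu}(s)J_{-\nu}(r) = \frac{(s/r)^{\nu}}{\Gamma(\nu+1)\Gamma(1-\nu)}\left(1+\frac{r^2-s^2}{4\nu}+O\left(\frac{1}{\nu^2}\right)\right). \]
Using the reflection formula in the form $\Gamma(\nu+1)\Gamma(1-\nu)=\pi\nu/\sin\nu\pi$, the factor $\sin\nu\pi$ in the second product cancels exactly against $1/\sin\nu\pi$, so that
\[ \frac{J_{\nu}(s)J_{-\nu}(r)}{i\sin\nu\pi} = \frac{1}{i\pi\nu}\left(\frac{s}{r}\right)^{\nu}\left(1+\frac{r^2-s^2}{4\nu}+O\left(\frac{1}{\nu^2}\right)\right), \]
which is the first term of the statement, with no spurious error.

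For the second summand I would use the elementary identity $\frac{e^{-i\pi\nu}}{i\sin\nu\pi} = \frac{-2}{1-e^{2i\pi\nu}} = -2\left(1+O(e^{-2\pi\Im\nu})\right)$, valid since $\Im\nu>0$, whence
\[ -e^{-i\pi\nu}\,\frac{J_{\nu}(s)J_{\nu}(r)}{i\sin\nu\pi} = \frac{2(rs/4)^{\nu}}{\Gamma^2(\nu+1)}\left(1-\frac{r^2+s^2}{4\nu}+O\left(\frac{1}{\nu^2}\right)\right)\left(1+O(e^{-2\pi\Im\nu})\right). \]
Summing the two pieces yields the claimed expansion. The only step needing attention — the mild obstacle — is to confirm that the factor $1+O(e^{-2\pi\Im\nu})$ is swallowed by the $O(1/\nu^2)$ already present. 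Here the sector hypothesis is essential: for $Arg\ \nu\in[\pd-\delta,\pd]$ one has $\Im\nu=|\nu|\sin(Arg\ \nu)\ge|\nu|\cos\delta$, so $e^{-2\pi\Im\nu}\le e^{-2\pi|\nu|\cos\delta}$ decays faster than any power of $|\nu|^{-1}$; since this factor multiplies only the second summand, its effect is absorbed into that summand's $O(1/\nu^2)$, and taking $\delta$ small relative to the compact range of $r,s$ keeps every estimate uniform.
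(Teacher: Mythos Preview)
Your proof is correct and follows exactly the paper's route: write $J_{\nu}(s)H_{\nu}^{(1)}(r)$ via the connection formula (\ref{ecritureHankel}), apply the refined $J_{\pm\nu}$ asymptotics of Remark~\ref{uniformeprecis0} (noting $|Arg(-\nu)|\le\pi-\delta$), and use the reflection formula to kill the $\sin\nu\pi$ in the first piece. The paper dismisses the second piece with ``similarly''; your explicit handling of it via $\frac{e^{-i\pi\nu}}{i\sin\nu\pi}=-2(1+O(e^{-2\pi\Im\nu}))$ and the sectorial bound $\Im\nu\ge|\nu|\cos\delta$ is the natural way to fill that in.
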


\begin{proof}
Using (\ref{ecritureHankel}), one has:
\begin{equation}\label{dev}
J_{\nu}(s) H_{\nu}^{(1)}(r) = \frac{ J_{\nu}(s) J_{-\nu}(r) - e^{-i\pi \nu} \ J_{\nu}(s)J_{\nu}(r)}{i\sin(\pi \nu)}.
\end{equation}
Now, we remark that $\mid Arg \ (-\nu )\mid \leq \pi -\delta$. Then, we can also use the previous uniform asymptotics for $J_{-\nu}(r)$. For instance, we have as $\nu \rightarrow \infty$:
\begin{equation}
J_{\nu}(s) J_{-\nu}(r) = \frac{ \left( \frac{s}{r}\right)^{\nu}}{\Gamma(\nu+1)\Gamma(-\nu+1)}\ \left(1+\frac{r^2 -s^2}{4\nu} +O\left(\frac{1}{\nu^2}\right)\right) .
\end{equation}
But by the complement formula for the Gamma function (\cite{Leb}, Eq. (1.2.2)):
\begin{equation}
\Gamma(\nu)\Gamma(1-\nu) = \frac{\pi}{\sin(\pi \nu)},
\end{equation}
one obtains :
\begin{equation}
\frac{ J_{\nu}(s) J_{-\nu}(r)}{i\sin(\pi \nu)} =  \frac{1}{i\pi \nu} \left( \frac{s}{r}\right)^{\nu}\ \left(1+\frac{r^2 -s^2}{4\nu} +O\left(\frac{1}{\nu^2}\right)\right) .
\end{equation}
We can study the second term in (\ref{dev}) similarly.
\end{proof}

\subsection{The Born approximation.}

In this section, we prove that for a large set of potentials $q(r)$ belonging to the class $\mathcal{A}$, the restriction of the Fourier transform of the potential on any ball determines uniquely $q(r)$. This result is coherent with the Born approximation. We recall also the following facts :  for general potentials $V(x)$ (not necessary with spherical symmetry) satisfying for all $x \in \R^n$,
 \begin{equation}\label{symbole}
\mid  V(x) \mid \ \leq C \ (1+\mid x\mid)^{-\rho} \ \ ,\ \ \rho>n,
\end{equation}
it is shown in \cite{HeNo} that the scattering matrix $S(\mu)$, $\mu \in [\lambda , \lambda +\delta ]$, where $\lambda$ is a fixed energy and $\delta>0$ is arbitrary small, determines the Fourier transform of $V$ on the ball $B(0,2\sqrt{\lambda})$, and in \cite{Is}, this result is extended to the case $\rho > \frac{3}{2}$ for smooth potentials.

\vspace{0.5cm}\noindent
We have the following theorem:

\begin{thm}\label{UniqueFourier}
\hfill\break
Let $V(x)$  be a central potential in the class $\mathcal{A}$ with $\rho>n$.
Then, the restriction of the Fourier transform of $V(x)$ on any ball determines uniquely the potential.
\end{thm}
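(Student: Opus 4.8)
The plan is to reduce the statement to a one-dimensional analyticity argument. Since $V(x)=q(r)$ is radial, its Fourier transform is radial as well, so I may write $\hat{V}(\xi)=h(s)$ with $s=\mid \xi \mid$. Because $\rho>n$, the potential $V$ lies in $L^1(\R^n)$, so $\hat{V}$ is continuous and is given by the absolutely convergent Hankel-type integral
\begin{equation}
h(s) = c_n \, s^{-\frac{n-2}{2}} \int_0^{+\infty} q(r) \, J_{\frac{n-2}{2}}(sr) \, r^{\frac{n}{2}} \, dr,
\end{equation}
for a dimensional constant $c_n$. Knowing the restriction of $\hat{V}$ to a ball $B(0,R)$ is exactly knowing $h$ on the interval $[0,R]$. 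Hence it suffices to prove that $h$ is real-analytic on $(0,+\infty)$: the identity theorem for real-analytic functions then shows that the values of $h$ on $[0,R]$ determine $h$ on all of $(0,+\infty)$, so $\hat{V}$ is determined everywhere and Fourier inversion yields $q$.

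To establish the analyticity of $h$, I would exploit the analytic continuation of $q$ to $\Re z \geq 0$ together with the exponential decay of the Hankel functions in the appropriate half-planes. Writing $J_{\frac{n-2}{2}} = \half\big(H^{(1)}_{\frac{n-2}{2}} + H^{(2)}_{\frac{n-2}{2}}\big)$ splits $h$ into two pieces $h=\half(I_1+I_2)$. In $I_1$ the kernel $H^{(1)}_{\frac{n-2}{2}}(sr)$ decays like $e^{-s\, \Im r}$ as $\Im(sr)\to +\infty$; since $q$ is holomorphic in the first quadrant, I would rotate the contour from the positive real axis to the positive imaginary axis $r=it$. Symmetrically, in $I_2$ the kernel $H^{(2)}_{\frac{n-2}{2}}(sr)$ decays in the lower half-plane, and since $q$ is also holomorphic in the fourth quadrant I would rotate to the negative imaginary axis. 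Using that $H^{(1)}_{\mu}(iz)$ and $H^{(2)}_{\mu}(-iz)$ are proportional to the Macdonald function $K_{\mu}(z)$ (cf. the relation $H^{(1)}_{\nu}(r)=-\frac{2i}{\pi}e^{-i\nu\pd}K_{\nu}(-ir)$ used in the Appendix), both integrals become, up to constants,
\begin{equation}
\int_0^{+\infty} q(\pm it)\, K_{\frac{n-2}{2}}(st)\, t^{\frac{n}{2}} \, dt,
\end{equation}
and the exponential decay $K_{\frac{n-2}{2}}(st)=O(e^{-st})$ makes these manifestly holomorphic in $s$ on $\{\Re s>0\}$ by differentiation under the integral sign. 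In particular $h$ extends holomorphically to $\Re s>0$ and is therefore real-analytic on $(0,+\infty)$, which is the property needed.

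The main obstacle is the rigorous justification of the two contour rotations. One must check that the contribution of the quarter-circle arc at radius $R$ tends to $0$ as $R\to+\infty$: this follows from a Jordan-type estimate $\int_0^{\pd} e^{-sR\sin\phi}\,d\phi \leq \frac{\pi}{2sR}$ combined with the bound $\mid q(Re^{i\phi})\mid \leq C R^{-\rho}$ and the large-argument asymptotics of $H^{(1,2)}_{\frac{n-2}{2}}$, the hypothesis $\rho>n$ comfortably guaranteeing the required decay. One must also verify that the small arc near the origin is harmless, using $H^{(j)}_{\frac{n-2}{2}}(sr)\,r^{\frac{n}{2}} = O(r)$ (respectively $O(r\log r)$ when $n=2$) as $r\to 0$, and that the resulting $K$-integrals may indeed be differentiated in $s$ under the integral sign. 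Once these estimates are in place the argument closes as described, and one finally recovers $V$ from $\hat{V}$ by Fourier inversion, which is legitimate since $V\in L^1(\R^n)$.
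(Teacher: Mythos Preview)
Your argument is correct and arrives at the same key fact as the paper—that $h(s)=\hat V(\xi)$, $s=|\xi|$, extends holomorphically to $\Re s>0$, whence analytic continuation plus Fourier injectivity on $L^1$ finishes—but by a genuinely different mechanism. The paper does not rotate contours; instead it invokes the Paley--Wiener theorem for the Hardy class $H_+^2$ to write $q(r)=\int_0^\infty e^{-tr}f(t)\,dt$ with $f\in L^2(0,\infty)$, substitutes this into the Hankel integral, and evaluates the resulting $r$-integral via the closed formula $\int_0^\infty e^{-\alpha r}J_\nu(\beta r)r^{\nu+1}\,dr=\dfrac{2\alpha(2\beta)^\nu\Gamma(\nu+\frac32)}{\sqrt\pi(\alpha^2+\beta^2)^{\nu+3/2}}$, obtaining the explicit representation $\hat V(\xi)=c_n\int_0^\infty \dfrac{t\,f(t)}{(t^2+|\xi|^2)^{(n+1)/2}}\,dt$, whose analyticity in $|\xi|$ is immediate. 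Your route keeps $q$ intact and deforms the $r$-contour after the Hankel splitting, landing on Macdonald-type integrals whose exponential decay gives holomorphy directly. The paper's approach buys an explicit formula for $\hat V$ at the price of an extra structural ingredient (Paley--Wiener) and a tabulated Laplace transform; your contour rotation is more self-contained and more flexible—for instance it would adapt verbatim to potentials analytic only in a sector $|\arg z|\le b<\pd$, yielding holomorphy of $h$ in the corresponding sector—though it does not produce a closed expression.
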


\begin{proof}
So, let us assume that $\hat{V}(\xi)$ is known for $|\xi| \leq a$ for some $a>0$. We write:
\begin{eqnarray*}
\hat{V}(\xi) &=& \frac{1}{(2\pi)^{\frac{n}{2}}} \ \int_{\R^n} e^{-ix \cdot \xi} \ V(x) \ dx \\
           &=& \frac{1}{(2\pi)^{\frac{n}{2}}} \ \int_0^{+\infty} \left( \int_{\mathbb{S}^{n-1}}  e^{-ir\xi \cdot \omega} \ d\omega \right) \ V(r) \ r^{n-1} \ dr.
\end{eqnarray*}
But, it is well-known that for $\xi \not=0$, (see for instance \cite{BaCa}, Eq. $(3.2)$):
\begin{equation}
\int_{\mathbb{S}^{n-1}}  e^{-ir\xi \cdot \omega} \ d\omega = \frac{2\pi}{|r\xi|^{\frac{n}{2}-1}} \ J_{\frac{n}{2}-1} (r \mid \xi \mid).
\end{equation}
By our hypothesis, $q(r)$ belongs to the Hardy class $H_+^2$, so the Paley-Wiener theorem (see (\ref{PaleyWiener})) asserts that there exists $f \in L^2(0, +\infty)$ such that:
\begin{equation}
V(r) = \int_0^{+\infty} e^{-tr} \ f(t)\ dt
\end{equation}
Then, we have for $\xi \not=0$:
\begin{equation}
\hat{V}(\xi) = \frac{1}{\mid \xi \mid^{\frac{n}{2}-1}} \ \int_0^{+\infty} \left( \int_0^{+\infty} e^{-tr} J_{\frac{n}{2}-1} (r \mid \xi \mid) \ r^{\frac{n}{2}} \ dr \right) f(t) \ dt.
\end{equation}
Recalling that (\cite{Wa}, Eq. $(6)$, p. 386):
\begin{equation}
\int_0^{+\infty} e^{-\alpha r} J_{\nu} (\beta r)\  r^{\nu+1} \ dr = \frac{ 2\alpha \ (2\beta)^{\nu}\  \Gamma(\nu+\frac{3}{2})}   {\sqrt{\pi} \ (\alpha^2 +\beta^2)^{\nu + \frac{3}{2}}} \ \ ,\ \ \Re \nu >-1,\ \Re \alpha > |\Im \beta |,
\end{equation}
we obtain easily for $\xi \not =0$:
\begin{equation}
\hat{V}(\xi) = \frac{2^{\frac{n}{2}} \Gamma( \frac{n+1}{2})} {\sqrt{\pi} } \ \int_0^{+\infty} \frac{t}{(t^2 +|\xi|^2)^{\frac{n+1}{2}}} \ f(t) \ dt.
\end{equation}
Clearly, this later integral is analytic with respect to $\mid \xi \mid$ in $(0,+\infty)$, so using the standard analytic continuation principle and the inverse Fourier transform, Theorem \ref{UniqueFourier} is proved.

\end{proof}

\vspace{0.5cm}\noindent
{\it{Acknowledgments.}}
\par\noindent
Both authors would like to warmly thank Marco Marletta and Roman Novikov for useful discussions on the Regge poles and inverse scattering theory.



\begin{thebibliography}{99}
\bibitem{AlRe} Alfaro V., Regge T. \emph{Potential Scattering}, North-Holland, Amsterdam, (1965).
\bibitem{BaCa} Barros-Neto J., Cardoso F., \emph{Bessel integrals and fundamental solutions for a generalizes Tricomi operator}, Journal of Functional Analysis, $\mathbf{183}$, Issue $2$, (2001), 472-497.
\bibitem{BaDi} Barut A.O., Diley. J \emph{Behaviour of the scattering amplitude for large angular momentum}, Journal of Math. Phys., $\mathbf{4}$, (1963), 1401-1408.
\bibitem{Be} Bennewitz C., \emph{A proof of the local Borg-Marchenko Theorem}, Comm. Math. Phys. $\mathbf{211}$, (2001), 131-132.
\bibitem{Bes} Bessis D., \emph{Localization of Regge poles in potential scattering}, Il Nuovo Cimento, Vol. XXXIII $\mathbf{3}$, (1964), 797-808.
\bibitem{Boa} Boas R.P., \emph{Entire Functions}, Academic Press, (1954).
\bibitem{Bu} Buchholz H., \emph{The confluent hypergeometric function}, Springer, New York, (1969).
\bibitem{CG} Corless R.M., Gonnet G.H., Hare D.E.G, Jeffrey D.J., Knuth D.E., \emph{On the Lambert W function},  Adv. Comput. Math. 5 (1996), $\mathbf{4}$, 329-359.
\bibitem{De} De Bruijn N. G., \emph{Asymptotic methods in analysis}, Amsterdam, P. Noordhoff Ltd, (1958).
\bibitem{DN3} Daud\'e T., Nicoleau F., \emph{Inverse scattering at fixed energy in de Sitter-Reissner-Nordstr\"om black
holes}, Annales Henri Poincar\'e $\mathbf{12}$, (2011), 1-47.
\bibitem{DGN} Daud\'e T., Gobin D., Nicoleau F., \emph{Inverse scattering at fixed energy in spherically symmetric asymptotically hyperbolic manifolds}, preprint, (2013), arXiv: 1310.0733
\bibitem{DKN} Daud\'e T., Kamran N., Nicoleau F., \emph{Inverse scattering at fixed energy on asymptotically hyperbolic Liouville surfaces}, (2014), arXiv: 1409.6229
\bibitem{Fi} Firsov O. B., \emph{ Determination of the force acting between atoms via differential effective elastic cross section (in Russian)}, Zh. ksper. Teoret. Fiz 24 (1953), 279-283.
\bibitem{FrYu} Freiling G., Yurko V. \emph{Lectures on Differentials Equations of Mathematical Physics. A first course}, Nova Science Publishers Inc., New York, (2008) .
\bibitem{GS} Gesztesy F., Simon B., \emph{On local Borg-Marchenko uniqueness results}, Comm. Math. Phys. $\mathbf{211}$, (2000), 273-287.
\bibitem{Go} Gobin D., \emph{Inverse scattering at fixed energy for massive charged Dirac fields in de Sitter-Reissner-Nordstr\"{o}m black holes}, (2014), to appear in Inverse Problems,  arXiv:1412.0844
\bibitem{GrNo} Grinevich P. G., Novikov R., \emph{Transparents potentials at fixed energy in dimension two. Fixed-energy dispersion relations for the fast decaying potentials}, Commun. Math. Phys. $\mathbf{174}$, (1995), 409-446.
\bibitem{GrRy} Gradshteyn I.S., Ryzhik I.M., \emph{Table of integrals, series and products}, Seventh Edition, Academic Press, (2007).
\bibitem{HeNo} Henkin G.M., Novikov R. G., \emph{The $\overline{\partial}$-equation in the multidimensional inverse scattering problem}, Russ. Math. Surv. 42 (4), (1987), 109-180.

\bibitem{HBM} Hiscox A., Brown B. M., Marletta M. \emph{Analysis of Regge poles for the Schrödinger equation}, Proceedings of the Royal Society A $\mathbf{465}$, (2009), 2813-2823.
\bibitem{Ho} Horv\'ath M., \emph{Partial identification of the potential from phase shifts}, J. Math. Anal. Appl., (2010), doi:10.1016/j.jmaa.2010.10.071.
\bibitem{Ho1} Horv\'ath M., \emph{Spectral shift functions in the fixed energy inverse scattering}, Inverse problems and imaging 5, $\mathbf{4}$ , (2011), 843-858.
\bibitem{Is}  Isozaki H., \emph{Multi-dimensional inverse scattering theory for Schr\"odinger operators}, Reviews in Mathematical Physics, vol. 8, (4),(1996), 591-622.
\bibitem{Jo} Jollivet A., \emph{On inverse scattering at fixed energy for the multidimensional Newton equation in a non-compcatly supported field}, J. Inverse Ill-Posed Probl., Vol. 21, (6), (2013), 713-734.

\bibitem{JoSa} Joshi M.S.- Sa Barreto A., ``Recovering asymptotics of short range potentials'',
Communications in Mathematical Physics, $\mathbf{193}$ no. 1, (1998), 197-208.
\bibitem{Lan} Langer R. E.  N. N. \emph{On the asymptotics solutions of differential equations with an application to the Bessel function of large complex order}, Trans. Amer. Math. Soc. , $\mathbf{34}$, (1932), 447-480.
\bibitem{Leb} Lebedev N. N. \emph{Special functions and their applications}, Prentice-Hall, Englewood Cliffs (1965).
\bibitem{Lev} Levin B. Y., \emph{Lectures on entire functions}, Translations of Mathematical Monograph,  $\mathbf{150}$, American Mathematical Society (1996).
\bibitem{Lev1} Levin B. Y., \emph{Distribution of zeros of entire functions},  American Mathematical Society , Providence, Rhode Island, (1964).
\bibitem{Lo} Loeffel J. J., \emph{On an inverse problem in potential scattering theory}, Annales de l'I.H.P., section A,  $\mathbf{8}$, Number 4, (1968), 339-447.
\bibitem{MOS} Magnus W., Oberhettinger F., Soni R. P., \emph{Formulas and Theorems for the Special Functions of Mathematical Physics,}, Springer-Verlag, (1966).
\bibitem{MaTa} Magnus W., Kotin L., \emph{The zeros of the Hankel function as a function of its order}, Numerische Mathematik, $\mathbf{2}$, (1960), 228-244.
\bibitem{Mar} Marletta, M. \emph{Private electronic communication}, (2014).
\bibitem{MaKo} Martin A., Targonski, Gy., \emph{On the uniqueness of a potential fitting a scattering amplitude at a given energy}, Nuovo Cimento, $\mathbf{14}$, (1961), 1182-1190.
\bibitem{No1} Novikov R. G., {\it{The inverse scattering problem on a fixed energy level for the two-dimensional Schrödinger operator}}, J. Funct. Anal. 103, n 2, 409-463, (1992).
\bibitem{No2} Novikov R. G., {\it{The inverse scattering problem at fixed energy  for the three-dimensional Schrödinger equation with an exponentially decreasing potential}}, Comm. Math. Phys. 161, n 3, 569-595, (1994).
\bibitem{Olver} Olver  F.W.J., \emph{Asymptotics and special functions}, Academic Press, New York and London, (1974).
\bibitem{Ol} Olver  F.W.J., \emph{Digital library of mathematical functions}, NIST.
\bibitem{Ra} Ramm A.G., \emph{An Inverse Scattering Problem with part of the Fixed-Energy Phase shifts},
Comm. Math. Phys. $\mathbf{207}$, (1999), no.1. 231-247.
\bibitem{Ra1} Ramm A.G., \emph{Formula for the radius of the support of the potential in terms in scattering data},
J. Phys. A $\mathbf{31}$, (1998), 39-44.

\bibitem{RS1} Reed M. -  Simon B., \emph{Methods of modern mathematical physics - Functional Analysis}, Academic Press (1972).
\bibitem{RS3} Reed M. -  Simon B., \emph{Methods of modern mathematical physics- Scattering theory}, Academic Press (1978).
\bibitem{Re} Regge T., \emph{Introduction to complex orbital momenta}, Nuevo Cimento $\mathbf{XIV}$, (1959), no.5, 951-976.
\bibitem{Ru} Rudin W., \emph{Real and Complex Analysis, Third edition}, McGraw-Hill Book Company, (1986).
\bibitem{Sa} Sabatier, P., \emph{Asymptotic properties of the potentials in the inverse scattering problem at fixed energy}, J. Math. Phys., (1966), 1515-1531.
\bibitem{SH} Sidi A., Hoggan P. E., \emph{Asymptotics of modified bessel functions of high order}, International Journal of Pure and Applied Mathematics Vol. $71$,
$\mathbf{3}$, (2011), 481-498.
\bibitem{Si} Simon B., \emph{A new approach to inverse spectral theory, I. fundamental formalism}, Annals of Math. $\mathbf{150}$, (1999), 1-29.
\bibitem{Te} Teschl G., \emph{Mathematical Methods in Quantum Mechanics}, Graduate Studies in Mathematics Vol. 99, AMS Providence, Rhode Island, (2009).
\bibitem{VaWa} Vasy A., Wang X.P., \emph{Inverse scattering with fixed energy for dilation-analytic potentials}, Inverse Problems, 20 $\mathbf{4}$, (2004),1349-1354.
\bibitem{Wa} Watson G. N., \emph{A treatise on the theory of Bessel functions}, Cambridge University Press, (1958).
\bibitem{WeYa} Weder R., Yafaev D. \emph{On inverse scattering at a fixed energy for potentials with a regular behaviour at infinity}, Inverse Problems,
$\mathbf{21}$, (2005), 1-16.


\end{thebibliography}
\end{document}